 \def\@textbottom{\vskip \z@ \@plus 12pt}
 \let\@texttop\relax
\definecolor{White}{rgb}{1,1,1}
\definecolor{Black}{rgb}{0,0,0}
\definecolor{LightGray}{rgb}{.81,.81,.81}
\colorlet{ChannelColor}{LightGray}
\colorlet{ChannelTextColor}{Black}
\colorlet{ReadoutColor}{White}
\newcommand{\reg}[1]{\textsf{#1}}
\newcommand{\class}[1]{\textup{#1}}
\newcommand{\setft}[1]{\mathrm{#1}}
\newcommand{\Density}{\setft{D}}
\newcommand{\Pos}{\setft{Pos}}
\newcommand{\Proj}{\setft{Proj}}
\newcommand{\Channel}{\setft{C}}
\newcommand{\Unitary}{\setft{U}}
\newcommand{\Herm}{\setft{Herm}}
\newcommand{\Lin}{\setft{L}}
\newcommand{\Trans}{\setft{T}}
\newcommand{\I}{\mathbbm{1}}
\newcommand{\complex}{\mathbb{C}}
\newcommand{\real}{\mathbb{R}}
\renewcommand{\natural}{\mathbb{N}}
\newcommand{\rational}{\mathbb{Q}}
\newcommand{\tinyspace}{\mspace{1mu}}
\newcommand{\microspace}{\mspace{0.5mu}}
\newcommand{\op}[1]{\operatorname{#1}}
\newcommand{\tr}{\operatorname{Tr}}
\renewcommand{\int}{\operatorname{int}}
\newcommand{\fid}{\operatorname{F}}
\renewcommand{\t}{{\scriptscriptstyle\mathsf{T}}}
\newcommand{\abs}[1]{\lvert #1 \rvert}
\newcommand{\bigabs}[1]{\bigl\lvert #1 \bigr\rvert}
\newcommand{\Biggabs}[1]{\Biggl\lvert #1 \Biggr\rvert}
\newcommand{\ip}[2]{\langle #1 , #2\rangle}
\newcommand{\bigip}[2]{\bigl\langle #1, #2 \bigr\rangle}
\newcommand{\biggip}[2]{\biggl\langle #1, #2 \biggr\rangle}
\newcommand{\floor}[1]{\lfloor #1 \rfloor}
\newcommand{\norm}[1]{\lVert\tinyspace #1 \tinyspace\rVert}
\newcommand{\bignorm}[1]{\bigl\lVert\tinyspace #1 \tinyspace\bigr\rVert}
\newcommand{\Bignorm}[1]{\Bigl\lVert\tinyspace #1 \tinyspace\Bigr\rVert}
\newcommand{\biggnorm}[1]{\biggl\lVert\tinyspace #1 \tinyspace\biggr\rVert}
\newcommand{\ket}[1]{
  \lvert\microspace #1 \microspace \rangle}
\newcommand{\bigket}[1]{
  \bigl\lvert\microspace #1 \microspace \bigr\rangle}
\newcommand{\bra}[1]{
  \langle\microspace #1 \microspace \rvert}
\newcommand{\bigbra}[1]{
  \bigl\langle\microspace #1 \microspace \bigr\rvert}
\newcommand\X{\mathcal{X}}
\newcommand\Y{\mathcal{Y}}
\newcommand\Z{\mathcal{Z}}
\newcommand\W{\mathcal{W}}
\newcommand\V{\mathcal{V}}
\newcommand\C{\mathcal{C}}
\newcommand\yes{\textup{yes}}
\newcommand\no{\textup{no}}
\newenvironment{mylist}[1]{\begin{list}{}{
	\setlength{\leftmargin}{#1}
	\setlength{\rightmargin}{0mm}
	\setlength{\labelsep}{2mm}
	\setlength{\labelwidth}{8mm}
	\setlength{\itemsep}{0mm}}}
	{\end{list}}
\newtheorem{cor}[theorem]{Corollary}
\newtheorem{prop}[theorem]{Proposition}
\newcommand{\field}{\mathbb{F}}
\newcommand{\val}{\omega}
\newcommand{\Tr}{\mbox{\rm Tr}}
\newcommand{\rW}{\ensuremath{\textsf{W}}}
\newcommand{\rY}{\ensuremath{\textsf{Y}}}
\newcommand{\rX}{\ensuremath{\textsf{X}}}
\newcommand{\rZ}{\ensuremath{\textsf{Z}}}
\newcommand{\QMA}{\mathrm{QMA}}
\newcommand{\IP}{\mathrm{IP}}
\newcommand{\QIP}{\mathrm{QIP}}
\newcommand{\PSPACE}{\mathrm{PSPACE}}
\newcommand{\MIP}{\mathrm{MIP}}
\newcommand{\QMIP}{\mathrm{QMIP}}
\newcommand{\NEXP}{\mathrm{NEXP}}
\newcommand{\EXP}{\mathrm{EXP}}
\newcommand{\eps}{\varepsilon}
\DeclareMathOperator{\poly}{poly}
\newcommand{\cons}{\textsc{cons}}
\newcommand{\conf}{\textsc{conf}}
\newcommand{\CHSH}{\mathrm{CHSH}}
\newcommand\opn{\mathrel{\ooalign{$\subseteq$\cr
  \hidewidth\raise.325ex\hbox{$?\mkern.5mu$}\cr}}}
\title{Quantum Proofs}
\author{
  Thomas Vidick \\
  California Institute of Technology \\
  vidick@cms.caltech.edu
  \and
  John Watrous \\
  University of Waterloo \\
  john.watrous@uwaterloo.ca
}
\begin{document}

 \copyrightowner{T. Vidick and J. Watrous}
 \volume{11}
 \issue{1--2}
 \pubyear{2015}
 \isbn{978-1-68083-126-9}
 \doi{0.1561/0400000068}
 \firstpage{1}
 \lastpage{215}

\maketitle
\tableofcontents
\mainmatter

\begin{abstract}
  Quantum information and computation provide a fascinating twist on the notion
  of proofs in computational complexity theory. 
  For instance, one may consider a quantum computational analogue of the
  complexity class \class{NP}, known as \class{QMA}, in which a quantum state
  plays the role of a proof (also called a certificate or witness), and is
  checked by a polynomial-time quantum computation.
  For some problems, the fact that a quantum proof state could be a
  superposition over exponentially many classical states appears to offer
  computational advantages over classical proof strings.
  In the interactive proof system setting, one may consider a verifier and one
  or more provers that exchange and process quantum information rather than
  classical information during an interaction for a given input string, giving
  rise to quantum complexity classes such as \class{QIP}, \class{QSZK}, and
  $\class{QMIP}^{\ast}$ that represent natural quantum analogues of \class{IP},
  \class{SZK}, and \class{MIP}.
  While quantum interactive proof systems inherit some properties from their
  classical counterparts, they also possess distinct and uniquely quantum
  features that lead to an interesting landscape of complexity classes based on
  variants of this model.

  In this survey we provide an overview of many of the known results concerning
  quantum proofs, computational models based on this concept, and properties of
  the complexity classes they define.
  In particular, we discuss non-interactive proofs and the complexity class
  \class{QMA}, single-prover quantum interactive proof systems and the
  complexity class \class{QIP}, statistical zero-knowledge quantum interactive
  proof systems and the complexity class \class{QSZK}, and multiprover
  interactive proof systems and the complexity classes \class{QMIP},
  $\class{QMIP}^{\ast}$, and $\class{MIP}^{\ast}$.
\end{abstract}

\chapter{Introduction}
\label{chapter:Introduction}

\noindent
The topic of this survey, \emph{quantum interactive proof systems}, draws upon
three different notions---quantum information, interaction, and proofs---whose
combination forms a fascinating recipe best presented in the reverse order.


We begin with the notion of \emph{proofs} in complexity theory.
This notion has been central to complexity theory from its early
beginnings, relating closely to the fundamental distinction between
\emph{efficient construction} and \emph{efficient verification}.
In greater detail, it has long been recognized that for some computational
problems whose solutions may be difficult to obtain, it may nevertheless be
possible to efficiently \emph{verify} the correctness of a solution, given some
additional information (representing a \emph{proof}) that aids in this
verification.
The complexity class \class{NP} represents a formalization of this notion---it
includes those decision problems for which positive instances can be efficiently
verified given a suitable proof string (and for which negative instances are
never incorrectly verified as positive ones).

The distinction between efficient construction and efficient verification
appears, for instance, in work of Edmonds~\cite{Edmonds65a} from 1965 (although
not in his more famous 1965 paper \cite{Edmonds65}), where he describes the
\emph{principle of the absolute supervisor}:
a supervisor can ask his or her assistant to carry out a potentially lengthy
search procedure for some computational problem (potentially ``killing'' the
assistant with work!), and at the end of the day the assistant is expected to
provide sufficient information so that his or her solution can be ``verified
with ease'' by the supervisor.

The more modern terminology used to describe this situation is that of
a \emph{prover} and \emph{verifier}: the prover represents the assistant,
while the verifier represents the supervisor in Edmonds' story.
With respect to this terminology, our sympathies are generally reversed:
the verifier, faced with limitations on its computational abilities, simply
wants to know whether or not a given input is a positive instance of a fixed
decision problem, while the computationally unrestricted prover is
untrustworthy and will try to convince the verifier that the input is a
positive instance, irrespective of the truth.

The importance of what is now known as the \class{P} vs \class{NP} question,
which essentially asks if there are indeed problems for which the efficient
construction of a solution is impossible while an efficient verification is
possible, was in fact implicitly noted some time prior to Edmonds' work---in a
letter written to John von Neumann in the mid-1950s, Kurt G\"odel observed the
striking consequences that would result from an efficient solution to a certain
problem in first-order logic that is now known to be \class{NP}-complete.
The development of the theory of NP-completeness, by Cook \cite{Cook71}, 
Levin \cite{Levin73}, and Karp \cite{Karp72} in the early 1970s, placed the
notion of proofs in computational complexity on a firm mathematical foundation.


Next, we add a second ingredient: \emph{interaction}. 
The notion of an \emph{interactive proof system} was introduced independently by
Goldwasser, Micali, and Rackoff \cite{GoldwasserMR85,GoldwasserMR89} and Babai
\cite{Babai85,BabaiM88} in the 1980s. 
Babai was following a similar line of thought that led to the introduction of
\class{P} and \class{NP}: the identification of structural features that allow
a fine classification of the difficulty of solving classes of computational
problems (in this case, problems related to groups). 
Goldwasser, Micali, and Rackoff arrived at the notion from a different angle.
They introduced a notion of ``knowledge complexity'' of an interactive proof
(informally, the amount of information about a problem instance conveyed by the
interaction beyond the problem's solution) and gave an example of a simple
problem (testing quadratic residuosity) for which there existed a
\emph{zero-knowledge} interactive proof. 

The simplest type of interactive proof system represents an interaction
between a prover and verifier, which are similar characters to the ones
introduced in the non-interactive setting above, except that now we imagine that
they may engage in a discussion rather than the prover simply providing the
verifier with information.
In particular, the verifier may ask the prover questions and demand acceptable
responses in order to be satisfied.
As before, one views that the prover's aim is to convince the verifier that
a given input string is a positive instance to a fixed decision problem (or,
equivalently, that an input string possesses a fixed property of interest).
The verifier's goal is to check the validity of the prover's argument,
accepting only in the event that it is indeed convinced that the input string
is a positive problem instance, and rejecting if not.

It turns out that the (classical) interactive proof system model only represents
a departure from the non-interactive setting described above when the verifier
makes use of \emph{randomness}---in which case we must generally be satisfied
with the verifier gathering overwhelming statistical evidence, but not having
absolute certainty, in order to conclude that the prover's argument is valid.
(When no randomness is used, the prover may as well attempt to convince the
verifier to accept non-interactively by simply presenting a complete
transcript of the conversation they would have had by interacting, which the
verifier can efficiently check for validity by itself.)
As in the non-interactive case, we also make the standard assumption that
the prover's computational abilities are greater than the verifier's (or, at the
very least, that the prover has access to information that the verifier lacks).
The class \class{IP} is representative of the case in which the verifier
is required to run in polynomial time and the prover is computationally
unrestricted.
The characterization \class{IP}=\class{PSPACE}~\cite{LundFKN92,Shamir92} cements
the tight relationship between interactive proofs and computation, justifying
its position as a fundamental concept in computational complexity theory.

Many variants of interactive proof systems have been considered that impose
additional conditions on the interaction, place more stringent limits on the
prover's abilities, or consider interactions between more diverse sets of
parties, such as a verifier interacting with multiple cooperating or competing
provers.
Prominent examples include the class \class{SZK} of problems that have
zero-knowledge interactive proofs and the class \class{MIP} of problems whose
solution can be determined by a polynomial-time verifier interacting with
multiple cooperating provers, restricted only in their inability to communicate
with one other.


Finally, we finish off with a curious catalyst: \emph{quantum information}.
The Church--Turing thesis plays a foundational role in computer science by
postulating that computability is model independent: whether based on the
concept of a Turing machine, first-order logic, or any ``purely mechanical
process,'' the classes of functions whose values can be ``effectively
calculated'' are identical. 
The development of quantum computing in the 1990s posed the first serious
threat to this thesis.
Impetus for the consideration of computational procedures based on the laws of
quantum mechanics was provided by Shor's discovery of an efficient
\emph{quantum} algorithm for factoring~\cite{Shor94, Shor97}, a problem for
which no efficient classical probabilistic algorithm is known.
The study of the relation between \class{P} (or \class{BPP}) and \class{BQP},
the class of problems that can be decided in polynomial time by a quantum
Turing machine, is among the most interesting and mysterious problems in modern
complexity theory.
The difficulty of this question prompts the introduction of 
``quantum analogues'' of the most important classical complexity classes in an
attempt to identify problems for which the consideration of quantum processes
induces a strict separation. 

One prominent example is the complexity class \class{QMA} of decision problems
whose positive instances have \emph{quantum proofs} that can be verified by an
efficient quantum procedure.
Aside from the fundamental problem of understanding the physical substrate of
computation, the consideration of quantum mechanical states as proofs provides
a fascinating window into some of the most subtle features of quantum physics.
An essential way in which quantum states differ from their classical
counterparts is in one's ability to recover information that is present in the
mathematical description of the state. 
In quantum mechanics this ability is limited by the uncertainty
principle---for example, both the momentum and position of an electron can
be determined with high precision in principle, but there is a fundamental limit
to the accuracy with which those two properties can be \emph{simultaneously}
determined. 
Thus, the study of \class{QMA} sheds light on the many areas of physics in
which the properties of quantum states play an important role, from the theory
of superconductors to that of black holes.


Stir vigorously, and you have a recipe for quantum interactive proofs. 
Beyond the class \class{QMA} already discussed, quantum interactive proofs
reflect the richness of the classical model on which they are based, providing
a powerful lens on the properties of quantum mechanics and quantum
information.
For example, single-prover quantum interactive proofs, corresponding to the
class \class{QIP}, have the distinguishing property that they can be
parallelized to three message interactions, and this property (unlikely to hold
for classical interactive proofs) makes crucial use of the superposition
principle of quantum mechanics.
The no-cloning theorem plays an important role in the study of the class
\class{QSZK} of problems having quantum zero-knowledge interactive proofs
by hindering the construction of ``simulators'' essential to the study of
classical zero-knowledge.
By allowing multiple cooperating provers to share quantum entanglement, the
class $\class{QMIP}^{\ast}$ provides a complexity-theoretic viewpoint on the
nonlocal properties of entanglement. 


Having set a rather ambitious stage for this survey, we proceed with a more
concrete description of what is to come.

Chapter~\ref{chapter:preliminaries} introduces some preliminary material.
While it is assumed that the reader will be familiar with the basics of
complexity theory and quantum computing, we have made an effort to state and
explain the facts that play an important role in the results to be discussed,
directing the reader to standard textbooks for background material.

In Chapter~\ref{chapter:QMA} we begin with the consideration of the class
\class{QMA} of languages that have efficiently verifiable quantum proofs. This
class satisfies many of the desirable features of \class{NP}, such as strong
error amplification procedures and a rich set of complete problems. 
It also has many variants restricting, or extending, the types of proofs
allowed and the power of the verifier; a small but representative set of such
variants is discussed in the chapter.

Chapter~\ref{chapter:single-prover} considers single-prover quantum interactive
proof systems. 
An important tool in the study of the associated class \class{QIP} is a
semidefinite programming formulation of the verifier's maximum acceptance
probability. 
We introduce this formulation and use it to establish a parallel repetition
property of \class{QIP} as well as to give an essentially self-contained proof
of the characterization \class{QIP}=\class{PSPACE}.

In Chapter~\ref{chapter:QSZK} we consider the class \class{QSZK} of quantum
zero-knowledge interactive proofs.
One aspect in which these proof systems differ from their classical
counterparts is the difficulty of extending the key techniques (such as
rewinding) that are systematically used in the classical setting, and we
describe known quantum analogues for such techniques.

The final chapter, Chapter~\ref{chapter:multiple-provers}, is devoted to
quantum multi-prover interactive proofs. 
It will be seen that the consideration of entanglement between multiple provers
leads to a failure of the most basic intuition on which the classical theory is
built (most important of which are the technique of oracularization and the
characterization \class{MIP}=\class{NEXP}). 
We describe ways to work around this failure by fighting fire with fire,
devising techniques that make positive use of the provers' ability to share
entanglement.


This survey is mainly intended for non-specialists having a basic background in
complexity theory and quantum information.
A typical reader may be a student or researcher in either area desiring to
learn about the fundamentals of the (actively developing) theory of quantum
interactive proofs.
In most cases we have not included full proofs of the main results we present,
but whenever possible we have either included detailed sketches of the key
ideas behind the proofs, or have attempted to describe their most salient
elements in simplified settings.
Each chapter ends with notes that provide references for the results discussed
in the chapter as well as a brief survey of related results and pointers to the
literature.

\chapter{Preliminary Notions} \label{chapter:preliminaries}

The purpose of this chapter is to summarize various notions, primarily
concerning basic complexity theory and quantum computation, that we will rely
upon in subsequent chapters and consider as requisite background material.
The chapter is mainly intended to clarify our notation and terminology; readers
unfamiliar with the notions summarized will likely find other sources,
including textbooks and surveys focusing on this material, to better serve as a
first introduction.
Suggested references will be mentioned when appropriate.

\section{Complexity theoretic notions}

We assume the reader is familiar with standard classical complexity classes,
such as 
\class{NC},
\class{P}, 
\class{BPP}, 
\class{NP}, 
\class{AM}, 
\class{PSPACE}, and
\class{NEXP},
as well as the quantum complexity class \class{BQP}.
The textbook of Arora and Barak \cite{AroraB09} may be consulted for
definitions and basic properties of these classes.
All computational problems considered in this survey will be assumed to be
encoded over the binary alphabet $\{0,1\}$, which is hereafter
denoted~$\Sigma$.

It is convenient for us to consider computational decision problems as
\emph{promise problems}, where input strings may be assumed to be drawn from
some subset of all possible input strings. 
More formally, a promise problem is a pair $A = (A_{yes}, A_{no})$, where
$A_{yes}, A_{no}\subseteq \Sigma^{\ast}$ are disjoint sets of strings.
The strings contained in the sets $A_{yes}$ and $A_{no}$ are called the
\emph{yes-instances} and \emph{no-instances} of the problem, and a correct
answer to any such instance of the problem $A$ requires that it be properly
classified as a yes-instance or no-instance.
All strings lying outside of $A_{yes}\cup A_{no}$ may be considered as
``don't care'' inputs, and no requirements whatsoever are placed on
computations for such strings.
All of the complexity classes mentioned above may be considered as classes of
promise problems, as opposed to classes of languages (which are essentially
promise problems for which $A_{yes}\cup A_{no} = \Sigma^{\ast}$).

Karp reductions (also called polynomial-time many-one reductions), as well
as the notion of \emph{completeness} of a problem for a complexity class, are
defined for promise problems in the same way as for languages.
More precisely, a promise problem $A=(A_{yes},A_{no})$ is 
\emph{Karp reducible} to a promise problem $B=(B_{yes},B_{no})$ if there exists
a polynomial-time computable function $f$ that maps every string $x\in A_{yes}$
to $f(x)\in B_{yes}$, and every string $x\in A_{no}$ to $f(x)\in B_{no}$. 
In this case, the notation $A \leq_m B$ (where the ``m'' is short for
``many-one'') may be used to indicate this relationship.
A promise problem $A$ is said to be \emph{complete} for a certain class
$\class{C}$ of promise problems if $A\in\class{C}$ and every promise problem in
$\class{C}$ is Karp reducible to $A$.

There are several occasions in which we speak of functions defined on the
nonnegative integers $\natural = \{0,1,2,\ldots\}$, taking either nonnegative
integer values or real-number values.
In particular, the following terminology will be used: 
\begin{mylist}{\parindent}
\item[1.] 
  A function of the form $f:\natural\to\natural$ is said to be
  \emph{polynomially bounded} if there exists a polynomial-time deterministic
  Turing machine that outputs $1^{f(n)}$ on input $1^n$, for every
  $n\in\natural$.
  We will use the notation $\poly$ to denote an arbitrary function that is
  polynomially bounded.
\item[2.]
  A function of the form $g:\natural\to\rational$ is said to be 
  \emph{polynomial-time computable} if there exists a
  polynomial-time deterministic Turing machine that outputs $g(n)$,
  expressed as a ratio of integers written in binary notation,
  on input $1^n$, for each $n\in\natural$.
\end{mylist}

\section{Quantum states, channels, and measurements}
\label{section:states-channels-measurements}

When discussing quantum interactive proof systems, and quantum computations
more generally, it is useful to make use of some basic concepts of the theory
of quantum information.
Readers unfamiliar with quantum information and computation are referred to the
books of Nielsen and Chuang \cite{NielsenC00} and Kaye, Laflamme, and Mosca
\cite{KayeLM07}.

\subsection{Linear algebra notation}

We use calligraphic letters $\X,\Y,\W$ to denote Hilbert spaces.
All Hilbert spaces considered in this survey are finite-dimensional and will
usually correspond to systems comprised of zero or more qubits (in which case
their dimension is a power of 2).
It is assumed that an orthonormal \emph{standard basis} has been fixed for each
such space $\X$.

In the typical case in which $\X$ is the Hilbert space corresponding to $n$
qubits, the standard basis is expressed using Dirac notation as
\begin{equation}
  \bigl\{\ket{x}\,:\,x\in\Sigma^n\bigr\},
\end{equation}
where $\ket{x}$ denotes the column vector with a 1 in the entry indexed by $x$
and zero for all other entries.
An arbitrary vector $\ket{\psi}$ in such a space may be expressed as
\begin{equation}
  \ket{\psi} = \sum_{x\in\Sigma^n} \alpha_x \ket{x}
\end{equation}
for some choice of complex coefficients $\{\alpha_x\,:\,x\in\Sigma^n\}$.
The \emph{conjugate-transpose} of this vector is given by
\begin{equation}
  \bra{\psi} = \sum_{x\in\Sigma^n} \overline{\alpha_x} \bra{x},
\end{equation}
where $\bra{x}$ denotes the row vector (as opposed to a column vector) with a 1
in the entry indexed by $x$ and zero for all other entries.
The inner product of two vectors, $\ket{\phi}$ and $\ket{\psi}$, is
written $\langle \phi | \psi \rangle$, and the Euclidean norm of $\ket{\psi}$
is defined as $\|\ket{\psi}\| = \langle \psi | \psi \rangle^{1/2}$.

Given two Hilbert spaces $\X$ and $\Y$, the space of all linear mappings (or
\emph{operators}) from $\X$ to $\Y$ is denoted $\Lin(\X,\Y)$, and in the case
that $\X = \Y$ the shorthand $\Lin(\X)$ is used in place of $\Lin(\X,\X)$.
The identity element of $\Lin(\X)$ is written $\I_\X$, and the trace of an
operator $X\in\Lin(\X)$ is defined as
\begin{equation}
  \tr(X) = \sum_x \bra{x} X \ket{x},
\end{equation}
where the sum is taken over the standard basis elements of $\X$.
(Any other orthonormal basis would yield the same value.)
With respect to the standard bases of $\X$ and $\Y$, operators in the set
$\Lin(\X,\Y)$ may be identified with matrices in the usual way, with the
$(x,y)$ entry of the matrix corresponding to an operator $A$ being given by
$\langle x | A | y\rangle$.

The \emph{adjoint}, or \emph{conjugate transpose}, of an operator
$A\in\Lin(\X,\Y)$ is the operator $A^{\ast}\in\Lin(\Y,\Y)$ defined by the
condition
\begin{equation}
  \bra{y} A^{\ast} \ket{x} = \overline{\bra{x} A \ket{y}}
\end{equation}
for every choice of standard basis states $\ket{x}\in \X$ and $\ket{y}\in\Y$.
An inner product is defined on the space $\Lin(\X,\Y)$ as
\begin{equation}
  \ip{A}{B} = \tr(A^{\ast} B)
\end{equation}
for all $A,B\in\Lin(\X,\Y)$.

The following set of operators will often be mentioned throughout this
survey:
\begin{mylist}{\parindent}
\item[1.]
  An operator $U\in\Lin(\X)$ is \emph{unitary} if $U^{\ast} U = \I_{\X}$, which
  is an equivalent condition to $U U^{\ast} = \I_{\X}$.
  The set of all such operators is denoted $\Unitary(\X)$.
  More generally, an operator $A \in \Lin(\X,\Y)$ is an \emph{isometry}
  if $A^{\ast} A = \I_{\X}$, and the set of all such operators
  is denoted $\Unitary(\X,\Y)$.
\item[2.]
  An operator $H\in\Lin(\X)$ is \emph{Hermitian} if $H = H^{\ast}$.
  The set of all such operators is denoted $\Herm(\X)$.
\item[3.]
  An operator $P\in\Lin(\X)$ is \emph{positive semidefinite} if
  it is Hermitian and has only nonnegative eigenvalues.
  The set of all such operators is denoted $\Pos(\X)$.
\item[4.]
  An operator $\Pi\in\Lin(\X)$ is a \emph{projection operator} if
  it is Hermitian and satisfies $\Pi^2 = \Pi$.
  The set of all such operators is denoted $\Proj(\X)$.
\item[5.]
  An operator $\rho\in\Lin(\X)$ is a \emph{density operator} if
  it is positive semidefinite and satisfies $\tr(\rho) = 1$.
  The set of all such operators is denoted $\Density(\X)$.
\end{mylist}

\subsection{States and registers}

A \emph{quantum state} is represented by a density operator
$\rho\in\Density(\X)$, for some Hilbert space $\X$ that has been associated
with the physical system whose state is being described.
A state is \emph{pure} if it is represented by a density operator of the
form $\rho = \ket{\psi}\bra{\psi}$, for $\ket{\psi}$ a unit vector in $\X$.
Equivalently, a state is pure if its associated density operator 
$\rho \in \Density(\X)$ is an extreme point of the (convex) set $\Density(\X)$.
A state is said to be \emph{mixed} if it is not pure.
When a unit vector $\ket{\psi}$ is referred to as being a state of a system, it
is to be understood that one is speaking of the pure state
$\ket{\psi}\bra{\psi}$.

It is convenient to refer to physical systems that store quantum
information as \emph{registers}.
%
Names such as $\reg{X}$, $\reg{Y}$, and $\reg{Z}$, and other capital letters
written in a \emph{sans serif} font, are commonly used for this purpose.
With a given register $\reg{X}$, one associates a Hilbert space $\X$, so the
set of possible quantum states of $\reg{X}$ coincides with $\Density(\X)$.
As a general convention, we use the same letter in different fonts to refer to
a register and its associated Hilbert space.

Pairs or $k$-tuples of registers, such as $(\reg{X},\reg{Y})$ or
$(\reg{X}_1,\ldots,\reg{X}_k)$, are often considered, and may themselves be
treated as single registers.
The Hilbert space associated with such a compound register is obtained by
taking the tensor product of the Hilbert spaces associated with the individual
registers.
For example, if $\reg{Z} = (\reg{X},\reg{Y})$, then the Hilbert space $\Z$
corresponding to $\reg{Z}$ is given by $\Z = \X\otimes\Y$, for $\X$ and $\Y$
being the Hilbert spaces associated with $\reg{X}$ and $\reg{Y}$, respectively.
The standard basis of $\Z$ in this case is obtained by tensoring the elements
of the standard bases of $\X$ and $\Y$:
\begin{equation}
  \bigl\{\ket{x}\ket{y}\,:\,x\in\Sigma^n,\:y\in\Sigma^m\bigr\},
\end{equation}
assuming here that $\reg{X}$ is an $n$-qubit register and $\reg{Y}$ is an
$m$-qubit register.
(In general, a juxtaposition of vectors, such as $\ket{\phi}\ket{\psi}$ for
$\ket{\phi}\in\X$ and $\ket{\psi}\in\Y$, denotes a tensor product:
$\ket{\phi}\ket{\psi} = \ket{\phi}\otimes\ket{\psi}$.)

Every nonzero vector $\ket{\gamma}\in\X\otimes\Y$ of a bipartite tensor product
space has a decomposition
\begin{equation}
  \ket{\gamma} = \sum_{i = 1}^r \alpha_i \ket{\phi_i} \ket{\psi_i}
\end{equation}
for $\alpha_1,\ldots,\alpha_r$ being positive real numbers and
$\ket{\phi_1},\ldots,\ket{\phi_r}\in\X$ and
$\ket{\psi_1},\ldots,\ket{\psi_r}\in\Y$ being vectors for which both of the
collections
$\{\ket{\phi_1},\ldots,\ket{\phi_r}\}$ and
$\{\ket{\psi_1},\ldots,\ket{\psi_r}\}$ are orthonormal.
This decomposition, which is called the \emph{Schmidt decomposition}, is
closely related to the singular value decomposition of matrices.
The values $\alpha_1,\ldots,\alpha_r$ are the \emph{Schmidt coefficients} of
$\ket{\gamma}$, and are uniquely determined by $\ket{\gamma}$.
A pure state corresponding to a unit vector $\ket{\gamma} \in \X\otimes\Y$ is
called a \emph{product} state if it takes the form 
$\ket{\gamma} = \ket{\psi}\ket{\phi}$ for vectors $\ket{\psi}\in\X$ and
$\ket{\phi}\in\Y$, which is equivalent to $\ket{\gamma}$ having a single
(nonzero) Schmidt coefficient $\alpha_1=1$.

A mixed state $\rho \in \Density(\X\otimes\Y)$ is \emph{separable} if it has a
decomposition
\begin{equation}
  \rho = \sum_{i=1}^N p_i \,\sigma_i \otimes \tau_i
\end{equation}
for $\sigma_1,\ldots,\sigma_N\in\Density(\X)$ and
$\tau_1,\ldots,\tau_N\in\Density(\Y)$, and $(p_1,\ldots,p_N)$ being a vector
of probabilities.
A state that is not separable is called \emph{entangled}.
A pure state is separable if and only if it is a product state, but the
mixed-state case is more complicated: it is an \class{NP}-hard problem to
decide if a given density operator is separable \cite{Gurvits03}.

If $\rho\in\Density(\X\otimes\Y)$ is a state of a pair of registers
$(\reg{X},\reg{Y})$, then the \emph{reduced state} of $\reg{X}$ is obtained by
taking the \emph{partial trace} over~$\Y$:
\begin{equation}
  \tr_{\Y}(\rho) = \sum_y \bigl(\I_{\X}\otimes \bra{y}\bigr)
  \rho \bigl(\I_{\X}\otimes \ket{y}\bigr),
\end{equation}
where the sum ranges over the elements of the standard basis of $\Y$.

Given a state $\rho\in\Density(\X)$ of a register $\reg{X}$, a 
\emph{purification} of $\rho$ is any pure state of a pair of
registers $(\reg{X},\reg{Y})$ whose reduced state on $\reg{X}$ is $\rho$.
That is, such a state is represented by a unit vector
$\ket{\gamma}\in\X\otimes\Y$ such that
\begin{equation}
  \rho = \tr_{\Y}\bigl( \ket{\gamma} \bra{\gamma} \bigr).
\end{equation}
Given a spectral decomposition
\begin{equation}
  \rho = \sum_{i=1}^r \lambda_i \ket{\psi_i}\bra{\psi_i}
\end{equation}
with each $\lambda_i > 0$, one may obtain a purification
\begin{equation}
  \ket{\gamma} = \sum_i \sqrt{\lambda_i} \ket{\psi_i}\ket{\phi_i}
\end{equation}
provided $\Y$ has dimension at least the rank of $\rho$, allowing for the
existence of an orthonormal collection $\{\ket{\phi_i}\}$.
The following theorem concerning purifications has fundamental importance.

\begin{theorem}[Unitary equivalence of purifications]
  \label{thm:unitary-equivalence}
  Let $\rho\in \Density(\X)$ and suppose $\ket{\psi},\ket{\phi}\in\X\otimes\Y$
  satisfy
  \begin{equation}
    \tr_{\Y}\bigl( \ket{\psi} \bra{\psi} \bigr) = 
    \rho = \tr_{\Y}\bigl( \ket{\phi} \bra{\phi} \bigr).
  \end{equation}
  There exists a unitary operator $U\in \Unitary(\Y)$ such that
  $\ket{\phi} = (\I_\X \otimes U) \ket{\psi}$. 
\end{theorem}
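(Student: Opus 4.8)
The plan is to reduce the whole statement to a single ``canonical'' purification. First I would fix a spectral decomposition $\rho=\sum_{i=1}^{r}\lambda_i\ket{\psi_i}\bra{\psi_i}$ with all $\lambda_i>0$ and $\{\ket{\psi_1},\dots,\ket{\psi_r}\}$ orthonormal, pick an orthonormal collection $\{\ket{1},\dots,\ket{r}\}$ among the standard basis vectors of $\Y$ (possible since the Schmidt rank of any purification of $\rho$ equals $r$, forcing $\dim\Y\geq r$), and set $\ket{\gamma}=\sum_{i=1}^{r}\sqrt{\lambda_i}\,\ket{\psi_i}\ket{i}$. The theorem then follows once we know that \emph{any} vector $\ket{\xi}\in\X\otimes\Y$ with $\tr_{\Y}(\ket{\xi}\bra{\xi})=\rho$ can be written as $\ket{\xi}=(\I_{\X}\otimes U)\ket{\gamma}$ for some $U\in\Unitary(\Y)$: applying this to $\ket{\psi}$ and to $\ket{\phi}$ yields unitaries $U,V$ with $\ket{\psi}=(\I_{\X}\otimes U)\ket{\gamma}$ and $\ket{\phi}=(\I_{\X}\otimes V)\ket{\gamma}$, and then $\ket{\phi}=(\I_{\X}\otimes VU^{\ast})\ket{\psi}$ with $VU^{\ast}$ unitary.

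To establish that claim I would expand $\ket{\xi}$ along the standard basis of $\Y$ as $\ket{\xi}=\sum_{k}\ket{x_k}\ket{k}$ with $\ket{x_k}\in\X$, and compute the partial trace to get $\rho=\sum_{k}\ket{x_k}\bra{x_k}$. Every $\ket{x_k}$ then lies in the image of $\rho$ (if $\ket{\eta}\perp\op{span}\{\ket{\psi_1},\dots,\ket{\psi_r}\}$ then $0=\bra{\eta}\rho\ket{\eta}=\sum_k\abs{\langle\eta|x_k\rangle}^2$), so we may write $\ket{x_k}=\sum_{i=1}^{r}\sqrt{\lambda_i}\,c_{k,i}\ket{\psi_i}$ with $c_{k,i}=\lambda_i^{-1/2}\langle\psi_i|x_k\rangle$. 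Since $(\I_{\X}\otimes U)\ket{\gamma}=\sum_{i=1}^{r}\sqrt{\lambda_i}\,\ket{\psi_i}\,(U\ket{i})$, the desired identity $\ket{\xi}=(\I_{\X}\otimes U)\ket{\gamma}$ holds exactly when the first $r$ columns of the matrix of $U$ in the standard basis of $\Y$ coincide with the columns of $C=(c_{k,i})$. So the remaining task is to verify that $C$ has orthonormal columns, because any orthonormal $r$-tuple in $\complex^{\dim\Y}$ extends to an orthonormal basis and hence to a unitary on $\Y$.

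That last verification is the only place the hypothesis is genuinely used, and it is a short computation:
\[
  \sum_{k}\overline{c_{k,i}}\,c_{k,j}
  =\frac{1}{\sqrt{\lambda_i\lambda_j}}\,\bra{\psi_j}\Bigl(\textstyle\sum_{k}\ket{x_k}\bra{x_k}\Bigr)\ket{\psi_i}
  =\frac{1}{\sqrt{\lambda_i\lambda_j}}\,\bra{\psi_j}\rho\ket{\psi_i}
  =\delta_{i,j},
\]
using $\rho\ket{\psi_i}=\lambda_i\ket{\psi_i}$ together with orthonormality of the $\ket{\psi_i}$.

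I do not anticipate a real obstacle here; the one point deserving a moment's care is the bookkeeping of dimensions, namely that the very existence of a purification inside $\X\otimes\Y$ already forces $\dim\Y\geq\rank(\rho)$, which is what makes $\ket{\gamma}$ well defined and makes the extension of the columns of $C$ to a full orthonormal basis possible. A slicker but less self-contained alternative would be to invoke the operator--vector correspondence $\ket{\xi}\mapsto A_{\xi}$ determined by $\vec(\ket{x}\bra{y})=\ket{x}\ket{y}$, under which the purification condition reads $A_{\psi}A_{\psi}^{\ast}=\rho=A_{\phi}A_{\phi}^{\ast}$ and the conclusion reads $A_{\phi}=A_{\psi}U^{\t}$; this reduces to the elementary fact that operators with the same ``left absolute value'' $\sqrt{XX^{\ast}}$ differ on the right by a unitary, which follows from the singular value decomposition. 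I would nonetheless present the self-contained argument sketched above.
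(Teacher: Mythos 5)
Your proof is correct and complete: the reduction to the canonical purification $\ket{\gamma}$, the verification that the matrix $C$ has orthonormal columns (which also independently forces $r\leq\dim\Y$), and the final composition $\ket{\phi}=(\I_\X\otimes VU^{\ast})\ket{\psi}$ all check out. The paper states Theorem~\ref{thm:unitary-equivalence} only as background material and gives no proof, so there is nothing to compare against; your argument (equivalently, the vec/polar-decomposition route you sketch at the end) is the standard one.
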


\subsection{Channels and measurements}

\emph{Quantum channels} describe discrete-time changes in the states of
registers that, in an idealized sense, may be considered physically
implementable.
Given registers $\reg{X}$ and $\reg{Y}$ having associated Hilbert spaces
$\X$ and $\Y$, the set of all quantum channels transforming states of
$\reg{X}$ into states of $\reg{Y}$, denoted $\Channel(\X,\Y)$, can be
characterized as the set of all linear maps of the form
\begin{equation}
  \Phi: \Lin(\X) \rightarrow \Lin(\Y)
\end{equation}
that are \emph{completely positive} and \emph{trace-preserving}.
An equivalent way to describe the two conditions of being completely positive
and trace-preserving is to require that, for every finite-dimensional Hilbert
space $\Z$, it holds that
\begin{equation}
  \bigl(\Phi \otimes \I_{\Lin(\Z)}\bigr)(\rho) \in \Density(\Y\otimes\Z)
\end{equation}
for every density operator $\rho\in\Density(\X\otimes\Z)$.
(Here, the mapping $\I_{\Lin(\Z)}$ denotes the identity mapping on $\Lin(\Z)$.)

A channel $\Phi\in \Channel(\X,\Y)$ should be understood as representing a
physical transformation of register $\reg{X}$ into register $\reg{Y}$.
That is, if the state of $\reg{X}$ is given by $\rho\in\Density(\X)$ and the
channel $\Phi\in\Channel(\X,\Y)$ is performed, the register $\reg{X}$ is
transformed into the register $\reg{Y}$, whose state is then $\Phi(\rho)$.
The two registers $\reg{X}$ and $\reg{Y}$ never simultaneously co-exist in this
situation, so it is not meaningful to consider their joint state.
Of course, nothing prevents one from taking $\reg{Y}$ to be equal to $\reg{X}$,
and in this situation it is natural to view that the channel has simply changed
the state of $\reg{X}$, as opposed to transforming $\reg{X}$ into a new
register.

A convenient representation of quantum channels is the 
\emph{Stinespring representation}. 
Given an arbitrary channel $\Phi\in\Channel(\X,\Y)$ there always exists a 
Hilbert space $\Z$, which can be chosen to have dimension at most the product
of the dimensions of $\X$ and $\Y$, along with a linear operator
$A\in\Lin(\X,\Y\otimes \Z)$, such that
\begin{mylist}{\parindent}
\item[1.]
  $\Phi(X)=\Tr_{\Z}(A X A^{\ast})$ for every $X\in\Lin(\X)$, and
\item[2.]
  $A^{\ast} A = \I_{\X}$ (i.e., $A$ is an isometry).
\end{mylist}
The fact that every channel can be represented in this way is a consequence of
a theorem known as \emph{Stinespring's dilation theorem}.

It is instructive to consider the case in which a channel $\Phi$ transforms an
$n$-qubit register $\reg{X}$ into an $m$-qubit register $\reg{Y}$.
In this case, the Hilbert spaces corresponding to these registers are such that
$\X$ has dimension $2^n$ and $\Y$ has dimension $2^m$; and when considering
a Stinespring representation of $\Phi$, one may take $\reg{Z}$ to be an
$(n+m)$-qubit register, so that $\Z$ has dimension $2^{n+m}$.
One finds that there must exist a Stinespring representation
\begin{equation}
  \Phi(X) = \tr_{\Z} (A X A^{\ast}),
\end{equation}
for $A$ being an isometry of the form $A\in\Lin(\X,\Y\otimes\Z)$, as
is illustrated in Figure~\ref{figure:Stinespring}.
One may further observe that it must be possible to express any isometry
$A$ of the form $A\in\Lin(\X,\Y\otimes\Z)$ as
\begin{equation}
  A = U \bigl(\I_{\X} \otimes \ket{0^{2m}}\bigr)
\end{equation}
for $U$ being a unitary operator acting on $n + 2m$ qubits.
It is therefore possible to implement an arbitrary channel transforming
$n$ qubits to $m$ qubits by means of a unitary operation on $n + 2m$ qubits,
as is also illustrated in Figure~\ref{figure:Stinespring}.

\begin{figure}
  \begin{center}
    \begin{minipage}[c]{46mm}
      \begin{tikzpicture}[scale=0.45,
          channel/.style={draw, minimum height=14mm, minimum width=10mm,
            fill = ChannelColor, text=ChannelTextColor},
          isometry/.style={draw, minimum height=20mm, minimum width=10mm,
            fill = ChannelColor, text=ChannelTextColor},
          invisible/.style={minimum height=14mm, minimum width=8mm},
          >=latex]
        
        \node (W) at (0,0) [isometry] {$A$};
        \node (Left) at (-3,0) [invisible] {$\rho$};
        \node (Right) at (3,0) [invisible] {%
          \makebox[0mm][c]{\raisebox{12mm}{$\quad\Phi(\rho)$}}};
        
        \foreach \y in {-6,-4,-2,0,2,4,6} {
          \draw ([yshift=\y mm]Left.east) -- ([yshift=\y mm]W.west) {};
        }
        
        \foreach \y in {-16,-14,-12,-10,-8,-6,-4,-2,0,2,4,10,12,14,16} {
          \draw ([yshift=\y mm]W.east) -- ([yshift=\y mm]Right.west) {};
        }
        
        \node at (4.15,-0.62) {$\left.\rule{0mm}{6mm}\right\}$\hspace{-2mm}
          \footnotesize
          \begin{tabular}{c}
            traced\\[-1mm]
            out
        \end{tabular}};
      \end{tikzpicture}
    \end{minipage}\hspace{4mm}
    \begin{minipage}[c]{52mm}
      \begin{tikzpicture}[scale=0.45,
          channel/.style={draw, minimum height=14mm, minimum width=10mm,
            fill = ChannelColor, text=ChannelTextColor},
          unitary/.style={draw, minimum height=20mm, minimum width=10mm,
            fill = ChannelColor, text=ChannelTextColor},
          invisible/.style={minimum height=14mm, minimum width=8mm},
          >=latex]
        
        \node (U) at (0,0) [unitary] {$U$};
        \node (Left) at (-3,0) [invisible] {%
          \makebox[0mm][c]{\raisebox{9mm}{$\rho$}}};
        \node (Right) at (3,0) [invisible] {%
          \makebox[0mm][c]{\raisebox{12mm}{$\quad\Phi(\rho)$}}};
                
        \foreach \y in {16,14,12,10,8,6,4,-2,-4,-6,-8,-10,-12,-14,-16} {
          \draw ([yshift=\y mm]Left.east) -- ([yshift=\y mm]U.west) {};
        }
        
        \foreach \y in {-16,-14,-12,-10,-8,-6,-4,-2,0,2,4,10,12,14,16} {
          \draw ([yshift=\y mm]U.east) -- ([yshift=\y mm]Right.west) {};
        }
        
        \node at (4.15,-0.62) {$\left.\rule{0mm}{6mm}\right\}$\hspace{-2mm}
          \footnotesize
          \begin{tabular}{c}
            traced\\[-1mm]
            out
        \end{tabular}};
        
        \node at (-3.6,-0.9) 
              {$\scriptstyle{\ket{0\,\cdots\,0}\left\{\rule{0mm}{5mm}\right.}$};
              
      \end{tikzpicture}
    \end{minipage}
  \end{center}
  \caption{Given a quantum channel $\Phi$ transforming an $n$ qubit register
    $\reg{X}$ into an $m$ qubit register $\reg{Y}$, one may always find an
    isometry $A$, transforming pure states of $\reg{X}$ to pure states
    of $(\reg{Y},\reg{Z})$, for $\reg{Z}$ being an $n+m$ qubit register, so
    that $\Phi(\rho) = \tr_{\Z}(A \rho A^{\ast})$ for every state $\rho$ of
    $\reg{X}$.
    The isometry $A$ can be realized as a unitary operation on $n + 2m$ qubits,
    with all but the first $n$ qubits being initialized to the $\ket{0}$
    state.}
  \label{figure:Stinespring}
\end{figure}
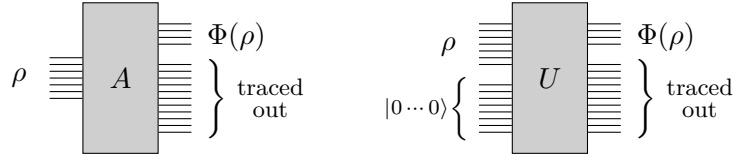

An analogous fact to the unitary equivalence of purifications holds for
Stinespring representations, as the following theorem states.

\begin{theorem}
  \label{theorem:Stinespring-equivalence}
  Let $\Phi\in\Channel(\X,\Y)$ and let
  $A, B \in \Lin(\X,\Y\otimes\Z)$ be isometries such that
  \begin{equation}
    \Tr_\Z(AXA^{\ast}) = \Phi(X) = \Tr_\Z(B X B^{\ast})
  \end{equation}
  for all $X\in\Lin(\X)$. 
  There exists a unitary operator $U\in\Unitary(\Z)$ such that
  $B = (\I_\Y\otimes U)A$.
\end{theorem}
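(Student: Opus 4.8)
The plan is to deduce the statement from the unitary equivalence of purifications (Theorem~\ref{thm:unitary-equivalence}), by using a maximally entangled reference state to turn the two Stinespring representations of $\Phi$ into two purifications of a single density operator, and then transferring the resulting unitary back.

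\textbf{Set-up.} Introduce a register $\reg{W}$ whose Hilbert space $\W$ is a copy of $\X$, and set
\begin{equation}
  \ket{\tau} = \frac{1}{\sqrt{\dim(\X)}}\sum_{x}\ket{x}_{\X}\ket{x}_{\W}\in\X\otimes\W,
\end{equation}
the sum ranging over the standard basis of $\X$. Define $\ket{\psi}=(A\otimes\I_{\W})\ket{\tau}$ and $\ket{\phi}=(B\otimes\I_{\W})\ket{\tau}$ in $\Y\otimes\Z\otimes\W$; both are unit vectors because $A$ and $B$ are isometries.

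\textbf{Matching reduced states.} Expanding $\ket{\tau}\bra{\tau}$ over the standard basis of $\X$ and tracing out $\Z$, the hypothesis $\tr_{\Z}(A\ket{x}\bra{x'}A^{\ast})=\Phi(\ket{x}\bra{x'})=\tr_{\Z}(B\ket{x}\bra{x'}B^{\ast})$, applied term by term, yields
\begin{equation}
  \tr_{\Z}\bigl(\ket{\psi}\bra{\psi}\bigr)
  = \bigl(\Phi\otimes\I_{\Lin(\W)}\bigr)\bigl(\ket{\tau}\bra{\tau}\bigr)
  = \tr_{\Z}\bigl(\ket{\phi}\bra{\phi}\bigr) =: \rho\in\Density(\Y\otimes\W),
\end{equation}
where $\rho$ is a density operator since $\Phi\otimes\I_{\Lin(\W)}$ is a channel. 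Thus $\ket{\psi}$ and $\ket{\phi}$ are two purifications of $\rho$, with $\Z$ in the role of purifying register. Regrouping tensor factors so that $\Y\otimes\W$ is the base space and $\Z$ the ancilla, Theorem~\ref{thm:unitary-equivalence} supplies a unitary $U\in\Unitary(\Z)$ with $\ket{\phi}=(\I_{\Y}\otimes U\otimes\I_{\W})\ket{\psi}$, i.e.
\begin{equation}
  (B\otimes\I_{\W})\ket{\tau}=\bigl((\I_{\Y}\otimes U)A\otimes\I_{\W}\bigr)\ket{\tau}.
\end{equation}

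\textbf{Conclusion.} The linear map $M\mapsto(M\otimes\I_{\W})\ket{\tau}=\frac{1}{\sqrt{\dim(\X)}}\sum_{x}(M\ket{x})\ket{x}_{\W}$ on $\Lin(\X,\Y\otimes\Z)$ is injective: since $\{\ket{x}_{\W}\}$ is orthonormal, the image vanishes only if $M\ket{x}=0$ for every basis vector $\ket{x}$, hence only if $M=0$. Applying this with $M=B-(\I_{\Y}\otimes U)A$ gives $B=(\I_{\Y}\otimes U)A$, as required. I expect the only delicate point to be the bookkeeping of tensor-factor orderings when casting the present situation into the exact form demanded by Theorem~\ref{thm:unitary-equivalence}; the rest is routine. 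A more self-contained alternative would argue directly at the level of Kraus operators---writing $\Phi(X)=\sum_a A_aXA_a^{\ast}=\sum_a B_aXB_a^{\ast}$ and showing the two operator tuples are related by a unitary change of basis on $\Z$---but reusing the already established purification theorem is shorter.
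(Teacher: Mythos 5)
Your proof is correct. The paper states Theorem~\ref{theorem:Stinespring-equivalence} as background material without giving a proof, so there is nothing in the text to compare against; your reduction to Theorem~\ref{thm:unitary-equivalence} via the maximally entangled reference state is the standard derivation, and each step checks out: the equality $\tr_{\Z}(\ket{\psi}\bra{\psi}) = (\Phi\otimes\I_{\Lin(\W)})(\ket{\tau}\bra{\tau}) = \tr_{\Z}(\ket{\phi}\bra{\phi})$ follows term by term from the hypothesis applied to $\ket{x}\bra{x'}$, the regrouping of $\Y\otimes\Z\otimes\W$ as $(\Y\otimes\W)\otimes\Z$ is harmless bookkeeping, and the injectivity of $M\mapsto(M\otimes\I_{\W})\ket{\tau}$ legitimately transfers the conclusion from vectors back to the operators, giving $B=(\I_\Y\otimes U)A$.
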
 

A channel $\Phi$ is said to be a \emph{unitary channel} if there
exists a unitary operator $U$ such that
\begin{equation}
  \Phi(X) = U X U^{\ast}
\end{equation}
for all $X\in\Lin(\X)$, and $\Phi$ is said to be an \emph{isometric channel}
if there exists a linear isometry $A$ such that
\begin{equation}
  \Phi(X) = A X A^{\ast}
\end{equation}
for all $X\in\Lin(\X)$.

When considering channels and related notions, it is sometimes convenient to
consider linear maps of the form
\begin{equation}
  \label{eq:linear-map-on-operator-spaces}
  \Phi: \Lin(\X) \rightarrow \Lin(\Y)
\end{equation}
that might be neither completely positive nor trace-preserving.
The notation $\Trans(\X,\Y)$ is used to denote the set of all linear maps of
the form \eqref{eq:linear-map-on-operator-spaces}.
For example, if $\Phi_0,\Phi_1\in\Channel(\X,\Y)$ are channels, it may be
useful to consider the linear map $\Phi = \Phi_0 - \Phi_1 \in \Trans(\X,\Y)$
as a way of representing the difference between the two channels.
The \emph{adjoint} of a map $\Phi\in\Trans(\X,\Y)$ is the uniquely defined
map $\Phi^{\ast} \in \Trans(\Y,\X)$ satisfying
\begin{equation}
  \ip{Y}{\Phi(X)} = \ip{\Phi^{\ast}(Y)}{X}
\end{equation}
for all $X\in\Lin(\X)$ and $Y\in\Lin(\Y)$.
\label{sec:channel-purification}

A \emph{measurement} is a process through which classical information is
obtained from a register in a quantum state.
For the purposes of this survey it will be sufficient to consider measurements
that are defined as the following special cases of quantum channels:
\begin{mylist}{\parindent}
\item[1.]
  A \emph{standard-basis measurement} of a register $\reg{X}$ is
  described by the so-called \emph{completely phase-damping} channel
  $\Lambda\in\Channel(\X)$, which is defined as
  \begin{equation}
    \Lambda(\rho) = \sum_{x} \bra{x}\rho\ket{x} \ket{x}\bra{x},
  \end{equation}
  for the sum ranging over the standard basis elements of $\X$.
  When $\reg{X}$ is measured in this way, its resulting state is represented by
  a diagonal density operator, which is naturally associated with a classical,
  probabilistic state.
  That is, measuring $\rho$ yields each outcome $x$ with probability
  $\bra{x} \rho \ket{x}$.
\item[2.]
  A general measurement of a register $\reg{X}$ can always be described as the
  composition of a channel $\Phi$ transforming $\reg{X}$ into $\reg{Y}$,
  followed by a standard-basis measurement of $\reg{Y}$.
  A composition of a channel and a standard-basis measurement of this sort can
  always be written as
  \begin{equation}
    (\Lambda \Phi)(X) = \sum_y \bigip{P_y}{X} \ket{y}\bra{y},
  \end{equation}
  for $\{P_y\}$ being a collection of positive semidefinite 
  \emph{measurement operators} satisfying
  \begin{equation}
    \sum_y P_y = \I_{\X}.
  \end{equation}
\end{mylist}

\subsection{Distance measures on quantum states and channels}

The space $\Lin(\X,\Y)$ is equipped with the \emph{operator norm}
(or \emph{spectral norm}), derived from the Euclidean norm on $\X$ and $\Y$ as
\begin{equation}
  \norm{X} = \max\bigl\{
  \norm{X\ket{\psi}} \,:\, \ket{\psi}\in\X,\:\norm{\ket{\psi}} \leq 1\bigr\}  .
\end{equation}
We will also make use of the \emph{trace norm}, defined as
\begin{equation}
  \norm{X}_1 = \Tr \Bigl( (XX^{\ast})^{1/2} \Bigr).
\end{equation}
Equivalently, $\norm{X}_1$ is equal to the sum of the singular values of $X$.
The operator norm and trace norm are dual to one another, meaning that the
following relationships hold:
\begin{equation}
  \begin{aligned}
    \norm{X} & = 
    \max\bigl\{\abs{\ip{Y}{X}}\,:\,Y\in\Lin(\X,\Y),\:\norm{Y}_1\leq 1\bigr\},\\
    \norm{X}_1 & = 
    \max\bigl\{\abs{\ip{Y}{X}}\,:\,Y\in\Lin(\X,\Y),\:\norm{Y}\leq 1\bigr\}.
  \end{aligned}
\end{equation}
It is sometimes convenient to make use of the fact that, for any operator
$X\in\Lin(\X)$, it holds that
\begin{equation}
  \norm{X}_1 = \max_{U\in\Unitary(\X)} \abs{\ip{U}{X}},
\end{equation}
and furthermore if $X$ is Hermitian this maximization may be restricted to
operators $U$ that are both unitary and Hermitian.

The most standard notion of distance between two quantum states $\rho$ and
$\sigma$ is the \emph{trace distance}
\begin{equation}
  \frac{1}{2} \|\rho-\sigma\|_{1}.
\end{equation}
(The factor $\frac{1}{2}$ ensures that the distance between two states
lies in the interval $[0,1]$.)
A theorem called the \emph{Holevo--Helstrom theorem} implies that
\begin{equation}
  \begin{multlined}
    \max \Bigl\{
    \frac{1}{2} \ip{P_0}{\rho_0} + 
    \frac{1}{2} \ip{P_1}{\rho_1}\,:\,P_0,P_1\geq 0,\:P_0 + P_1 = \I_{\X}
    \Bigr\} \\
    = \frac{1}{2} + \frac{1}{4} \norm{\rho_0 - \rho_1}_1,
  \end{multlined}
\end{equation}
which has the interpretation that the trace distance expresses the maximum
bias with which any measurement can correctly distinguish between two states
$\rho_0$ and $\rho_1$ given with equal probability.

It is sometimes convenient to refer to one state as being an
\emph{$\varepsilon$-approximation} to another when the trace-distance between
the states is bounded from above by $\varepsilon$, as the following definition
makes precise.
\begin{definition}
  \label{definition:state-approximation}
  Let $\rho$ and $\sigma$ be states on the same space.
  It is said that $\sigma$ is an \emph{$\varepsilon$-approximation} to $\rho$
  if
  \begin{equation}
    \frac{1}{2}\norm{\rho - \sigma}_1 \leq \varepsilon.
  \end{equation}
\end{definition}

Another important measure of distance between quantum states is the
\emph{fidelity}, defined for density operators $\rho,\sigma\in\Density(\X)$ as
\begin{equation}
  \fid(\rho,\sigma) = \Tr\bigg(\sqrt{\sqrt{\rho}\sigma\sqrt{\rho}}\bigg)
  = \bignorm{\sqrt{\rho}\sqrt{\sigma}}_1.
\end{equation}
(The second expression makes it apparent that 
$\fid(\rho,\sigma)=\fid(\sigma,\rho)$, as $\sqrt{\rho}\sqrt{\sigma}$ and
$\sqrt{\sigma}\sqrt{\rho} = (\sqrt{\rho}\sqrt{\sigma})^{\ast}$ must share the
same singular values.)
When $\sigma = \ket{\psi}\bra{\psi}$ is a pure state, the expression simplifies
to
\begin{equation}
  \fid(\rho,\ket{\psi}\bra{\psi}) = \sqrt{\bra{\psi}\rho\ket{\psi}}. 
\end{equation}
The following theorem gives an alternative characterization of the fidelity.
\begin{theorem}[Uhlmann's theorem]\label{theorem:Uhlmann}
  Let $\rho,\sigma\in\Density(\X)$ be density operators.
  It holds that
  \begin{equation}
    \fid(\rho,\sigma) = \max_{\ket{\psi},\ket{\phi}}
    \abs{\langle\psi|\phi\rangle},
  \end{equation}
  where the maximization is over all purifications $\ket{\psi}$ and
  $\ket{\phi}$ of $\rho$ and $\sigma$, respectively.
\end{theorem}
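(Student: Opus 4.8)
The plan is to turn the maximization over all pairs of purifications into a maximization over unitary operators on the purifying system, and then to recognize the resulting expression as a trace norm, using the identity $\norm{X}_1 = \max_{U \in \Unitary(\X)} \abs{\ip{U}{X}}$ recorded above together with the identity $\fid(\rho,\sigma) = \norm{\sqrt{\rho}\sqrt{\sigma}}_1$ from the definition of the fidelity.

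First I would fix the purifying space. Since $\langle\psi|\phi\rangle$ only makes sense when $\ket{\psi}$ and $\ket{\phi}$ lie in a common space $\X\otimes\Y$, and purifications of $\rho$ and $\sigma$ exist there only when $\dim\Y$ is at least the larger of the two ranks, it is enough for now to treat the case $\dim\Y=\dim\X$ and to return to larger $\Y$ at the end. Let $\ket{\tau}=\sum_i\ket{i}\ket{i}\in\X\otimes\Y$ be the (unnormalized) maximally entangled vector with respect to the standard bases, and set $\ket{u}=(\sqrt{\rho}\otimes\I_\Y)\ket{\tau}$ and $\ket{v}=(\sqrt{\sigma}\otimes\I_\Y)\ket{\tau}$. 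Using $\tr_\Y(\ket{\tau}\bra{\tau})=\I_\X$ one checks directly that $\ket{u}$ purifies $\rho$ and $\ket{v}$ purifies $\sigma$. By Theorem~\ref{thm:unitary-equivalence} (unitary equivalence of purifications), every purification of $\rho$ in $\X\otimes\Y$ has the form $(\I_\X\otimes U)\ket{u}$ for some $U\in\Unitary(\Y)$, and likewise every purification of $\sigma$ has the form $(\I_\X\otimes V)\ket{v}$. Hence the maximum in the statement equals $\max_{U,V\in\Unitary(\Y)}\abs{\bra{u}(\I_\X\otimes U^{\ast}V)\ket{v}}=\max_{W\in\Unitary(\Y)}\abs{\bra{u}(\I_\X\otimes W)\ket{v}}$.

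Next I would carry out the computation. Since $\sqrt{\rho}$ is Hermitian, $\bra{u}(\I_\X\otimes W)\ket{v}=\bra{\tau}(\sqrt{\rho}\sqrt{\sigma}\otimes W)\ket{\tau}$, and the elementary identity $\bra{\tau}(A\otimes B)\ket{\tau}=\tr(AB^{\t})$ gives $\bra{u}(\I_\X\otimes W)\ket{v}=\tr(\sqrt{\rho}\sqrt{\sigma}\,W^{\t})$. As $W$ ranges over $\Unitary(\Y)$ so does $W^{\t}$, so by the variational characterization of the trace norm quoted above, $\max_{W}\abs{\tr(\sqrt{\rho}\sqrt{\sigma}\,W^{\t})}=\norm{\sqrt{\rho}\sqrt{\sigma}}_1=\fid(\rho,\sigma)$. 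This already proves the theorem when $\dim\Y=\dim\X$, and note that no separate argument for the two inequalities is needed, as the single maximization computes the fidelity exactly.

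Finally, the step I expect to require the most care is the reduction from an arbitrary purifying space to this canonical one. For $\dim\Y$ strictly larger than $\dim\X$, the padded vectors $\ket{u},\ket{v}$ still purify $\rho,\sigma$, and the same manipulation gives $\bra{u}(\I_\X\otimes W)\ket{v}=\tr(\sqrt{\rho}\sqrt{\sigma}\,M)$, where $M$ is the block of $W^{\t}$ corresponding to the $\dim\X$-dimensional subspace supporting $\ket{\tau}$; since $M$ is a contraction, $\abs{\tr(\sqrt{\rho}\sqrt{\sigma}\,M)}\leq\norm{\sqrt{\rho}\sqrt{\sigma}}_1$, so enlarging the ancilla cannot increase the maximum, and it remains attained by embedding the optimal $W$ from the $\dim\X$-dimensional case. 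One should also dispatch the degenerate situation in which the ambient space is too small to purify both $\rho$ and $\sigma$, where the maximization is over an empty set (and the statement is vacuous) or one restricts $\X$ to the relevant supports. With these bookkeeping points handled, the proof is complete.
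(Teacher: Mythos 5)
The paper states Uhlmann's theorem as background in the preliminaries chapter and gives no proof of it, so there is nothing internal to compare your argument against; judged on its own, your proof is correct. Reducing the maximization to a single unitary via the unitary equivalence of purifications (Theorem~\ref{thm:unitary-equivalence}), writing the canonical purifications as $(\sqrt{\rho}\otimes\I)\ket{\tau}$ and $(\sqrt{\sigma}\otimes\I)\ket{\tau}$, and invoking the identities $\bra{\tau}(A\otimes B)\ket{\tau}=\tr(AB^{\t})$ and $\norm{X}_1=\max_{U}\abs{\ip{U}{X}}$ is the standard route, and all the steps check out: the ripcord identity correctly moves $W$ onto the $\X$ side, transposes of unitaries are unitaries, and your handling of a larger ancilla via the observation that the relevant corner block of a unitary is a contraction (so the trace-norm duality bound $\norm{Y}\leq 1$ applies) is exactly the right way to see that enlarging $\Y$ neither helps nor hurts. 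The only cosmetic point is whether the submatrix you extract is a block of $W$ or of $W^{\t}$; since either is a contraction, this does not affect the argument.
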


The fidelity is related to the trace distance by the
\emph{Fuchs--van de Graaf inequalities}:
\begin{equation}\label{eq:fuchs-graaf}
  1-\fid(\rho,\sigma)
  \leq \frac{1}{2}\|\rho-\sigma\|_1 \leq \sqrt{1-\fid(\rho,\sigma)^2},
\end{equation}
for all density operators $\rho$ and $\sigma$.

There is a notion of distance between quantum channels that is analogous to the
trace distance between quantum states.
This notion of distance is defined by the \emph{diamond norm} (also known as
the \emph{completely bounded trace norm}) of linear maps of the form
$\Phi\in\Trans(\X,\Y)$.
This norm is defined as
\begin{equation}
  \norm{\Phi}_{\diamond}
  = \max \bigl\{
  \bignorm{(\Phi\otimes\I_{\Lin(\W)})(X)}_1 : X\in \Lin(\X\otimes\W),\:
  \norm{X}_1 \leq 1\bigr\},
\end{equation}
where $\W$ is any Hilbert space having dimension at least as large as $\X$.
(Changing the dimension of $\W$ does not change the value of the norm, so long
as $\dim(\W)\geq\dim(\X)$.)
An analogous theorem to the Holevo--Helstrom theorem establishes that the
\emph{diamond norm distance}
\begin{equation}
  \frac{1}{2} \norm{\Phi_0 - \Phi_1}_{\diamond}
\end{equation}
between two channels $\Phi_0,\Phi_1\in\Channel(\X,\Y)$ describes the maximum
bias with which a physical process (consisting of an arbitrary state
preparation, followed by a channel evaluation, followed by a measurement)
can distinguish between $\Phi_0$ and $\Phi_1$ given with equal probability.

As for states, it is sometimes convenient to refer to one channel as being an
\emph{$\varepsilon$-approximation} to another when the diamond norm distance
between the two channels is bounded from above by $\varepsilon$.

\begin{definition}
  \label{definition:channel-approximation}
  Let $\Phi$ and $\Psi$ be channels sharing the same input spaces and the same
  output spaces.
  It is said that $\Phi$ is an \emph{$\varepsilon$-approximation} to $\Psi$ if
  \begin{equation}
    \frac{1}{2}\bignorm{\Psi - \Phi}_{\Diamond} \leq \varepsilon.
  \end{equation}
\end{definition}

The following alternate characterization of the diamond norm will prove useful:
if $\Phi\in\Trans(\X,\Y)$ is a map specified by
\begin{equation}
  \Phi(X) = \Tr_\Z(A_0 X A_1^{\ast})
\end{equation}
for all $X\in\Lin(\X)$, for operators $A_0, A_1\in \Lin(\X,\Y\otimes\Z)$, then 
\begin{equation}
  \norm{\Phi}_{\diamond} = 
  \max_{\rho_0,\rho_1\in\Density(\X)}
  \fid\big(\Psi_0(\rho_0),\Psi_1(\rho_1)\big),
\end{equation}
where
\begin{equation}
  \Psi_0(X) = \Tr_\Y(A_0 X A_0^{\ast})
  \quad\text{and}\quad
  \Psi_1(X) = \Tr_\Y(A_1 X A_1^{\ast}).
\end{equation}

\section{Quantum circuits} 
\label{sec:circuits}

The primary model of computation used throughout this survey is the
\emph{quantum circuit} model.
A quantum circuit is an acyclic network of \emph{quantum gates}
connected by \emph{wires}.
The quantum gates represent quantum channels while the wires represent
qubits on which the channels act.
In general, we allow the quantum channels implemented by the gates of a
quantum circuit to be potentially non-unitary, as first suggested by
Aharonov, Kitaev, and Nisan \cite{AharonovKN98}.
This general variant of the quantum circuit model has a fairly straightforward
connection to the more commonly used model of unitary quantum circuits, by
virtue of the Stinespring representation of channels, as will be discussed
shortly.

An example of a quantum circuit having three input qubits and two output qubits
is pictured in Figure~\ref{fig:quantum-circuit}.
In general, a quantum circuit may have $n$ input qubits and $m$ output qubits
for any choice of integers $n,m\geq 0$.
Such a circuit induces a quantum channel from $n$ qubits to $m$ qubits,
determined by composing the actions of the individual gates in the appropriate
way.
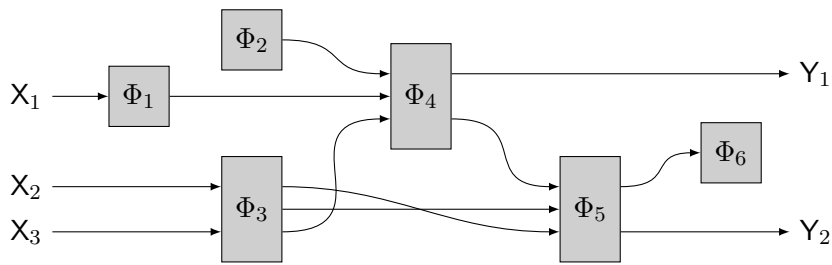
\begin{figure}
  \begin{center}
    \begin{tikzpicture}[scale=0.75,
        smallgate/.style={draw, minimum size=8mm,
          fill = ChannelColor, text=ChannelTextColor},
        biggate/.style={draw, minimum width=8mm, minimum height=14mm,
          fill = ChannelColor, text=ChannelTextColor},
        point/.style={inner sep=0pt, fill=none, draw=none},
        >=latex]
      
      \node (Phi2) at (-2,4) [smallgate] {$\Phi_2$};
      \node (Phi1) at (-4,3) [smallgate] {$\Phi_1$};
      \node (Phi3) at (-2,1) [biggate] {$\Phi_3$};
      \node (Phi4) at (1,3) [biggate] {$\Phi_4$};
      \node (Phi5) at (4,1) [biggate] {$\Phi_5$};
      \node (Phi6) at (6.5,2) [smallgate] {$\Phi_6$};

      \node (X1) at (-6,3) {$\reg{X}_1$};
      \node (X2) at (-6,1.4) {$\reg{X}_2$};
      \node (X3) at (-6,0.6) {$\reg{X}_3$};
      \node (Y1) at (8,3.4) {$\reg{Y}_1$};
      \node (Y2) at (8,0.6) {$\reg{Y}_2$};
      
      \draw[->] 
      (Phi2.0) 
      .. controls +(right:12mm) and +(left:12mm).. 
      ([yshift=4mm]Phi4.west);
      
      \draw[->] 
      (Phi1.east) 
      .. controls +(right:20mm) and +(left:20mm) .. 
      (Phi4.west);

      \draw[->] 
      ([yshift=-4mm]Phi3.east) 
      .. controls +(right:20mm) and +(left:20mm) .. 
      ([yshift=-4mm]Phi4.west);

      \draw[->] 
      (Phi3.east) 
      .. controls +(right:20mm) and +(left:20mm) .. 
      (Phi5.west);

      \draw[->] 
      ([yshift=-4mm]Phi4.east) 
      .. controls +(right:15mm) and +(left:15mm) .. 
      ([yshift=4mm]Phi5.west);

      \draw[->] 
      ([yshift=4mm]Phi3.east) 
      .. controls +(right:20mm) and +(left:20mm) .. 
      ([yshift=-4mm]Phi5.west);

      \draw[->] 
      (X1) 
      .. controls +(right:20mm) and +(left:20mm) ..
      (Phi1.west);
      
      \draw[->] (X2.east) -- ([yshift=4mm]Phi3.west);

      \draw[->] (X3.east) -- ([yshift=-4mm]Phi3.west);

      \draw[->] 
      ([yshift=4mm]Phi4.east) 
      .. controls +(right:20mm) and +(left:20mm) ..
      (Y1);

      \draw[->] 
      ([yshift=4mm]Phi5.east) 
      .. controls +(right:10mm) and +(left:15mm) ..
      (Phi6);

      \draw[->] ([yshift=-4mm]Phi5.east) -- (Y2);
    \end{tikzpicture}
  \end{center}
  \caption{An example of a quantum circuit.
    The input qubits are labelled $\reg{X}_1,\reg{X}_2,\reg{X}_3$, the output
    qubits are labelled $\reg{Y}_1$ and $\reg{Y}_2$, and the gates are labelled
    by (hypothetical) quantum channels $\Phi_1,\ldots,\Phi_6$.}
  \label{fig:quantum-circuit}
\end{figure}
The \emph{size} of a quantum circuit is the total number of gates plus
the total number of input and output qubits. 

Restrictions must be placed on the gates from which quantum circuits
may be composed if the quantum circuit model is to be used for
complexity theory---for without such restrictions it cannot be argued
that each quantum gate corresponds to an operation with unit cost.
For the remainder of this survey, quantum circuits may be assumed to
be composed of gates from the following list (representing a standard choice
for a gate set):
\begin{mylist}{\parindent}
\item[1.] \emph{Toffoli gates}.
  A Toffoli gate is a three-qubit unitary gate $\Phi_T$ identified
  with the unitary transformation
  \begin{equation}
    T:\ket{a}\ket{b}\ket{c} \mapsto \ket{a}\ket{b}\ket{c \oplus ab}.
  \end{equation}
\item[2.] \emph{Hadamard gates}.
  A Hadamard gate is a single-qubit unitary gate $\Phi_H$ identified
  with the unitary transformation
  \begin{equation}
    H:\ket{a} \mapsto \frac{1}{\sqrt{2}} \ket{0} + 
    \frac{(-1)^a}{\sqrt{2}} \ket{1}.
  \end{equation}
\item[3.] \emph{Phase-shift gates}.
  A Phase-shift gate is a single-qubit unitary gate $\Phi_P$ identified with
  the unitary transformation
  \begin{equation}
    P:\ket{a} \mapsto i^a \ket{a}.
  \end{equation}
\item[4.] \emph{Ancillary gates}.
  Ancillary gates are non-unitary gates that take no input and
  produce a single qubit in the state $\ket{0}$ as output.
\item[5.] \emph{Erasure gates}.
  Erasure gates are non-unitary gates that take a single qubit as
  input and produce no output.
  Their effect is represented by the partial trace on the qubit they
  take as input.  
\end{mylist}

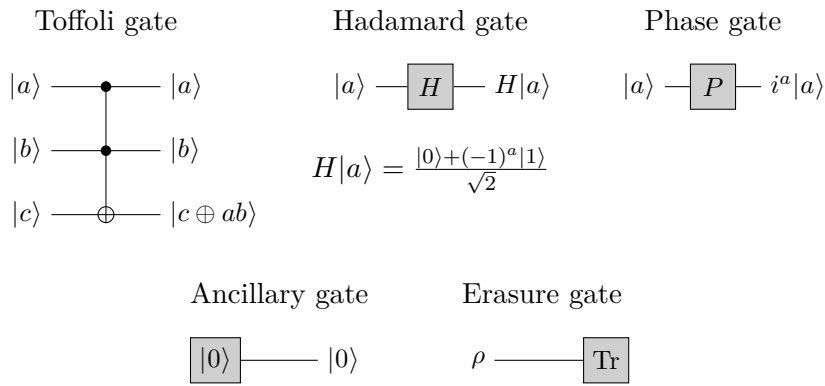
\begin{figure}
  \begin{center}
    \hfill
    \begin{minipage}[b][3cm][t]{0.28\textwidth}
      \begin{center}
        \begin{tikzpicture}[scale=0.85]
          \node at (0,2) {Toffoli gate};
          \small          
          \node (In1) at (-1,1) {\makebox(0,0)[r]{$\ket{a}$}};
          \node (In2) at (-1,0) {\makebox(0,0)[r]{$\ket{b}$}};
          \node (In3) at (-1,-1) {\makebox(0,0)[r]{$\ket{c}$}};
          \node (Out1) at (1,1) {\makebox(0,0)[l]{$\ket{a}$}};
          \node (Out2) at (1,0) {\makebox(0,0)[l]{$\ket{b}$}};
          \node (Out3) at (1,-1) {%
            \makebox(0,0)[l]{$\ket{c\oplus ab}$}};
          \node (Phantom1) at (-1.3,-1) {};
          \node (Phantom1) at (2.2,-1) {};
          \draw (In1) -- (Out1);
          \draw (In2) -- (Out2);
          \draw (In3) -- (Out3);
          \node[circle, fill, minimum size = 4pt, inner sep=0mm]
          (Control1) at (0,1) {};
          \node[circle, fill, minimum size = 4pt, inner sep=0mm]
          (Control2) at (0,0) {};
          \node[circle, draw, minimum size = 6pt, inner sep=0mm]
          (Not) at (0,-1) {};
          \draw (Not.north) -- (Not.south);
          \draw (Control1.north) -- (Not.north);
        \end{tikzpicture}
        \vspace{2mm}
      \end{center}
    \end{minipage}
    \hfill
    \begin{minipage}[b][3cm][t]{0.28\textwidth}
      \begin{center}
        \begin{tikzpicture}[scale=0.85]
          \node at (0,1) {Hadamard gate};
          \small
          \node (In1) at (-1,0) {\makebox(0,0)[r]{$\ket{a}$}};
          \node (Out1) at (1,0) {\makebox(0,0)[l]{$H\ket{a}$}};
          \node (Phantom1) at (-1.6,0) {};
          \node (Phantom1) at (1.7,0) {};
          \node (H) at (0,0) [draw, fill = ChannelColor, 
            text=ChannelTextColor, minimum size=6mm] {$H$};
          \draw (In1) -- (H) -- (Out1);
          \node (Equation) at (0,-1.3) {\makebox(0,0){\normalsize
              $H\ket{a} = \frac{\ket{0} + (-1)^a\ket{1}}{\sqrt{2}}$}};
        \end{tikzpicture}
      \end{center}
    \end{minipage}
    \hfill
    \begin{minipage}[b][3cm][t]{0.28\textwidth}
      \begin{center}
        \begin{tikzpicture}[scale=0.85]
          \node at (0,1) {Phase gate};
          \small
          \node (In1) at (-0.9,0) {\makebox(0,0)[r]{$\ket{a}$}};
          \node (Out1) at (0.9,0) {\makebox(0,0)[l]{$i^a\ket{a}$}};
          \node (Phantom1) at (1.7,-0.5) {};
          \node (P) at (0,0) [draw, fill = ChannelColor, 
            text=ChannelTextColor, minimum size=6mm] {$P$};
          \draw (In1) -- (P) -- (Out1);
        \end{tikzpicture}
      \end{center}
    \end{minipage}\hfill\\[6mm]
    \begin{minipage}[b][1cm][t]{0.28\textwidth}
      \begin{center}
        \begin{tikzpicture}[scale=0.85]
          \node at (0,1) {Ancillary gate};
          \small          
          \node (Out) at (1,0) {$\ket{0}$};
          \node (Ancilla) at (-1,0) [draw, fill = ChannelColor, 
            text=ChannelTextColor, minimum size=6mm] {$\ket{0}$};
          \node (Phantom1) at (-1.5,0) {};
          \node (Phantom1) at (1.5,0) {};
          \draw (Ancilla) -- (Out);
        \end{tikzpicture}
        \vspace{2mm}
      \end{center}
    \end{minipage}
    \begin{minipage}[b][1cm][t]{0.28\textwidth}
      \begin{center}
        \begin{tikzpicture}[scale=0.85]
          \node at (0,1) {Erasure gate};
          \small
          \node (In) at (-1,0) {$\rho$};
          \node (Tr) at (1,0) [draw, fill = ChannelColor, 
            text=ChannelTextColor, minimum size=6mm] {$\op{Tr}$};
          \node (Phantom1) at (-1.5,0) {};
          \node (Phantom1) at (1.5,0) {};
          \draw (In) -- (Tr);
        \end{tikzpicture}
      \end{center}
    \end{minipage}
  \end{center}
  \caption{Commonly used notation for denoting gates from the universal gate
    set.}
  \label{figure:universal-gates}
\end{figure}

\noindent
We note that it is not essential that one chooses this particular set of
gates, and we will not often refer specifically to these gates in this
survey---but it is convenient nevertheless to assume that reversible
computations and Hadamard gates can be performed without error.
Figure~\ref{figure:universal-gates} illustrates the notation that is commonly
used to describe these gates within quantum circuits.

The above gate set is \emph{universal} in a strong sense: every
quantum channel mapping qubits to qubits can be approximated to
within any desired degree of accuracy by some quantum circuit composed
of gates from this set.
The following theorem expresses this fact in more precise terms.

\begin{theorem}[Universality Theorem] \label{theorem:universality}
  Let $\Phi$ be any quantum channel from $n$ qubits to $m$ qubits. 
  For every $\varepsilon>0$ there exists a quantum circuit $Q$
  with $n$ input qubits and $m$ output qubits that implements a
  $\varepsilon$-approximation to $\Phi$.
  Moreover, for a fixed choice of $n$ and $m$, the circuit $Q$ may be taken to
  satisfy $\op{size}(Q) = \op{poly}(\log(1/\varepsilon))$.
\end{theorem}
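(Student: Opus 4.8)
The plan is to reduce the statement to the problem of approximating a single unitary operator on a fixed number of qubits in the operator norm, and then to invoke two standard results: the density, in the unitary group, of the group generated by the given gate set, and the Solovay--Kitaev theorem (which supplies the $\op{poly}(\log(1/\varepsilon))$ size bound).

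\emph{Reduction to the unitary case.}
By the discussion following Figure~\ref{figure:Stinespring}, the channel $\Phi$ admits a representation
\begin{equation}
  \Phi(\rho) = \tr_{\Z}\bigl( U \,(\rho \otimes \ket{0^{2m}}\bra{0^{2m}})\, U^{\ast} \bigr),
\end{equation}
where $U$ is a unitary operator acting on a register $\reg{W}$ of $k := n + 2m$ qubits and $\Z$ denotes the last $n+m$ of those qubits. Suppose $V \in \Unitary(\W)$ is a unitary with $\norm{U - V} \le \varepsilon$. Writing $\mathcal{N}_U(X) = U X U^{\ast}$ and $\mathcal{N}_V(X) = V X V^{\ast}$, a two-line computation using the unitary invariance of the trace norm and the inequality $\norm{A X B}_1 \le \norm{A}\,\norm{X}_1\,\norm{B}$ gives $\bignorm{\mathcal{N}_U - \mathcal{N}_V}_{\diamond} \le 2\norm{U - V} \le 2\varepsilon$; and since the diamond norm is submultiplicative under composition and the channels ``append $\ket{0^{2m}}$'' and ``trace out $\Z$'' have diamond norm $1$, the channel $\Phi'$ obtained from $V$ in place of $U$ satisfies $\tfrac{1}{2}\bignorm{\Phi - \Phi'}_{\diamond} \le \varepsilon$. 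A circuit for $\Phi'$ is obtained from any circuit $R$ implementing $V$ by prepending $2m$ ancillary gates and appending $n+m$ erasure gates (this is exactly why these non-unitary gates are included in the gate set), at an additive cost of $O(n+m)$ in size, and having exactly $n$ input and $m$ output qubits. It therefore suffices to produce, for every $\delta > 0$, a circuit over the gate set implementing a unitary within operator-norm distance $\delta$ of a given $U \in \Unitary(\W)$, of size $\op{poly}(\log(1/\delta))$, with the hidden constant allowed to depend on $k$, hence on the fixed parameters $n$ and $m$.

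\emph{Approximating the unitary.}
Here I would invoke two facts. First, the subgroup of $\Unitary(\W)$ generated by the Hadamard, phase, and Toffoli gates is dense modulo global phases: preparing an ancilla in the state $\ket{1} = H P^{2} H \ket{0}$ turns a Toffoli gate into a controlled-\textsc{not}, which yields the entire Clifford group, and ``Clifford $+$ Toffoli'' is known to be universal in this sense (by arguments going back to Kitaev, Shi, and Aharonov). Second, the Solovay--Kitaev theorem: for an inverse-closed gate set (here one adjoins $P^{-1} = P^{3}$, itself a length-three word) generating a dense subgroup of $\Unitary(\W)$ up to phase, every target unitary is approximated to operator-norm error $\delta$ by a word in the generators of length $O(\log^{c}(1/\delta))$ for an absolute constant $c$. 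Applying the latter to the unitary $U$ from the reduction produces the required circuit, of size $\op{poly}(\log(1/\varepsilon))$, and hence the desired circuit $Q$ for $\Phi$.

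\emph{Main obstacle.}
All the genuine difficulty is packaged into the two cited theorems --- the density of the gate group (a Lie-theoretic or combinatorial argument) and the efficiency bound of Solovay--Kitaev (its recursive net-refinement construction) --- and in a survey these would be quoted rather than reproved. The remaining ingredients --- the Stinespring reduction, the translation of $\ket{0}$-ancilla preparation and qubit erasure into the corresponding non-unitary gates, and the elementary passage from operator-norm error on $U$ to diamond-norm error on $\Phi$ --- are routine, and are the parts I would actually write out in detail.
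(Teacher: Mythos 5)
The paper itself contains no proof of this theorem: it is quoted as standard background (the gate set and the statement are essentially those of Aharonov--Kitaev--Nisan and Kitaev--Shen--Vyalyi), so there is no internal proof to compare against. Your sketch is the standard argument and is essentially correct: the Stinespring reduction, the translation of ancilla preparation and qubit erasure into the circuit model's non-unitary gates, the passage from operator-norm error on $U$ to diamond-norm error on $\Phi$ (the factor $2$ bound and submultiplicativity under composition with channels of diamond norm $1$), and the final invocation of Solovay--Kitaev all go through, with constants depending only on the fixed number $n+2m$ of qubits, which the statement permits.

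Two points deserve more care than your write-up gives them. First, the density claim is where the real content lies, and your attribution is off: the Shi and Aharonov results on Toffoli$+$Hadamard concern \emph{computational} universality via a real-amplitude encoding (these two gates generate only real matrices, hence are not dense in the unitary group modulo phase), so density of the group generated by Hadamard, $P$, and Toffoli needs a separate argument --- for instance your Clifford$+$Toffoli route combined with the maximality of the Clifford group (Nebe--Rains--Sloane), or the direct universality proof in Kitaev--Shen--Vyalyi. Second, your derivation of controlled-\textsc{not} from Toffoli uses an ancilla prepared in $\ket{1}$, so the ``subgroup generated by the gate set'' formulation of both the density statement and the Solovay--Kitaev theorem does not literally apply to the gates acting on the $n+2m$ qubits alone. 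This is harmless --- treat controlled-\textsc{not} as a derived gate implemented exactly by a constant-size circuit whose ancilla remains in a product state and is erased at the end, and apply Solovay--Kitaev to the enlarged, inverse-closed gate set --- but it should be stated, since otherwise the group-theoretic facts you cite are about the wrong group.
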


A \emph{unitary quantum circuit} is a quantum circuit in which all of the gates
correspond to unitary quantum channels, so that the channel associated to the
entire circuit is therefore unitary as well.
Naturally this requires that every gate, and hence the circuit itself, has an
equal number of input and output qubits.
It is common in the study of quantum computing that one works entirely with
unitary quantum circuits. 
The equivalence between the unitary and general models of quantum circuits is
made straightforward by the universal gate set described above.
Suppose $Q$ is a quantum circuit taking input qubits
$(\reg{X}_1,\ldots,\reg{X}_n)$ and producing output qubits
$(\reg{Y}_1,\ldots,\reg{Y}_m)$, and assume there are $j$ ancillary
gates and $k$ erasure gates among the gates of~$Q$.
A new quantum circuit $R$ may then be formed by removing the ancillary
and erasure gates, and to account for the removal of these gates the
circuit $R$ takes $j$ additional input qubits
$(\reg{Z}_1,\ldots,\reg{Z}_j)$ and produces $k$ additional
output qubits $(\reg{W}_1,\ldots,\reg{W}_k)$.
Figure~\ref{fig:purification} illustrates this simple process.
\begin{figure}
  \begin{center}
    \begin{minipage}[b][3.6cm][t]{0.47\textwidth}
      \begin{center}
      \begin{tikzpicture}[scale=0.7,
          smallgate/.style={draw, minimum size=6mm,
            fill = ChannelColor, text=ChannelTextColor, inner sep=0mm},
          control/.style={circle, fill, minimum size = 4pt, inner sep=0mm},
          target/.style={circle, draw, minimum size = 6pt, inner sep=0mm},
          >=latex]
        \small

        \node (X1) at (-3.5,2) {$\reg{X}_1$};
        \node (X2) at (-3.5,1) {$\reg{X}_2$};
        \node (X3) at (-3.5,0) {$\reg{X}_3$};
        
        \node (Y1) at (3.5,2) {$\reg{Y}_1$};
        \node (Y2) at (3.5,1) {$\reg{Y}_2$};
        
        \node (A1) at (-2.5,-1) [smallgate] {$\ket{0}$};
        \node (A2) at (-1.5,-2) [smallgate] {$\ket{0}$};

        \node (Tr1) at (2.5,0) [smallgate] {\textup{Tr}};
        \node (Tr2) at (0.5,-1) [smallgate] {\textup{Tr}};
        \node (Tr3) at (2.5,-2) [smallgate] {\textup{Tr}};

        \draw (X1) -- (Y1);
        \draw (X2) -- (Y2);
        \draw (X3) -- (Tr1);
        \draw (A1) -- (Tr2);
        \draw (A2) -- (Tr3);

        \node at (-0.5,2) [smallgate] {$H$};
        \node at (-0.5,1) [smallgate] {$P$};
        \node at (0.5,-2) [smallgate] {$H$};
        \node at (1.5,2) [smallgate] {$H$};
        
        \node (C11) at (-1.5,2) [control] {};
        \node (C12) at (-1.5,1) [control] {};
        \node (T13) at (-1.5,-1) [target] {};
        \draw (T13.south) -- (C11);

        \node (C21) at (-0.5,0) [control] {};
        \node (C22) at (-0.5,-1) [control] {};
        \node (T23) at (-0.5,-2) [target] {};
        \draw (T23.south) -- (C21);

        \node (C31) at (0.5,1) [control] {};
        \node (C32) at (0.5,0) [control] {};
        \node (T33) at (0.5,2) [target] {};
        \draw (T33.north) -- (C32);

        \node (C41) at (1.5,0) [control] {};
        \node (C42) at (1.5,-2) [control] {};
        \node (T43) at (1.5,1) [target] {};
        \draw (T43.north) -- (C42);

      \end{tikzpicture}
      \end{center}
    \end{minipage}
    \hspace{2mm}
    \begin{minipage}[b][3.6cm][t]{0.47\textwidth}
      \begin{center}
      \begin{tikzpicture}[scale=0.7,
          smallgate/.style={draw, minimum size=6mm,
            fill = ChannelColor, text=ChannelTextColor},
          control/.style={circle, fill, minimum size = 4pt, inner sep=0mm},
          target/.style={circle, draw, minimum size = 6pt, inner sep=0mm},
          >=latex]
        \small
        \node (X1) at (-3,2) {$\reg{X}_1$};
        \node (X2) at (-3,1) {$\reg{X}_2$};
        \node (X3) at (-3,0) {$\reg{X}_3$};
        \node (Z1) at (-3,-1) {$\reg{Z}_1$};
        \node (Z2) at (-3,-2) {$\reg{Z}_2$};
        
        \node (Y1) at (3,2) {$\reg{Y}_1$};
        \node (Y2) at (3,1) {$\reg{Y}_2$};
        \node (W1) at (3,0) {$\reg{W}_1$};
        \node (W2) at (3,-1) {$\reg{W}_2$};
        \node (W3) at (3,-2) {$\reg{W}_3$};

        \draw (X1) -- (Y1);
        \draw (X2) -- (Y2);
        \draw (X3) -- (W1);
        \draw (Z1) -- (W2);
        \draw (Z2) -- (W3);

        \node at (-0.5,2) [smallgate] {$H$};
        \node at (-0.5,1) [smallgate] {$P$};
        \node at (0.5,-2) [smallgate] {$H$};
        \node at (1.5,2) [smallgate] {$H$};
        
        \node (C11) at (-1.5,2) [control] {};
        \node (C12) at (-1.5,1) [control] {};
        \node (T13) at (-1.5,-1) [target] {};
        \draw (T13.south) -- (C11);

        \node (C21) at (-0.5,0) [control] {};
        \node (C22) at (-0.5,-1) [control] {};
        \node (T23) at (-0.5,-2) [target] {};
        \draw (T23.south) -- (C21);

        \node (C31) at (0.5,1) [control] {};
        \node (C32) at (0.5,0) [control] {};
        \node (T33) at (0.5,2) [target] {};
        \draw (T33.north) -- (C32);

        \node (C41) at (1.5,0) [control] {};
        \node (C42) at (1.5,-2) [control] {};
        \node (T43) at (1.5,1) [target] {};
        \draw (T43.north) -- (C42);
      \end{tikzpicture}
      \end{center}
    \end{minipage}
  \end{center}
  \caption{A general quantum circuit (left) and its unitary
    purification (right).}
  \label{fig:purification}
\end{figure}
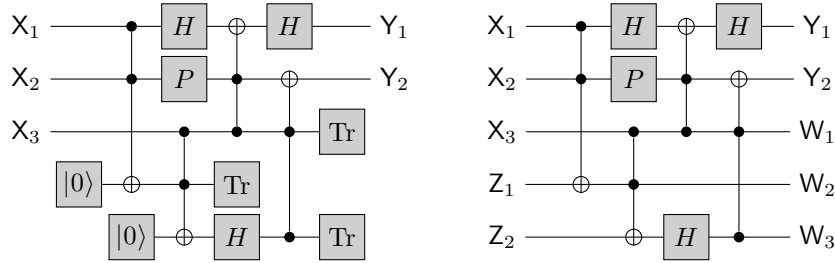
The circuit $R$ is said to be a \emph{unitary purification} of $Q$.
\label{sec:circuit-purification}
It holds that $R$ is equivalent to $Q$, provided the qubits 
$(\reg{Z}_1,\ldots,\reg{Z}_j)$ are each initially set to the $\ket{0}$
state and the qubits $(\reg{W}_1,\ldots,\reg{W}_k)$ are ignored after
the circuit is run.
Despite the simplicity of this process, it is often useful to refer to
unitary purifications of general quantum circuits obtained in this way.

Along similar lines, one may consider \emph{isometric quantum circuits}, which
are quantum circuits composed of unitary gates and ancillary gates, but no
erasure gates.
Such circuits implement isometric quantum channels.
By performing the process described above, but only for the erasure gates
of a general quantum circuit and not the ancillary gates, one obtains an
isometric extension of a general quantum circuit.

Any quantum circuit formed from the universal gate set described above can be
\emph{encoded} as a binary string, with respect to any number of different
possible encoding schemes.
As is the case when uniform families of classical Boolean circuits are studied,
many specific details of such encoding schemes are not important; for the
sake of brevity we will leave it to the reader to imagine that a sensible and
efficient encoding scheme for quantum circuits has been fixed.
Naturally it is assumed that a circuit's size and its encoding length are
polynomially related.

\chapter{Non-Interactive Quantum Proofs} \label{chapter:QMA}

The traditional notion of a proof in mathematics does not require an
interaction to take place between an individual proving a theorem and one
who verifies the proof (beyond the obvious requirement that the individual
verifying the proof has come into possession of it).
The same is true in a complexity-theoretic setting, in which proofs are
typically abstracted as strings of symbols to be checked by computationally
efficient verification procedures.
The standard definition of \class{NP} in terms of polynomial-length proofs
(or \emph{certificates}), checked by polynomial-time deterministic
computations, is representative of this traditional notion.

One natural way to generalize this notion to the quantum setting is to allow a
proof to be a \emph{quantum state} rather than a classical string of symbols.
Such a state is to be verified by a computationally efficient procedure, as
in the classical setting, but in this case the verification procedure will be
an efficient \emph{quantum} computation.
The most natural and well-studied complexity class to be defined through
this notion is \class{QMA}, which stands for ``quantum Merlin--Arthur.''
Rather than being a direct quantum variant of \class{NP}, the class \class{QMA}
is more accurately described as being a quantum computational analogue of the
complexity class \class{MA}, which is essentially \class{NP} with a
bounded-error polynomial-time probabilistic verifier rather than a
deterministic one.

The study of the class \class{QMA} provides a complexity-theoretic perspective
on the properties of quantum states and their relative power compared to
classical states, when seen as untrusted advice to be given to a quantum or
classical verifier respectively. 
The following points are among those to be discussed in this chapter:

\begin{mylist}{\parindent}
\item[1.] In Section~\ref{sec:gnm} the \emph{group non-membership} problem is
  shown to be included in $\class{QMA}$.
  This problem illustrates the potential advantages of quantum over classical
  proofs.
  
\item[2.] Two procedures for error reduction---\emph{parallel}
  error reduction and \emph{witness-preserving} error reduction--- for
  \class{QMA} are presented in Section~\ref{section:QMA-error-reduction}. 
  One consequence of witness-preserving error reduction is the inclusion
  $\class{QMA}\subseteq\class{PP}$, providing an upper-bound on \class{QMA}
  in terms of a well-studied classical complexity class.

\item[3.] A few complete problems for $\class{QMA}$ are introduced in
  Section~\ref{sec:QMA-complete}.
  Some of these problems, such as the \emph{local Hamiltonian problem},
  illustrate the relevance of \class{QMA} for natural problems that arise in
  the study of quantum systems in physics or chemistry.
  
\item[4.] A selection of variants of \class{QMA} is presented, including
  ones with additional promises on the proof, such as it being \emph{trusted},
  \emph{classical}, or \emph{separable} with respect to some fixed partition.
  The study of these variants probes some of the essential features of the
  class \class{QMA}.
  Some of these variants are shown to be equivalent to \class{QMA}, while for
  others there is evidence that they may differ from \class{QMA}.
\end{mylist}

\section{Definitions of quantum verification procedures and \class{QMA}}
\label{section:QMA-definition}

This section of the chapter is primarily concerned with the definition of the
class \class{QMA}, along with some of its most basic mathematical aspects, such
as its relationship to measurements and quantum circuits.
A computational problem known as the \emph{group non-membership problem} is
shown to be contained in \class{QMA}, providing a simple example of how quantum
proofs may be useful in a computational setting.

\subsection{Definition of \class{QMA}}

With the intuitive picture of a quantum proof that has been suggested above in
mind, we wish to formalize the notion of an efficient quantum verification
procedure that takes as input a quantum state, playing the role of a proof, and
outputs a single binary value, indicating acceptance or rejection of the
proof.
A natural way to model such a procedure is as a quantum circuit that takes $k$
qubits as input, where $k$ denotes the length of the proof, and produces
a single output qubit.
Rather than stipulating that this output qubit must represent a classical
state, indicating whether acceptance or rejection has occurred, we will simply
assume that the qubit is to be measured with respect to the
computational basis after being output by the circuit.
The outcome of this measurement will indicate whether the proof has been
accepted or rejected (with 1 indicating acceptance, 0 indicating rejection).

Suppose, somewhat more generally, that $\Phi$ is an arbitrary channel having
$k$ input qubits and $1$ output qubit, and consider the following scenario.
An individual (the \emph{prover}) aims to provide $\Phi$ with an input state
$\rho$ on $k$ qubits maximizing the probability that a standard basis
measurement performed on the output qubit of $\Phi$ produces the outcome~1.
In effect, the channel $\Phi$ followed by a standard basis measurement of its
output qubit describes a general binary-valued measurement on the $k$ qubits
input to $\Phi$. 
Indeed, it is straightforward to specify the measurement operators
$P_0, P_1 \in \Pos\big((\complex^2)^{\otimes k}\big)$ associated with such a
measurement, which are
\begin{equation}\label{eq:p0-p1-def}
  P_0 = \Phi^{\ast}(\ket{0}\bra{0})
  \quad\text{and}\quad
  P_1 = \Phi^{\ast}(\ket{1}\bra{1}),
\end{equation}
where $\Phi^{\ast}$ denotes the adjoint mapping to $\Phi$.

With this scenario in mind, we define the \emph{value}\footnote{
  The term \emph{value} is not a standard term in this particular setting---but
  we use it nevertheless, as it is analogous to the standard usage of this
  term in the context of other models to be considered in subsequent chapters
  of this survey.}
of $\Phi$ as
\begin{equation}
  \omega(\Phi) = \max_{\rho}\,\bra{1} \Phi(\rho) \ket{1},
\end{equation}
where the maximum is over all density operators $\rho$ on $k$ qubits.
In the case that $\Phi$ is the channel described by a circuit functioning as 
a verification procedure, the value of $\Phi$ is the maximum probability with
which a quantum proof may lead this procedure to accept. 
The value coincides with the largest eigenvalue of the measurement operator
$P_1$ defined above, as a short calculation reveals:
\begin{equation}
  \omega(\Phi) = \max_\rho \bra{1} \Phi(\rho) \ket{1}
  = \max_{\rho} \bigip{\rho}{\Phi^{\ast}(\ket{1}\bra{1})}
  = \lambda_1(P_1).
\end{equation}
(In general, we write $\lambda_1(H),\,\lambda_2(H),\,\ldots,\,\lambda_n(H)$ to
denote the eigenvalues of a Hermitian operator $H$, sorted from largest to
smallest, so $\lambda_1(P_1)$ denotes the largest eigenvalue of $P_1$.) 
A prover wishing to maximize the probability of obtaining the outcome 1
may as well take $\rho$ to be a pure state $\rho = \ket{\psi}\bra{\psi}$
for $\ket{\psi}$ being any unit eigenvector corresponding to the largest
eigenvalue of~$P_1$.

With the definition of the value of a channel in hand, we may define the
class \class{QMA} as follows:

\begin{definition}
  A promise problem $A = (A_{\yes},A_{\no})$ is contained in
  the complexity class $\class{QMA}_{a,b}$ if there exists a
  polynomial-time computable function $V$ possessing the following properties:
  \begin{mylist}{\parindent}
  \item[1.]
    For every string $x\in A_{\yes}\cup A_{\no}$, one has that $V(x)$ is an
    encoding of a quantum circuit implementing a channel $\Phi_x$ having
    $1$ output qubit.
  \item[2.]
    \emph{Completeness}. For every string $x\in A_{\yes}$, the value of the
    channel $\Phi_x$ satisfies $\omega(\Phi_x) \geq a$.
  \item[3.]
    \emph{Soundness}. For every string $x\in A_{\no}$, the value of the channel
    $\Phi_x$ satisfies $\omega(\Phi_x) \leq b$.
  \end{mylist}
\end{definition}

\noindent
In this definition, $a,b\in[0,1]$ may be constant values or functions of the
length of the input string $x$.
When they are omitted, it is to be assumed that they are $a = 2/3$ and
$b = 1/3$:
\begin{equation}
  \class{QMA} = \class{QMA}_{\frac{2}{3},\frac{1}{3}}\,.
\end{equation}
As usual, the bounds $2/3$ and $1/3$ on the maximum probability of the
verifier outputting 1 are taken as being representative of statistically 
distinguishable experiments.
Methods for reducing errors in quantum verification procedures are discussed in
the section following this one, and these methods will allow the completeness
and soundness bounds $a$ and $b$ to be taken as any functions exponentially
close to $1$ and $0$ respectively. 
It is not known whether the completeness parameter can always be taken to equal
$1$ without changing the complexity class that results;
a one-sided variant of $\class{QMA}$, denoted
\begin{equation}
  \class{QMA}_1 \,=\, \class{QMA}_{1,\frac{1}{3}}\,,
\end{equation}
clearly satisfies $\class{QMA}_1\subseteq\class{QMA}$,
but the two classes are not known to be equal.

We will sometimes identify the circuit encoding $V(x)$ with the channel that it
implements in a self-explanatory way, writing $\omega(V(x))$ to mean
$\omega(\Phi_x)$ for $\Phi_x$ being the channel implemented by $V(x)$.
Along similar lines, we may write $\omega(V)$ to refer to the function whose
value is $\omega(V(x)) = \omega(\Phi_x)$ for each input string $x$.

It will be instructive and useful later to consider the actions of a
unitary circuit $Q$ that purifies a circuit implementation of a channel $\Phi$
as above.
As illustrated in Figure~\ref{fig:unitary-verification}, such a circuit will
take as input two registers:
a $k$-qubit register $\reg{X}$, which initially contains the state $\rho$
representing the quantum proof, along with an $m$-qubit register $\reg{Y}$
initially containing the pure state $\ket{0^m}$, which represents the so-called
\emph{ancillary qubits} used by the circuit $Q$.
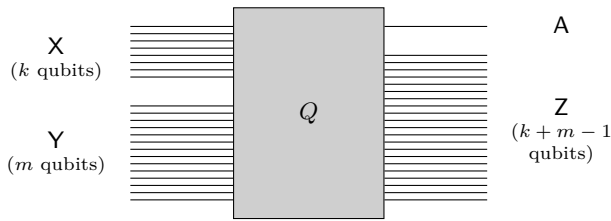
\begin{figure}
  \begin{center}
    \footnotesize
    \begin{tikzpicture}[scale=0.48, 
        circuit/.style={draw, minimum height=28mm, minimum width=20mm,
          fill = ChannelColor, text=ChannelTextColor},
        empty/.style={minimum width=10mm},
        >=latex]
      
      \node (Q) at (0,0) [circuit] {$Q$};
      \node (in) at (-6,0) [empty] {};
      \node (out) at (6,0) [empty] {};

      \foreach \y in {-24,-22,...,2} {
        \draw ([yshift=\y mm]in.east) -- ([yshift=\y mm]Q.west) {};
      }

      \foreach \y in {24,22,...,10} {
        \draw ([yshift=\y mm]in.east) -- ([yshift=\y mm]Q.west) {};
      }

      \foreach \y in {-24,-22,...,16} {
        \draw ([yshift=\y mm]Q.east) -- ([yshift=\y mm]out.west) {};
      }

      \draw ([yshift=24mm]Q.east) -- ([yshift=24mm]out.west) {};
      
      \node at (-7,1.5) {\begin{tabular}{c}
          $\reg{X}$\\[-1mm]
          \scriptsize{($k$ qubits)}
        \end{tabular}};
      
      \node at (-7,-1.1) {\begin{tabular}{c}
          $\reg{Y}$\\[-1mm]
          \scriptsize{($m$ qubits)}
        \end{tabular}};
      
      \node at (7,-0.5) {\begin{tabular}{c}
          $\reg{Z}$\\[-1mm]
          \scriptsize{($k+m-1$}\\[-1mm]
          \scriptsize{qubits)}
        \end{tabular}};
      
      \node at (7,2.45) {$\reg{A}$};

    \end{tikzpicture}
  \end{center}
  \caption{A unitary circuit implementing a verification procedure.}
  \label{fig:unitary-verification}
\end{figure}
(As explained in Section~\ref{sec:circuit-purification} the value of $m$ will
be at most linear in the number of gates required by the circuit implementation
of~$\Phi$.)
The output qubit of $\Phi$ will be named $\reg{A}$, and the remaining $k+m-1$
qubits output by $Q$ comprise a register~$\reg{Z}$.

\subsection{Example: group non-membership}
\label{sec:gnm}

We will now describe an example of a computational problem, known as the
\emph{group non-membership problem}, that illustrates one potential way in
which quantum proofs and verification procedures may gain advantages over
classical proofs and verification procedures.\footnote{%
  Although we do not go into the details, from the description in this section
  it will be evident that the group non-membership problem can be used to
  provide an \emph{oracle separation} between the complexity classes \class{MA}
  and \class{QMA}.}
The group non-membership problem is perhaps most naturally described within
the \emph{black-box group} setting \cite{BabaiS84}.
Here, one considers that there is an underlying finite group $G_n$ that has
been specified for each positive integer $n$, and elements of $G_n$ are encoded
as binary strings of length $n$ (so that it must necessarily hold that
$\abs{G_n}\leq 2^n$).
Not every string needs to encode a group element, but we will make the
assumption (which is not always in place in the black-box group setting) that
each group element has a unique binary string encoding.
A \emph{group oracle} is made available to perform the two group operations
at unit cost, and in the quantum setting one assumes that the group oracle
functions reversibly.
For example, the group oracle may operate in the following way:
\begin{equation}
  \ket{g}\ket{h}\ket{b} \mapsto
  \begin{cases}
    \ket{h g}\ket{h}\ket{b} & \text{if $b = 0$}\\[1mm]
    \ket{h^{-1} g}\ket{h}\ket{b} & \text{if $b = 1$},
  \end{cases}
\end{equation}
assuming that $b\in\{0,1\}$ and $g, h \in G_n$ are identified with their
$n$-bit encodings.
(We may assume that the group oracle acts as the identity operator when
given a string that does not encode a group element.)
When we consider the group non-membership problem below, it is to be assumed
that group elements are given as $n$-bit strings and the group operations
are determined by a fixed group oracle.

Positive results in the setting of black-box groups generally imply analogous
positive results in concrete settings in which the group oracle can be
implemented algorithmically.
For the particular case at hand, the fact that the group non-membership problem
is in \class{QMA} for black-box groups implies that it is also in \class{QMA}
for concrete realizations of groups for which the unique encoding assumption is
met and for which the group operations can be implemented efficiently.
Matrix groups over finite fields represent a fairly general class of
examples in this category.

The group non-membership problem is as follows:\vspace{1mm}
\begin{center}
  \begin{minipage}{0.95\textwidth}
    \begin{center}
      \underline{Group non-membership (GNM)}\\[2mm]
      \begin{tabular}{lp{0.8\textwidth}}
        \emph{Input:} & 
        Group elements $g_1,\ldots,g_m\in G_n$ and $a\in G_n$ (for some choice
        of $n$).\\[1mm]
        \emph{Yes:} &
        $a\not\in \langle g_1,\ldots, g_m\rangle$.\\[1mm]
        \emph{No:} &
        $a\in \langle g_1,\ldots, g_m\rangle$.
      \end{tabular}
    \end{center}
  \end{minipage}
\end{center}

\noindent
Here, the notation $\langle g_1,\ldots, g_m\rangle$ means the subgroup of $G_n$
generated by the elements $g_1,\ldots,g_m$.

Some might argue that the \emph{subgroup non-membership problem} would be
a more fitting name than the \emph{group non-membership problem}, as the
problem concerns membership in the subgroup $\langle g_1,\ldots, g_m\rangle$
rather than membership in the group $G$.
It is, however, reasonable to view that $\langle g_1,\ldots, g_m\rangle$ is
the group of interest in this problem, while $G$ is a supergroup that
happens to contain $g_1,\ldots, g_m$ and $a$.

Before discussing quantum proofs and verification procedures for this problem,
it is fitting to mention what is known in the classical setting.
It is known how to design an efficient classical verification procedure
for the complementary problem to GNM, in which the yes-instances are those with
$a\in \langle g_1,\ldots, g_m\rangle$ and the no-instances are those with 
$a\not\in \langle g_1,\ldots, g_m\rangle$.
Intuitively speaking, a short classical proof that 
$a\in \langle g_1,\ldots, g_m\rangle$ may consist of a list of instructions for
obtaining $a$ from $g_1,\ldots,g_m$ through the group operations.
(One cannot simply give a sequence of elements selected from 
the set $\{g_1,\ldots,g_m,g_1^{-1},\ldots,g_m^{-1}\}$ whose product is $a$,
because such a list might need to be as long as the size of $G_n$ itself---but a
so-called \emph{straight-line program} can be used instead.
The \emph{reachability lemma} of Babai and Szemer\'edi \cite{BabaiS84}
guarantees that a short straight-line program to generate $a$ from
$g_1,\ldots,g_m$ must always exist when $a\in\langle g_1,\ldots,g_m\rangle$.)
For some groups, including permutation groups, there exist efficient classical
verification procedures for the GNM problem, but in the black-box group setting
one can prove that no efficient classical verification procedure exists.

Using quantum proofs, however, the solution becomes elementary.
A quantum proof certifying that a given element $a$ is not contained in the
subgroup $K = \langle g_1,\ldots, g_m\rangle$, for a quantum verification
procedure to be described shortly, is the state
\begin{equation}\label{eq:subgroup-witness}
  \ket{K} = \frac{1}{\sqrt{\abs{K}}} \sum_{g\in K} \ket{g},
\end{equation}
i.e., a uniform superposition over the elements in $K$.
This state is independent of $a$, and will function correctly as a proof
that $a\not\in K$ for all such choices of $a$.

Now, if one truly had a copy of the state $\ket{K}$, it would not be difficult
to test membership in $K$ with bounded, one-sided error. 
If it is the case that $h\in K$ for some group element $h$, then the state
\begin{equation}
  \ket{h K} =  \frac{1}{\sqrt{\abs{K}}} \sum_{g\in K} \ket{hg}
\end{equation}
satisfies $\ket{h K} = \ket{K}$.
On the other hand, if $h\not\in K$, then $\ket{h K} \perp \ket{K}$. 
The following test, which we call the \emph{controlled-unitary test},\footnote{
  We are not aware of a standard name for this test, and have selected a name
  for the sake of convenience.
  Irrespective of the name, it is a very commonly used test in quantum
  algorithms and complexity \cite{CleveEMM98}, and can be viewed as a
  low-precision form of the \emph{eigenvalue estimation} procedure associated
  with Shor's algorithms for factoring and computing discrete logarithms.
} can be used to distinguish between
the two cases:
\begin{center}
  \underline{Controlled-unitary test}\\[2mm]
  \begin{tabular}{lp{0.78\textwidth}}
      \emph{Given:} & An $n$-qubit state $\ket{\psi}$ and a quantum 
      circuit specifying an $n$-qubit unitary $U$.\\[1mm]
      \emph{Outcome:} & A classical bit that is $0$ with probability 
      $$p=\frac{1+\Re({\bra{\psi}U\ket{\psi}})}{2}$$ and $1$ with probability
      $1-p$.\\[1mm]
      \emph{Procedure:} & See Figure~\ref{fig:control-u}. 
  \end{tabular}
\end{center}
The circuit described in Figure~\ref{fig:control-u} implements the
controlled-unitary test.
The measurement illustrated in the figure is a standard basis measurement.

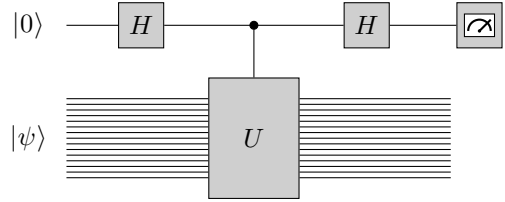
\begin{figure}
  \begin{center}
    \begin{tikzpicture}[scale=0.75]
      \small
      \node[minimum size = 6mm, draw, fill=ChannelColor, text=ChannelTextColor]
      (H) at (-2,1) {$H$};
      
      \node[minimum size = 6mm, draw, fill=ChannelColor, text=ChannelTextColor]
      (QFT) at (2,1) {$H$};
      
      \node[rounded corners = 1.5pt, draw, fill = Black, 
        inner sep = 0pt, minimum width = 3pt, minimum height = 3pt]
      (control) at (0,1) {};
      
      \node[minimum height = 1.25cm, minimum width = 1cm]
      (InTop) at (-4,1) {$\ket{0}$};
      
      \node[minimum size = 6mm, draw, fill=ChannelColor]
      (Measure) at (4,1) {};
      
      \node[minimum height = 1.25cm, minimum width = 1cm]
      (InBottom) at (-4,-1) {$\ket{\psi}$};
      
      \node[minimum height = 1.25cm, minimum width = 0.75cm]
      (OutBottom) at (4,-1) {};
      
      \draw (H.east) -- (QFT.west) {};
      \draw (InTop.east) -- (H.west) {};
      \draw (QFT.east) -- (Measure.west) {};
      
      \node[minimum height = 16mm, minimum width = 12mm, draw,
        fill=ChannelColor, text=ChannelTextColor]
      (U) at (0,-1) {$U$};
      
      \foreach \y in {-7,-6,-5,-4,-3,-2,-1,0,1,2,3,4,5,6,7} {
        \draw ([yshift=\y mm]InBottom.east) -- ([yshift=\y
          mm]U.west) {};
        \draw ([yshift=\y mm]U.east) -- ([yshift=\y mm]OutBottom.west) {};
      }
      
      \node[draw, minimum width=4mm, minimum height=3mm, fill=ReadoutColor]
      at (4,1) {};
      
      \draw[thick] (4.2,0.9) arc (0:180:2mm);
      \draw[thick] (4,0.9) -- (4.2,1.1);
      \draw[fill] (4,0.9) circle (0.2mm);
      
      \draw (U.north) -- (control.south);
      
    \end{tikzpicture}
  \end{center}
  \caption{Circuit implementing the controlled-unitary test on state
    $\ket{\psi}$ and unitary $U$.}
  \label{fig:control-u}
\end{figure}

In the circuit, the unitary $U$ is controlled by the top qubit, so that the
action of the entire controlled-unitary gate is as follows:
\begin{equation}
   \ket{0}\ket{\psi} \mapsto \ket{0}\ket{\psi}
   \quad\text{and}\quad
   \ket{1}\ket{\psi} \mapsto \ket{1}U\ket{\psi}.
\end{equation}
After applying the second Hadamard gate, the state of the $n+1$ qubits is
\begin{equation}\label{eq:ctl-u}
 \frac{\ket{0}+\ket{1}}{2}\otimes\ket{\psi} + 
 \frac{\ket{0}-\ket{1}}{2}\otimes U\ket{\psi},
\end{equation}
and measuring the top qubit with respect to the computational basis produces the
outcome $0$ with probability
\begin{equation}
  \frac{1}{4}\big\| \ket{\psi} + U\ket{\psi} \big\|^2 
  \,=\, \frac{1 + \Re({\bra{\psi}U\ket{\psi}})}{2}.
\end{equation}

In the setting of the group non-membership problem, the controlled-unitary test
will be applied to the unitary operation $M_a$ that multiplies (on the left) by
$a$:
\begin{equation}
  M_a \ket{g} = \ket{ag}.
\end{equation}
This is a reversible (and therefore unitary) operation that can be implemented
efficiently using the group oracle (or under the assumption that the group
operations can be computed efficiently).
By executing the controlled-unitary test on the state $\ket{\psi}=\ket{K}$ and
the unitary $U=M_a$, the verifier will obtain a bit that is $0$ with certainty
if $a\in K$, and is uniformly distributed if $a\notin K$.
If the test is run, leading to the outcome 1, then it must therefore be the
case that $a\not\in K$, and if $a\not\in K$ then this event will happen with
probability 1/2.

There is, of course, a problem with treating the procedure just described as a
verification procedure for the GNM problem, which is that one cannot trust that
a given quantum proof really is the state $\ket{K}$.
One could, for instance, substitute the state $\ket{1}$ (where $1$ denotes the
identity element in the group) for $\ket{K}$, which would lead to the
incorrect conclusion that all non-identity elements of $K$ are not contained in
$K$.

One solution to this problem is to first choose a collection of elements
$h_1,\ldots,h_N$ from $K$, and to sequentially run the verification procedure
on $h_1,\ldots,h_N$ taken in place of $a$.
In addition to the measurement outcome, the membership test outputs the
qubits that initially contained the proof state, and after each test these
qubits are supplied as the proof state to the next test.
Naturally, if an element $h\in K$ is selected, one would expect that
running the membership test on $h$ would reveal that $h$ is indeed contained in
$K$---so if the membership test were to reveal that $h$ is not contained in
$K$, then the proof state must have been invalid and can be rejected.
To see that this reasoning is valid not only for the first test, but for each
of the tests in sequence, from~\eqref{eq:ctl-u} one may observe that
conditioned on the test outputting 0 (which indicates a positive test for
membership) for a particular choice of $h\in G_n$ and a proof state
$\ket{\psi}$, the new proof state output by the test is
\begin{equation}
  \ket{\psi} + M_h \ket{\psi} \quad \text{(normalized)}.
\end{equation}
Because $M_h\ket{K}=\ket{hK}=\ket{K}$ holds when $h\in K$, the proof state
$\ket{K}$ is unchanged by any such test.

Now suppose that the membership test is run on a given proof state $\ket{\psi}$
for a sequence of group elements $h_1,\ldots,h_N\in K$.
Under the assumption that every one of the membership tests is consistent with
the fact that $h_1,\ldots,h_N\in K$, the resulting proof state becomes
\begin{equation}
  \sum_{k_1,\ldots,k_N\in\{0,1\}}
  M_{h_1}^{k_1} M_{h_2}^{k_2} \cdots M_{h_N}^{k_N}
  \ket{\psi} \quad \text{(normalized)}.
\end{equation}
Regardless of whether or not the original state $\ket{\psi}$ was close to
$\ket{K}$, the state above must be nearly invariant under left-multiplication
by elements of $K$, provided that $h_1,\ldots,h_N$ were chosen well.
In particular, if they represent a so-called
\emph{$\varepsilon$-uniform Erd\H{o}s--R\'enyi generating sequence} for $K$,
which means that the distribution of group elements
\begin{equation}
  {h_1}^{k_1} {h_2}^{k_2} \cdots {h_N}^{k_N}
\end{equation}
is $\varepsilon$-close to uniform over $K$ for
$(k_1,\ldots,k_N)\in\{0,1\}^N$ chosen uniformly at random, then the above state
will function in approximately the same way as $\ket{K}$ with respect to the
membership test described above.
There is a classical randomized procedure due to Babai \cite{Babai91} that
produces such a sequence with high probability.

As a remark, it is important to realize that the ability to uniformly sample
from $K$ is not known to allow one to efficiently prepare the state $\ket{K}$.
By performing the sampling in superposition it is possible to prepare a state
\begin{equation}
  \ket{K'} = \frac{1}{2^{R/2}} \sum_{r\in\{0,1\}^R} \ket{r}\ket{h(r)},
\end{equation}
where $r$ denotes the randomness used by the sampling procedure and $h(r)$
denotes the sampled group element.
In order to obtain a good approximation to $\ket{K}$ it would be necessary to
``erase,'' or uncompute, the string $r$ based on $h(r)$, and this may not be
possible (for instance if $r\to h(r)$ is not one-to-one).

The final procedure is described in Figure~\ref{fig:GNM-verification}.
One of the error-reduction procedures to be described in the next section can
be applied to this procedure to yield error bounded by 1/3, or even
exponentially small error if desired.

\begin{figure}
  \noindent\hrulefill

  \begin{mylist}{8mm}
  \item[1.]
    Input $n$-bit encodings of group elements $g_1,\ldots,g_m,a\in G_n$ and a
    proof state contained in an $k$-qubit register $\reg{X}$.
    Set $\varepsilon$ to be a small positive constant (such as
    $\varepsilon = 1/16$).
  \item[2.]
    Randomly select elements $h_1,\ldots,h_N$ so that, with probability at
    least $1 - \varepsilon$, the sequence $h_1,\ldots,h_N$ is an
    $\epsilon$-uniform Erd\H{o}s--R\'enyi generating sequence for
    $K = \langle g_1,\ldots,g_m\rangle$.
  \item[3.]
    For each $j = 1,\ldots,N$, perform the controlled-unitary test described in
    Figure~\ref{fig:control-u} on the state contained in $\reg{X}$ and the
    unitary $M_{h_j}$.
    If any of these tests results in the outcome 1, indicating non-membership,
    then reject.
  \item[4.]
    Run the membership test for $a$ on $\reg{X}$ and accept if the outcome
    is~1 (indicating non-membership), reject if the outcome is 0.
  \end{mylist}
  \vspace{-2mm}

  \noindent\hrulefill
  \caption{Verification procedure for the group non-membership problem}
  \label{fig:GNM-verification}
\end{figure}

\section{Error reduction}
\label{section:QMA-error-reduction}

In a classical setting, error reduction for polynomial-time bounded-error
verification procedures can be handled in a straightforward way:
the verification procedure is independently run multiple times on a given
proof string, and is accepted if and only if the number of acceptances obtained
by the individual runs meets or exceeds some suitably chosen threshold value.
With respect to the analysis of such a method, no significant new challenges
arise as compared with the standard analysis of error reduction for
bounded-error algorithms.

In the quantum setting, this strategy does not work---running a verification
procedure on a quantum proof will generally change it, so the original proof
may no longer be available after the first verification. 
For instance, if the measurement that is performed is a binary projective
measurement $\{\Pi_0,\Pi_1\}$, then the post-measurement state is either
$\Pi_0\ket{\psi}$ or $\Pi_1\ket{\psi}$ (properly normalized), so that repeating
the measurement will result in the same outcome with certainty.

Two solutions to this problem are known.
One solution is to request multiple, independent copies of the original proof,
one for each run of the verification procedure.
This requires an analysis to verify that no advantage may be found in
correlating the registers that are supposed to contain these
independent proof copies.
Another solution, which has the advantage that it leads to a reduction in
error without an increase in proof size, involves repeatedly running a unitary
quantum circuit implementation of the verification procedure forward and
backward in a manner reminiscent of Grover's quantum search algorithm
\cite{Grover96}.
The two methods are described in the subsections that follow.

\subsection{Parallel error reduction}
\label{sec:parallel-error-reduction}

Assume that a verification procedure is given that takes as input a $k$-qubit
register $\reg{X}$ and outputs a single qubit, which is measured with respect to
the standard basis after being output.
We will refer to the verification procedure as $V$, with the understanding that
$V$ refers to the actions of a verifier on some fixed input string that will
not be named explicitly.
The first strategy for error reduction for \class{QMA} operates as follows, for
$T$ and $t$ being positive integers satisfying $t\leq T$ to be selected later.

\begin{mylist}{\parindent}
\item[1.] Receive registers $\reg{X}_1,\ldots,\reg{X}_T$, each comprising $k$
  qubits.
\item[2.] Run $V$ independently on each of the registers
  $\reg{X}_1,\ldots,\reg{X}_T$, and let $a_1,\ldots,a_T\in\{0,1\}$ be the
  resulting binary-valued measurement outcomes.
\item[3.] Accept (i.e., output 1) if and only if $a_1+\cdots+a_T\geq t$.
\end{mylist}

\noindent
This strategy is illustrated in Figure~\ref{fig:parallel-error-reduction},
for $T = 5$, and where the circuit labelled $F$ denotes the classical threshold
value computation for some choice of $t$ (which is not specified in the
figure).

\begin{figure}
  \begin{center}
    \footnotesize
    \begin{tikzpicture}[scale=0.45, 
        circuit/.style={draw, minimum height=8mm, minimum width=8mm,
          fill = ChannelColor, text=ChannelTextColor},
        empty/.style={minimum width=6mm},
        >=latex]
      
      \node (Q1) at (0,6) [circuit] {$V$};
      \node (Q2) at (0,3) [circuit] {$V$};
      \node (Q3) at (0,0) [circuit] {$V$};
      \node (Q4) at (0,-3) [circuit] {$V$};
      \node (Q5) at (0,-6) [circuit] {$V$};

      \node (in1) at (-4,6) [empty] {$\reg{X}_1$};
      \node (in2) at (-4,3) [empty] {$\reg{X}_2$};
      \node (in3) at (-4,0) [empty] {$\reg{X}_3$};
      \node (in4) at (-4,-3) [empty] {$\reg{X}_4$};
      \node (in5) at (-4,-6) [empty] {$\reg{X}_5$};

      \node (F) at (5,0) [circuit] {$F$};
      \node (out) at (8,0) [empty] {};

      \foreach \t in {1,2,3,4,5}{
        \foreach \y in {-6,-4,-2,0,2,4,6} {
          \draw ([yshift=\y mm]in\t.east) -- ([yshift=\y mm]Q\t.west) {};
        }
      }
      
      \draw (Q1.east) .. controls +(right:20mm) and +(left:20mm) .. 
      ([yshift=4mm]F.west);
      
      \draw (Q2.east) .. controls +(right:20mm) and +(left:20mm) .. 
      ([yshift=2mm]F.west);
      
      \draw (Q3.east) .. controls +(right:20mm) and +(left:20mm) .. 
      ([yshift=0mm]F.west);
      
      \draw (Q4.east) .. controls +(right:20mm) and +(left:20mm) .. 
      ([yshift=-2mm]F.west);
      
      \draw (Q5.east) .. controls +(right:20mm) and +(left:20mm) .. 
      ([yshift=-4mm]F.west);
      
      \draw (F.east) -- (out.west);

    \end{tikzpicture}
  \end{center}
  \caption{Parallel error reduction}
  \label{fig:parallel-error-reduction}
\end{figure}
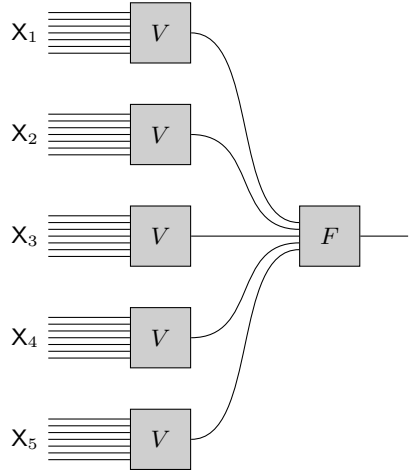

With the original verification procedure $V$, one can associate a binary-valued
measurement, as was described in the previous section.
Such a measurement may be represented by two $2^k\times 2^k$ dimensional
positive semidefinite operators $P_0$ and $P_1$ satisfying $P_0 + P_1 = \I$. 
Along similar lines, one can associate a binary-valued measurement with the
procedure described above---in this case described by two 
$2^{kT} \times 2^{kT}$ positive semidefinite operators $Q_0$ and $Q_1$
as follows:
\begin{equation}
  \begin{split}
    Q_0 & = \sum_{\substack{a_1,\ldots,a_T\in\{0,1\}\\
        a_1+ \cdots + a_T <t}}
    P_{a_1}\otimes\cdots\otimes P_{a_T},\\
    Q_1 & = \sum_{\substack{a_1,\ldots,a_T\in\{0,1\}\\
        a_1+ \cdots + a_T \geq t}}
    P_{a_1}\otimes\cdots\otimes P_{a_T}.
  \end{split}
\end{equation}
The spectra of these operators can be analyzed without difficulty due to the
fact that $P_0$ and $P_1$ necessarily commute:
$P_1 = \I - P_0$, and therefore
\begin{equation}
  P_0 P_1 = P_0 (\I - P_0) = P_0 - P_0^2 = (\I - P_0) P_0 = P_1 P_0.
\end{equation}
In particular, supposing that $\{\ket{\psi_1},\ldots,\ket{\psi_K}\}$ is an
orthonormal basis of eigenvectors of $P_1$ having corresponding eigenvalues
\begin{equation}
  \lambda_1(P_1) \geq \cdots \geq \lambda_K(P_1),
\end{equation}
where $K = 2^k$, one has that an orthonormal basis of eigenvectors of $Q_1$ is
obtained by tensoring these eigenvectors of $P_1$ (which are also eigenvectors
of $P_0$) in all possible combinations:
\begin{equation}
  \big\{
  \ket{\psi_{j_1}}\cdots\ket{\psi_{j_T}}\,:\,
  1\leq j_1,\ldots,j_T\leq K
\bigr\}.
\end{equation}
The value of the procedure that has been obtained is equal to the largest
eigenvalue of $Q_1$, which is
\begin{equation}
  \lambda_1 (Q_1) =
  \sum_{\substack{a_1,\ldots,a_T\in\{0,1\}\\ a_1+ \cdots + a_T \geq t}}
  \bra{\psi_1}P_{a_1}\ket{\psi_1}
  \cdots 
  \bra{\psi_1}P_{a_T}\ket{\psi_1}.
\end{equation}
One now sees that an optimal choice of a proof for the procedure is one in
which each of the registers $\reg{X}_1,\ldots,\reg{X}_T$ is independently
prepared in the optimal proof state $\ket{\psi_1}$ for the original verification
procedure $V$.

At this point, a suitable selection of $T$ and $t$ provides for an exponential
reduction in error, based on standard bounds on the tails of binomial
distributions.
For instance, if a given verifier $V$ has completeness and soundness
probability bounds $a$ and $b$, respectively, before error reduction, then
choosing
\begin{equation}
  T = \frac{r}{(a - b)^2}
  \quad\text{and}\quad
  t = \left\lceil \frac{a+b}{2} T \right\rceil
\end{equation}
results in a new verifier $V'$ having completeness and soundness probabilities
exponentially close to 1 and 0, respectively, with respect to a chosen error
parameter $r$.

\subsection{Witness-preserving error reduction}
\label{sec:witness-preserving-error-reduction}

The second method for error reduction of \class{QMA} is sequential, but has the
advantage that no increase in proof size is required as the error is reduced.
Before describing the method in precise terms, it will be helpful to first
discuss the intuition behind it.

Suppose that a verification procedure $V$, taking as input a $k$-qubit register
$\reg{X}$ and outputting a single qubit, is given.
As described in the previous section, one can consider a unitary circuit 
implementation $Q$ of $V$, which is a unitary procedure taking as input two
registers: the $k$-qubit proof register $\reg{X}$ along with an $m$-qubit
ancillary register $\reg{Y}$.
The output qubits of $Q$ are split between a single qubit register $\reg{A}$,
which corresponds to the output qubit of $V$, along with an $(m+k-1)$-qubit
register $\reg{Z}$ (which could be viewed as a ``garbage'' register, although
it will not be treated as garbage by the error reduction procedure).

Now, suppose that a pure state $\ket{\psi}$ on $k$ qubits has been selected as
a quantum proof, and $Q$ is run on the input state $\ket{\psi}\ket{0^m}$.
The resulting state may be expressed as 
\begin{equation}
  Q \ket{\psi}\ket{0^m} 
  = \sqrt{p_0(\psi)}\,\ket{0}\ket{\phi_0(\psi)}
  + \sqrt{p_1(\psi)}\,\ket{1}\ket{\phi_1(\psi)},
\end{equation}
where the numbers $p_0(\psi)$ and $p_1(\psi)$ represent the probabilities for a
measurement of the register $\reg{A}$ with respect to the standard basis to
give the outcomes 0 and 1, respectively.
Measuring the output qubit gives a single sample, 0 or 1, from a Bernoulli
distribution $(p_0(\psi),p_1(\psi))$ that one would ideally like to sample
multiple times.
It is natural to ask if the original proof state $\ket{\psi}$ can be recovered,
so as to allow for more samples, and perhaps the first thing one would be
inclined to do to try to recover $\ket{\psi}$ is to run $Q$ in reverse.
This yields one of the two states
\begin{equation}
  \label{eq:reversed-states-QMA-error-reduction}
  Q^{\ast}\ket{0}\ket{\phi_0(\psi)}
  \quad\text{or}\quad
  Q^{\ast}\ket{1}\ket{\phi_1(\psi)},
\end{equation}
depending on whether the outcome of the first measurement was 0 or~1.

It is not clear that the states \eqref{eq:reversed-states-QMA-error-reduction}
allow for a reconstruction of $\ket{\psi}$, or if they are useful at all for
that matter---but under a simple assumption on the original quantum proof state
$\ket{\psi}$, a recovery of $\ket{\psi}$ will generally be possible.
The assumption is that $\ket{\psi}$ is a common \emph{eigenvector} of the
two measurement operators $P_0$ and $P_1=\I-P_0$ corresponding to the
binary-valued measurement implemented by $V$.
Expanding on~\eqref{eq:p0-p1-def}, these measurement operators may be expressed
explicitly in terms of $Q$ as follows:
\begin{equation}\label{eq:verifier-measurement-operators}
  \begin{split}
    P_0 & = \bigl(\I \otimes \bra{0^m}\bigr) Q^{\ast}
    \bigl(\ket{0}\bra{0} \otimes \I\bigr) Q
    \bigl(\I \otimes \ket{0^m}\bigr),\\
    P_1 & = \bigl(\I \otimes \bra{0^m}\bigr) Q^{\ast}
    \bigl(\ket{1}\bra{1} \otimes \I\bigr) Q
    \bigl(\I \otimes \ket{0^m}\bigr).
  \end{split}
\end{equation}
It may be noted that the assumption that $\ket{\psi}$ is an eigenvector of
these operators is not a significant restriction;
a choice of $\ket{\psi}$ that maximizes $p_1(\psi)$ will necessarily be an
eigenvector of these operators, as discussed previously.

Now, to see why the condition that $\ket{\psi}$ is an eigenvector of these
operators is relevant, one may consider the states
\eqref{eq:reversed-states-QMA-error-reduction}; as they can be analyzed
similarly, we will consider $Q^{\ast}\ket{0}\ket{\phi_0(\psi)}$.
Imagine that the $m$-qubit ancillary register $\reg{Y}$ for this state is
measured with respect to the standard basis, and let us focus on the case in
which the measurement outcome is the all-zero string $0^m$.
Before normalization, the state remaining in the register $\reg{X}$ is
\begin{equation}
  \begin{aligned}
  \bigl(\I \otimes \bra{0^m}\bigr)Q^{\ast}\ket{0}\ket{\phi_0(\psi)}
  \hspace{-3cm}\\
  & = \frac{1}{\sqrt{p_0(\psi)}}
  \bigl(\I \otimes \bra{0^m}\bigr) Q^{\ast}
  \bigl(\ket{0}\bra{0} \otimes \I\bigr) Q
  \ket{\psi} \ket{0^m}\\
  & = \frac{1}{\sqrt{p_0(\psi)}} P_0 \ket{\psi}\\
  & = \sqrt{p_0(\psi)} \ket{\psi}.
  \end{aligned}
\end{equation}
Based on this calculation, one concludes that the all-zero measurement outcome
occurs with probability $p_0(\psi)$, and conditioned on this outcome the
original proof state $\ket{\psi}$ is available in the register $\reg{X}$ to be
tested again.
For the state $Q^{\ast}\ket{1}\ket{\phi_1(\psi)}$, the probability of 
obtaining the all-zero measurement outcome $0^m$ is $p_1(\psi)$, and again
the original proof state $\ket{\psi}$ is available in the register $\reg{X}$ to
be tested again.

Of course, obtaining the all-zero measurement outcome is fortuitous, and the
possibility of obtaining a different measurement outcome would seem to be a 
potential problem.
Nevertheless, the indication that there is a possibility to obtain further
samples is encouraging.
As it turns out, the potential difficulty represented by the possibility to not
obtain the all-zero measurement outcome can be overcome by making a different
choice of the measurement on the ancillary register.
Rather than measuring with respect to the standard basis, we will measure with
respect to a binary-valued projective measurement having measurement operators
\begin{equation}
  \Delta_0 = \ket{0^m}\bra{0^m}
  \quad\text{and}\quad
  \Delta_1 = \I - \ket{0^m}\bra{0^m}.
\end{equation}

Let us describe the actual error reduction procedure in precise terms before
proceeding further with the discussion.
The procedure is described in Figure~\ref{fig:witness-preserving-procedure},
where $1\leq t\leq T$ are two arbitrary integers and it is to be assumed that
the register $\reg{X}$ initially contains the proof state $\ket{\psi}$ and
$\reg{Y}$ is initialized to the state $\ket{0^m}$.
\begin{figure}
  \noindent\hrulefill

  \begin{mylist}{6mm}
  \item[1.]
    Repeat the following for each $j = 1,\ldots,T$:\vspace{-2mm}
    
    \begin{mylist}{5mm}
    \item[a.] 
      Apply the unitary circuit $Q$ to $(\reg{X},\reg{Y})$,
      obtaining $(\reg{A},\reg{Z})$.
    \item[b.]
      Perform a standard basis measurement on $\reg{A}$, letting $a_j\in\{0,1\}$
      denote the outcome.
    \item[c.]
      Apply the unitary circuit $Q^{\ast}$ to $(\reg{A},\reg{Z})$, obtaining
      $(\reg{X},\reg{Y})$.
    \item[d.]
      Perform the projective measurement $\{\Delta_0,\Delta_1\}$ on $\reg{Y}$,
      where
      \begin{equation}
        \Delta_0 = \ket{0^m}\bra{0^m}
        \quad\text{and}\quad
        \Delta_1 = \I - \ket{0^m}\bra{0^m},
      \end{equation}
      letting $b_j\in\{0,1\}$ denote the outcome.
    \end{mylist}
    
  \item[2.]
    Define $c_1,\ldots,c_{2T}\in\{0,1\}$ as follows:
    \begin{equation}
      \begin{split}
        c_1 & = a_1,\\
        c_{2j} & = a_j \oplus b_j \quad (\text{for $j = 1,\ldots,T$})\\
        c_{2j-1} & = a_j \oplus b_{j-1} \quad (\text{for $j = 2,\ldots,T$}).
      \end{split}
    \end{equation}
    Accept if $c_1 + \cdots + c_{2T} \geq 2t$, reject otherwise.
  \end{mylist}
  \vspace{-3mm}
  \noindent\hrulefill
  \caption{Witness-preserving error reduction procedure for \class{QMA}.}
  \label{fig:witness-preserving-procedure}
\end{figure}
The procedure is illustrated in Figure~\ref{fig:strong-error-reduction}
for the case $T = 3$.
The box labelled $R$ represents the classical computation performed in step~2,
and it is to be assumed that all of the qubits aside from those included in
$\reg{X}$ are initialized to the $\ket{0}$ state before the circuit in the
figure is executed.
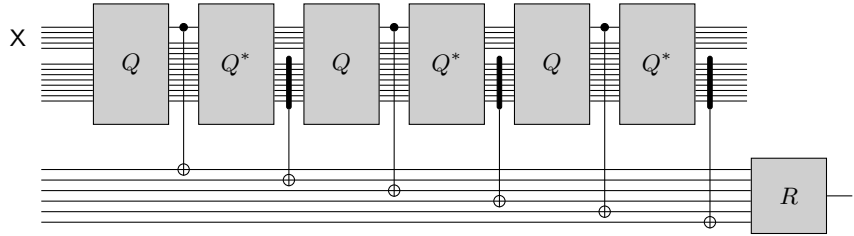
\begin{figure}
  \begin{center}
    \footnotesize
    \begin{tikzpicture}[scale=0.35, 
        circuit/.style={draw, minimum height=16mm, minimum width=10mm,
          fill = ChannelColor, text=ChannelTextColor},
        smallcircuit/.style={draw, minimum height=10mm, minimum width=10mm,
          fill = ChannelColor, text=ChannelTextColor},
        empty/.style={minimum width=4mm},
        initialized/.style={draw, fill=black, inner sep = 0mm, 
          minimum width = 0.6mm, minimum height = 7mm, 
          rounded corners=0.3mm},
        dot/.style={draw, fill=black, inner sep = 0mm, minimum size = 1mm,
          rounded corners = 0.5mm},
        circ/.style={draw, inner sep = 0mm, minimum size = 1.5mm,
          rounded corners = 0.75mm},
        >=latex]
      
      \node (R1) at (-14,0) [empty] {};
      \node (Q1) at (-10,0) [circuit] {$Q$};
      \node (R2) at (-6,0) [circuit] {$Q^{\ast}$};
      \node (Q2) at (-2,0) [circuit] {$Q$};
      \node (R3) at (2,0) [circuit] {$Q^{\ast}$};
      \node (Q3) at (6,0) [circuit] {$Q$};
      \node (R4) at (10,0) [circuit] {$Q^{\ast}$};
      \node (Q4) at (14,0) [empty] {};

      \node (R) at (15,-5) [smallcircuit] {$R$};
      \node (out) at (18,-5) [empty] {};
      \node (in) at (-14,-5) [empty] {};
      
      \draw (R.east) -- (out.west);

      \foreach \t in {1,2,3,4} {
      \foreach \y in {14,12,10,8,6,0,-2,-4,-6,-8,-10,-12,-14} {
        \draw ([yshift=\y mm]R\t.east) -- ([yshift=\y mm]Q\t.west);
      }
      }

      \foreach \y in {14,8,6,4,2,0,-2,-4,-6,-8,-10,-12,-14}{
        \draw ([yshift=\y mm]Q1.east) -- ([yshift=\y mm]R2.west);
      }

      \foreach \y in {14,8,6,4,2,0,-2,-4,-6,-8,-10,-12,-14} {
        \draw ([yshift=\y mm]Q2.east) -- ([yshift=\y mm]R3.west);
      }

      \foreach \y in {14,8,6,4,2,0,-2,-4,-6,-8,-10,-12,-14} {
        \draw ([yshift=\y mm]Q3.east) -- ([yshift=\y mm]R4.west);
      }

      \node (init1) at (-4,-0.7) [initialized] {};
      \node (init2) at (4,-0.7) [initialized] {};
      \node (init3) at (12,-0.7) [initialized] {};

      \node (measure1) at (-8,1.4) [dot] {};
      \node (measure2) at (0,1.4) [dot] {};
      \node (measure3) at (8,1.4) [dot] {};

      \foreach \y in {10,6,2,-2,-6,-10} {
        \draw ([yshift=\y mm]in.east) -- ([yshift=\y mm]R.west);
      }

      \node (target1) at (-8,-4) [circ] {};
      \node (target2) at (-4,-4.4) [circ] {};
      \node (target3) at (0,-4.8) [circ] {};
      \node (target4) at (4,-5.2) [circ] {};
      \node (target5) at (8,-5.6) [circ] {};
      \node (target6) at (12,-6.0) [circ] {};

      \draw (measure1) -- (target1.south);
      \draw (init1) -- (target2.south);
      \draw (measure2) -- (target3.south);
      \draw (init2) -- (target4.south);
      \draw (measure3) -- (target5.south);
      \draw (init3) -- (target6.south);

      \node at (-14.3,1) {$\reg{X}$};

    \end{tikzpicture}
  \end{center}
  \caption{Witness-preserving error reduction}
  \label{fig:strong-error-reduction}
\end{figure}

To analyze the procedure, we can extend the analysis that has been started
above: we have determined, under the assumption that $\ket{\psi}$ is an
eigenvector of the measurement operators $P_0$ and $P_1$, that
\begin{equation}
  Q^{\ast} \ket{0}\ket{\phi_0(\psi)} =
  \sqrt{p_0(\psi)} \ket{\psi}\ket{0^m} + \sqrt{p_1(\psi)} \ket{\gamma(\psi)},
\end{equation}
where $\ket{\gamma(\psi)}$ is defined as
\begin{equation}
  \ket{\gamma(\psi)}
  = \frac{1}{\sqrt{p_1(\psi)}}
  (\I \otimes \Delta_1) Q^{\ast} \ket{0}\ket{\phi_0(\psi)}.
\end{equation}
(We will ignore the possibility that either of $p_0(\psi)$ or $p_1(\psi)$ is
zero, which can be handled as a simple special case.)
Because $Q$ is unitary, we also have
\begin{equation}
  Q^{\ast} \biggl(
  \sqrt{p_0(\psi)}\ket{0}\ket{\phi_0(\psi)} 
  + \sqrt{p_1(\psi)}\ket{1}\ket{\phi_1(\psi)}\biggr) = \ket{\psi}\ket{0^m},
\end{equation}
which provides us with enough information to determine the state of the
registers in the procedure after every measurement, conditioned on every value.
In particular, we have
\begin{equation}
  \begin{split}
    Q^{\ast} \ket{0}\ket{\phi_0(\psi)}
    & = \sqrt{p_0(\psi)} \ket{\psi}\ket{0^m} +
    \sqrt{p_1(\psi)} \ket{\gamma(\psi)},\\
    Q^{\ast} \ket{1}\ket{\phi_1(\psi)}
    & = \sqrt{p_1(\psi)} \ket{\psi}\ket{0^m} -
    \sqrt{p_0(\psi)} \ket{\gamma(\psi)},
  \end{split}
\end{equation}
and
\begin{equation}
  \begin{split}
    Q \ket{\psi}\ket{0^m} & =
    \sqrt{p_0(\psi)} \ket{0} \ket{\phi_0(\psi)} + 
    \sqrt{p_1(\psi)} \ket{1} \ket{\phi_1(\psi)},\\
    Q \ket{\gamma(\psi)} & =
    \sqrt{p_1(\psi)} \ket{0} \ket{\phi_0(\psi)} - 
    \sqrt{p_0(\psi)} \ket{1} \ket{\phi_1(\psi)}.
  \end{split}
\end{equation}
From these equations it follows that
\begin{equation}
  \label{eq:QMA-error-reduction-probabilities}
  \begin{alignedat}{2}
    \op{Pr}\bigl( b_j = 0 \big| a_j = 0) & = p_0(\psi),
    & \quad
    \op{Pr}\bigl( a_{j+1} = 0 \big| b_j = 0) & = p_0(\psi),\\
    \op{Pr}\bigl( b_j = 1 \big| a_j = 0) & = p_1(\psi),
    & \quad
    \op{Pr}\bigl( a_{j+1} = 1 \big| b_j = 0) & = p_1(\psi), \\
    \op{Pr}\bigl( b_j = 0 \big| a_j = 1) & = p_1(\psi),
    & \quad
    \op{Pr}\bigl( a_{j+1} = 0 \big| b_j = 1) & = p_1(\psi), \\
    \op{Pr}\bigl( b_j = 1 \big| a_j = 1) & = p_0(\psi),
    & \quad
    \op{Pr}\bigl( a_{j+1} = 1 \big| b_j = 1) & = p_0(\psi),
  \end{alignedat}
\end{equation}
and therefore
\begin{equation}
  \op{Pr}\bigl(c_j = 0) = p_0(\psi)
  \quad\text{and}\quad
  \op{Pr}\bigl(c_j = 1) = p_1(\psi)
\end{equation}
for every $j = 1,\ldots, 2T$.
Figure~\ref{fig:QMA-error-reduction-probabilities} illustrates the transition
probabilities expressed by the equations
\eqref{eq:QMA-error-reduction-probabilities}.
A similar choice of $T$ and $t$ to the parallel error reduction procedure
yields an exponential reduction of error.

One must still consider the behavior of the witness-preserving error reduction
procedure when the given proof state $\ket{\psi}$ is not an eigenvector of
the original measurement operators $P_0$ and $P_1$, but this is easily done and
we will omit the details.
Intuitively speaking, the procedure operates independently on each eigenvector,
so that an arbitrary proof state must always behave as if it were a random
mixture of pure state eigenvectors.
Alternatively, one can argue that the measurement operators resulting from the
witness-preserving error reduction procedure share a common set of
eigenvectors with the original measurement operators $P_0$ and $P_1$, so there
is no loss of generality in considering the behavior of the procedure on
one of these eigenvectors.

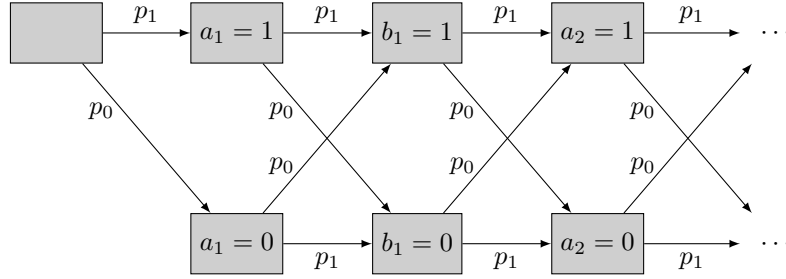
\begin{figure}
  \begin{center}\small
    \makebox[1in][c]{%
      \begin{tikzpicture}[scale=0.4,
          state/.style={draw, minimum height=8mm, 
            fill = ChannelColor, text=ChannelTextColor},
          invisible/.style={minimum height=8mm, minimum width=10mm},
          >=latex]
        
        \node at (-10,4) [invisible] {};
        \node (T1) at (-9,4) [state] {\phantom{$a_1=0$}};
        
        \node (T2) at (-3,4) [state] {$a_1=1$};
        \node (B2) at (-3,-3) [state] {$a_1=0$};
        
        \node (T3) at (3,4) [state] {$b_1=1$};
        \node (B3) at (3,-3) [state] {$b_1=0$};
        
        \node (T4) at (9,4) [state] {$a_2=1$};
        \node (B4) at (9,-3) [state] {$a_2=0$};
        
        \node (T5) at (15,4) [invisible] {$\cdots$};
        \node (B5) at (15,-3) [invisible] {$\cdots$};
        
        \draw[->] (T1) -- (T2) node [above, midway] {$p_1$};
        \draw[->] (T2) -- (T3) node [above, midway] {$p_1$};
        \draw[->] (T3) -- (T4) node [above, midway] {$p_1$};
        \draw[->] (T4) -- (T5) node [above, midway] {$p_1$};
        
        \draw[->] (B2) -- (B3) node [below, midway] {$p_1$};
        \draw[->] (B3) -- (B4) node [below, midway] {$p_1$};
        \draw[->] (B4) -- (B5) node [below, midway] {$p_1$};
        
        \draw[->] (T1) -- (B2) node [below=1mm, pos=0.15] {$p_0$};
        \draw[->] (T2) -- (B3) node [below=1mm, pos=0.15] {$p_0$};
        \draw[->] (T3) -- (B4) node [below=1mm, pos=0.15] {$p_0$};
        \draw[->] (T4) -- (B5) node [below=1mm, pos=0.15] {$p_0$};
        
        \draw[->] (B2) -- (T3) node [above=1mm, pos=0.15] {$p_0$};
        \draw[->] (B3) -- (T4) node [above=1mm, pos=0.15] {$p_0$};
        \draw[->] (B4) -- (T5) node [above=1mm, pos=0.15] {$p_0$};
      \end{tikzpicture}%
    }
  \end{center}
  \caption{Transition probabilities between different measurement outcomes
    in the witness-preserving error reduction procedure.}
  \label{fig:QMA-error-reduction-probabilities}
\end{figure}

\subsection{$\class{QMA}\subseteq\class{PP}$}
\label{sec:qma-pp}

Consider a problem $A=(A_{\yes},A_{\no})$ in \class{QMA} and consider a verifier for
$A$ for which the completeness and soundness probabilities have been amplified
to $a\geq 3/4$ and $b\leq 2^{-(k+2)}$ respectively, where $k$ is the number of
qubits of the proof. 
That this is possible follows from the witness-preserving error reduction
procedure described in the previous section. 
For a fixed input $x$, the verifier's maximum probability of producing the outcome 1 can be
expressed as the largest eigenvalue of the $k$-qubit measurement operator $P_1$
defined in~\eqref{eq:p0-p1-def}. 
Thus 
\begin{equation}
  \begin{aligned}
    x\in A_{\yes}&\implies \Tr(P_1)\geq \lambda_1(P_1) \,\geq\,\frac{3}{4},\\
    x\in A_{\no}&\implies\Tr(P_1)\leq 2^k\lambda_1(P_1)\,\leq\, \frac{1}{4}.
  \end{aligned}
\end{equation}

It follows that this problem can be decided by an unbounded-error quantum
polynomial-time procedure as follows. 
The procedure performs the same measurement as the \class{QMA} verifier but
replaces the witness by the completely mixed state on $k$ qubits, created, for
instance, as half of $k$ EPR pairs.\footnote{An EPR pair is a pair of qubits in
  the joint state $(\ket{00} + \ket{11})/\sqrt{2}$.}
This procedure will output $1$ (i.e., accept) with probability at least
$\frac{3}{4}2^{-k}$ in case $x\in A_{\yes}$, and accept with probability at
most $\frac{1}{4}2^{-k}$ in case $x\in A_{\no}$.

The above already shows $\class{QMA}\subseteq\class{PQP}$, the class of
problems that have  unbounded-error quantum polynomial-time algorithms. 
The inclusion can be re-stated as $\class{QMA}\subseteq\class{PP}$, the class 
of problems that have unbounded-error classical randomized polynomial-time
algorithms, because the equality $\class{PQP}=\class{PP}$ holds in general.
(Indeed, the two best-known proofs of the containment
$\class{BQP}\subseteq\class{PP}$, due to
Adleman, DeMarrais, and Huang \cite{AdlemanDH97} and Fortnow and Rogers
\cite{FortnowR99}, do not rely on the assumption of bounded error for
\class{BQP}.)

\section{Complete promise problems}
\label{sec:QMA-complete}

The class \class{QMA} has an interesting collection of complete promise
problems having connections to a range of problems motivated by quantum
information theory, condensed-matter physics, and quantum chemistry.
While it would be premature to compare the \class{QMA}-complete promise
problems to the rich collection of known \class{NP}-complete problems, either
with respect to the number of such problems or their broad relevance within
science and engineering, there is strong and active interest in the notion of
\class{QMA}-completeness, and the list of known \class{QMA}-complete problems
is growing steadily.
In this section we present just a couple of examples of \class{QMA}-complete
problems, referring the reader to the chapter notes for pointers to further
work on the subject.

Not surprisingly, it is possible to translate the definition of \class{QMA}
directly into a complete promise problem.
The following problem statement represents one way of doing this:

\begin{center}
  \underline{$(a,b)$-Quantum Circuit Satisfiability
    ($(a,b)$-QCS)}\\[2mm]
  \begin{tabular}{lp{0.8\textwidth}}
      \emph{Input:} & 
      A quantum circuit specifying a channel $\Phi$ with $k$ input qubits and
      $1$ output qubit.\\[1mm]
      \emph{Yes:} &
      There exists a $k$-qubit state $\rho$ such that 
      $\bra{1}\Phi(\rho)\ket{1}\geq a$.\\[1mm]
      \emph{No:} &
      For all $k$-qubit states $\rho$, $\bra{1}\Phi(\rho)\ket{1}\leq b$.
  \end{tabular}
\end{center}

\noindent
It follows immediately from the definition of $\QMA$ and the error
amplification procedure discussed in the previous section that $(a,b)$-QCS is
complete for $\QMA$ for any $a$ and $b$ satisfying
\begin{equation}
  2^{-\poly(n)}\leq b<a\leq 1-2^{-\poly(n)}
  \quad\text{and}\quad
  b-a \geq \poly^{-1}(n).
\end{equation}
Along similar lines, $(1,b)$-QCS is $\QMA_1$-complete for any $b$ satisfying
\begin{equation}
  2^{-\poly(n)}\leq b\leq 1-\poly^{-1}(n).
\end{equation}

Several of the classes of quantum interactive proofs to be discussed in later
chapters have complete problems of the following form: ``Given quantum channels
$\Phi_0$ and $\Phi_1$, determine whether or not $\Phi_0$ and $\Phi_1$ are
close.''
Problems of increasing complexity are obtained by varying the precise way in
which closeness of the channels is measured, as well as the type of channels that are
considered.
The following problem is $\QMA$-complete for the same range of parameters
$(a,b)$ as quantum circuit satisfiability.

\begin{center}
  \underline{$(a,b)$-Non-Identity Check ($(a,b)$-NIC)}\\[2mm]
  \begin{tabular}{lp{0.8\textwidth}}
      \emph{Input:} & 
      A unitary channel $\Phi:\rho\to U\rho U^*$ implemented by a quantum
      circuit on $k$ qubits.\\[1mm]
      \emph{Yes:} &
      $\frac{1}{2} \norm{\Phi-\I}_{\Diamond}\geq a$.\\[1mm]
      \emph{No:} &
      $\frac{1}{2} \norm{\Phi-\I}_{\Diamond}\leq b$.
  \end{tabular}
\end{center}

The proof that this problem lies in \class{QMA} for any $b-a>\poly^{-1}(n)$
follows by using the quantum phase estimation procedure, and QMA-completeness
can be shown by a simple reduction from
$(a,b)$-QCS~\cite{janzing2005non}.\footnote{The problem considered
  in~\cite{janzing2005non} is specified in a slightly different form, but
  essentially the same reduction can be used to show completeness of
  $(a,b)$-NIC.}

The problems described in the subsections that follow represent complete
problems for \class{QMA} that are, in some sense, more interesting than
QCS---the problems are interesting and well-motivated in their own right, and
they are not simply rephrasings of the definition of \class{QMA}.

\subsection{The local Hamiltonian problem}

The local Hamiltonian problem, introduced by Kitaev in the late 1990s, was
the first problem to be shown to be complete for \class{QMA}. 
Intuitively speaking, the proof that this problem is complete for \class{QMA} is
similar in spirit to the classical proof of the Cook--Levin theorem (but
different at a technical level for reasons to be discussed below).

The local Hamiltonian problem captures the notion of a quantum constraint
satisfaction problem, using the language of quantum many-body systems in
condensed-matter physics.
In this setting, a Hamiltonian $H$ is a Hermitian operator representing the 
total energy of a physical system. 
A typical Hamiltonian in classical mechanics is $H=p^2/2m$, where $p=mv$
is the momentum: this Hamiltonian characterizes the kinetic energy of the
system.
The eigenvectors of $H$ represent possible states of the system, and the
associated (real) eigenvalues specify the energy.
The problem of determining the smallest eigenvalue is of particular interest,
as it represents the energy of the equilibrium state at zero temperature.
The smallest eigenvalue is also called the \emph{ground state energy}, and the
associated eigenvector the \emph{ground state}.

Given a Hamiltonian $H$ acting on the Hilbert space
corresponding to $n$ qubits, it is said that $H$ is $k$-\emph{local} if it
admits a decomposition
\begin{equation}
  H=\sum_{i=1}^m H_i,
\end{equation}
where each $H_i$ can be written as
a tensor product of an operator $\tilde{H}_i$, acting on a subset 
$S_i \subseteq\{1,\ldots,n\}$ of at most $|S_i|\leq k$ qubits, with the
identity acting on the remaining qubits. 
Many physical systems can be characterized by local Hamiltonians: the locality
reflects the fact that the system's energy only depends (at least to some
approximation) on local interactions, such as particle-field interactions
(which are $1$-local) and pairwise particle-particle interactions (which are
$2$-local).
While a general Hamiltonian on $n$ qubits may require a number of bits
exponential in $n$ to be fully specified, a $k$-local Hamiltonian can be
described using $\poly(n,m,2^k)$
bits by listing the local terms $\tilde{H}_1,\ldots,\tilde{H}_m$ along with a
specification of the subsets $S_1,\ldots,S_m$ of qubits they act upon.

\pagebreak

\begin{center}
  \underline{$(k,a,b)$-Local Hamiltonian ($(k,a,b)$-LH)}\\[2mm]
  \begin{tabular}{lp{0.8\textwidth}}
    \emph{Input:} & 
    A $k$-local Hamiltonian $H=\sum_{i=1}^m H_i$ acting on $n$ qubits.
    For every $i\in\{1,\ldots, m\}$, $H_i$ acts nontrivially on at most $k$
    qubits and satisfies $0\leq H_i \leq \I$.\\[1mm]
    \emph{Yes:} &
    There exists a pure state $\ket{\psi}$ on $n$ qubits such that 
    $\frac{1}{m}\bra{\psi} H \ket{\psi} \leq 1-a$.\\[1mm]
    \emph{No:} &
    For all pure states $\ket{\psi}$ on $n$ qubits, 
    $\frac{1}{m}\bra{\psi} H \ket{\psi} \geq 1-b$.
  \end{tabular}
\end{center}

\noindent
The requirement that each Hamiltonian term satisfies $0\leq H_i \leq \I$
provides a convenient normalization of the problem.
The case of non-positive $H_i$ can always be reduced to $H_i\geq
0$ by adding a constant shift.

The problem $(k,a,b)$-LH is complete for $\QMA$ for any $k\geq 2$ and 
for $a$ and $b$ such that $ b < a \leq 1-\poly^{-1}(n)$ and 
$a-b=\Theta(\poly^{-1}(n))$.
We write $k$-LH when it is understood that $a,b$ are set to such values. 

A \class{QMA} verifier for an instance of $(k,a,b)$-LH can be constructed as
follows: the verifier selects one of the $m$ local terms $H_i$ uniformly at
random, performs the measurement $\{\I-H_i,H_i\}$ on the $n$-qubit quantum
proof, and outputs $1$ if and only if the first outcome is obtained. 
In the case of a yes-instance, setting the proof to be a state $\ket{\psi}$
such that $\frac{1}{m}\bra{\psi}H\ket{\psi} \leq 1-a$ leads to acceptance with
probability at least $a$. 
In the case of a no-instance, no proof can lead to the verifier outputting $1$
with probability larger than $b$. 
Under the assumption that $a-b>\poly^{-1}(n)$, the difference between the two
probabilities can be amplified to a constant, showing that the problem is in
$\QMA$.

Establishing completeness of $k$-LH for $\QMA$ requires more work. 
The original proof, due to Kitaev~\cite{KitaevSV02}, shows this for $k=5$, with
subsequent improvements bringing the locality down to 2-local terms.
	
To see the difficulty, consider first the analogous task in the classical
setting. 
In order to reduce from an instance of circuit-SAT to a local constraint
satisfaction problem, it is customary to introduce auxiliary variables
associated with each wire in the circuit, as well as local constraints that
enforce that the variables corresponding to the input and output wires for any
gate in the circuit are related as required by the gate.
A direct translation of this reduction does not work in the case of a circuit
acting on quantum states, even for the most trivial of circuits.
This is due in part to the fact that the equality of two pure quantum 
states cannot be checked locally.
For example, the states $\frac{1}{\sqrt{2}}(\ket{0\cdots 0} + \ket{1\cdots 1})$
and $\frac{1}{\sqrt{2}}(\ket{0\cdots 0} - \ket{1\cdots 1})$ are orthogonal, but
have the same reduced density operator as soon as just one of the qubits is
discarded.

It is therefore necessary to consider a different reduction.
Suppose we are given a quantum circuit $Q$ with $T$ gates acting on $n$ input
qubits. 
The idea is to require the proof $\ket{\psi}$ to be a uniform superposition of
``snapshot states'' $\ket{\psi_i}$, where for $i=0,\ldots,T$, $\ket{\psi_i}$ is
the state of all qubits in the circuit after the $i$-th gate has been
applied. 
More precisely, the expected quantum proof has the form
\begin{equation}\label{eq:history-state}
  \ket{\psi} = \frac{1}{\sqrt{T+1}} \sum_{i=0}^T \ket{i}\ket{\psi_i},
\end{equation}
where the first register is called the \emph{clock} register. 
The state $\ket{\psi}$ is sometimes called the \emph{history state} of the
computation. 

To check that an arbitrary state is of the form~\eqref{eq:history-state},
it is possible to define a local Hamiltonian term $H_i$ for each gate in
the circuit, acting only on the clock register and the qubits of $\ket{\psi}$ on
which the gate operates, such that $\bra{\psi}H_i\ket{\psi}=0$ if and only if
$\ket{\psi_{i+1}}=U_i\ket{\psi_i}$, where $U_i$ is the local unitary
implemented by the $i$-th gate in $Q$.
In addition, there should be terms to enforce that $\ket{\psi_0}$ is initialized
correctly (each ancilla qubit is set to $\ket{0}$), and $\ket{\psi_T}$ is an
accepting state (the output qubit is in state $\ket{1}$). 
An analysis of this construction shows $\QMA$-completeness of $(k,a,b)$-LH for
$k=O(\log n)$, the size of the clock register plus the locality of any gate in
the circuit, and $a,b$ inverse polynomial in $T$. 
A different implementation of the clock, relying on a unary, rather than
binary, encoding, can be used to devise a $5$-local Hamiltonian. 
Different techniques, such as ones based on the use of perturbation theory,
can be used to reduce the locality to $2$.

It is not known whether error amplification can be performed for $k$-LH: the
natural amplification procedure would replace $H$ by $H^t$ for a sufficiently
large positive integer $t$, which would lead to a corresponding increase
in the parameter $k$.
The \emph{Quantum PCP Conjecture} posits that the problem $(k,a,b)$-LH remains
\class{QMA}-complete for parameters $a$ and $b$ that are separated by a
constant. 
Aside from being a natural quantum analogue of the classical PCP theorem, this
conjecture has implications for the study of entanglement in low-energy
eigenstates of local Hamiltonians, and it has attracted interest from
theoretical computer scientists and condensed matter theorists alike.

\subsection{Quantum 3-SAT}

The special case of $3$-LH where all of the local terms $H_i$ are rank-one
projections is called \emph{quantum $3$-SAT}.
Each local term defines a $1$-dimensional subspace of invalid configurations
within the $8$-dimensional subspace on which $H_i$ acts non-trivially.

\begin{center}
  \underline{$b$-Quantum $3$-SAT
    ($b$-Q3SAT)}\\[2mm]
  \begin{tabular}{lp{0.8\textwidth}}
    \emph{Input:} & 
    A $3$-local Hamiltonian $H=\sum_{i=1}^m P_i$ on $n$ qubits, where each
    $P_i$ is a rank-one projection acting non-trivially on at most $3$ qubits.\\[1mm]
    \emph{Yes:} &
    There exists an $n$-qubit state $\ket{\psi}$ such that $\bra{\psi} H
    \ket{\psi} =0$.\\[1mm]
    \emph{No:} &
    For all states $\ket{\psi}$ on $n$ qubits, $\frac{1}{m}\bra{\psi} H
    \ket{\psi} \geq 1-b$.
  \end{tabular}
\end{center}

For any $b=1-\Theta(\poly^{-1}(n))$ the problem $b$-Quantum $3$-SAT is complete
for $\QMA_1$.
Containment in $\QMA_1$ follows as for $k$-LH, except that in order to
guarantee completeness $1$ it is important that the measurement
$\{\I-P_i,P_i\}$ can be implemented perfectly by the $\QMA$ verifier.
Depending on how the operator $P_i$ is specified, and which universal gate set
is allowed for the verifier's circuit, this may not be the case. 
For our purposes it is sufficient to point out that the required compatibility
can always be ensured by a careful choice of definitions.

The proof that $b$-Q3SAT is $\QMA_1$-hard follows the same general outline as
sketched previously for $3$-LH, but additional difficulties arise from the
requirement of perfect completeness---highly nontrivial modifications to the
construction of the clock register are required for the proof.

\subsection{Consistency of Density Operators}

We end this section with a problem indicative of the diversity of problems that
have been shown complete for \class{QMA}.

\begin{center}
  \underline{$(k,b)$-Consistency of Density Operators
  }\\[2mm]
  \begin{tabular}{lp{0.8\textwidth}}
    \emph{Input:} & 
    Density operators $\sigma_1,\ldots,\sigma_m$ on at most $k$ qubits each,
    together with subsets $S_1,\ldots,S_m\subseteq\{1,\ldots,n\}$ such that
    $|S_i|\leq k$ for each $i$.\\[1mm]
    \emph{Yes:} &
    There exists an $n$-qubit density operator $\rho$
    such that $\Tr_{\{1,\ldots,n\}\backslash S_i}(\rho)=\sigma_i$ for every
    $i\in\{1,\ldots,m\}$.\\[1mm]
    \emph{No:} &
    For every $n$-qubit density operator $\rho$,
    it holds that
    \[
    \bignorm{\Tr_{\{1,\ldots,n\}\backslash S_i}(\rho)-\sigma_i}_1\geq b
    \]
    for at least one choice of $i\in\{1,\ldots,m\}$.
  \end{tabular}
\end{center}

This problem is known to be \class{QMA}-complete with respect to Cook
reductions (i.e., polynomial-time Turing reductions).
Whether or not the problem is also \class{QMA}-complete with respect to
Karp reductions is an interesting open problem.

\section{Variations on \class{QMA}}

In this section we discuss a few noteworthy variants of \class{QMA}. 
In each case we consider changes to the definition of \class{QMA} that
either enhance the power of the verifier or restrict the allowable
set of quantum proof states it may receive.
In some cases these changes are superficial, leading to an equivalent
definition of the class \class{QMA}, and in other cases it appears that
the changes have a substantial effect. 
The following variants will be discussed:
\begin{mylist}{\parindent}
\item[1.] \emph{Super-verifiers}. 
  Super-verifiers are granted the ability to estimate the probability
  $\bra{1}V(\rho)\ket{1}$ with good accuracy, for any chosen $\QMA$ verifier
  $V$, and for $\rho$ being the proof received from the prover.
  This definition leads to a class $\class{QMA}_+$, which is equal to
  \class{QMA}---a fact that is sometimes useful for proving problems to be
  contained in \class{QMA}.
\item[2.]
  \emph{Subset-state proofs}.
  Here one restricts the completeness condition of \class{QMA} so that
  a valid quantum proof on yes-inputs to the problem must be a uniform
  superposition over a subset of computational basis states, disallowing
  any phase differences between the coefficients of these basis states.
  This leads to a class \class{SQMA}, which is also equal to \class{QMA}.
\item[3.]
  \emph{Trusted advice}.
  The notion of quantum advice is similar in spirit to quantum proofs, except
  that the advice state can be \emph{trusted}---but it must also be the same
  state for all input strings of a given length.
  This leads to the class $\class{BQP}/\class{qpoly}$.
  This is a non-uniform class and is therefore different from \class{QMA}, but
  the precise relationship of these classes is not clear.
  It is known, however, that
  $\class{BQP}/\class{qpoly}\subseteq\class{QMA}/\class{poly}$; 
  trusted quantum advice can be simulated by trusted classical advice
  together with an untrusted quantum proof.
\item[4.]
  \emph{Unentangled quantum proofs.}
  Here the verifier is granted the promise that the quantum proof
  splits into two unentangled parts, 
  $\ket{\psi}=\ket{\psi_1}\otimes \ket{\psi_2}$, with respect to a fixed
  bipartition of the qubits comprising the proof.
  This leads to the class $\class{QMA}(2)$.
  While it is an open question whether or not $\class{QMA}(2)$ and
  $\class{QMA}$ are equal, some evidence exists to suggest that
  $\class{QMA}(2)$ is larger than \class{QMA}. 
\item[5.] 
  \emph{Classical proofs.}
  Finally we consider the case in which a classical proof is supplied to a
  quantum verifier, which leads to the class \class{QCMA}. 
  There are arguments both in favor of and against this class being strictly
  smaller than \class{QMA}.
\end{mylist}

\subsection{Variations equivalent to \class{QMA}}
\label{sec:super-verifiers}

By definition, a promise problem is in \class{QMA} if yes-inputs have quantum
proofs that convince the verifier $V$ to accept with high probability, whereas
for no-inputs there is no such witness.
Thus, it is assumed that yes- and no-inputs can be distinguished by the
\emph{maximum} probability $\omega(V)$ with which the verifier can be made to
output $1$ over all possible input states.

In this section two independent modifications to this scenario are considered
that lead to alternative ways of defining \class{QMA}.
In the first modification, it is assumed that a prover aims not necessarily to
maximize the probability that the verifier accepts, but to obtain a certain
target probability $p = \bra{1} V(\rho) \ket{1}$ for a given verifier $V$.
In the second modification, the proof is restricted to the subset-state form
suggested above for yes-instances of the problem.
The fact that these modifications yield equivalent definitions of \class{QMA}
has a positive interpretation, in the sense that they show that \class{QMA}
verifiers are more powerful than immediately apparent.
It may also be helpful to rely on either promise when reasoning about the
class \class{QMA}.

\subsubsection*{Arbitrary probabilities}

The error reduction procedure for \class{QMA} described in 
Section~\ref{sec:witness-preserving-error-reduction} shows that, for a given
verifier $V$, it is possible to design a verifier $V'$ that uses $V$ to obtain
an accurate estimate of the maximum probability $\omega(V)$ of $V$ to 
output $1$, given an optimal proof state for $V$.
This procedure crucially relies on $\omega(V)$ being defined as the
maximum probability for $V$ to accept.
(More generally, a similar process would work for any eigenvalue of the
measurement operators~\eqref{eq:p0-p1-def} associated with
the verifier, given a corresponding eigenvector for that eigenvalue.)
Suppose instead that a new verifier $V'$ is granted the power to estimate, to
within inverse polynomial precision, the probability $\bra{1}V(\rho)\ket{1}$
for a verifier $V$ to accept an arbitrary state $\rho$ provided as the proof,
for $V$ being any \class{QMA} verifier selected by $V'$.
Thus, $V'$ can decide, for any target probability $p$, whether there exists a
state $\rho$ such that $\bra{1}V(\rho)\ket{1} \approx p$.
It is natural to question whether this ability allow $V'$ to decide problems
beyond those contained in \class{QMA}.

More formally, one defines a \emph{super-verifier} as a triple $(V,p,\eta)$ 
consisting of a polynomial-time mapping $V$ from strings $x\in\Sigma^{\ast}$ to
$\QMA$ verifiers $V(x)$, along with polynomial-time computable functions
$p:\Sigma^{\ast}\to[0,1]$ and $\eta:\natural\to (0,1]$ for which $\eta$ is
larger than the inverse of some polynomially bounded function.

\begin{definition} 
  A promise problem $(A_{\yes},A_{\no})$ is contained in $\class{QMA}_+$ if
  there exists a polynomial $q$ and a super-verifier $(V,p,\eta)$ such that the
  following conditions hold:
  \begin{mylist}{\parindent}
  \item[1.] For every $x\in A_{\yes}$, there exists a state $\rho$ such that 
    \begin{equation}
      |\bra{1}V(x)(\rho)\ket{1}-p(x)|\leq \eta(\abs{x}).
    \end{equation}
  \item[2.] For every $x\in A_{\no}$, and for every state $\rho$,
    \begin{equation}
      |\bra{1}V(x)(\rho)\ket{1}-p(|x|)|\geq \eta(\abs{x})+1/q(|x|).
    \end{equation}
  \end{mylist}
\end{definition}

It is evident that $\class{QMA}\subseteq\class{QMA}_+$: given a
$\QMA_{\frac{3}{4},\frac{1}{4}}$ verifier $V$ we can define an equivalent
super-verifier $(V,3/4,1/4)$. 
More interesting is that the reverse inclusion also holds, so that
\begin{equation}
  \label{eq:qmaplus}
  \class{QMA}_+\,=\,\class{QMA}.
\end{equation}
To see that this is so, let $(V,p,\eta)$ be a given super-verifier. 
For some sufficiently large (but polynomially bounded) number $k$,
consider a verifier $V'$ that expects $k$ copies of a quantum proof for $V$,
measures each copy independently according to the binary-valued measurement
associated with $V$, and accepts if and only if the fraction $r$ of outcomes
$1$ obtained satisfies $|r-p|\leq \eta+1/(2q)$.
Provided that $k$ is chosen as a sufficiently large multiple of $q$,
it follows from a Chernoff-type bound that every input $x\in A_{\yes}$ has
a proof that is accepted by $V'$ with probability exponentially close to $1$.
To establish soundness, suppose that the $\QMA_+$ super-verifier $V$ is such
that the inequality $|\bra{1}V(\rho)\ket{1}-p|\geq \eta+1/q$ holds for all
states $\rho$. 
Let $\sigma$ be an arbitrary witness for $V'$ and define a state $\rho$ by
taking the average of the $k$ reduced density operators of $\sigma$ on each of
the registers on which $V'$ executes the circuit specified by $V$. 
From this definition it follows that $\bra{1}V(\rho)\ket{1}$ coincides with the
expectation of $r$, the fraction of acceptances that $V'$ witnesses when making
the $k$ successive measurements of $V$ on the corresponding registers
of~$\sigma$.
There are two cases:
\begin{mylist}{8mm}
\item[(i)] 
  $\bra{1}V(\rho)\ket{1}\leq p-\eta-1/q$. By Markov's inequality the
  probability that $r \geq p-\eta-1/(2q)$ is at most $1-1/(2q)$.
\item[(ii)] $\bra{1}V(\rho)\ket{1}\geq p+\eta+1/q$. 
  In this case Markov's inequality applied to $1-r$ shows that  the probability
  that $r \leq p+\eta+1/(2q)$ is at most $1-1/(2q)$.
\end{mylist}
In both cases, $V'$ rejects with probability at least $1/(2q)$.
Thus, the soundness probability is bounded away from $1$ by an inverse
polynomial, and the gap between the completeness and soundness probabilities is
large enough that it can be amplified using the methods described in
Section~\ref{section:QMA-error-reduction}.

\subsubsection*{Subset-state witnesses}

Given an integer $n$ and a nonempty set $S\subseteq\Sigma^n$, define the
$n$-qubit \emph{subset state} $\ket{S}$ as
\begin{equation}\label{eq:subset-state}
  \ket{S} = \frac{1}{\sqrt{|S|} }\sum_{x\in S}\ket{x}.
\end{equation}
A promise problem $(A_{\yes},A_{\no})$ is said to lie in the class \class{SQMA}
if for every $x\in A_{\yes}$ there is a state of the
form~\eqref{eq:subset-state} that convinces the verifier to accept with
probability at least $2/3$, while for $x\in A_{\no}$ no state (of any form)
will convince the verifier to accept with probability more than $1/3$.

With this definition it is clear that $\class{SQMA}\subseteq \class{QMA}$,
because the witness is restricted to have a special form only in the case of a
yes-instance. 
It is perhaps surprising that the equality $\class{SQMA}=\class{QMA}$ holds. 
The main observation required to show this is that subset states are
sufficiently dense in the set of all states.

More precisely, it can be shown that for any $n$-qubit unit vector
$\ket{\psi}$, there exists a subset $S\subseteq\Sigma^n$ such that
$|\bra{\psi}S\rangle| = \Omega(n^{-1/2})$. 
To see that this overlap is sufficient to conclude that
$\class{SQMA}=\class{QMA}$, recall that the error reduction
procedure described in Section~\ref{sec:parallel-error-reduction}
shows that any problem in \class{QMA} has a verifier $V$ with completeness and
soundness parameters exponentially close to $1$ and $0$ respectively.
In particular the soundness error can be made smaller than any polynomial in
the number of qubits of the witness.
Whenever there exists a witness $\ket{\psi}$ accepted by $V$ with
probability exponentially close to $1$, the subset state with maximal overlap on
$\ket{\psi}$ will convince the verifier to accept with inverse polynomial
probability.
Thus, for any problem in \class{QMA}, it is possible to construct a
\class{SQMA} verifier with completeness and soundness parameters separated by
an inverse polynomial. 
The completeness can be amplified to at least $2/3$ by performing parallel
error reduction, which preserves the property that there exists a good witness
that has the form of a subset state.
A similar argument can be made for any restriction on the proof that forces it
to belong to a set that remains dense enough in the unit sphere. 

The fact just described demonstrates that the strength of quantum proofs does
not lie in the possibility to use signed, or complex, amplitudes.
Rather, the strength appears to lie in the ability to use superpositions in
various ways, such as in the case of the group non-membership problem discussed
in Section~\ref{sec:gnm}.

\subsection{Quantum advice}
\label{section:QMA-advice}
	
The definition of the class $\QMA$ specifies that the quantum proof provided by
the prover to the verifier is \emph{untrusted}: the prover is assumed to always
attempt to maximize its chances of convincing the verifier to accept, requiring
the verifier to carefully check the information provided by the prover.
One may envision a less paranoid situation in which the prover is 
\emph{trusted}, and always attempts to convince the verifier to make
the right decision: accept yes-inputs and reject no-inputs. 
Without any further restrictions, such a prover would immediately allow the
verifier to decide all problems, as a single bit of advice suffices to inform
the verifier of whether $x\in A_{\yes}$ or $x\in A_{\no}$. 

In the setting of \emph{computational advice}, the following restriction is
considered: the quantum state provided by the prover is trusted, but is only
allowed to depend on the input length $n=|x|$; so that the same advice is to be
provided for all inputs of the same length.
In greater detail, the class $\class{BQP}/\class{qpoly}$ is defined as follows:

\begin{definition}
  A promise problem $A = (A_{\yes},A_{\no})$ is contained in
  $\class{BQP}/\class{qpoly}$ if there exists
  polynomial-time computable functions $V$ and $p$ possessing the following
  properties:
  \begin{mylist}{\parindent}
  \item[1.]
    For every string $x\in A_{\yes}\cup A_{\no}$, $V(x)$ is an encoding of a
    quantum circuit implementing a channel $\Phi_x$ with $p(\abs{x})$ input
    qubits and 1 output qubit.
  \item[2.] For every integer $n$ there exists a state $\rho_n$ on $p(n)$
    qubits such that the following two conditions hold:
    \begin{mylist}{8mm}
    \item[(a)] \emph{Completeness}. 
      If $x\in\Sigma^n\cap A_{\yes}$ then $\bra{1}\Phi_x(\rho_n)\ket{1}\geq a$.
    \item[(b)] \emph{Soundness}.
      If $x\in \Sigma^n\cap A_{\no}$ then $\bra{1}\Phi_x(\rho_n)\ket{1}\leq b$. 
    \end{mylist}
  \end{mylist}
\end{definition}

In the case of classical advice, the restriction that an advice string can
only depend on the input length immediately rules out the sort of strategy
suggested above for a prover to trivially help the verifier in deciding
arbitrary problems, as there are exponentially many inputs of a given
length and only polynomially many bits provided as advice.
In the quantum setting, the situation is not quite as clear, but quantum
information-theoretic arguments (namely Holevo's theorem and Nayak's bound)
similarly rule out the possibility that a quantum state on polynomially many
qubits could encode the answers to an exponential number of problem instances
in a way that would be accessible by a valid quantum measurement.

Although it is therefore evident that there are limitations on the power of
quantum advice, it is not at all obvious how one can obtain interesting
complexity-theoretic upper bounds on the power of quantum advice.
One striking upper bound that is known 
directly relates quantum advice with quantum proofs.
It is the containment
\begin{equation}\label{eq:bqp-qpoly}
  \class{BQP}/\class{qpoly}\subseteq \class{QMA}/\class{poly},
\end{equation}
which demonstrates that a trusted quantum state is no more useful to a
polynomial-time quantum verifier than an untrusted quantum state, complemented
with a trusted classical advice string of polynomial length.
The class $\class{QMA}/\class{poly}$ is defined in a similar way to
$\class{BQP}/\class{qpoly}$, where in addition to the untrusted quantum proof 
from the $\class{QMA}$ prover the verifier receives polynomially many classical
bits of advice that are only allowed to depend on the input length $n$.

The proof of the inclusion~\eqref{eq:bqp-qpoly} is rather involved. 
Given a verifier $V$ for a problem in $\class{BQP}/\class{qpoly}$, a 
$\class{QMA}/\class{poly}$ verifier $V'$ that decides the same problem is
constructed. 
To accomplish this task, the polynomial number of bits of classical advice 
specify a polynomial-size quantum circuit enabling $V'$ to verify that the 
untrusted witness $\ket{\psi}$ matches, ``for all practical purposes,'' the 
trusted advice state $\rho_n$ expected by $V$. 
A simple counting argument shows that no polynomial-size circuit could certify 
closeness of $\ket{\psi}$ to an arbitrary state $\rho_n$ in trace distance to 
within any reasonable accuracy, but such a strong guarantee is not necessary.
The key observation is that it is sufficient to guarantee that $\ket{\psi}$ 
reproduces approximately the same statistics as $\rho_n$, not with respect to
the outcome of any measurement (which would lead to an approximation in trace
distance), but only with respect to polynomial-size quantum circuits of the
form that can be executed by~$V$.

There are still exponentially many such circuits, and the fact that such
a verification procedure can be specified using only polynomially many bits,
and implemented efficiently by $V'$, constitutes most of the work in
establishing~\eqref{eq:bqp-qpoly}. 
The proof provides a method, based on a tool called the
\emph{majority-certificates lemma}~\cite{aaronson2014full}, to achieve this. 
Given any state $\rho_n$, the lemma specifies that there exists a polynomial
number of tests, each of which can be specified by a polynomial-size quantum
circuit, that the verifier $V'$ can perform on the untrusted $\ket{\psi}$ such
that, provided $\ket{\psi}$ passes all tests, it is guaranteed that
$\ket{\psi}$ will also approximately reproduce the same outcome as $\rho_n$
with respect to all polynomial-size quantum circuits. 

\subsection{Two unentangled proofs}
\label{section:QMA2}

Are two proofs more useful than one? 
Unless one imposes very strict length requirements the answer to this question
for the case of classical proofs is uninspiring: two classical proof strings of
a given length are equivalent to a single proof string of twice that length.
In the case of quantum proofs, however, the situation is more subtle.
The question is studied by introducing the class $\class{QMA}(t)$.
  
\begin{definition}
  Let $t:\natural\to \natural$ be a polynomially bounded function. 
  A promise problem $A = (A_{\yes},A_{\no})$ is contained in
  $\class{QMA}_{a,b}(t)$ if there exists a polynomial-time
  computable function $V$ possessing the following properties:
  \begin{mylist}{\parindent}
  \item[1.]
    For every string $x\in A_{\yes}\cup A_{\no}$, $V(x)$ is an encoding of a 
    quantum circuit implementing a channel $\Phi_x$ having $t\cdot k$ input
    qubits and 1 output qubit, for some choice of $k$ and for $t = t(\abs{x})$.
  \item[2.] 
    \emph{Completeness}. 
    If $x\in A_{\yes}$, then there exist $t$ states
    $\ket{\psi_1},\ldots,\ket{\psi_t}$, on $k$ qubits each, such that
    \begin{equation}
      \bra{1} \Phi_x(\ket{\psi_1}\bra{\psi_1}\otimes\cdots\otimes 
      \ket{\psi_t}\bra{\psi_t})\ket{1}\geq a.
    \end{equation}
  \item[3.]
    \emph{Soundness}.
    If $x\in A_{\no}$, then for all choices of $k$-qubit states 
    $\ket{\psi_1},\ldots,\ket{\psi_t}$, it holds that
    \begin{equation}
      \bra{1}\Phi_x(\ket{\psi_1}\bra{\psi_1} \otimes\cdots\otimes 
      \ket{\psi_t}\bra{\psi_t})\ket{1}\leq b.
    \end{equation}
  \end{mylist}
\end{definition}

It is known that, for any polynomially bounded number of witnesses $t\geq 2$,
the equality $\class{QMA}(t)=\class{QMA}(2)$ holds.
Furthermore, a strong error reduction is possible:
\begin{equation}
  \class{QMA}_{a,b}(2) = \class{QMA}_{1-2^{-p(n)},2^{-p(n)}}(2)
\end{equation}
for every polynomial $p$, provided
\begin{equation}
  a(n)-b(n)\geq \frac{1}{q(n)}
\end{equation}
for some polynomial $q$.
Both equalities require a corresponding increase in the witness length, by a
factor $t$ for the first transformation and $O(p\cdot q)$ for the second. 

The proofs rely on the following \emph{product test} that attempts to determine
whether a state is unentangled across $t$ registers, by being given access to
\emph{two} unentangled copies of the state.
\begin{center}
  \underline{Product test}\\[2mm]
  \begin{tabular}{lp{0.76\textwidth}}
    \emph{Input:} & 
    Pure states $\ket{\phi_1}$, $\ket{\phi_2}$ on $t$ registers of $k$ qubits
    each.\\[1mm]
    \emph{Procedure:} & 
    Perform the SWAP test on each of the $t$ pairs of $k$-qubit
    registers. Accept if and only if all tests succeed. \\[1mm]
    \emph{Guarantee:} 
    & (i) If 
    $\ket{\phi_1}=\ket{\phi_2}=\ket{\psi_1}\otimes\cdots\otimes\ket{\psi_t}$,
    then the test always accepts.\\
    & (ii) If the test accepts with probability $1-\eps$ then there exist
    states $\ket{\psi_1},\ldots,\ket{\psi_t}$ such that
    \[
    \min\big\{|\bra{\phi_1}\psi_1,\ldots,\psi_t\rangle|^2,|\bra{\phi_2}
    \psi_1,\ldots,\psi_t\rangle|^2\big\} = 1-O(\eps).
    \]
  \end{tabular}
\end{center}
\vspace{-6mm}

The SWAP test is a special case of the controlled-unitary test described
in Figure~\ref{fig:control-u}:

\begin{center}
  \underline{SWAP test}\\[2mm]
  \begin{tabular}{lp{0.78\textwidth}}
      \emph{Given:} & Pure states $\ket{\psi}$, $\ket{\varphi}$ on $k$ qubits
      each.\\[1mm]
      \emph{Outcome:} & A classical bit that is $0$ with probability
      \begin{displaymath}
        p = \frac{1+|\bra{\psi}\varphi\rangle|^2}{2}
      \end{displaymath}
      and $1$ with probability $1-p$.\\[1mm]
      \emph{Procedure:} & Perform the controlled-unitary test using the
      $2k$-qubit state $\ket{\psi}\ket{\varphi}$ as input, and the
      $2k$-qubit unitary $S$ that permutes its two sets of $k$ input qubits.
  \end{tabular}
\end{center}

\noindent
If $S$ denotes the unitary that implements the permutation used in the SWAP
test, then it holds that
\begin{equation}
  \Re(\bra{\varphi,\psi} S \ket{\varphi,\psi}) = |\bra{\varphi}\psi\rangle|^2,
\end{equation}
and the analysis of the SWAP test follows immediately from that of the
controlled-unitary test.

The product test requires \emph{two} copies of the state to be tested, and it
allows for a reduction of the number of required unentangled proof states
from any polynomial $t$ to $2$.
It is open if two proofs are more powerful than one, but there is some evidence
pointing in the direction of a positive answer.

As discussed in Section~\ref{sec:qma-pp}, it is known that
$\class{QMA}\subseteq \class{PP}$, but the best upper bound known on
$\class{QMA}(2)$ is the trivial bound of $\class{NEXP}$ obtained by guessing
exponential-size vectors for the two witnesses. 
The problem of devising better upper bounds on $\class{QMA}(2)$ directly
relates to that of optimizing over the set
\begin{equation}
  \textsc{SEP} = 
  \textrm{Conv}\big\{ \rho \otimes \sigma,\, \rho,\sigma 
  \in \Density(\complex^d)\big\}
\end{equation}
of separable states. 
Although this set is convex, there is no efficient membership oracle known.
In fact, deciding weak membership is known to be \class{NP}-hard for precision
up to inverse polynomial in $d$.

Perhaps the strongest evidence known that suggests $\class{QMA}(2)$ may be a
strictly larger class than $\class{QMA}$ is the following: there is an
efficient $\class{QMA}(2)$ verifier $V$ for the satisfiability of $3$-SAT
formulas on $n$ variables, with completeness and soundness parameters separated
by a constant, given access to two unentangled quantum proofs of
$O(\sqrt{n}\,\mathrm{polylog}(n))$ qubits each.
The existence of such a procedure with a single quantum proof of the same size
(or even twice the size) would imply 
$\text{3-SAT}\in\class{DTIME}(\text{exp}(\sqrt{n}\,\mathrm{polylog}(n)))$, 
thereby violating the exponential time hypothesis.
Thus, the ability to receive unentangled witnesses can at least provide a 
quadratic improvement on the minimum witness length, which may be seen as
evidence in favor of $\class{QMA} \neq \class{QMA}(2)$.
	
\subsection{Classical certificates}
\label{section:QCMA}
	
The class \class{QMA} differs from \class{NP} and \class{MA} in two important
ways: the verifier is able to apply a quantum circuit, and the proof that is
provided to the verifier may be a quantum state. 
Of course, a quantum proof requires a quantum verifier and it does not make
sense to consider the latter without the former---but it is possible to ask
about the power of a \emph{quantum} verifier given access to a \emph{classical}
proof. 
This question can be studied by introducing the class \class{QCMA}, defined as
\class{QMA} except the proof is restricted to be a classical
polynomial-length string.

It holds that $\class{MA}\subseteq\class{QCMA}\subseteq\class{QMA}$, and none
of these containments is known to be strict.
One may conjecture that \class{QMA} properly contains \class{QCMA}, but little
evidence to support this conjecture is known.

\subsection*{A quantum oracle separation}

One piece of evidence to suggest that \class{QCMA} is properly contained in
\class{QMA} is a \emph{quantum oracle separation} as follows.
Black-box access to a unitary transformation $U$ is made available, and it is
promised that either (i) there exists a state $\ket{\psi}$ such that 
$U\ket{\psi}= -\ket{\psi}$, and $U\ket{\phi}=\ket{\phi}$ for all 
$\ket{\phi}\perp \ket{\psi}$, or (ii) it holds that
$U\ket{\phi}=\ket{\phi}$ for all states $\ket{\phi}$.
The yes-instances are those for which the property (i) holds.

As is to be expected, one can prove that this problem is contained in
\class{QMA} by taking $\ket{\psi}$ to be given as a quantum proof.
It is possible to prove that this problem is not contained in \class{QCMA} under
the assumption that only black-box access to $U$ is permitted.
More precisely, at least $\Omega(2^{k/2}/\sqrt{m})$ queries to $U$ are required
to decide between the two possibilities, given a classical proof of length $m$.
Intuitively speaking, the reason why this is so is that the state $\ket{\psi}$
could be any $k$-qubit state, so the best strategy for a classical prover
is to fix a net over the space of all such states, and to provide the
verifier with a classical description of the element of the net that is closest
to $\ket{\psi}$. 
Using a measure-theoretic argument, it is possible to show that for any
partition of the set of all $k$-qubit pure states into at most $2^m$ regions,
there will exist a region $S$ with the property that, for every state
$\ket{\phi}$, a state $\ket{\psi}$ chosen uniformly from $S$ will have expected
overlap
\begin{equation}
  \textup{E}_{\ket{\psi}\in S}
  |\bra{\phi} \psi\rangle|^2 = O(m 2^{-k})
\end{equation}
with $\ket{\phi}$.
This means that, if the $m$-bit classical proof is interpreted as the
description of such a region, in the worst case the proof will only let the
verifier reconstruct a state $\ket{\phi}$ whose overlap with a randomly
selected state $\ket{\psi}$ satisfies
$|\bra{\phi} \psi\rangle|^2 = O(m 2^{-k})$. 
Using this state as a starting point, and implementing a procedure based on
amplitude amplification, the quantum verifier can find $\ket{\psi}$ using the
gate $U$ as a black box with $\Omega(2^{k/2}/\sqrt{m})$ queries, which can be
shown to be optimal.

It is reasonable to conjecture that the problem suggested above is contained
in \class{QMA}, but not \class{QCMA}, when the unitary $U$ is specified as a
quantum circuit, rather being given as a black box---but naturally the proof
suggested above is not sufficient to establish that this is so, as it does not
rule out the possibility that an analysis of a quantum circuit's structure
could lead to the problem being contained in \class{QCMA}.

\subsection*{Verifying Group Non-Membership using a classical witness}

In Section~\ref{sec:gnm} we introduced the group non-membership (GNM) problem
as a promise problem having a natural \class{QMA} verification procedure. 
As GNM is not known to be \class{QMA}-complete, it is a natural target problem
to put in \class{QCMA}.

The honest witness for GNM has a specific form~\eqref{eq:subgroup-witness},
which is the uniform superposition over all elements in a subgroup. 
As discussed in Section~\ref{sec:super-verifiers}, however, \emph{every}
language in \class{QMA} has a verifier for which there is a witness that has a
similar ``subset state'' form. 
Thus, the form of the GNM witness~\eqref{eq:subgroup-witness} is not directly
indicative of a problem that should be easier than \class{QMA}-complete
problems.

Nevertheless, it can be shown that with the help of a classical polynomial-size
witness, a quantum verifier can decide any instance of GNM using only a
polynomial number of queries to the group oracle.
The catch is that this verifier may require an exponential amount of ``side''
computation---operations that do not involve the group $G$ in question. 
The idea is to use a classical witness to specify a certain ``model group''
$\Gamma$, as well as an injective homomorphism $f:G\to\Gamma$. 
Both can be specified with a polynomial number of bits using an appropriate set
of generators for $G$ and $\Gamma$. 
The verifier can compute the image in $\Gamma$ of $G$, the subgroup $H$, and
$x$, and verify that $f(x)\notin f(H)$ using only polynomially many operations
in $G$ (to decompose $x$ and generators of $H$ on the generators of $G$
provided by the classical witness) and exponentially many operations in
$\Gamma$ (to verify $f(x)\notin f(H)$).

The difficulty of this approach is to verify that the witness has the correct
form, i.e., that the map $f$ is (close to) an injective homomorphism. 
Checking that $f$ is close to a homomorphism can be done efficiently in
randomized polynomial time by verifying the identity $f(g_1g_2)=f(g_1)f(g_2)$
for sufficiently many random pairs of group elements $(g_1,g_2)$.
Thus, the main step is to check injectivity of $f$. 
But this is an instance of the hidden subgroup problem (HSP) and can be solved
by a quantum circuit making polynomially many group operations (and possibly
exponentially many classical operations not involving $G$).

\section{Chapter notes}

Quantum proofs were evidently first discussed by
Knill~\cite{Knill96}, and formalized as a complexity class called
$\class{BQNP}$ by Kitaev around 1999 (and later published
in~\cite{KitaevSV02}).
The name $\class{QMA}$ first appears in~\cite{Watrous00group}, where it was
also shown that group non-membership is in $\class{QMA}$.
Parallel error reduction is analyzed in~\cite{KitaevSV02}, and the procedure
for witness-preserving error reduction is due to~\cite{MarriottW05}, where
the containment $\class{QMA}\subseteq\class{PP}$ was proved but attributed to
unpublished work of Kitaev and Watrous.

The local Hamiltonian problem was introduced by Kitaev in his original work
on \class{QMA} and shown to be $\class{QMA}$-complete for $k=5$. 
This was improved to $k=3$ in~\cite{kempe20033} and $k=2$ 
in~\cite{KempeKR06lh}. 
Physical motivations suggest the consideration of restricted families of 
interaction graphs and types of local Hamiltonians.
For instance, \cite{TerhalO082d} proved $\class{QMA}$-hardness for instances
whose interaction graph is restricted to a two-dimensional grid, and this is
extended to $2$-local Hamiltonians on a line in~\cite{AharonovGIK091d}. 
The paper~\cite{cubitt2014complexity} established a classification of
restricted classes of $2$-local Hamiltonians in terms of their hardness,
providing a quantum analogue of Schaefer's dichotomy theorem for Boolean
constraint satisfaction problems.
The Quantum PCP conjecture was first formulated
in~\cite{aharonov2002quantum,aharonov2009detectability}. 
For further background on the conjecture we refer to the 
survey~\cite{aharonov2013guest}. 

Quantum $k$-SAT was introduced in~\cite{Bravyi11qsat}, where it was shown that
the problem is in $\class{P}$ for $k=2$ and $\class{QMA}_1$-complete for 
$k\geq 4$; $\class{QMA}_1$-completeness for $k=3$ is due 
to~\cite{gosset2013quantum}.
The Consistency of Density Operators problem was shown to be
$\class{QMA}$-complete with respect to Cook reductions
in~\cite{Liu06consistency}.
We refer to the survey~\cite{bookatz2014qma} for a more extensive list of
$\class{QMA}$-complete problems.

The class $\class{SQMA}$ was introduced in~\cite{grilo2014qma}, where an 
analogue for $\class{QMA}(2)$ was also introduced, and the equality 
$\class{SQMA}(2)=\class{QMA}(2)$ was proved along the same lines as 
$\class{SQMA}=\class{QMA}$. 
The class $\class{QMA}_+$ was introduced in~\cite{AharonovR05}, where it was
used to show that $\textrm{coGapSVP}_{\sqrt{n}}$, a gapped version of the 
shortest vector problem in lattices, lies in $\class{QMA}_+=\class{QMA}$. 
At the time it was not known if this problem was contained in $\class{NP}$, but
the authors later ``de-quantized'' their result to obtain this
fact~\cite{AharonovR05}.
The class $\class{BQP}/\class{qpoly}$ was first considered 
in~\cite{nishimura2004polynomial}, and the inclusion 
$\class{BQP}/\class{qpoly}\subseteq\class{QMA}/\class{poly}$ was proved 
in~\cite{aaronson2014full}. 
The class $\QMA(t)$ for $t\geq 2$ was introduced 
in~\cite{KobayashiMY03multiplemerlins}. 
The product test, its analysis, and the equality 
$\class{QMA}(t)=\class{QMA(2)}$ for $t\geq 2$ are due 
to~\cite{harrow2013testing}. 
The existence of a protocol for verifying a $3$-SAT formula on $n$ variables 
using unentangled proofs totalling $O(\sqrt{n}\poly\log n)$ qubits was first
shown in~\cite{AaronsonBDFS09}. 
A different protocol, this time for $3$-coloring~\cite{blier2009all}, is 
analyzed in~\cite{chiesa2013improved} where a trade-off between the number of
proofs used and the is given. 
NP-hardness of the weak membership problem for the set of separable states is
shown in~\cite{Gurvits03} for exponential accuracies, and improved to inverse
polynomial accuracies in~\cite{Gharibian10}. 
A consequence of the product test from~\cite{harrow2013testing} is that weak
membership for constant accuracy cannot be decided in polynomial time unless
$3$-SAT$\in\class{DTIME}(\text{exp}(\sqrt{n}\poly\log n))$.
Better upper bounds than \class{NEXP} are known on $\class{QMA}(2)$ when
additional restrictions on the verifier are
imposed~\cite{brandao2011quasipolynomial}.

The class $\class{QCMA}$ was defined in~\cite{AaronsonK06qcma}, where a quantum
oracle separation with \class{QMA} is proved.
Few complete problems for $\class{QCMA}$ are known; an interesting example is
the Ground State Connectivity (GSCON) problem considered
in~\cite{gharibian2014ground}.

A number of additional variations of the class $\class{QMA}$ have been
considered. 
The class $\class{UQMA}$, or unique $\class{QMA}$, 
corresponds to those problems for which in the yes-case there is a 
one-dimensional subspace of witnesses that convince the verifier to accept with
high probability, while any state in the orthogonal subspace will lead to a
success probability that is smaller by at least a fixed inverse polynomial.
The class $\class{FewQMA}$ is defined similarly, replacing the one-dimensional
subspace of convincing witnesses by a subspace having dimension at most
polynomial in the input size.
These classes, which are analogues of variants of $\class{NP}$ considered by
Valiant and Vazirani~\cite{ValiantV86unique}, were introduced
in~\cite{aharonov2008pursuit}.
It is known that $\class{FewQMA}=\class{UQMA}$~\cite{Jain12unique}, but it is
still open whether these classes equal $\class{QMA}$.

The class $\class{DQMA}$, introduced in~\cite{ambainis2014physical} by analogy 
with a similar extension of $\class{NP}$ called $\class{DP}$
\cite{PapadimitriouY84facets}, consists of all 
those problems whose difference is in $\class{QMA}$:
$(A_{\yes},A_{\no})\in\class{DQMA}$ if there exists $(B_{\yes},B_{\no})$ and
$(C_{\yes},C_{\no})\in\class{QMA}$ such that $x\in A_{\yes}$ implies $x\in
B_{\yes}\cap C_{\no}$ and $x\in A_{\no}$ implies $x\in B_{\no}\cup C_{\yes}$,
as well as $x\in (B_{\yes}\cup B_{\no}) \cap (C_{\yes}\cup C_{\no})$.
Complete problems for $\class{DQMA}$ that do not appear to lie in $\class{QMA}$
are given in~\cite{ambainis2014physical}; these include the problem of deciding
whether the ground state energy of a local Hamiltonian lies in a certain
interval, or is outside of that interval.

Quantum analogues of classes higher in the polynomial hierarchy such as
$\Sigma_2^p$ and $\Pi_2^p$ are introduced in~\cite{GharibianK12hierarchy},
where complete problems for these classes are given.

\chapter{Single-Prover Quantum Interactive Proofs}
\label{chapter:single-prover}

This chapter introduces a quantum computational analogue of the most standard
interactive proof system model, in which a verifier interacts with a single
prover, and surveys several known results concerning this model.
A few highlights of the results to be discussed in the chapter are as follows:

\begin{mylist}{\parindent}
\item[1.]
  Quantum interactive proof systems can be \emph{parallelized} to three-turn
  interactive proof systems having strong error bounds.
  More precisely, any promise problem having a bounded-error, polynomial-turn,
  single-prover quantum interactive proof system must also have a three-turn,
  single-prover quantum interactive proof system with perfect completeness and
  exponentially small soundness error.

\item[2.]
  The problem of optimizing the probability for a verifier in a single-prover
  quantum interactive proof system to accept a given string can be represented
  as a semidefinite program in a fairly simple and direct way.
  This representation provides a useful tool for reasoning about single-prover
  quantum interactive proof systems.

\item[3.]
  The class of promise problems having bounded-error, single-prover quantum
  interactive proof systems coincides with \class{PSPACE}.
\end{mylist}

\section{Definitions of quantum interactive proof systems}
\label{section:definitions}

In the classical setting, interactive proof systems have historically been
defined through variants of the probabilistic Turing machine model, modified in
such a way as to allow for interactions with an external entity (such as
another Turing machine).
Quantum computation, on the other hand, is more conveniently modeled by
quantum circuits, as was suggested in Chapter~\ref{chapter:preliminaries}, and
for this reason our definitions of quantum interactive proof systems will be
based on circuits rather than Turing machines.

The precise definitions of quantum interactive proof systems that we adopt in
this survey are essentially the same as ones considered in prior work on the
subject, although we will place a somewhat greater emphasis on the fixed-size
interactions that are induced by interactive proof systems on fixed input
strings.
The notion of an \emph{interactive game}, to be introduced shortly, is intended
to be an abstraction of this sort of interaction.

\subsection{Interactive games}

The first part of the definition of quantum interactive proof systems involves
the introduction of the general notion of an \emph{interactive game}.
Intuitively speaking, interactive games are abstractions of fixed-size
interactions, in which there is no notion of a shared input string to the
participants.
Although the focus of the current chapter will be on quantum interactions,
the concept of an interactive game is not inherently quantum---the basic
concept can be adapted to the classical setting is a straightforward manner.

The notion of an interactive game is, in fact, sufficiently general that one
may adapt it to formulate definitions of other interactive proof system
variants (such as interactive proof systems with multiple provers, either
cooperating or competing, and zero-knowledge interactive proofs), as well as
cryptographic interactions such as coin-flipping.
Indeed, it will likely be quite evident from the discussion that follows
that the notion of an interactive game can be generalized in numerous ways,
allowing for three or more participants, outputs for any subset of the
participants, and so on.
Although we will consider some such adaptations and generalizations in other
parts of this survey, we will not attempt to emphasize the generality of the
notion at this stage; our focus here will be limited to interactive games that
are representative of interactions between two entities, playing the roles of
prover and verifier.

With this focus in mind, an interactive game describes a situation in which
two participants, a \emph{prover} and a \emph{verifier}, exchange fixed-size
quantum registers for a fixed number of steps.
At the end of the interaction, the verifier produces a single classical bit
as output.
Figure~\ref{fig:interactive-game-1} illustrates an interactive game of this
sort, in the particular case in which five register exchanges (or
\emph{turns}) occur during the interaction.\footnote{%
  One may alternatively count \emph{messages} rather than \emph{turns}.
  In the single-prover setting, these notions are equivalent, but in the
  multi-prover setting (to be considered in 
  Chapter~\ref{chapter:multiple-provers}) it is convenient to consider that a
  turn may involve a collection of messages being either sent or received
  in parallel by a single entity.}
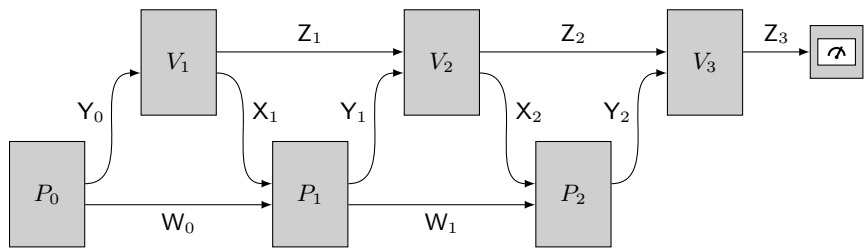
\begin{figure}
  \begin{center}
    \footnotesize
    \begin{tikzpicture}[scale=0.35, 
        turn/.style={draw, minimum height=14mm, minimum width=10mm,
          fill = ChannelColor, text=ChannelTextColor},
        measure/.style={draw, minimum width=7mm, minimum height=7mm,
          fill = ChannelColor},
        >=latex]
      
      \node (V1) at (-8,4) [turn] {$V_1$};
      \node (V2) at (2,4) [turn] {$V_2$};
      \node (V3) at (12,4) [turn] {$V_3$};
      
      \node (M) at (17,4.4) [measure] {};
     
      \node (P0) at (-13,-1) [turn] {$P_0$};
      \node (P1) at (-3,-1) [turn] {$P_1$};
      \node (P2) at (7,-1) [turn] {$P_2$};

      \node[draw, minimum width=5mm, minimum height=3.5mm, fill=ReadoutColor]
      (readout) at (M) {};
      
      \draw[thick] ($(M)+(0.3,-0.15)$) arc (0:180:3mm);
      \draw[thick] ($(M)+(0.2,0.2)$) -- ($(M)+(0,-0.2)$);
      \draw[fill] ($(M)+(0,-0.2)$) circle (0.5mm);
      
      \draw[->] ([yshift=4mm]V1.east) -- ([yshift=4mm]V2.west)
      node [above, midway] {$\reg{Z}_1$};
      
      \draw[->] ([yshift=4mm]V2.east) -- ([yshift=4mm]V3.west)
      node [above, midway] {$\reg{Z}_2$};
      
      \draw[->] ([yshift=-4mm]V1.east) .. controls +(right:20mm) and 
      +(left:20mm) .. ([yshift=4mm]P1.west) node [right, pos=0.4] {$\reg{X}_1$};
      
      \draw[->] ([yshift=-4mm]V2.east) .. controls +(right:20mm) and 
      +(left:20mm) .. ([yshift=4mm]P2.west) node [right, pos=0.4] {$\reg{X}_2$};
      
      \draw[->] ([yshift=4mm]P0.east) .. controls +(right:20mm) and 
      +(left:20mm) .. ([yshift=-4mm]V1.west) node [left, pos=0.6] {$\reg{Y}_0$};
      
      \draw[->] ([yshift=4mm]P1.east) .. controls +(right:20mm) and 
      +(left:20mm) .. ([yshift=-4mm]V2.west) node [left, pos=0.6] {$\reg{Y}_1$};
      
      \draw[->] ([yshift=4mm]P2.east) .. controls +(right:20mm) and 
      +(left:20mm) .. ([yshift=-4mm]V3.west) node [left, pos=0.6] {$\reg{Y}_2$};
      
      \draw[->] ([yshift=-4mm]P0.east) -- ([yshift=-4mm]P1.west)
      node [below, midway] {$\reg{W}_0$};
      
      \draw[->] ([yshift=-4mm]P1.east) -- ([yshift=-4mm]P2.west)
      node [below, midway] {$\reg{W}_1$};
      
      \draw[->] ([yshift=4mm]V3.east) -- (M.west)
      node [above, midway] {$\reg{Z}_3$};
      
    \end{tikzpicture}
  \end{center}
  \caption{A five-turn interactive game involving a prover and a verifier.
    The prover's actions are represented by the boxes labeled $P_0$, $P_1$, and
    $P_2$, while the verifier's actions are represented by the boxes labeled
    $V_1$, $V_2$, and $V_3$.
    It is a five-turn interactive game because five registers are exchanged
    during the interaction that is illustrated:
    in the first turn the prover sends the register $\reg{Y}_0$ to the
    verifier, in the second turn the verifier sends $\reg{X}_1$ to the
    prover, and so on.
    The registers $\reg{Z}_1$ and $\reg{Z}_2$ represent the verifier's memory
    registers, while $\reg{W}_0$ and $\reg{W}_1$ represent the prover's memory
    registers.
    The rightmost box represents a measurement that produces a classical output
    bit.
  }
  \label{fig:interactive-game-1}
\end{figure}
The actions performed by a prover and verifier at each step of an interactive
game must be valid physical operations, at least in the idealized sense that is
modeled by the theory of quantum information, and must therefore be described
by quantum channels.

When it is convenient, we will refer to an \emph{$m$-turn verifier}, an
\emph{$m$-turn prover}, or an \emph{$m$-turn interactive game} to
indicate that $m$ register exchanges between the verifier and prover take
place, for the verifier, prover, or interactive game being considered.
A prover and verifier in an interactive game must naturally be
\emph{compatible}, in the sense that they agree on both the number and timing
of the register exchanges and on the sizes of these registers.
Hereafter, we will take it as an implicit assumption that such an agreement is
in place, as there is little to be said about interactions between a
prover and verifier that are not compatible.

The pattern represented by the labeling of the registers and transformations
in Figure~\ref{fig:interactive-game-1} will be mimicked throughout this survey
to the extent that it is possible.
In general, the verifier's actions will be described by channels
$V_1,\ldots,V_n$, with each channel $V_k$ transforming the register pair
$(\reg{Z}_{k-1},\reg{Y}_{k-1})$ to $(\reg{Z}_k,\reg{X}_k)$, as suggested by
Figure~\ref{fig:interactive-game-verifier-action}, while the prover's 
actions are described by channels $P_0,\ldots,P_{n-1}$ (in case the number of
turns is odd) or $P_1,\ldots,P_{n-1}$ (in case the number of turns is
even), with each channel $P_k$ transforming $(\reg{X}_k,\reg{W}_{k-1})$ to
$(\reg{Y}_k,\reg{W}_k)$, as suggested by 
Figure~\ref{fig:interactive-game-prover-action}.
It will be possible to avoid the need to handle special cases at the beginning
and end of interactions by adopting the convention that ``absent'' registers,
such as $\reg{Z}_0$, $\reg{X}_n$, $\reg{X}_0$, and $\reg{W}_{-1}$, and possibly
$\reg{Y}_0$, are identified with trivial registers comprising zero
qubits.\footnote{
  Registers having zero qubits are legitimate quantum systems that have a
  single classical state, and a corresponding Hilbert space equal to
  $\complex$.
  We do not consider that the transmission of trivial registers contributes to
  the number of turns in an interaction.}

\begin{figure}
  \begin{center}
  \begin{minipage}[b]{0.45\textwidth}
    \begin{center}
      \begin{tikzpicture}[scale=0.5, 
          turn/.style={draw, minimum height=14mm, minimum width=10mm,
            fill = ChannelColor, text=Black},
          measure/.style={draw, minimum width=7mm, minimum height=7mm,
            fill = ChannelColor},
          >=latex]
        
        \node (V) at (0,4) [turn] {$V_k$};
        
        \node (previousV) at (-5,4) {};
        \node (previousP) at (-4,0) {};
        
        \node (nextV) at (5,4) {};
        \node (nextP) at (4,0) {};
        
        \draw[->] ([yshift=4mm]previousV.east) -- ([yshift=4mm]V.west)
        node [above, midway] {$\reg{Z}_{k-1}$};
        
        \draw[->] ([yshift=4mm]V.east) -- ([yshift=4mm]nextV.west)
        node [above, midway] {$\reg{Z}_{k}$};
        
        \draw[->] ([yshift=-4mm]V.east) .. controls +(right:20mm) and 
        +(left:20mm) .. ([yshift=4mm]nextP.west) node [right, pos=0.5]
        {$\reg{X}_k$};
        
        \draw[->] ([yshift=4mm]previousP.east) .. controls +(right:20mm) and 
        +(left:20mm) .. ([yshift=-4mm]V.west) node [left, pos=0.5]
        {$\reg{Y}_{k-1}$};
        
      \end{tikzpicture}
    \end{center}
    \caption{A general verifier action.}
    \label{fig:interactive-game-verifier-action}
  \end{minipage}\hspace{6mm}
  \begin{minipage}[b]{0.45\textwidth}
    \begin{center}
      \begin{tikzpicture}[scale=0.5, 
          turn/.style={draw, minimum height=14mm, minimum width=10mm,
            fill = ChannelColor, text=Black},
          measure/.style={draw, minimum width=7mm, minimum height=7mm,
            fill = ChannelColor},
          >=latex]
        
        \node (P) at (0,0) [turn] {$P_k$};
        
        \node (previousV) at (-4,4) {};
        \node (previousP) at (-5,0) {};
        
        \node (nextV) at (4,4) {};
        \node (nextP) at (5,0) {};
        
        \draw[->] ([yshift=-4mm]previousV.east) 
        .. controls +(right:20mm) and +(left:20mm) ..
        ([yshift=4mm]P.west) node [right, pos=0.4] {$\reg{X}_k$};
        
        \draw[->] ([yshift=4mm]P.east) 
        .. controls +(right:20mm) and +(left:20mm) ..
        ([yshift=-4mm]nextV.west) node [left, pos=0.6] {$\reg{Y}_k$};
        
        \draw[->] ([yshift=-4mm]P.east) --
        ([yshift=-4mm]nextP.west) node [below, pos=0.5] {$\reg{W}_k$};
        
        \draw[->] ([yshift=-4mm]previousP.east) --
        ([yshift=-4mm]P.west) node [below, pos=0.5] {$\reg{W}_{k-1}$};
      
      \end{tikzpicture}
    \end{center}
    \caption{A general prover action.}
    \label{fig:interactive-game-prover-action}
  \end{minipage}
  \end{center}
\end{figure}

Our primary interest is in the situation in which the specification of a
verifier is fixed, and it is to be viewed that one's goal is to optimize the
actions of the prover so as to cause the verifier to produce the output 1
(representing \emph{acceptance} in the terminology of interactive proofs).
Generalizing the notation and terminology used in the previous chapter,
we write $\omega(V)$ to denote the \emph{value} of a given verifier $V$ in an
interactive game, which is defined as the maximum
probability with which a prover (compatible with $V$) can cause $V$ to
output 1.\footnote{%
  Formally speaking, the value of a verifier $V$ is more naturally defined as
  the \emph{supremum} probability with which a compatible prover can cause $V$
  to output~1, as it is not immediate that this supremum value is achieved by a
  single prover.
  The supremum \emph{is} achieved, however, so one is justified in considering
  the value as the maximum probability with which a prover can cause $V$ to
  output 1.}

\begin{example}
  The following simple example is intended to illustrate the basic concept of
  an interactive game.
  Define a two-turn verifier $V$ as follows:
  \begin{mylist}{\parindent}
  \item[1.]
    The verifier's first action represents the creation of a maximally
    entangled state
    \begin{equation}
      \ket{\psi} = \frac{1}{\sqrt{2}}\ket{0}\ket{0} +
      \frac{1}{\sqrt{2}}\ket{1}\ket{1}
    \end{equation}
    of a pair of single-qubit registers $(\reg{Z}_1,\reg{X}_1)$.
    The qubit $\reg{X}_1$ is sent to the prover.
  \item[2.]
    The verifier's second action represents a binary-valued measurement
    performed on the pair $(\reg{Z}_1,\reg{Y}_1)$, for $\reg{Y}_1$ being a
    single-qubit register received from the prover.
    The measurement operator corresponding to the outcome 1 is defined as
    $\Pi_1 = \ket{\phi} \bra{\phi}$ for
    \begin{equation}
      \ket{\phi} = \cos(\pi/8)\ket{0}\ket{0} + \sin(\pi/8)\ket{1}\ket{1},
    \end{equation}
    while the measurement operator corresponding to the outcome 0 is
    $\Pi_0 = \I - \ket{\phi}\bra{\phi}$.
  \end{mylist}
  The optimal probability with which a prover can cause this verifier to output
  1 is equal to
  \begin{equation}
    \omega(V) = \cos^2(\pi/8) \approx 0.85.
  \end{equation}
  It is easy to see that this probability is achievable:
  a prover may simply return the register $\reg{X}_1$ to the verifier,
  renaming it $\reg{Y}_1$ but otherwise leaving it unchanged, which causes the
  verifier to output 1 with probability
  \begin{equation}
    \bigabs{\langle \phi | \psi \rangle}^2 = 
    \Biggabs{\frac{\cos(\pi/8) + \sin(\pi/8)}{\sqrt{2}}}^2 = \cos^2(\pi/8).
  \end{equation}
  The fact that this probability is optimal follows from the fact that the
  probability with which the verifier accepts, for any choice of a prover, is
  given by
  \begin{equation}
    \tr \bigl( \Pi_1 \rho\bigr) = \fid\bigl(\ket{\phi}\bra{\phi},\rho\bigr)^2
  \end{equation}
  for some two-qubit state $\rho$ whose first qubit, when viewed in isolation,
  is completely mixed.
  By the fact that the fidelity is non-decreasing under partial tracing, one
  finds that the probability of acceptance is at most
  \begin{equation}
    \fid \Biggl(
    \begin{pmatrix}
      1/2 & 0\\
      0 & 1/2
    \end{pmatrix},
    \begin{pmatrix}
      \cos^2(\pi/8) & 0\\
      0 & \sin^2(\pi/8)
    \end{pmatrix}
    \Biggr)^2 = \cos^{2}(\pi/8),
  \end{equation}
  which establishes the optimality of this acceptance probability.

  This example illustrates an important theme in the analysis of quantum
  interactive games, which is that a prover's possible actions exactly
  correspond to those transformations that leave the reduced state of the
  verifier's register unchanged (under the assumption that the joint state of
  the prover and verifier is pure).
  The analysis made here will re-appear in the proof of the perfect
  completeness property in Section~\ref{sec:qip-completeness}, and the idea is
  the key to the formulation of the value of an interactive game as the optimum
  of a semidefinite program to be described in Section~\ref{sec:qip-sdp}.
\end{example}

In the previous example, we have not specified the verifier's actions as
quantum channels, at least in a formal sense.
It is, however, possible to do this.
In particular, the verifier's first action corresponds to a channel $V_1$ that
takes no input (or, equivalently, takes a pair of trivial registers
$(\reg{Z}_0,\reg{Y}_0)$ as input) and outputs the state $\ket{\psi}\bra{\psi}$
contained in the pair $(\reg{Z}_1,\reg{X}_1)$.
More formally speaking, this channel corresponds to the mapping
\begin{equation}
  V_1(\alpha) = \alpha \ket{\psi}\bra{\psi}
\end{equation}
for all $\alpha \in \complex$, which is a completely positive and
trace-preserving map.
The second action may be expressed as a channel as well, in this case
transforming the pair of registers $(\reg{Z}_1,\reg{Y}_1)$ into a single-qubit
register $\reg{Z}_2$ (or, equivalently, into a pair $(\reg{Z}_2,\reg{X}_2)$
where $\reg{X}_2$ is trivial) in the manner described by the mapping
\begin{equation}
  V_2 (X) = \ip{\Pi_0}{X} \ket{0}\bra{0} + \ip{\Pi_1}{X} \ket{1}\bra{1}
\end{equation}
for all $X\in\Lin(\Z_1\otimes\Y_1)$.

For other interactive games described in this chapter, we will generally omit
the sorts of details that have been given in the previous paragraph---it is
usually a routine exercise to fill in such details.

\subsection{Descriptions and encodings of interactive games}

There are two natural ways in which one may describe either or both of the 
participants in an interactive game: one is by \emph{quantum circuits}, and the
other is by \emph{explicit matrix representations} of the participants'
actions.

\begin{mylist}{\parindent}
\item[1.] \emph{Quantum circuit representations.}
  An $m$-turn verifier is determined by an $n$-tuple $V = (V_1,\ldots,V_n)$,
  for $n = \lfloor m/2 + 1 \rfloor$, where each $V_k$ is a quantum channel
  transforming a pair of registers $(\reg{Z}_{k-1},\reg{Y}_{k-1})$ to a pair
  of registers $(\reg{Z}_k,\reg{X}_k)$.
  A quantum circuit description of such a verifier is simply an $n$-tuple of
  quantum circuits, each implementing one of these channels, along with a
  specification of which input and output qubits of each circuit are to be
  associated with the two registers in each pair.
  Each individual circuit may be encoded following the general principles
  outlined in Section~\ref{sec:circuits}.
  Provers can be described in an analogous manner
  (although we are typically not concerned with the efficiency of provers,
  making circuit descriptions of them generally less useful).

\item[2.] \emph{Explicit matrix representations.}
  As above, an $m$-turn verifier is determined by an $n$-tuple of quantum
  channels $V = (V_1,\ldots,V_n)$, for $n = \lfloor m/2 + 1 \rfloor$.
  Along with a specification of which input and output qubits of each channel
  are to be associated with the two registers in each register pair, one may
  describe each channel $V_k$ by an explicit matrix representation
  (such as a Stinespring representation).
  Again, provers may be represented in an analogous way.
\end{mylist}

\subsection{Quantum interactive proof systems}

Having defined interactive games, we are now prepared to define various
complexity classes based on the concept of quantum interactive proof systems.
Intuitively speaking, we view a quantum interactive proof system to be the
specification of a quantum interactive game for each possible input string to
the problem being considered.

As is typical for interactive proof system models, we will constrain verifiers
in interactive game representations of quantum interactive proof systems to be
computationally bounded.
To be more precise, we require the verifier's actions to be represented by
quantum circuits whose descriptions can be generated in polynomial time from
the problem input.
The fact that the prover is not computationally bounded is manifested in the
requirement that the maximum acceptance probability of a given verifier places
no computational restrictions on the prover's actions.

\begin{definition}
  A promise problem $A = (A_{\yes},A_{\no})$ is contained in the complexity
  class $\class{QIP}_{a,b}(m)$ if there exists a polynomial-time computable
  function $V$ that possesses the following properties:
  \begin{mylist}{\parindent}
  \item[1.] For every string $x\in A_{\yes} \cup A_{\no}$, one has that $V(x)$
    is an encoding of a quantum circuit description of an $m$-turn verifier in
    an interactive game.
  \item[2.] \emph{Completeness.} 
    For every string $x\in A_{\yes}$, it holds that $\omega(V(x)) \geq a$.
  \item[3.] \emph{Soundness.}
    For every string $x\in A_{\no}$, it holds that $\omega(V(x)) \leq b$.
  \end{mylist}
\end{definition}

In this definition, one may take $m$, $a$, and $b$ to be constants or functions
of the length of $x$.
When $a$ and $b$ are omitted, it is to be understood that $a = 2/3$ and
$b = 1/3$, so that
\begin{equation}
  \class{QIP}(m) = \class{QIP}_{2/3,1/3}(m).
\end{equation}
We also write $\class{QIP}$, without specifying a number of turns, to refer to
the class of promise problems $A$ for which there exists a polynomially bounded
function $m$ such that $A\in\class{QIP}(m)$.
Given that a polynomial-time computable function $V$ representing a verifier
would not be capable of outputting an $n$-tuple of quantum circuit descriptions
with $n$ being super-polynomial in $\abs{x}$, this is equivalent to placing no
restrictions on the number of turns.

It should be noted that any classical verifier in an interactive game can be
viewed as a restricted type of quantum verifier.
It is not difficult to prove that quantum prover strategies cannot gain an
advantage over optimal classical prover strategies against classical
verifiers, and based on this observation one may verify that
\begin{equation}
  \class{IP}_{a,b}(m)  \subseteq \class{QIP}_{a,b}(m)
\end{equation}
for all choices of $a$, $b$, and $m$.\footnote{The situation is not nearly so
  simple in the multi-prover setting, as will be discussed in
  Chapter~\ref{chapter:multiple-provers}.}

\subsection{Purifications of interactive games}
\label{sec:qip-purification}

Interactive games are, in some situations, easier to analyze when the joint
state of all of the co-existing registers at each instant is a pure state, as
opposed to being an arbitrary mixed state.
Indeed, for some of the proof techniques we will use later in this chapter,
this assumption of purity is essential.

Fortunately, there is no generality lost in restricting one's attention to
interactive games with this property.
This follows from the fact that each individual channel performed by either of
the participants in an interactive game may be purified in the manner described
in Section~\ref{sec:channel-purification}, so that each action is represented
by a linear isometry.
The additional output qubits produced by this process must be considered
private memory qubits for whichever player performs that particular channel.
Assuming that these additional qubits are not touched again during the
interactive game (so that subsequent actions of the player act trivially on
these qubits), the effect is identical to the original channel.
This process is illustrated in Figure~\ref{fig:purify-three-messages}.
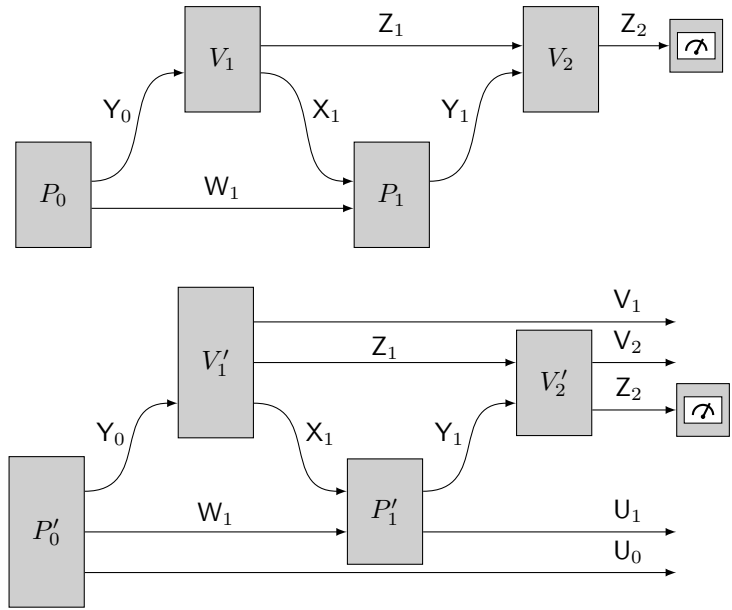
\begin{figure}[t]
  \begin{center}\small
    \begin{tikzpicture}[scale=0.45,
        turn/.style={draw, minimum height=14mm, minimum width=10mm,
          fill = ChannelColor, text=ChannelTextColor},
        measure/.style={draw, minimum width=7mm, minimum height=7mm,
          fill = ChannelColor},
        >=latex]
      
      \node (V1) at (-8,4) [turn] {$V_1$};
      \node (V2) at (2,4) [turn] {$V_2$};
      
      \node (M) at (6,4.4) [measure] {};
      
      \node[draw, minimum width=5mm, minimum height=3.5mm, fill=ReadoutColor]
      (readout) at (M) {};
      
      \draw[thick] ($(M)+(0.3,-0.15)$) arc (0:180:3mm);
      \draw[thick] ($(M)+(0.2,0.2)$) -- ($(M)+(0,-0.2)$);
      \draw[fill] ($(M)+(0,-0.2)$) circle (0.5mm);
      
      \node (P0) at (-13,0) [turn] {$P_0$};
      \node (P1) at (-3,0) [turn] {$P_1$};
      
      \draw[->] ([yshift=4mm]V1.east) -- ([yshift=4mm]V2.west)
      node [above, midway] {$\reg{Z}_1$};
      
      \draw[->] ([yshift=-4mm]V1.east) .. controls +(right:20mm) and 
      +(left:20mm) .. ([yshift=4mm]P1.west) node [right, pos=0.4] {$\reg{X}_1$};
      
      \draw[->] ([yshift=4mm]P0.east) .. controls +(right:20mm) and 
      +(left:20mm) .. ([yshift=-4mm]V1.west) node [left, pos=0.6] {$\reg{Y}_0$};
      
      \draw[->] ([yshift=4mm]P1.east) .. controls +(right:20mm) and 
      +(left:20mm) .. ([yshift=-4mm]V2.west) node [left, pos=0.6] {$\reg{Y}_1$};
      
      \draw[->] ([yshift=-4mm]P0.east) -- ([yshift=-4mm]P1.west)
      node [above, midway] {$\reg{W}_1$};
      
      \draw[->] ([yshift=4mm]V2.east) -- (M.west) node [above, midway]
           {$\reg{Z}_2$};
    \end{tikzpicture}\\[4mm]
    \begin{tikzpicture}[scale=0.45,
        turn/.style={draw, minimum height=14mm, minimum width=10mm,
          fill = ChannelColor, text=ChannelTextColor},
        bigturn/.style={draw, minimum height=20mm, minimum width=10mm,
          fill = ChannelColor, text=ChannelTextColor},
        measure/.style={draw, minimum width=7mm, minimum height=7mm,
          fill = ChannelColor},
        >=latex]
      
      \node (V1) at (-8,4.6) [bigturn] {$V_1'$};
      \node (V2) at (2,4) [turn] {$V_2'$};
      \node (M) at (6.4,3.2) [measure] {};
      
      \node[draw, minimum width=5mm, minimum height=3.5mm, fill=ReadoutColor]
      (readout) at (M) {};
      
      \draw[thick] ($(M)+(0.3,-0.15)$) arc (0:180:3mm);
      \draw[thick] ($(M)+(0.2,0.2)$) -- ($(M)+(0,-0.2)$);
      \draw[fill] ($(M)+(0,-0.2)$) circle (0.5mm);
      
      \node (P0) at (-13,-0.4) [bigturn] {$P_0'$};
      \node (P1) at (-3,0.2) [turn] {$P_1'$};
      
      \node [minimum width=7mm] (Pout0) at (6.4,-1.6) {};
      \node [minimum width=7mm] (Pout1) at (6.4,-0.4) {};

      \node [minimum width=7mm] (Vout1) at (6.4,5.8) {};
      \node [minimum width=7mm] (Vout2) at (6.4,4.6) {};

      \draw[->] ([yshift=0mm]V1.east) -- ([yshift=6mm]V2.west)
      node [above=-2pt, midway] {$\reg{Z}_1$};
      
      \draw[->] ([yshift=-12mm]V1.east) .. controls +(right:20mm) and 
      +(left:20mm) .. ([yshift=6mm]P1.west) node [right, pos=0.4] 
      {$\reg{X}_1$};
      
      \draw[->] ([yshift=12mm]P0.east) .. controls +(right:20mm) and 
      +(left:20mm) .. ([yshift=-12mm]V1.west) node [left, pos=0.6] 
      {$\reg{Y}_0$};
      
      \draw[->] ([yshift=6mm]P1.east) .. controls +(right:20mm) and 
      +(left:20mm) .. ([yshift=-6mm]V2.west) node [left, pos=0.6] 
      {$\reg{Y}_1$};
      
      \draw[->] ([yshift=0mm]P0.east) -- ([yshift=-6mm]P1.west) node
           [above=-1pt, midway] {$\reg{W}_1$};
      
      \draw[->] ([yshift=-8mm]V2.east) -- (M.west)
           node [above=8pt, left=9pt] {$\reg{Z}_2$};

      \draw[->] ([yshift=-6mm]P1.east) -- (Pout1.west) node 
           [above=8pt, left=9pt] {$\reg{U}_1$};

      \draw[->] ([yshift=-12mm]P0.east) -- (Pout0.west) node 
           [above=8pt, left=9pt] {$\reg{U}_0$};

      \draw[->] ([yshift=12mm]V1.east) -- (Vout1.west) node 
           [above=8pt, left=9pt] {$\reg{V}_1$};

      \draw[->] ([yshift=6mm]V2.east) -- (Vout2.west) node 
           [above=8pt, left=9pt] {$\reg{V}_2$};
     
    \end{tikzpicture}
  \end{center}
  \caption{A three-message interactive game and its purification.}
  \label{fig:purify-three-messages}
\end{figure}
In the situation in which the efficiency of a verifier is concerned,
this transformation is done gate-by-gate rather than turn-by-turn, as
discussed in Section~\ref{sec:circuit-purification}.

Naturally, it may also be assumed that the actions performed by each
participant correspond to unitary transformations, as opposed to
transformations described by linear isometries, provided that sufficiently many
ancillary input qubits are provided to each of these unitary tranformations.
Again, the initialized qubits must be understood to be included in a given
participant's private memory, so that the other participant may not tamper with
them to influence the output of the game.

\section{Perfect completeness and parallelization}
\label{sec:QIP-perfect-completeness-and-parallelization}

Two basic facts concerning single-prover interactive proof systems will be
discussed in the present section.
These facts may be proved through direct constructions, which efficiently
transform quantum interactive games in ways that allow one to conclude that the
facts hold.

The first construction establishes the relation
\begin{equation}
  \class{QIP}_{a,b}(m) \subseteq \class{QIP}_{1,c}(m+2),
\end{equation}
for any number of turns $m$, and for $c$ being bounded away from 1 as a
function of the gap $a-b$ between the completeness and soundness parameters.
This relation implies that quantum interactive proof systems do not lose
any power when restricted to having \emph{perfect completeness}, as long as one
is willing to increase the number of turns by two.

The second construction establishes the relation
\begin{equation}
  \class{QIP}_{1,c}(m) \subseteq \class{QIP}_{1,d}(3)
\end{equation}
for any polynomially bounded function $m$, for $d$ being bounded away from 1
when the same is true of $c$.
This implies that quantum interactive proof systems can be \emph{parallelized}
to a high degree, which is an important property that distinguishes them from
classical interactive proof systems.

\subsection{Perfect completeness}
\label{sec:qip-completeness}

Given a quantum circuit description of a verifier $V$ in a quantum interactive
game, as well as a target threshold $\alpha$ for its value $\omega(V)$, it is
possible to efficiently construct a new verifier $V'$ in such a way that the
following properties are in place:
\begin{mylist}{\parindent}
\item[1.]
  If $V$ is an $m$-turn verifier, then $V'$ is an $(m+2)$-turn verifier.
\item[2.]
  If it is the case that $\omega(V) \geq \alpha$, then $\omega(V') = 1$.
\item[3.]
  If it is the case that $\omega(V) < \alpha$, then 
  $\omega(V') < 1 - (\alpha - \omega(V))^2$.
\end{mylist}
We make the assumption that $\alpha$ is a dyadic rational in the construction,
which is to be explained shortly.
The construction has the following implication to quantum interactive proof
system classes.

\begin{theorem}
  \label{theorem:QIP-perfect-completeness}
  For any choice of polynomial-time computable functions
  $a,b:\natural\rightarrow(0,1)$ and $m:\natural\rightarrow\natural$ for which
  $a(n) < b(n)$ and $m(n) \geq 1$ for every $n\in\natural$, it holds that
  \begin{equation}
    \class{QIP}_{a,b}(m) \subseteq \class{QIP}_{1,c}(m+2)
  \end{equation}
  for
  \begin{equation}
    c = 1 - \frac{1}{2}(a-b)^2.
  \end{equation}
\end{theorem}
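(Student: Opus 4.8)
The plan is to verify properties~1--3 of the stated construction $V\mapsto V'$ and then choose the threshold $\alpha$ so that the soundness bound comes out to exactly $c=1-\frac12(a-b)^2$. Throughout I would work with the purified form of interactive games (Section~\ref{sec:qip-purification}): on input $x$ the verifier $V=V(x)$ acts by isometries, so after the $m$ turns the joint prover--verifier state is a pure state $\ket{\psi}$ with the verifier holding a single decision qubit $\reg A$, and the probability a given prover causes $V$ to accept is $\bignorm{(\bra{1}_{\reg A}\otimes\I)\ket{\psi}}^2$, with supremum $\omega(V)$ over all compatible provers. The new verifier $V'$ runs this purified $V$ for $m$ turns and then plays two more turns implementing a \emph{level-$\alpha$ threshold test}: using only Hadamard gates and a comparison circuit it prepares exactly (this is the one place the hypothesis ``$\alpha$ is a dyadic rational $k/2^{\ell}$'' is used) a register whose decision-qubit marginal is $(1-\alpha)\ket{0}\bra{0}+\alpha\ket{1}\bra{1}$, exchanges one more register with the prover, and finishes with a rank-one projective measurement. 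From the prover's side its only freedom is to apply a unitary to its own registers; the structure is exactly that of the Example in Section~\ref{section:definitions}, where the reduced state on the verifier's retained register is forced and the value of the test is a fidelity between that forced state and the state the prover can realize, attained by ``returning the register essentially unchanged.''

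Next I would establish properties~2 and~3. For completeness, suppose $\omega(V)\ge\alpha$. The honest prover plays the first $m$ turns so as to bring $\reg A$ into a state of acceptance bias $\alpha$, which is possible since $\alpha\le\omega(V)$ (the coin that $V'$ itself injects into the test lets the prover realize precisely the $\alpha$-biased target state, so this step does not rely on any special structure of $V$). The global state is then a purification of a fixed state, and by the unitary equivalence of purifications (Theorem~\ref{thm:unitary-equivalence}) the prover can, in its final message, rotate the global state into exactly the pure state onto which $V'$ projects; hence $V'$ accepts with certainty and $\omega(V')=1$. For soundness, suppose $\omega(V)<\alpha$. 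Every prover realizes a decision-qubit state $\rho$ with $\bra{1}\rho\ket{1}\le\omega(V)$, and Uhlmann's theorem (Theorem~\ref{theorem:Uhlmann}) bounds the probability that $V'$ accepts by $\max_{\rho}\fid\bigl((1-\alpha)\ket{0}\bra{0}+\alpha\ket{1}\bra{1},\rho\bigr)^2$ over such $\rho$; a short fidelity estimate of the same flavour as the $\cos^2(\pi/8)$ computation in the Example gives $\omega(V')<1-(\alpha-\omega(V))^2$. Finally, $V'$ is clearly an $(m+2)$-turn verifier and is polynomial-time computable from $x$.

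With properties~1--3 in hand the theorem follows by choosing $\alpha$ to be a dyadic rational with
\begin{equation*}
  b(n)+\frac{a(n)-b(n)}{\sqrt{2}}\ \le\ \alpha\ \le\ a(n).
\end{equation*}
Such an $\alpha$ of polynomial bit-length exists: the interval is nonempty since its length is $\bigl(1-\tfrac{1}{\sqrt 2}\bigr)(a-b)>0$, and since $a$ and $b$ are polynomial-time computable the gap $a-b$ is bounded below by $2^{-\poly(n)}$, so the interval contains a dyadic rational of polynomial bit-length. On a yes-instance $\omega(V)\ge a\ge\alpha$, so $\omega(V')=1$; on a no-instance $\omega(V)\le b<\alpha$, so $\omega(V')<1-(\alpha-b)^2\le 1-\tfrac12(a-b)^2=c$. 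Thus $V'$ witnesses $\class{QIP}_{a,b}(m)\subseteq\class{QIP}_{1,c}(m+2)$.

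The step I expect to be the main obstacle is the completeness direction of the gadget, and more precisely designing the last two turns so that \emph{both} directions hold simultaneously. A naive test that simply projects the verifier's end-of-game register onto a fixed $\alpha$-biased pure state fails: for some verifiers the set of decision-qubit states the prover can realize is a single point far (in fidelity) from the target, so such a test need not have value $1$ even when $\omega(V)\ge\alpha$. The construction must therefore let the prover \emph{demonstrate} that acceptance bias $\alpha$ is attainable and accept any valid demonstration, which is why $V'$ injects its own $\alpha$-biased coin and why the unitary-equivalence-of-purifications argument is used in the honest analysis; dovetailing this demonstration with the fidelity bound so that no slack is lost in property~3 is the technical heart. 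The parameter optimization in the last paragraph is routine by comparison.
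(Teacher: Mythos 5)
Your overall architecture matches the paper's: retain a single qubit carrying the acceptance statistics, add two turns in which the prover is handed essentially everything else and must return one qubit, bound soundness by monotonicity of fidelity under partial trace exactly as in the $\cos^2(\pi/8)$ example, get completeness from unitary equivalence of purifications, and finish by choosing a dyadic rational $\alpha$ in a subinterval of $[b,a]$ (your interval $[\,b+(a-b)/\sqrt{2},\,a\,]$ serves the same purpose as the paper's $[\,(3a+b)/4,\,a\,]$, and your observation that $a-b\geq 2^{-\poly(n)}$ for polynomial-time computable $a,b$ is correct). The soundness half and the parameter bookkeeping are fine.

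The gap is in the completeness mechanism, precisely the point you call the technical heart. You assert that the honest prover can ``bring $\reg{A}$ into a state of acceptance bias $\alpha$, which is possible since $\alpha\le\omega(V)$,'' and you attribute this to the $\alpha$-biased coin that $V'$ itself injects. Neither part holds. The set of acceptance probabilities realizable against $V$ is an interval $[\omega_{\min}(V),\omega(V)]$ that need not contain $\alpha$: a verifier that ignores the prover and accepts with probability $0.9$ admits only the single bias $0.9$, so for $\alpha=0.7$ no prover realizes bias $\alpha$ even though $\alpha\le\omega(V)$. A coin prepared by the verifier cannot repair this, because it is uncorrelated with anything the prover does: if the retained register is obtained by any fixed processing of the (dephased) decision bit together with verifier-prepared ancillas, its state is an affine function of the prover's acceptance probability $p$, so it cannot coincide with the marginal of a fixed rank-one target for every achievable $p\ge\alpha$ while staying noticeably far from it for $p\le b$ --- unless the prover itself can steer $p$ to a prescribed value. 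Supplying that steering is exactly what the paper does and what your proposal omits: before running the pseudo-copy test, $V$ is modified by wiring in an additional \emph{prover-supplied} qubit that can force the output $0$, letting the prover ``throw the game'' with any amplitude; this leaves the value unchanged from above (so the bound $\omega(V')<1-(\alpha-\omega(V))^2$ on no-instances is unaffected) while making every acceptance probability in $[0,\omega(V)]$ attainable, so that whenever $\omega(V)\ge\alpha$ an honest prover can play to bias exactly $\alpha$ and then the purification-rotation argument yields $\omega(V')=1$. With that preprocessing inserted, the rest of your argument goes through.
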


Thus, quantum interactive proof systems with \emph{perfect completeness}
are at least as powerful as those having a nonzero completeness error, provided
that one allows the quantum interactive proof system with perfect completeness
to have two additional turns and a somewhat larger soundness error.
(As we will soon see, a reduction in the number of turns and in the soundness
error is possible through the use of other methods, so these are not major
concessions to make for the property of perfect completeness.)

To describe the essential idea behind the construction, it will be convenient
to first consider the case in which $\alpha = 1/2$, delaying the discussion of
how one may handle other values of $\alpha$ until later.
It will be assumed that the given verifier $V$ is in a purified form, so that
its actions are described by an $n$-tuple of isometries $(V_1,\ldots,V_n)$.
Moreover, it will be assumed that if $\omega(V) \geq 1/2$, then there exists a
prover $P$ that is capable of making $V$ output 1 with probability
\emph{exactly} 1/2.
If these assumptions were not in place, it would be straightforward to
preprocess the description of the given verifier to ensure that these
assumptions are met;
the process of purifying quantum channels has already been discussed, and the
second assumption can be imposed by wiring the verifier with an additional
qubit that allows a prover to force the output 0 if it chooses, effectively
throwing the game with any desired probability.
Finally, it will be assumed, without loss of generality, that at the end of the
interaction, the first of the verifier's qubits is considered to be the output
qubit, which is measured with respect to the computational basis to produce the
verifier's output bit.

With these assumptions in place, the construction of the $(m+2)$-turn
verifier $V'$ from $V$ is as follows:
\begin{mylist}{\parindent}
\item[1.]
  The verifier $V'$ behaves precisely as $V$ does for $m$ turns, up to
  but not including the final measurement of the output qubit of~$V$.

\item[2.]
  The verifier $V'$ then makes a \emph{pseudo-copy} of the output qubit of $V$
  by performing the isometry defined by
  \begin{equation}
    \ket{00}\bra{0} + \ket{11}\bra{1}
  \end{equation}
  on this output qubit.
  In turn number $m+1$ of the interaction, $V'$ sends all of its qubits to the
  prover, aside from one of the two qubits produced by the pseudo-copy
  operation.

\item[3.]
  In the final turn, the verifier $V'$ receives a single qubit.
  It then measures the two qubits it holds (one from the pseudo-copy operation
  and the other received from the prover) against the two-qubit state
  \begin{equation}
    \label{eq:gamma-for-alpha=1/2}
    \ket{\gamma} = \frac{1}{\sqrt{2}} \ket{00} + \frac{1}{\sqrt{2}} \ket{11}
  \end{equation}
  (i.e., with respect to a measurement having measurement operators
  $\ket{\gamma}\bra{\gamma}$ and $\I - \ket{\gamma}\bra{\gamma}$).
  If the measurement outcome is consistent with this target state, it outputs
  1, and otherwise it outputs 0.
\end{mylist}

Consider the case in which there exists a prover $P$ that causes $V$ to accept
with probability exactly 1/2.
It may be assumed that $P$ has been purified, as the purification process
has no effect on the probability of acceptance.
As the acceptance probability of $V$ is 1/2, the final state of the interactive
game, immediately before a measurement of the verifier's output qubit takes
place, must take the following form:
\begin{equation}
  \frac{1}{\sqrt{2}}\ket{0}\ket{\phi_0}+\frac{1}{\sqrt{2}}\ket{1}\ket{\phi_1}.
\end{equation}
Here, the first qubit represents the verifier's output qubit, and the vectors
$\ket{\phi_0}$ and $\ket{\phi_1}$ represent all of the other qubits, including
all of the verifier's private qubits aside from the output qubit and all of the
prover's private qubits.
It need not be the case that $\ket{\phi_0}$ and $\ket{\phi_1}$ are orthogonal,
but they are necessarily unit vectors.

One may now define a new prover $P'$ that causes $V'$ to output 1 with
certainty.
The following description of $P'$ will achieve this goal:
\begin{mylist}{\parindent}
\item[1.]
  The new prover $P'$ behaves precisely as $P$ does for $m$ turns.

\item[2.]
  On turn number $m+1$, the new prover $P'$ receives a collection of
  qubits from $V'$, and is expected to return a single qubit on turn number
  $m+2$.
  The required transformation for $P'$ is as follows:
  \begin{equation}
    \label{eq:last-prover-transformation-perfect-completeness}
    \ket{0}\ket{\phi_0} \mapsto \ket{0}\ket{\psi}
    \qquad\text{and}\qquad
    \ket{1}\ket{\phi_1} \mapsto \ket{1}\ket{\psi},
  \end{equation}
  where $\ket{\psi}$ is any fixed unit vector, and where $\ket{\phi_0}$ and
  $\ket{\phi_1}$ are as above.
  The first qubit, which came from the verifier's pseudo-copy operation, is
  the qubit to be returned on the last turn.
  The sets
  \begin{equation}
    \bigl\{
    \ket{0}\ket{\phi_0},
    \ket{1}\ket{\phi_1}
    \bigr\}
    \qquad\text{and}\qquad
    \bigl\{
    \ket{0}\ket{\psi},
    \ket{1}\ket{\psi}
    \bigr\}
  \end{equation}
  are both orthonormal sets, for any choice of a unit vector $\ket{\psi}$,
  so under the assumption that $\ket{\psi}$ represents the same number of
  qubits as $\ket{\phi_0}$ and $\ket{\phi_1}$, one can extend the
  transformation \eqref{eq:last-prover-transformation-perfect-completeness} to
  a unitary operation.
\end{mylist}

\noindent
For $P'$ being defined in this way, the final state of the interaction
immediately before the measurement $V'$ performs on its last step is
\begin{equation}
  \biggl(\frac{1}{\sqrt{2}} \ket{00} + \frac{1}{\sqrt{2}} \ket{11}\biggr)
  \ket{\psi},
\end{equation}
which causes $V'$ to output 1 with certainty.

It remains to consider the case in which the maximum acceptance probability of
$V$ is less than 1/2.
Let us assume, more precisely, that $V'$ interacts with a prover $P'$ that,
after $m$ turns, would have led $V$ to accept with probability
$1/2 - \varepsilon$ for some choice of $\varepsilon > 0$.
When the verifier $V'$ creates the pseudo-copy of what would have been the
output qubit of $V$ and sends everything to $P'$ aside from one of the qubits
resulting from the pseudo-copy, it is evident that the reduced state of this
single qubit is given by
\begin{equation}
  \rho = \begin{pmatrix}
    \frac{1}{2} + \varepsilon & 0\\
    0 & \frac{1}{2} - \varepsilon 
  \end{pmatrix}.
\end{equation}
Regardless of the actions of $P'$, the acceptance probability of $V'$ must be
given by 
$\fid( \ket{\gamma}\bra{\gamma},\sigma)^2 = \bra{\gamma} \sigma \ket{\gamma}$,
for $\sigma$ being a density operator that represents the state of the two
qubits held by the verifier at the beginning of the last turn of the
protocol. 
Because the prover's actions are unitary it must be the case that $\sigma$
extends $\rho$. 
As the fidelity is monotonically increasing under partial tracing, it is not
possible that the quantity above exceeds
\begin{equation}
  \fid\Biggl(
  \begin{pmatrix}
    \frac{1}{2} & 0\\
    0 & \frac{1}{2}
  \end{pmatrix},
  \begin{pmatrix}
    \frac{1}{2} + \varepsilon & 0\\
    0 & \frac{1}{2} - \varepsilon 
  \end{pmatrix}
  \Biggr)^2 = \frac{1}{2} + \frac{1}{2} \sqrt{1 - 4 \varepsilon^2} 
  < 1 - \varepsilon^2.
\end{equation}

\noindent
The required properties of $V'$, as they relate to $V$, have therefore been
verified.

In the case in which a different value of $\alpha$ is to be considered, the
construction of $V'$ from $V$ is identical aside from the substitution of
\begin{equation}
  \label{eq:perfect-completeness-general-gamma}
  \ket{\gamma} = \sqrt{1-\alpha} \ket{00} + \sqrt{\alpha} \ket{11}
\end{equation}
in place of \eqref{eq:gamma-for-alpha=1/2}.
The analysis is the same, except that one must obtain an upper bound
on the value
\begin{equation}
  \label{eq:squared-fidelity-alpha}
  \fid\Biggl(
  \begin{pmatrix}
    1-\alpha & 0\\
    0 & \alpha
  \end{pmatrix},
  \begin{pmatrix}
    (1-\alpha) + \varepsilon & 0\\
    0 & \alpha - \varepsilon
  \end{pmatrix}\Biggr)^2.
\end{equation}
Writing $\beta = \alpha - \varepsilon$, one may bound the value
\eqref{eq:squared-fidelity-alpha} using the arithmetic-geometric mean
inequality as follows:
\begin{equation}
  \begin{split}
    \fid\Biggl(
    \begin{pmatrix}
      1-\alpha & 0\\
      0 & \alpha
    \end{pmatrix},
    \begin{pmatrix}
      1-\beta & 0\\
      0 & \beta
    \end{pmatrix}\Biggr)^2 \hspace{-4cm}\\
    & = \biggl( \sqrt{\alpha\beta} + \sqrt{(1-\alpha)(1-\beta)} \biggr)^2\\
    & = \alpha \beta + (1-\alpha)(1-\beta) +
    2\sqrt{\alpha\beta(1-\alpha)(1-\beta)}\\
    & \leq 
    \alpha \beta + (1-\alpha)(1-\beta) + \alpha (1-\alpha) + \beta (1-\beta)\\
    & = 1 - (\alpha - \beta)^2\\
    & = 1 - \varepsilon^2.
  \end{split}
\end{equation}

We may now connect the construction described above to the statement of
Theorem~\ref{theorem:QIP-perfect-completeness} in a fairly straightforward way.
For a given promise problem $A\in\class{QIP}_{a,b}(m)$, we have a
polynomial-time computable function $V$ witnessing this inclusion.
The functions (or constants) $a$ and $b$ are polynomial-time computable, so on
a given input string $x$ of length $n$, one may compute a 
dyadic rational number $\alpha$ satisfying the inequalities
\begin{equation}
  \frac{3a + b}{4} \leq \alpha \leq a.
\end{equation}
Intuitively speaking, what this is doing is to truncate a binary representation
of $a$ to obtain $\alpha$, taking sufficiently many bits to leave a reasonably
large gap between $\alpha$ and $b$.
The requirement that $\alpha$ is a dyadic rational guarantees that one may
perform a measurement against a state $\ket{\gamma}$ of the form 
\eqref{eq:perfect-completeness-general-gamma} without error, using the gates
from the universal gate set described in Section~\ref{sec:circuits}.
(If a different set of gates were selected, a suitable choice of $\alpha$ could
be substituted to allow for an error-free computation in this step, provided
that the gate set is a reasonable one.)
Finally, one may take $V'$ to be the function that outputs a description of the
verifier derived from $V$ from the construction above, for the choice of
$\alpha$ that has just been specified.
This is a polynomial-time computable function witnessing the inclusion
$A\in\class{QIP}_{1,c}(m+2)$ for
\begin{equation}
  c \leq 1 - \Biggl( \frac{3(a - b)}{4}\Biggr)^2 \leq
  1 - \frac{1}{2}(a - b)^2,
\end{equation}
as required.

\subsection{Parallelization to three turns}
\label{sec:qip-parallelization}

One of the most striking complexity-theoretic properties of quantum interactive
proof systems, at least insofar as they compare with classical interactive
proof systems, is that they may be \emph{parallelized} to a constant number of
turns without diminishing their computational power.
To be more precise, one has the following theorem.
\begin{theorem}
  \label{theorem:QIP=QIP(3)}
  $\textup{QIP} = \textup{QIP}(3)$.
\end{theorem}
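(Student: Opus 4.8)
The inclusion $\QIP(3)\subseteq\QIP$ is immediate from the definitions, so the content of the theorem is $\QIP\subseteq\QIP(3)$: every polynomial-turn quantum interactive proof system can be compressed to three turns. I would establish this by an explicit, polynomial-time transformation of verifiers carried out in two stages. \emph{Stage one: reduce to perfect completeness.} If $A\in\QIP_{a,b}(m)$ for a polynomially bounded number of turns $m$, then Theorem~\ref{theorem:QIP-perfect-completeness} already puts $A$ in $\QIP_{1,c}(m+2)$ with $c<1$ bounded away from $1$. Using the purification discussion of Section~\ref{sec:qip-purification}, I may also assume that the verifier $V$ and the honest prover are in purified form, so that the honest interaction is a fixed, \emph{deterministic} sequence of unitaries --- verifier unitaries $V_1,\dots,V_n$ interleaved with prover unitaries, applied to the all-zero input --- ending in a perfectly accepting global pure state.

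\emph{Stage two: parallelize to three turns.} This is the heart of the argument, and I would carry it out by a ``snapshot and audit'' protocol. Write $\ket{\xi_0},\dots,\ket{\xi_N}$ for the global pure states after successive turns of the honest run, on registers $\mathcal{V}\otimes\mathcal{M}\otimes\mathcal{P}$ (verifier memory, message channel, prover memory): consecutive states differ by a unitary acting either on $\mathcal{V}\otimes\mathcal{M}$ (a known verifier channel $V_k$) or on $\mathcal{M}\otimes\mathcal{P}$ (an arbitrary prover channel). Two observations drive the construction. First, by determinism of the purified interaction the verifier-side reduced states $\sigma_j=\tr_{\mathcal{P}}\big(\ket{\xi_j}\bra{\xi_j}\big)$ are well defined, and an honest prover can (re)produce any of them by simulating the honest interaction internally. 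Second, by the unitary equivalence of purifications (Theorem~\ref{thm:unitary-equivalence}) a candidate sequence $\sigma_0,\dots,\sigma_N$ of verifier-side states is realized by \emph{some} prover exactly when $\sigma_0$ is the all-zero state, $\sigma_N$ is accepting, every verifier-turn transition equals the correct image $\sigma_j=V_k(\sigma_{j-1})$, and every prover-turn transition leaves the $\mathcal{V}$-marginal fixed --- and, moreover, prover-side unitaries applied to a revealed $\sigma_j$ do not affect any of these conditions. The three-turn verifier $V'$ then works roughly as follows: in turn~1 it receives from the prover a \emph{commitment} to the verifier-side state at a fixed midpoint round; in turn~2 it sends back a uniformly random choice of which half of the run to audit, together with a random parity (the odd- and even-indexed transitions of a half can each be checked simultaneously, since adjacent transitions share a snapshot register); in turn~3 it receives the chain of verifier-side snapshots for the audited half and checks, by swap-test-style comparisons and by running its own circuits $V_k$, that (a)~the revealed chain agrees with the committed midpoint state, (b)~each selected transition is valid, and (c)~the chain's outer endpoint is the all-zero state if the first half was audited, or an accepting state if the second half was. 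Since the honest prover produces the genuine $\sigma_j$'s, every such test passes with certainty, so perfect completeness is preserved.

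For soundness, suppose $\omega(V)\le c<1$. A cheating prover is forced in turn~1 to commit to some verifier-side midpoint state $\sigma$, after which the rest of the interaction is pinned down as far as $V'$'s tests are concerned, since prover-side unitaries are irrelevant to (a)--(c). If $V'$ never detected a violation, then (a) would force the revealed first- and second-half chains to agree with $\sigma$ at the midpoint, (b) would force them to consist of valid transitions, and (c) would make the first begin at the all-zero state and the second end accepting; splicing the two at $\sigma$ and applying observation two would yield an honest prover driving $V$ to acceptance with probability $1$, contradicting $\omega(V)<1$. Turning this qualitative contradiction into a quantitative bound $\omega(V')\le 1-\delta$, with $\delta$ a fixed (inverse-polynomial at worst) function of $1-c$ --- absorbing the quadratic loss inherent in the fidelity/Fuchs--van de Graaf relationship underlying the swap tests, and the at-most-constant loss from auditing only one half and one parity of transitions --- is the technical core of the proof, and the step I expect to be the main obstacle: one must make the commitment mechanism precise enough that the honest prover can meet it while a dishonest prover is provably bound by it, and then propagate the fidelity estimates through every branch of the audit. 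The result is the inclusion $A\in\QIP_{1,\,1-\delta}(3)$, and combining it with error reduction for three-turn quantum interactive proof systems --- which amplifies the completeness--soundness gap while preserving the three-turn structure, cf.\ the parallel-repetition results discussed later in this chapter --- gives $A\in\QIP_{2/3,1/3}(3)=\QIP(3)$, as required.
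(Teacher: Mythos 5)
Your overall outline (first force perfect completeness at the cost of two turns, then parallelize, then amplify the gap by repetition that preserves the number of turns) matches the structure of the actual proof, and your instinct that the prover should hand over a midpoint state and the verifier should randomly decide to run the game forward or backward is exactly the germ of the correct construction. But the specific mechanism you build around it---a ``snapshot and audit'' protocol in which the prover sends a chain of claimed verifier-side states $\sigma_0,\dots,\sigma_N$ and the verifier spot-checks transitions by ``swap-test-style comparisons''---does not work, and this is not a technicality you can expect to patch in the quantitative phase. The snapshots are in general \emph{mixed} states, and a verifier holding a single copy of two unknown mixed states cannot test whether they are equal or close: the swap test accepts two copies of the same highly mixed state with probability barely above $1/2$, so condition (a) and the per-transition checks in (b) have neither perfect completeness nor useful soundness. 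Worse, what you actually need is that the revealed snapshots are \emph{jointly} consistent with a single global history in which the prover holds purifications; verifying that a list of claimed reduced states is consistent with one global state is essentially the Consistency of Density Operators problem, which is \class{QMA}-hard, and no-cloning prevents the ``share a snapshot register between adjacent checks'' bookkeeping from being replaced by copies. Your soundness argument is also only qualitative: the first-half and second-half chains are produced in different branches of the audit (different measurements on the prover's state), so ``splicing'' them at the committed midpoint gives a contradiction only if all tests pass with probability exactly $1$; turning near-acceptance into an honest strategy is precisely the step that needs a quantitative tool, and your protocol offers none.

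The paper's construction avoids every one of these issues by never comparing quantum states at all. In one halving step, the prover sends the \emph{actual} midpoint registers $(\reg{Z}_{t-1},\reg{Y}_{t-1})$; the verifier flips one coin and then either continues the original protocol forward on those very registers (checking $V$'s acceptance) or runs the verifier unitaries in reverse (checking that $\reg{Z}_0$ returns to all zeros). Consistency is enforced physically, not by testing, and soundness is quantified through Uhlmann's theorem together with the Spekkens--Rudolph identity $\max_{\rho}(\fid(\sigma_0,\rho)^2+\fid(\sigma_1,\rho)^2)=1+\fid(\sigma_0,\sigma_1)$, giving $\omega(V')=\tfrac12+\tfrac12\sqrt{\omega(V)}$, i.e.\ a loss of only a factor $4$ in the soundness gap per step. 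Crucially, this step only halves the number of turns, so it is iterated $O(\log m)$ times to reach three turns with gap $\varepsilon/m^2$, which parallel repetition (via SDP duality, preserving both perfect completeness and the three-turn structure) then amplifies. In short: the correct proof replaces your single-shot transcript audit---a classical idea that the paper explicitly notes has no quantum analogue, since a quantum interaction has no well-defined transcript---by an iterated cut-and-choose on live registers, and that substitution is exactly what your proposal is missing.
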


\noindent
That is, any promise problem having a polynomial-turn quantum interactive proof
system necessarily has a three-turn quantum interactive proof system as well.
It is an open question whether an analogous statement holds classically, but if
such a statement were true it would imply the collapse of the polynomial-time
hierarchy \cite{Babai85,GoldwasserS89}---and so it is viewed by many as
being unlikely, and is closely connected to the most central open problems of
computational complexity.

There are essentially three steps required to prove
Theorem~\ref{theorem:QIP=QIP(3)}, only one of which is directly concerned with
the parallelization process itself.
The first step involves the transformation of a given quantum interactive proof
system to one having perfect completeness, as was discussed in the previous
subsection; the second step is the parallelization step, which will be
discussed in the present subsection; and the final step is concerned with error
reduction, which will be discussed in the section following this one.
Each of these steps may be represented by an efficient transformation applied
to a quantum circuit description of a verifier in an interactive game, and by
combining them in the most natural way the relationship
$\textup{QIP} = \textup{QIP}(3)$ is obtained.

For the remainder of the present subsection, we will consider an efficient
transformation that operates as follows.
It is assumed that an $m$-turn verifier $V$ in a quantum interactive game
is given, where $m$ may be arbitrary.
From this verifier $V$, a new verifier $V'$ is constructed that has the
following properties:
\begin{mylist}{\parindent}
\item[1.]
  $V'$ is a three-turn verifier.
\item[2.]
  If it is the case that $\omega(V) = 1$, then $\omega(V') = 1$ as well.
\item[3.]
  If it is the case that $\omega(V) \leq 1 - \varepsilon$, then
  $\omega(V') \leq 1 - \varepsilon/m^2$.
\end{mylist}

There are, in fact, multiple constructions known to parallelize quantum
interactive proof systems in this way.
We will describe a particularly simple construction of Kempe, Kobayashi,
Matsumoto, and Vidick \cite{KempeKMV09}, which is well-suited to the
presentation of quantum interactive proof systems through the interactive games
framework that has been adopted in this survey.
The essential idea of the construction is to iteratively transform a verifier
in a quantum interactive game into a new verifier having roughly half as many
turns, using a cut-and-choose style argument.
Each iteration will result in at most a constant-factor increase in the
size of the verifier descriptions, so the transformation may be iterated
logarithmically many times to reduce the number of turns to a constant.
The method cannot be used to reduce the number of turns below three.

With such an iterative approach in mind, suppose that $V$ is an $m$-turn
verifier, for $m$ taking the form
\begin{equation}
  m = 2^{r+1} + 1
\end{equation}
for some positive integer $r$.
For cases in which $m$ does not take this form, one may simply add dummy
turns that have no influence on the output of $V$.
(In general, the addition of such dummy turns will slightly less than double
the number of turns, and does not need to be
iterated---it is only done once at the beginning of the iterative process.)

Under the assumption that $m = 2^{r+1}+1$, actions of $V$ are specified by an
$n$-tuple $(V_1,\ldots,V_n)$ for $n = (m+1)/2 = 2^r + 1$.
It will be assumed that each $V_k$ is a unitary operator of the form
\begin{equation}
  V_k \in \Unitary(\Z_{k-1}\otimes\Y_{k-1},\Z_k\otimes\X_k);
\end{equation}
if this is not the case, then the purification procedure described in
Section~\ref{sec:qip-purification} may be applied.
It will also be assumed that every one of the registers
$\reg{X}_1,\reg{Y}_1,\ldots,\reg{X}_{n-1},\reg{Y}_{n-1}$ comprises exactly the
same number of qubits, which is a constraint that is easily met by adding dummy
qubits to registers as needed.
Now consider the following verifier that is derived from $(V_1,\ldots,V_n)$.

\begin{mylist}{\parindent}
\item[1.]
  Receive the pair of registers $(\reg{Z}_{t-1},\reg{Y}_{t-1})$ from the
  prover, where $t = 2^{r-1}+1$.

\item[2.]
  Choose a bit $a\in\{0,1\}$ uniformly at random.
  If $a = 0$, the original interactive game will be run forward in time,
  while if $a=1$, the original interactive game will be run backward in time.
  In either case, the bit $a$ is concatenated to the first message to be sent
  by the verifier to the prover (so that the prover knows which direction in
  time the game will be run).

\item
  {\emph{Forward} ($a = 0$):}
  Operate precisely at the original verifier $V$ operates, as if the register
  $\reg{Y}_{t-1}$ has just been received from the prover.
  The messages exchanged in the remainder of the interaction therefore
  correspond to the registers $\reg{X}_t$, $\reg{Y}_t$, \ldots,
  $\reg{X}_{n-1}$, $\reg{Y}_{n-1}$.
  The acceptance condition for $V'$ is the same as that of $V$.

\item
  {\emph{Backward} ($a = 1$):}
  Send $\reg{Y}_{t-1}$ back to the prover.
  Each subsequent action of $V'$ is the inverse of an action of $V$, and the
  actions are taken in the reverse order.
  In the turn immediately after $\reg{Y}_{t-1}$ is sent back to the prover,
  the verifier $V'$ expects to receive $\reg{X}_{t-1}$, it applies
  $V_{t-1}^{-1}$ to $(\reg{Z}_{t-1},\reg{X}_{t-1})$, obtaining
  $(\reg{Z}_{t-2},\reg{Y}_{t-2})$, and sends $\reg{Y}_{t-2}$ to the prover.
  This pattern continues until the verifier receives the register $\reg{X}_1$.
  The overall sequence of messages exchanged in this case therefore corresponds
  to $\reg{Y}_{t-1}$, $\reg{X}_{t-1}$, \ldots, $\reg{Y}_1$, $\reg{X}_1$.
  The verifier applies $V_1^{-1}$ to the pair $(\reg{Z}_1,\reg{X}_1)$,
  obtaining $(\reg{Z}_0,\reg{Y}_0)$, and outputs 1 (accept) if and only if a
  measurement of each qubit of $\reg{Z}_0$ in the standard basis yields 0.

\end{mylist}

In the case that the random bit $a$ is equal to 0, the total number of turns in
the protocol is $2(n-t) + 1$ while the number of turns is $2(t-1) + 1$ in case
$a = 1$.
One has $2(n - t) + 1 = 2^r + 1 = 2(t-1) + 1$, and therefore $V'$ is a 
$(2^r + 1)$-turn verifier.

It remains to consider the relationship between $\omega(V)$ and $\omega(V')$.
It is evident that, if $\omega(V) = 1$, then $\omega(V') = 1$ as well, for if
there exists a prover $P$ that causes $V$ to output 1 with certainty, then one
may obtain a prover $P'$ causing $V'$ to output 1 with certainty by adapting a
unitary purification of $P$ in the most straightforward way.
That is, $P'$ initially prepares the registers
$(\reg{Z}_{t-1},\reg{Y}_{t-1},\reg{W}_{t-1})$ in the pure state in which they
would have been, had the unitary purification of $P$ interacted with $V$ up to
this point in the game; and then $P'$ runs the unitary purification of $P$
forward or backward appropriately.

In the case that $\omega(V)$ is smaller than 1, we may obtain an upper bound
on $\omega(V')$ by focusing on the possible states of the register
$\reg{Z}_{t-1}$, over all possible choices of a prover interacting with $V$.
To be clear, we are considering the possible states of $\reg{Z}_{t-1}$ viewed
in isolation, which will generally be mixed states; $\reg{Y}_{t-1}$ and
$\reg{W}_{t-1}$ are to be viewed as having been traced out.
Let us, in particular, consider two sets of states
$\C_0,\C_1\subseteq\Density(\Z_{t-1})$ of the register $\reg{Z}_{t-1}$.
The set $\C_0$ represents all possible states of this register that could be
reached by some prover interacting with $V$, while $\C_1$ represents all
possible states of this register that could, under the actions of some possibly
different prover, lead to $V$ outputting 1 with certainty.
A fairly direct application of Uhlmann's theorem
(Theorem~\ref{theorem:Uhlmann}) reveals the expression
\begin{equation}\label{eq:qip-parallelization-fidelity}
  \omega(V) = \max\Bigl\{\fid(\sigma_0,\sigma_1)^2\,:\,
  \sigma_0\in\C_0,\:\sigma_1\in\C_1\Bigr\}.
\end{equation}
Now, a prover interacting with $V'$ must make an initial choice for the state
of the register $\reg{Z}_{t-1}$, and a similar reasoning reveals that the
value $\omega(V')$ of $V'$ is given by the expression
\begin{equation}
  \max\Biggl\{
  \frac{\fid(\sigma_0,\rho)^2 + \fid(\sigma_1,\rho)^2}{2}
  \,:\,
  \sigma_0\in\C_0,\,\sigma_1\in\C_1,\,\rho\in\Density(\Z_{t-1})\Bigr\}.
\end{equation}
Maximizing over $\rho$, one obtains
\begin{equation}
  \begin{split}
  \omega(V') & = \frac{1}{2} + \frac{1}{2}
  \max\Bigl\{\fid(\sigma_0,\sigma_1)\,:\,
  \sigma_0\in\C_0,\:\sigma_1\in\C_1\Bigr\}\\
  & = \frac{1}{2} + \frac{1}{2}\sqrt{\omega(V)}
  \end{split}
\end{equation}
by a sum-of-squares relationship for the fidelity function,
\begin{equation}
  \max_{\rho} \Bigl(\fid(\sigma_0,\rho)^2 + \fid(\sigma_1,\rho)^2\Bigr)
  = 1 + \fid(\rho_0,\rho_1),
\end{equation}
due to Spekkens and Rudolph \cite{SpekkensR01}.
If it is the case that $\omega(V) \leq 1 - \varepsilon$ for some choice of
$\varepsilon > 0$, then it follows that
\begin{equation}
  \omega(V') \leq \frac{1}{2} + \frac{1}{2} \sqrt{1 - \varepsilon}
  \leq 1 - \frac{\varepsilon}{4}.
\end{equation}

When this method is applied iteratively $r$ times, a three-turn verifier $V'$
is obtained that satisfies
\begin{equation}
  \omega(V') \leq 1 - \frac{\varepsilon}{4^r} \leq 1 - \frac{\varepsilon}{m^2}.
\end{equation}
(The second inequality also accounts for the possibility that dummy turns were
initially added to $V$.)
As each iteration of the procedure described above results in at most a
constant factor increase in the size of the description of the verifier,
iterating it $r$ times gives a polynomial-time procedure.

\begin{theorem}
  For every polynomially bounded function $m$ and every function
  $\varepsilon:\natural\rightarrow[0,1]$, it holds that
  \begin{equation}
    \class{QIP}_{1,1-\varepsilon}(m) \subseteq
    \class{QIP}_{1,1-\delta}(3)
  \end{equation}
  for $\delta = \varepsilon/m^2$.
\end{theorem}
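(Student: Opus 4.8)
The plan is to observe that the statement follows by iterating the turn-halving construction developed above. Let $A = (A_{\yes},A_{\no}) \in \class{QIP}_{1,1-\varepsilon}(m)$, witnessed by a polynomial-time computable function $V$ that maps each input $x$ to an $m(\abs{x})$-turn verifier $V(x)$ with $\omega(V(x)) = 1$ when $x \in A_{\yes}$ and $\omega(V(x)) \leq 1 - \varepsilon(\abs{x})$ when $x \in A_{\no}$. First I would, on input $x$, pad the interaction with dummy turns (which have no effect on the value) so that the number of turns takes the form $2^{r+1}+1$ for the smallest suitable $r = O(\log m(\abs{x}))$; this padding at most roughly doubles the number of turns and still permits the bound $4^{r} \leq m(\abs{x})^2$.

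Next I would apply the transformation $V \mapsto V'$ from Section~\ref{sec:qip-parallelization} exactly $r$ times. A single application carries a $(2^{k+1}+1)$-turn verifier with perfect completeness to a $(2^{k}+1)$-turn verifier, preserves $\omega = 1$ exactly (the honest prover's purified strategy can be run forward or backward without error), and, by the fidelity computation carried out above---which combines Uhlmann's theorem (Theorem~\ref{theorem:Uhlmann}) with the Spekkens--Rudolph sum-of-squares identity for the fidelity---sends a soundness bound of $1 - \eta$ to $\tfrac{1}{2} + \tfrac{1}{2}\sqrt{1-\eta} \leq 1 - \eta/4$. Iterating this $r$ times starting from $\eta = \varepsilon(\abs{x})$ yields a three-turn verifier whose value is $1$ on yes-instances and at most $1 - \varepsilon(\abs{x})/4^{r} \leq 1 - \varepsilon(\abs{x})/m(\abs{x})^2 = 1 - \delta(\abs{x})$ on no-instances.

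Finally I would verify the uniformity and efficiency requirements. Each application of the construction increases the size of the verifier's circuit description by at most a constant factor, and the initial dummy-turn padding is likewise a constant-factor blow-up, so composing the $r = O(\log m)$ iterations gives a description of polynomial size that is computable from $x$ in polynomial time. This is precisely a polynomial-time computable function witnessing $A \in \class{QIP}_{1,1-\delta}(3)$.

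The only point requiring genuine care is the bookkeeping around the dummy turns: one must confirm that padding to the form $2^{r+1}+1$ and then running the halving construction $r$ times really does terminate at a three-turn (rather than a one-turn) verifier, and that the accumulated constant satisfies $\delta \leq \varepsilon/m^2$ rather than something like $\varepsilon/(4 m^2)$. Here the slack in the inequality $\tfrac{1}{2} + \tfrac{1}{2}\sqrt{1-\eta} \leq 1 - \eta/4$ (the true value is closer to $1 - \eta/2$) is what absorbs the loss from padding. I do not expect any genuinely new difficulty beyond what has already been established in the preceding subsection.
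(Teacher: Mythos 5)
Your proposal is correct and follows essentially the same route as the paper: pad with dummy turns so the turn count has the form $2^{r+1}+1$, iterate the turn-halving construction $r = O(\log m)$ times (each iteration preserving perfect completeness, mapping a soundness gap $\eta$ to at least $\eta/4$ via the Uhlmann/Spekkens--Rudolph fidelity argument, and incurring only a constant-factor blow-up in the verifier's description), and absorb the padding loss into the slack of $4^r \leq m^2$. The bookkeeping you flag is exactly how the paper handles it, so no further work is needed.
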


The transformation from $V$ to $V'$ described above can be applied to a
three-turn verifier $V$.
In the interactive game that results, the verifier $V'$ will again be a
three-turn verifier, but it will also have the interesting property that the
only message it sends to the prover is the single random bit~$a$.
The prover sends the register $\reg{Z}_1$ to $V'$ as its first message, and the
register $\reg{Y}_1$ (in case $a=0$) or $\reg{X}_1$ (in case $a=1$) is sent in
the third turn.
The result is that any problem in $\class{QIP}$ also has a three-turn
\emph{public-coin} quantum interactive proof system:
\begin{equation}
  \class{QIP} = \class{QMAM},
\end{equation}
where $\class{QMAM}$ is the class of promise problems having three-turn
public-coin quantum interactive proofs.
We will return to this transformation in the context of multi-prover
interactive games in Chapter~\ref{chapter:multiple-provers}.

\section{SDPs for interactive games and parallel repetition}
\label{sec:qip-sdp}

This section explains how the optimization over all prover strategies in an
interactive game may be expressed as a semidefinite program.
When a generic semidefinite programming algorithm (such as the ellipsoid
algorithm) is applied to such a semidefinite programming formulation, the
relation
\begin{equation}
  \label{eq:QIP-in-EXP}
  \class{QIP}\subseteq\class{EXP}
\end{equation}
is easily obtained.
(The semidefinite programs obtained from a given problem $A\in\class{QIP}$
are of size exponential in the input to $A$, so a polynomial-time algorithm for
solving semidefinite programs gives an exponential time algorithm for $A$.)
The relation \eqref{eq:QIP-in-EXP} can be improved to
\begin{equation}
  \class{QIP} = \class{PSPACE},
\end{equation}
as will be explained in the next section---but the underlying ideas behind this
result are closely connected with the semidefinite programming formulation to
be discussed below.

A different fact that emerges from this semidefinite programming formulation 
is that single-prover quantum interactive proof systems
having perfect completeness possess the property of perfect parallel
repetition.
This is proved through semidefinite programming duality and will be described
later in this section.

\subsection*{Semidefinite programs for optimizing over prover strategies}

Consider an arbitrary verifier in an interactive game having any number of
turns.
One may consider an optimization over all possible prover strategies against
this verifier---we take the probability that the verifier outputs 1 as
the objective function to be maximized, so that the optimal value is
$\omega(V)$.

There are two distinct formulations of this optimization problem as a
semidefinite program that are known.
We will focus on just one of these formulations, in which the variables of a
semidefinite program represent states of the various registers of the
interactive game at different moments in time.
(The other formulation uses the variables of a semidefinite program to
represent the prover's actions through the use of the Choi representation of
channels.)

Assume hereafter that an $m$-turn verifier $V$ has been fixed.
It will be assumed that this verifier has been purified, as discussed
previously, so that $V$ is described by an $n$-tuple $(V_1,\ldots,V_n)$
of linear isometries, for $n = \floor{m/2 + 1}$, and where each $V_k$ takes the
form
\begin{equation}
  V_k\in \Unitary(\Z_{k-1}\otimes\Y_{k-1},\Z_k\otimes\X_k).
\end{equation}
In accordance with our default assumption, the registers 
$\reg{Z}_0$ and $\reg{X}_n$ are taken to be trivial---the verifier starts with
no memory at the beginning of the interaction and sends no message to the
prover immediately before making its final decision---so that
$\Z_0 = \complex$ and $\X_n = \complex$.
It is irrelevant to the present discussion whether the number of turns $m$
is even or odd; in the interest of generality, one may assume that an arbitrary
prover $P$ that interacts with $V$ is described by channels $(P_0,\ldots,P_n)$,
with $\reg{Y}_0$ being a trivial register in case the number of turns
happens to be even.

Now consider the possible states of the registers
\begin{equation}
  \reg{Z}_1,\ldots,\reg{Z}_n, \quad
  \reg{Y}_0,\ldots,\reg{Y}_{n-1}, \quad \text{and} \quad
  \reg{X}_1,\ldots,\reg{X}_{n-1},
\end{equation}
taken in groups of one or two for which the registers are co-existing, in an
interaction between $V$ and an arbitrary prover $P$.
For instance, in Figure~\ref{fig:interactive-game-blue-prover}, one may consider
the states of 
$\reg{Y}_0$, $(\reg{Z}_1,\reg{X}_1)$, $(\reg{Z}_1,\reg{Y}_1)$, 
$(\reg{Z}_2,\reg{X}_2)$, $(\reg{Z}_2,\reg{Y}_2)$, and $\reg{Z}_3$ in isolation.
\begin{figure}
  \begin{center} \small
    \begin{tikzpicture}[scale=0.35, 
        turn/.style={draw, minimum height=14mm, minimum width=10mm,
          fill = ChannelColor, text=ChannelTextColor},
        emptyturn/.style={draw, densely dotted, minimum height=14mm, 
          minimum width=10mm, fill=black!4},
        measure/.style={draw, minimum width=7mm, minimum height=7mm,
          fill = ChannelColor},
        >=latex]
      
      \node (V1) at (-8,4) [turn] {$V_1$};
      \node (V2) at (2,4) [turn] {$V_2$};
      \node (V3) at (12,4) [turn] {$V_3$};
      
      \node (M) at (17,4.4) [measure] {};
      
      \node (P0) at (-13,-1) [emptyturn] {$P_0$};
      \node (P1) at (-3,-1) [emptyturn] {$P_1$};
      \node (P2) at (7,-1) [emptyturn] {$P_2$};
      
      \node[draw, minimum width=5mm, minimum height=3.5mm, fill=ReadoutColor]
      (readout) at (M) {};
      
      \draw[thick] ($(M)+(0.3,-0.15)$) arc (0:180:3mm);
      \draw[thick] ($(M)+(0.2,0.2)$) -- ($(M)+(0,-0.2)$);
      \draw[fill] ($(M)+(0,-0.2)$) circle (0.5mm);
      
      \draw[->] ([yshift=4mm]V1.east) -- ([yshift=4mm]V2.west)
      node [above, midway] {$\reg{Z}_1$};
      
      \draw[->] ([yshift=4mm]V2.east) -- ([yshift=4mm]V3.west)
      node [above, midway] {$\reg{Z}_2$};
      
      \draw[->] ([yshift=-4mm]V1.east) .. controls +(right:20mm) and 
      +(left:20mm) .. ([yshift=4mm]P1.west) node [right, pos=0.4] {$\reg{X}_1$};
      
      \draw[->] ([yshift=-4mm]V2.east) .. controls +(right:20mm) and 
      +(left:20mm) .. ([yshift=4mm]P2.west) node [right, pos=0.4] {$\reg{X}_2$};
      
      \draw[->] ([yshift=4mm]P0.east) .. controls +(right:20mm) and 
      +(left:20mm) .. ([yshift=-4mm]V1.west) node [left, pos=0.6] {$\reg{Y}_0$};
      
      \draw[->] ([yshift=4mm]P1.east) .. controls +(right:20mm) and 
      +(left:20mm) .. ([yshift=-4mm]V2.west) node [left, pos=0.6] {$\reg{Y}_1$};
      
      \draw[->] ([yshift=4mm]P2.east) .. controls +(right:20mm) and 
      +(left:20mm) .. ([yshift=-4mm]V3.west) node [left, pos=0.6] {$\reg{Y}_2$};
      
      \draw[->,densely dotted] 
      ([yshift=-4mm]P0.east) -- ([yshift=-4mm]P1.west)
      node [below, midway] {$\reg{W}_0$};
      
      \draw[->,densely dotted] 
      ([yshift=-4mm]P1.east) -- ([yshift=-4mm]P2.west)
      node [below, midway] {$\reg{W}_1$};
      
      \draw[->] ([yshift=4mm]V3.east) -- (M.west)
      node [above, midway] {$\reg{Z}_3$};
      
    \end{tikzpicture}
  \end{center}
  \caption{For a fixed verifier in an interactive game, one may consider
    optimizing over all prover strategies against that verifier.
    The parts of the figure represented by dotted rectangles and arrows could
    be optimized in this example.}
  \label{fig:interactive-game-blue-prover}
\end{figure}
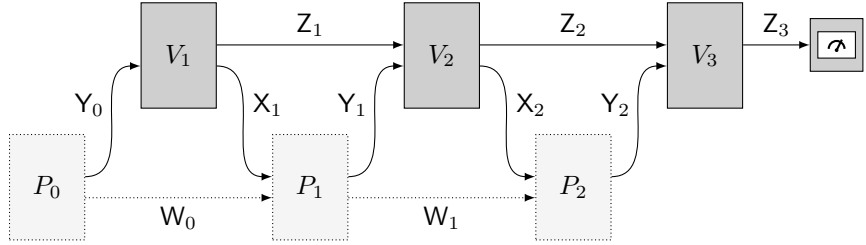
Let us choose names to represent these possible states as follows:
the states immediately prior to verifier actions will correspond to density
operators
\begin{equation}
  \label{eq:sigma-states}
  \sigma_0\in\Density(\Y_0),\,\sigma_1\in\Density(\Z_1\otimes\Y_1),\,
  \ldots,\,\sigma_{n-1}\in\Density(\Z_{n-1}\otimes\Y_{n-1}),
\end{equation}
representing states of 
$\reg{Y}_0,\,(\reg{Z}_1,\reg{Y}_1),\,\ldots,\,(\reg{Z}_{n-1},\reg{Y}_{n-1})$,
while the states immediately after verifier actions will correspond to
density operators
\begin{equation}
  \label{eq:rho-states}
  \rho_1\in\Density(\Z_1\otimes\X_1),\,
  \ldots,\,
  \rho_{n-1}\in\Density(\Z_{n-1}\otimes\X_{n-1}),\,
  \rho_n\in\Density(\Z_n),
\end{equation}
representing states of
$(\reg{Z}_1,\reg{X}_1),\,\ldots,\,(\reg{Z}_{n-1},\reg{X}_{n-1}),\,\reg{Z}_n$.

If it is the case that the states \eqref{eq:sigma-states} and
\eqref{eq:rho-states} truly arise from an interaction between $V$ and a
legitimate prover $P$, then these states must evidently obey certain simple
constraints.
There are two essential types of constraints, which are as follows:
\begin{mylist}{\parindent}
\item[1.]
  For every choice of $k\in\{1,\ldots,n-1\}$, the state of the register
  $\reg{Z}_k$, when it is viewed in isolation, must be the same with respect to
  both $\rho_k$ and $\sigma_k$.
  This is so because a prover interacting with $V$ cannot touch these
  registers.
  It must therefore hold that
  \begin{equation}
    \label{eq:SDP-constraint-type-1}
    \tr_{\X_k}(\rho_k) = \tr_{\Y_k}(\sigma_k).
  \end{equation}

\item[2.]
  For every choice of $k\in\{1,\ldots,n\}$, it must hold that
  \begin{equation}
    \label{eq:SDP-constraint-type-2}
    \rho_k = V_k \sigma_{k-1} V_k^{\ast}.
  \end{equation}
  This is so because the transition from $\sigma_k$ to $\rho_k$ is completely
  determined by the verifier's action at the corresponding position in the
  interaction.
\end{mylist}

\noindent
It is quite straightforward to see that these two types of constraints
must necessarily hold when the states \eqref{eq:sigma-states} and
\eqref{eq:rho-states} arise from an interaction between $V$ and some prover
$P$.
What is more remarkable is that these constraints are not only necessary but
sufficient in this regard.
That is, for any collection of states $\sigma_0,\ldots,\sigma_{n-1}$ and
$\rho_1,\ldots,\rho_n$, having the forms \eqref{eq:sigma-states} and
\eqref{eq:rho-states} and satisfying the constraints
\eqref{eq:SDP-constraint-type-1} and \eqref{eq:SDP-constraint-type-2}
for all of the possible values of $k$ indicated above, it must hold that there
exists a prover $P$ that causes these states to occur at their corresponding
positions in an interaction with $V$.\footnote{
  The reader is cautioned this statement is very much reliant on the assumption
  that $V$ has been purified: $V_1,\ldots,V_n$ are isometries and not general
  channels.}

That the constraints above are indeed sufficient in the respect just described
follows from the unitary equivalence of purifications
(Theorem~\ref{thm:unitary-equivalence}).
Under the assumption that a prover holds a purification of the state
of $\reg{Z}_k$ in registers $(\reg{X}_k,\reg{W}_{k-1})$, meaning that the state
of $(\reg{Z}_k,\reg{X}_k,\reg{W}_{k-1})$ is pure, it is free to transform the
state of these registers into any pure state of
$(\reg{Z}_k,\reg{Y}_k,\reg{W}_k)$ whatsoever, provided that the
state of $\reg{Z}_k$, when viewed in isolation, does not change.
In particular, the prover may transform these registers in such a way that
the state of the pair $(\reg{Z}_k,\reg{Y}_k)$ has been transformed to
$\sigma_k$, by virtue of the fact that the reduced state of $\reg{Z}_k$ is the
same for both of these density operators.
In doing this, the prover holds a purification of $\sigma_k$ in the register
$\reg{W}_k$, and is ready to perform the transformation corresponding to the
next step in the interaction.

It is now evident that a maximization of the probability for a prover to cause
the verifier to output 1 can be represented as a semidefinite program.
The probability that the verifier outputs 1 is given by a linear function 
$\ip{\Pi}{\rho_n}$ of the state of $\reg{Z}_n$, for $\Pi$ being a measurement
operator that corresponds to the verifier outputting 1.
One could then formulate a semidefinite program that maximizes this
value over all choices of density operators
$\sigma_0, \ldots, \sigma_{n-1}, \rho_1,\ldots,\rho_n$ of the forms
\eqref{eq:sigma-states} and \eqref{eq:rho-states} satisfying the constraints
described above; density operators must be positive semidefinite and trace 1,
and all of the constraints described above are linear, which allows for such a
semidefinite program.

In the interests of simplicity, one may omit the variables corresponding to the
states $\rho_1,\ldots,\rho_n$, as these states are determined by
$\sigma_0,\ldots,\sigma_{n-1}$.
We obtain the semidefinite program whose primal form is given in
Figure~\ref{fig:SDP-primal}.
One may compute that the corresponding dual form of this semidefinite program
is as given in Figure~\ref{fig:SDP-dual}.

\begin{figure}
  \begin{align*}
    \text{maximize:} \quad & \bigip{V_n^{\ast} \Pi V_n}{\sigma_{n-1}}\\
    \text{subject to:}\quad 
    & \tr(\sigma_0) = 1,\\
    & \tr_{\Y_1}(\sigma_1) = \tr_{\X_1}\bigl(V_1 \sigma_0 V_1^{\ast}\bigr),\\
    & \qquad\vdots\\
    & \tr_{\Y_{n-1}}(\sigma_{n-1}) 
    = \tr_{\X_{n-1}}\bigl(V_{n-1} \sigma_{n-2} V_{n-1}^{\ast}\bigr),\\
    & \sigma_0\in\Pos(\Y_0)\\
    & \sigma_1\in\Pos(\Z_1\otimes\Y_1),\\
    & \qquad\vdots\\
    & \sigma_{n-1}\in\Pos(\Z_{n-1}\otimes\Y_{n-1}).
  \end{align*}
  \caption{Primal form of a semidefinite program for computing $\omega(V)$.}
  \label{fig:SDP-primal}
\end{figure}

\begin{figure}
  \begin{align*}
    \text{minimize:} \quad & \lambda\\
    \text{subject to:}\quad 
    & \lambda \I_{\Y_0} \geq V_1^{\ast}(Z_1\otimes\I_{\X_1})V_1,\\
    & Z_1 \otimes \I_{\Y_1} \geq V_2^{\ast}(Z_2\otimes\I_{\X_2})V_2,\\
    & \qquad\vdots\\
    & Z_{n-2} \otimes \I_{\Y_{n-2}} \geq
    V_{n-1}^{\ast}(Z_{n-1}\otimes\I_{\X_{n-1}})V_{n-1},\\
    & Z_{n-1} \otimes \I_{\Y_{n-1}} \geq V_n^{\ast}\Pi V_n,\\
    & \lambda\in\real\\
    & Z_1\in\Herm(\Z_1),\\
    & \qquad\vdots\\
    & Z_{n-1}\in\Herm(\Z_{n-1}).
  \end{align*}
  \caption{Dual form of a semidefinite program for computing $\omega(V)$.}
  \label{fig:SDP-dual}
\end{figure}

It is evident that strong duality holds for this semidefinite program;
by choosing the dual variables $\lambda,Z_1,\ldots,Z_{n-1}$ to be suitably
large scalar multiples of the identity, a strictly feasible dual solution is
obtained, which leads to strong duality by Slater's theorem.

\subsection*{Parallel repetition}

Consider the situation in which a prover plays two independent interactive
games, as suggested by Figure~\ref{fig:parallel-repetition}.
The property of \emph{parallel repetition} concerns the optimal probability
that a prover may win both games simultaneously, and how this optimal
probability compares with the optimal probabilities with which the two games
may be won individually.
One may also consider the situation in which three or more games are played
in parallel, but once the behavior is understood for two independent games, the
general case will follow (either by induction or by generalizing the
methodology in the most straightforward way).
\begin{figure}
  \begin{center} \small
    \begin{tikzpicture}[scale=0.35, 
        turn/.style={draw, minimum height=14mm, minimum width=10mm,
          fill = ChannelColor, text=ChannelTextColor},
        emptyturn/.style={draw, densely dotted, minimum height=14mm, 
          minimum width=10mm, fill=black!6},
        measure/.style={draw, minimum width=7mm, minimum height=7mm,
          fill = ChannelColor},
        >=latex]
      
      \node (V1) at (-8,4) [turn] {$V_1^1$};
      \node (V2) at (2,4) [turn] {$V_2^1$};
      \node (V3) at (12,4) [turn] {$V_3^1$};
      
      \node (W1) at (-8,-6) [turn] {$V_1^2$};
      \node (W2) at (2,-6) [turn] {$V_2^2$};
      \node (W3) at (12,-6) [turn] {$V_3^2$};
      
      \node (M) at (17,4.4) [measure] {};

      \node (N) at (17,-6.4) [measure] {};
      
      \node (P0) at (-13,-1) [emptyturn] {$P_0$};
      \node (P1) at (-3,-1) [emptyturn] {$P_1$};
      \node (P2) at (7,-1) [emptyturn] {$P_2$};
      
      \node[draw, minimum width=5mm, minimum height=3.5mm, fill=ReadoutColor]
      (readout) at (M) {};
      
      \draw[thick] ($(M)+(0.3,-0.15)$) arc (0:180:3mm);
      \draw[thick] ($(M)+(0.2,0.2)$) -- ($(M)+(0,-0.2)$);
      \draw[fill] ($(M)+(0,-0.2)$) circle (0.5mm);
      
      \node[draw, minimum width=5mm, minimum height=3.5mm, fill=ReadoutColor]
      (readout) at (N) {};
      
      \draw[thick] ($(N)+(0.3,-0.15)$) arc (0:180:3mm);
      \draw[thick] ($(N)+(0.2,0.2)$) -- ($(N)+(0,-0.2)$);
      \draw[fill] ($(N)+(0,-0.2)$) circle (0.5mm);
      
      \draw[->] ([yshift=4mm]V1.east) -- ([yshift=4mm]V2.west)
      node [above, midway] {$\reg{Z}_1^1$};
      
      \draw[->] ([yshift=4mm]V2.east) -- ([yshift=4mm]V3.west)
      node [above, midway] {$\reg{Z}_2^1$};
      
      \draw[->] ([yshift=-4mm]V1.east) .. controls +(right:20mm) and 
      +(left:20mm) .. ([yshift=6mm]P1.west) node [right, pos=0.4] 
      {$\reg{X}_1^1$};
      
      \draw[->] ([yshift=-4mm]V2.east) .. controls +(right:20mm) and 
      +(left:20mm) .. ([yshift=6mm]P2.west) node [right, pos=0.4] 
      {$\reg{X}_2^1$};
      
      \draw[->] ([yshift=6mm]P0.east) .. controls +(right:20mm) and 
      +(left:20mm) .. ([yshift=-4mm]V1.west) node [left, pos=0.6] 
      {$\reg{Y}_0^1$};
      
      \draw[->] ([yshift=6mm]P1.east) .. controls +(right:20mm) and 
      +(left:20mm) .. ([yshift=-4mm]V2.west) node [left, pos=0.6] 
      {$\reg{Y}_1^1$};
      
      \draw[->] ([yshift=6mm]P2.east) .. controls +(right:20mm) and 
      +(left:20mm) .. ([yshift=-4mm]V3.west) node [left, pos=0.6]
      {$\reg{Y}_2^1$};
     
      \draw[->] ([yshift=4mm]V3.east) -- (M.west)
      node [above, midway] {$\reg{Z}_3^1$};
      
      \draw[->] ([yshift=-4mm]W1.east) -- ([yshift=-4mm]W2.west)
      node [below, midway] {$\reg{Z}_1^2$};
      
      \draw[->] ([yshift=-4mm]W2.east) -- ([yshift=-4mm]W3.west)
      node [below, midway] {$\reg{Z}_2^2$};
      
      \draw[->] ([yshift=4mm]W1.east) .. controls +(right:20mm) and 
      +(left:20mm) .. ([yshift=-6mm]P1.west) node [right,pos=0.4] 
      {$\reg{X}_1^2$};
      
      \draw[->] ([yshift=4mm]W2.east) .. controls +(right:20mm) and 
      +(left:20mm) .. ([yshift=-6mm]P2.west) node [right,pos=0.4]
      {$\reg{X}_2^2$};
      
      \draw[->] ([yshift=-6mm]P0.east) .. controls +(right:20mm) and 
      +(left:20mm) .. ([yshift=4mm]W1.west) node [left, pos=0.6] 
      {$\reg{Y}_0^2$};
      
      \draw[->] ([yshift=-6mm]P1.east) .. controls +(right:20mm) and 
      +(left:20mm) .. ([yshift=4mm]W2.west) node [left, pos=0.6] 
      {$\reg{Y}_1^2$};
      
      \draw[->] ([yshift=-6mm]P2.east) .. controls +(right:20mm) and 
      +(left:20mm) .. ([yshift=4mm]W3.west) node [left, pos=0.6]
      {$\reg{Y}_2^2$};
      
      \draw[->] ([yshift=-4mm]W3.east) -- (N.west)
      node [above, midway] {$\reg{Z}_3^2$};

      \draw[->,densely dotted] 
      ([yshift=0mm]P0.east) -- ([yshift=0mm]P1.west)
      node [below, midway] {$\reg{W}_0$};
      
      \draw[->,densely dotted] 
      ([yshift=0mm]P1.east) -- ([yshift=0mm]P2.west)
      node [below, midway] {$\reg{W}_1$};
      
    \end{tikzpicture}
  \end{center}
  \caption{A prover $P$, described by channels $P_0$, $P_1$, and $P_2$
    plays two interactive games simultaneously, one against a verifier
    described by $(V_1^1,V_2^1,V_3^1)$ and the other described by
    $(V_1^2,V_2^2,V_3^2)$.
    While the two verifiers behave independently, there is nothing that forces
    the prover $P$ to treat the two games independently.}
  \label{fig:parallel-repetition}
\end{figure}
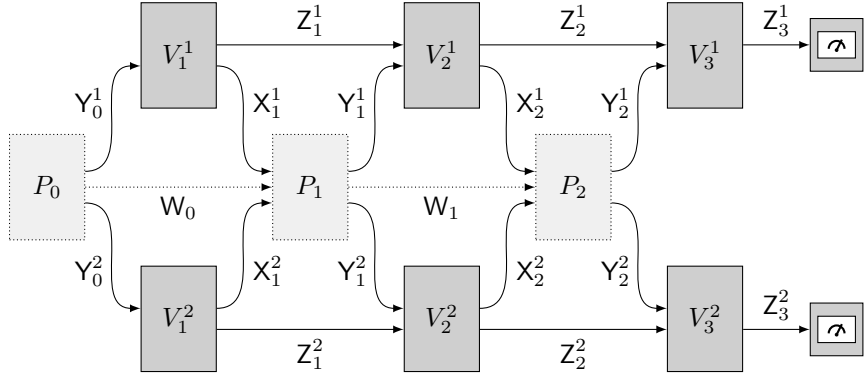

\pagebreak

In more precise terms, suppose that two verifiers
$V^1 = (V^1_1,\ldots,V^1_n)$ and $V^2 = (V^2_1,\ldots,V^2_n)$ 
for interactive games are given.
It is assumed that the verifiers agree on the number of turns they exchange
with a prover for simplicity---and if the two verifiers did not agree on the
number of turns, there would be nothing lost in the discussion that follows
by adding dummy turns to one of them in order to meet this condition.
Let us write\footnote{
  The notation $V^1\otimes V^2$ is slightly abusive, as it is not immediate how
  one tensors two verifiers in a formal sense, but it is nevertheless a
  reasonably natural notation;
  each of the actions performed by this combined verifier corresponds to a
  tensor product of channels, and similarly for the measurement operator
  corresponding to acceptance.}
$V^1\otimes V^2$ to denote the verifier obtained by running $V^1$
and $V^2$ in parallel, and defining the output bit of the combined prover to be
the AND of the output bits of $V^1$ and $V^2$.

It is evident that the optimal acceptance probability of the verifiers
$V^1$, $V^2$, and $V^1\otimes V^2$ satisfies
\begin{equation}
  \omega\bigl(V^1 \otimes V^2\bigr) \geq \omega\bigl(V^1\bigr)
  \omega\bigl(V^2\bigr),
\end{equation}
as a prover may achieve the acceptance probability $\omega(V^1)\omega(V^2)$
against $V^1 \otimes V^2$ simply by playing optimally and independently
against $V^1$ and $V^2$.
Given that a prover need not treat the two games independently, it is not
obvious that this inequality can be replaced by an equality in general.
This is, however, the case: for every choice of verifiers $V^1$ and $V^2$, one
has
\begin{equation}
  \omega\bigl(V^1 \otimes V^2\bigr) = \omega\bigl(V^1\bigr)
  \omega\bigl(V^2\bigr).
\end{equation}
More generally, for $V^1 \otimes \cdots \otimes V^k$ being a combined verifier
defined from any choice of verifiers $V^1,\ldots,V^k$ in the most natural way,
one has
\begin{equation}
  \omega\bigl(V^1 \otimes \cdots \otimes V^k\bigr) 
  = \omega\bigl(V^1\bigr) \cdots \omega\bigl(V^k\bigr).
\end{equation}

One way to prove that this is so is to use semidefinite programming duality.
Consider the dual form of the semidefinite program for the optimal acceptance
probabilities $\omega(V^1)$, $\omega(V^2)$, and $\omega(V^1\otimes V^2)$.
For any choice of dual-feasible solutions to the semidefinite programs
representing $\omega(V^1)$ and $\omega(V^2)$, which may be denoted
\begin{equation}
  \label{eq:parallel-repetition-dual-points}
  \bigl(\lambda^1,Z_1^1,\ldots,Z_{n-1}^1\bigr)
  \quad\text{and}\quad
  \bigl(\lambda^2,Z_1^2,\ldots,Z_{n-1}^2\bigr)
\end{equation}
(where superscripts represent indices, not exponents), one obtains
a dual-feasible solution to the semidefinite program for $\omega(V^1\otimes
V^2)$ by taking
\begin{equation}
  \lambda = \lambda^1 \lambda^2,\;
  Z_1 = Z_1^1 \otimes Z_1^2,\;\ldots,\;
  Z_{n-1} = Z_{n-1}^1 \otimes Z_{n-1}^2.
\end{equation}
The dual-feasibility of the solution $(\lambda,Z_1,\ldots,Z_{n-1})$ defined in
this way follows from the observation that each of the operators
$Z_1^1,\ldots,Z_{n-1}^1$ and $Z_1^2,\ldots,Z_{n-1}^2$ must be positive
semidefinite, together with the operator inequality
\begin{equation}
  Q_1 \otimes Q_2 \geq R_1 \otimes R_2,
\end{equation}
which holds provided that
\begin{equation}
  Q_1 \geq R_1 \geq 0
  \quad\text{and}\quad
  Q_2 \geq R_2 \geq 0.
\end{equation}
The dual objective value achieved by $(\lambda,Z_1,\ldots,Z_{n-1})$ is
precisely $\lambda^1\lambda^2$, and by optimizing over 
all dual feasible solutions \eqref{eq:parallel-repetition-dual-points} and
considering that strong duality holds, one obtains
\begin{equation}
  \omega\bigl(V^1 \otimes \cdots \otimes V^k\bigr) 
  = \omega\bigl(V^1\bigr) \cdots \omega\bigl(V^k\bigr).
\end{equation}

Based on the fact just described, the following theorem may be obtained.

\begin{theorem}
  \label{theorem:QIP=QIP(3)-with-bounds}
  Let $a,b:\natural\rightarrow[0,1]$ and $m:\natural\rightarrow\natural$
  be functions.
  For every choice of a positive, polynomially bounded function
  $p:\natural\rightarrow\natural$, it holds that
  \begin{equation}
    \class{QIP}_{a,b}(m) \subseteq \class{QIP}_{a^p,b^p}(m).
  \end{equation}
  In particular, for every choice of positive, polynomially bounded functions
  $r$ and $q$, one has
  \begin{equation}
    \class{QIP}_{1,1-1/r}(m) \subseteq \class{QIP}_{1,2^{-q}}(m).
  \end{equation}
\end{theorem}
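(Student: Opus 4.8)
The plan is to obtain both inclusions from the perfect parallel repetition identity established in the previous subsection---namely that for compatible verifiers $V^1,\ldots,V^k$ in interactive games one has $\omega(V^1\otimes\cdots\otimes V^k)=\omega(V^1)\cdots\omega(V^k)$---together with an appropriate choice of the repetition parameter. Given a promise problem $A=(A_{\yes},A_{\no})\in\class{QIP}_{a,b}(m)$ witnessed by a polynomial-time computable function $V$, I would define a new function $V'$ that on input $x$ outputs a quantum circuit description of the verifier obtained by running $p(\abs{x})$ mutually independent copies of $V(x)$ in parallel, with the final output bit defined to be the AND of the $p(\abs{x})$ individual output bits. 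Because $p$ is polynomially bounded and $V$ is polynomial-time computable, $V'$ is polynomial-time computable as well: compute $1^{p(\abs{x})}$, emit that many copies of the circuit description produced by $V(x)$, and append the trivial classical circuit computing the AND of the copies' output qubits. The number of turns remains $m$, since the copies are played simultaneously.

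Next I would verify the completeness and soundness bounds for $V'$. By the parallel repetition identity, $\omega(V'(x))=\omega(V(x))^{p(\abs{x})}$ for every $x$. Since $t\mapsto t^{p(\abs{x})}$ is nondecreasing on $[0,1]$, the bound $\omega(V(x))\geq a(\abs{x})$ for $x\in A_{\yes}$ yields $\omega(V'(x))\geq a(\abs{x})^{p(\abs{x})}$, while the bound $\omega(V(x))\leq b(\abs{x})$ for $x\in A_{\no}$ yields $\omega(V'(x))\leq b(\abs{x})^{p(\abs{x})}$. Hence $V'$ witnesses $A\in\class{QIP}_{a^p,b^p}(m)$, which is the first claimed inclusion.

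For the ``in particular'' statement I would apply the first inclusion with $p$ the pointwise product $p(n)=r(n)q(n)$, which is a positive, polynomially bounded function because $r$ and $q$ are. With $a\equiv 1$ we have $a^p\equiv 1$, and for $b\equiv 1-1/r$ we estimate $(1-1/r)^{p}=\bigl((1-1/r)^{r}\bigr)^{q}\leq e^{-q}\leq 2^{-q}$, using $(1-1/r)^{r}\leq e^{-1}$ (valid for every integer $r\geq 1$) and $e>2$. Therefore $\class{QIP}_{1,1-1/r}(m)\subseteq\class{QIP}_{1,(1-1/r)^p}(m)\subseteq\class{QIP}_{1,2^{-q}}(m)$, where the final inclusion holds simply because the stronger soundness requirement ``value $\leq(1-1/r)^p$'' forces in particular ``value $\leq 2^{-q}$''.

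I expect essentially no obstacle in the argument above: the genuine content of the theorem is the perfect parallel repetition identity itself, whose nontrivial direction $\omega(V^1\otimes V^2)\leq\omega(V^1)\omega(V^2)$ is where one must rule out any advantage a prover might gain by correlating or entangling its strategies across the two games, and this is exactly what the semidefinite-programming duality argument of the previous subsection delivers (by tensoring dual-feasible solutions and invoking strong duality). The only points requiring care here are bookkeeping ones---that the parallelized verifier $V'$ can still be produced in polynomial time, and that the repetition count $p=rq$ remains polynomially bounded---both of which are immediate. It is worth remarking that the first inclusion, although true for all $a$, is only useful for error reduction when $a=1$, since otherwise the completeness parameter $a^p$ degrades along with the soundness parameter; this is precisely why the second statement is phrased with perfect completeness.
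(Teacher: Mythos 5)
Your proposal is correct and follows exactly the route the paper intends: run $p$ parallel copies of the verifier with a single prover, invoke the perfect parallel repetition identity $\omega(V^{\otimes p})=\omega(V)^p$ established via the SDP duality argument, and for the second statement choose $p=rq$ together with the elementary bound $(1-1/r)^{rq}\leq e^{-q}\leq 2^{-q}$. The bookkeeping points you flag (polynomial-time computability of the repeated verifier and the fact that the turn count is unchanged) are handled correctly, so there is nothing to add.
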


\begin{cor}
  \label{cor:QIP=QIP(3)}
  It holds that
  $\class{QIP} = \class{QIP}_{1,2^{-r}}(3)$
  for every positive, polynomially bounded function
  $r:\natural\rightarrow\natural$.
\end{cor}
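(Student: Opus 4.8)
The plan is to prove the two inclusions $\class{QIP}\subseteq\class{QIP}_{1,2^{-r}}(3)$ and $\class{QIP}_{1,2^{-r}}(3)\subseteq\class{QIP}$ separately. The second is essentially immediate, while the first is obtained by composing, in the order indicated by the three-step outline of this section, the efficient verifier transformations developed above: first the perfect-completeness construction of Theorem~\ref{theorem:QIP-perfect-completeness}, then the turn-reduction construction of Section~\ref{sec:qip-parallelization}, and finally the parallel-repetition amplification of Theorem~\ref{theorem:QIP=QIP(3)-with-bounds}.

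For the easy direction, note that $r(n)\geq 1$ implies $2^{-r(n)}\leq 1/2$ for every $n$, so any three-turn verifier witnessing $A\in\class{QIP}_{1,2^{-r}}(3)$ also witnesses $A\in\class{QIP}_{1,1/2}(3)=\class{QIP}_{1,1-1/2}(3)$. Applying the ``in particular'' clause of Theorem~\ref{theorem:QIP=QIP(3)-with-bounds} with $m\equiv 3$ and with both auxiliary functions taken to be the constant $2$ gives
\begin{equation}
  \class{QIP}_{1,1-1/2}(3)\subseteq\class{QIP}_{1,1/4}(3)\subseteq
  \class{QIP}_{2/3,1/3}(3)=\class{QIP}(3)\subseteq\class{QIP},
\end{equation}
where the middle inclusion holds because $1\geq 2/3$ and $1/4\leq 1/3$, and the last inclusion holds by definition since $m\equiv 3$ is polynomially bounded.

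For the main direction I would begin with $A\in\class{QIP}$, so that $A\in\class{QIP}_{2/3,1/3}(m)$ for some polynomially bounded function $m$ with $m(n)\geq 1$. Theorem~\ref{theorem:QIP-perfect-completeness} then yields $A\in\class{QIP}_{1,c}(m+2)$ with $c=1-\tfrac12(2/3-1/3)^2=1-1/18$. Feeding this into the turn-reduction construction of Section~\ref{sec:qip-parallelization} (which requires perfect completeness, which is why it comes second), with soundness gap $\varepsilon=1/18$ and $m+2$ turns, produces $A\in\class{QIP}_{1,1-\delta}(3)$ with $\delta=\varepsilon/(m+2)^2=\bigl(18(m+2)^2\bigr)^{-1}$. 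Since $n\mapsto 18(m(n)+2)^2$ is positive and polynomially bounded, Theorem~\ref{theorem:QIP=QIP(3)-with-bounds}, in the form that preserves the number of turns, applies with this function as the denominator of the soundness gap and with $q=r$, giving $\class{QIP}_{1,1-\delta}(3)\subseteq\class{QIP}_{1,2^{-r}}(3)$ and hence $A\in\class{QIP}_{1,2^{-r}}(3)$, as desired.

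The construction chains a constant number of efficient, uniform transformations of quantum circuit descriptions of verifiers, so the composite map witnessing $A\in\class{QIP}_{1,2^{-r}}(3)$ is again polynomial-time computable. I expect the one genuinely load-bearing point to be quantitative rather than conceptual: one must track that the soundness gap surviving perfect completeness and parallelization, namely $\delta=\bigl(18(m+2)^2\bigr)^{-1}$, remains at least the reciprocal of a polynomially bounded function, since this is precisely what lets the final parallel-repetition step amplify the soundness error down to $2^{-r}$ with only polynomially many repetitions, and hence a polynomial-size verifier. Perfect completeness is also what forces the order of operations: the parallelization construction of Section~\ref{sec:qip-parallelization} is stated only for verifiers with completeness $1$, so that transformation must precede the turn reduction, and the amplification of the resulting inverse-polynomial soundness error must come last.
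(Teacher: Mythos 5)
Your proposal is correct and follows essentially the same route as the paper: the corollary is obtained there by chaining the perfect-completeness construction of Theorem~\ref{theorem:QIP-perfect-completeness}, the parallelization to three turns with soundness loss $\delta=\varepsilon/m^2$, and the parallel-repetition amplification of Theorem~\ref{theorem:QIP=QIP(3)-with-bounds}, exactly in the order you use. Your quantitative bookkeeping (the surviving gap $\delta=(18(m+2)^2)^{-1}$ being inverse-polynomial) and your handling of the easy inclusion via a constant application of the repetition theorem are both consistent with the paper's intended argument.
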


\pagebreak

\section{QIP = PSPACE} \label{sec:QIP=PSPACE}

The final section of the chapter concerns the proof of the following
theorem.

\begin{theorem}
  \label{theorem:QIP=PSPACE}
  $\class{QIP} = \class{PSPACE}$.
\end{theorem}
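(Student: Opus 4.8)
The plan is to prove the two inclusions $\class{QIP}\subseteq\class{PSPACE}$ and $\class{PSPACE}\subseteq\class{QIP}$ separately. The second inclusion is comparatively easy: since $\class{IP}=\class{PSPACE}$ by the classical result of Lund--Fortnow--Karloff--Nisan and Shamir, and since any classical verifier is a special case of a quantum verifier (so that $\class{IP}_{a,b}(m)\subseteq\class{QIP}_{a,b}(m)$ as noted in the definitions section), we immediately get $\class{PSPACE}=\class{IP}\subseteq\class{QIP}$. So the entire content of the theorem is the reverse inclusion $\class{QIP}\subseteq\class{PSPACE}$, and the plan is to prove it via the semidefinite programming formulation developed in Section~\ref{sec:qip-sdp}.

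First I would reduce to a clean starting point: by Corollary~\ref{cor:QIP=QIP(3)}, every problem in $\class{QIP}$ has a three-turn verifier with perfect completeness and exponentially small soundness error, so it suffices to show that approximating $\omega(V)$ to within, say, additive error $2^{-\poly}$ can be done in $\class{PSPACE}$ for a three-turn verifier $V$ (in fact the three-turn structure is not essential, but it minimizes bookkeeping). For a three-turn verifier the primal SDP of Figure~\ref{fig:SDP-primal} has just two variable blocks $\sigma_0\in\Pos(\Y_0)$ and $\sigma_1\in\Pos(\Z_1\otimes\Y_1)$, a single linear equality constraint $\tr_{\Y_1}(\sigma_1)=\tr_{\X_1}(V_1\sigma_0 V_1^{\ast})$, the normalization $\tr(\sigma_0)=1$, and objective $\bigip{V_2^{\ast}\Pi V_2}{\sigma_1}$. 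The matrices involved have dimension exponential in the input length $n$, but every entry of $V_1$, $V_2$, and $\Pi$ is computable to high precision in $\class{PSPACE}$ from the circuit description of $V(x)$. The key step is then to argue that such an exponentially large SDP, whose data is $\class{PSPACE}$-computable and whose feasible region is well-conditioned (we can enforce strict feasibility and a bound on the norm of optimal solutions, exactly as the excerpt notes strong duality holds by Slater's condition), can be solved to exponential precision in $\class{PSPACE}$. I would invoke the standard fact that a space-bounded version of the ellipsoid method (or an interior-point / matrix-multiplicative-weights method) solves such SDPs: the iterates can be maintained in polynomial space, each iteration requires only matrix arithmetic and eigenvalue computations on exponential-size matrices that lie in $\class{NC}(\poly)\subseteq\class{PSPACE}$, and the number of iterations is polynomial in $n$ and $\log(1/\varepsilon)$. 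Since $\omega(V(x))\ge 1$ on yes-instances and $\omega(V(x))\le 2^{-q(n)}$ on no-instances, approximating $\omega$ to additive error $1/4$ suffices to decide the problem.

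**The main obstacle** is the parallel-algorithms ingredient: showing that an exponential-size semidefinite program with the structure above can actually be solved in $\class{PSPACE}$, rather than merely $\class{EXP}$ (which is what the naive application of a polynomial-time SDP solver gives, as the excerpt points out in deriving $\class{QIP}\subseteq\class{EXP}$). The subtlety is that a generic polynomial-time SDP algorithm run on exponential-size input produces an exponential-time algorithm, not a polynomial-space one; one genuinely needs a parallel or space-efficient solver and must verify that the separation oracle, the rounding, and the precision control all fit in polynomial space when the matrices are exponential-dimensional. The cleanest route is to use the matrix-multiplicative-weights (a.k.a.\ ``MMW'' or Arora--Kale-style) framework: it requires only a polynomial number of rounds, each round needs matrix exponentials of exponential-size matrices computed to exponential precision and a simple width-bounded oracle, and matrix exponentiation of a $\class{PSPACE}$-computable matrix is itself $\class{PSPACE}$-computable (via truncated power series plus iterated matrix products in $\class{NC}(\poly)$). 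Combined with an a priori polynomial bound on the bit-complexity of an $\varepsilon$-optimal solution and on the width of the SDP (both of which follow from the explicit form of the constraints and the fact that all data is normalized), this yields a $\class{PSPACE}$ algorithm. I would therefore structure the proof as: (i) reduce to three-turn perfect-completeness verifiers; (ii) write down the small primal SDP and verify strong duality plus an explicit bound on solution norms and width; (iii) observe that all SDP data is $\class{FPSPACE}$-computable from $V(x)$; (iv) run the space-bounded MMW solver and argue each step stays in polynomial space; (v) conclude by deciding based on whether the computed value exceeds $1/4$.
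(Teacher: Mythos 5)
Your overall architecture (the easy inclusion via $\class{IP}=\class{PSPACE}$, reduction to three-turn verifiers through Corollary~\ref{cor:QIP=QIP(3)}, a multiplicative-weights-type solver, and implementation via parallel matrix computations with $\class{NC}(\mathit{poly})=\class{PSPACE}$) matches the paper's in spirit, but the step you treat as routine---solving the exponential-size SDP of Figure~\ref{fig:SDP-primal} in polynomial space---is exactly where the real work lies, and as written it has genuine gaps. First, the ellipsoid method is not a viable option: its iteration count grows polynomially with the \emph{dimension} of the SDP, which here is exponential in $\abs{x}$, and the method is inherently sequential, so neither the iteration count nor the depth is controlled; moreover your claim that ``the iterates can be maintained in polynomial space'' fails for exponential-dimensional matrices, which cannot be stored explicitly---the correct formalization is a polynomial-depth circuit family in which each iteration is an $\class{NC}$ computation \emph{with respect to the exponential-size matrices} and the number of iterations is polynomial in $\abs{x}$, so that the composition stays in $\class{NC}(\mathit{poly})$. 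Second, for the Arora--Kale-style route you favor, the two ingredients you assert---a bounded width and a per-iteration oracle implementable in $\class{NC}$ for this particular SDP, with its partial-trace equality constraints---do not ``follow from the explicit form of the constraints''; establishing them for a carefully reworked SDP formulation was the main technical content of the original proof of Jain, Ji, Upadhyay, and Watrous, and your proposal supplies none of that work.

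The paper avoids this difficulty by \emph{not} solving the SDP at all. Following Wu's simplification, Section~\ref{sec:QIP-reduction} converts the three-turn value into the fidelity expression \eqref{eq:max-fidelity} and then into the min-max quantity $\eta = \min_{\rho}\max_{\Pi}\ip{\Pi}{\Xi(\rho)}$ for $\Xi$ a \emph{difference of two channels}, with Lemma~\ref{lem:qip-pspace-eta} giving the gap ($\eta=0$ versus $\eta\geq 1/2$). For this min-max problem the matrix multiplicative weights algorithm of Figure~\ref{fig:MMWU-algorithm} needs no separation oracle, and the width bound is automatic ($\norm{\Xi^{\ast}(\Pi_t)}\leq 1$ because $\Xi$ is a difference of channels); the iteration count $T=O(\ln(N)/\varepsilon^2)$ is polynomial in $\abs{x}$, and each iteration uses only matrix arithmetic, truncated-series matrix exponentials, and positive-eigenspace computations, all in $\class{NC}$, so Borodin's $\class{NC}(\mathit{poly})=\class{PSPACE}$ finishes the argument. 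So either adopt this min-max reformulation, or be prepared to reconstruct the substantially more involved SDP-specific width and oracle analysis of the original proof; your current write-up does neither.
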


\noindent
As a result of this theorem, together with the well-known result
$\class{IP} = \class{PSPACE}$ of Shamir \cite{Shamir92}, based partly on the
work of Lund, Fortnow, Karloff, and Nisan \cite{LundFKN92}, one finds that
single-prover quantum interactive proof systems have precisely the same
computational power as single-prover classical interactive proof systems.

Indeed, because $\class{IP} = \class{PSPACE}$ and
$\class{IP}\subseteq\class{QIP}$, one of the containments required to prove
Theorem~\ref{theorem:QIP=PSPACE} follows immediately, namely
$\class{PSPACE} \subseteq \class{QIP}$.
The main focus of the present section is on the reverse containment, which is
$\class{QIP}\subseteq\class{PSPACE}$.

Theorem~\ref{theorem:QIP=PSPACE} was first proved by Jain, Ji, Upadhyay, and
Watrous \cite{JainJUW11}.
The proof we present below makes use of a simplification due to Wu
\cite{Wu10}.
The overall structure of the two proofs are the same, but at a technical level
Wu's formulation has a significant advantage, in that it replaces a more
complicated multiplicative weights update algorithm for a semidefinite program
with a simpler one for a min-max problem.

\subsection{Reduction of \class{QIP} to a min-max value computation}
\label{sec:QIP-reduction}

The first step in the proof that $\class{QIP} = \class{PSPACE}$ concerns the
relationship between interactive games and a certain type of min-max problem to
be solved by a polynomial-space algorithm.
By Corollary~\ref{cor:QIP=QIP(3)}, which implies that
$\class{QIP} = \class{QIP}_{1,\delta}(3)$ for any choice of a constant 
$\delta\in (0,1)$, we may restrict our attention to three-turn interactive
games.

Consider an arbitrary three-turn verifier, given by its quantum circuit
encoding.
Through the purification process described earlier, one may efficiently process
the description of such a verifier's circuits to obtain a unitary circuit
description of a verifier $V = (V_1,V_2)$ as suggested by
Figure~\ref{fig:unitary-3-turn-verifier}.
Here, it is assumed that $\reg{Z}_0$ comprises all of the ancillary qubits
needed by the verifier's computations, while $\reg{Z}_2$ represents a single
qubit, which is the output qubit of the interactive game to be measured with
respect to the standard basis.
Ordinarily there is no need for the register $\reg{X}_2$ in a three-turn
interactive game, and in the present case this register need not to be
interpreted as a fourth message register---it is simply a register representing
all of the qubits held by the verifier, aside from the output qubit, at the end
of the game.
Although this register has no influence on the outcome of an interaction
between $V$ and a prover $P$, it will play an important role in the reduction
that follows.

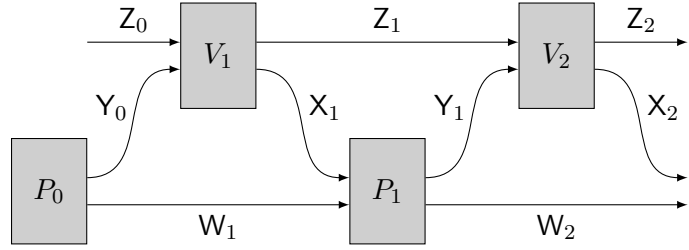
\begin{figure}
  \begin{center}
    \begin{tikzpicture}[scale=0.45, 
        turn/.style={draw, minimum height=14mm, minimum width=10mm,
          fill = ChannelColor, text=ChannelTextColor},
        emptyturn/.style={minimum height=14mm, minimum width=10mm},
        >=latex]

      \node (V0) at (-13,4) [emptyturn] {};      
      \node (V1) at (-8,4) [turn] {$V_1$};
      \node (V2) at (2,4) [turn] {$V_2$};
      \node (M) at (7,4.4) [emptyturn] {};
      \node (P0) at (-13,0) [turn] {$P_0$};
      \node (P1) at (-3,0) [turn] {$P_1$};
      \node (P2) at (7,0) [emptyturn] {};
      
      \node at (0,6.15) {};

      \draw[->] ([yshift=4mm]V1.east) -- ([yshift=4mm]V2.west)
      node [above, midway] {$\reg{Z}_1$};
      
      \draw[->] ([yshift=4mm]V0.east) -- ([yshift=4mm]V1.west)
      node [above, midway] {$\reg{Z}_0$};
      
      \draw[->] ([yshift=-4mm]V1.east) .. controls +(right:20mm) and 
      +(left:20mm) .. ([yshift=4mm]P1.west) node [right, pos=0.4] {$\reg{X}_1$};
      
      \draw[->] ([yshift=-4mm]V2.east) .. controls +(right:20mm) and 
      +(left:20mm) .. ([yshift=4mm]P2.west) node [right, pos=0.4] {$\reg{X}_2$};
      
      \draw[->] ([yshift=4mm]P0.east) .. controls +(right:20mm) and 
      +(left:20mm) .. ([yshift=-4mm]V1.west) node [left, pos=0.6] {$\reg{Y}_0$};
      
      \draw[->] ([yshift=4mm]P1.east) .. controls +(right:20mm) and 
      +(left:20mm) .. ([yshift=-4mm]V2.west) node [left, pos=0.6] {$\reg{Y}_1$};
      
      \draw[->] ([yshift=-4mm]P0.east) -- ([yshift=-4mm]P1.west)
      node [below, midway] {$\reg{W}_1$};
      
      \draw[->] ([yshift=-4mm]P1.east) -- ([yshift=-4mm]P2.west)
      node [below, midway] {$\reg{W}_2$};
      
      \draw[->] ([yshift=4mm]V2.east) -- (M.west) node [above, midway]
           {$\reg{Z}_2$};
      
    \end{tikzpicture}
  \end{center}
  \caption{A three-turn interactive game with a unitary verifier.}
  \label{fig:unitary-3-turn-verifier}
\end{figure}

Next, define two channels, $\Phi_1\in\Channel(\Y_0,\Z_1)$ and
$\Phi_2\in\Channel(\X_2,\Z_1)$, as 
\begin{equation}
  \begin{split}
    \Phi_1(\rho_1) 
    & = \tr_{\X_1} \bigl(V_1 (\ket{0\cdots 0}\bra{0\cdots 0} \otimes \rho_1)
    V_1^{\ast}\bigr)\\
    \Phi_2(\rho_2) & = \tr_{\Y_1} \bigl(V_2^{\ast} (\ket{1}\bra{1} 
    \otimes \rho_2) V_2\bigr)
  \end{split}
\end{equation}
for every $\rho_1\in\Density(\Y_0)$ and $\rho_2\in\Density(\X_2)$.
The action of these channels is illustrated in
Figure~\ref{fig:QIP=PSPACE-channels}.
As was already encountered in~\eqref{eq:qip-parallelization-fidelity}, the
maximum acceptance probability of $V$ is given by
\begin{equation}
  \label{eq:max-fidelity}
  \omega(V) = \max_{\rho_1,\rho_2} \fid(\Phi_1(\rho_1),\Phi_2(\rho_2))^2,
\end{equation}
where the maximum is over all states $\rho_1\in\Density(\Y_0)$ and
$\rho_2\in\Density(\X_2)$.
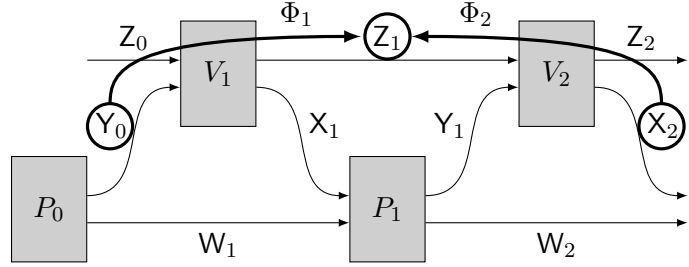
\begin{figure}
  \begin{center}
    \begin{tikzpicture}[scale=0.45, 
        turn/.style={draw, minimum height=14mm, minimum width=10mm,
          fill = ChannelColor, text=ChannelTextColor},
        emptyturn/.style={minimum height=14mm, minimum width=10mm},
        >=latex]

      \node (V0) at (-13,4) [emptyturn] {};      
      \node (V1) at (-8,4) [turn] {$V_1$};
      \node (V2) at (2,4) [turn] {$V_2$};
      \node (M) at (7,4.4) [emptyturn] {};
      \node (P0) at (-13,0) [turn] {$P_0$};
      \node (P1) at (-3,0) [turn] {$P_1$};
      \node (P2) at (7,0) [emptyturn] {};
      
      \draw[->] ([yshift=4mm]V1.east) -- ([yshift=4mm]V2.west)
      node [above, midway] {$\reg{Z}_1$};
      
      \draw[->] ([yshift=4mm]V0.east) -- ([yshift=4mm]V1.west)
      node [above, midway] {$\reg{Z}_0$};
      
      \draw[->] ([yshift=-4mm]V1.east) .. controls +(right:20mm) and 
      +(left:20mm) .. ([yshift=4mm]P1.west) node [right, pos=0.4] {$\reg{X}_1$};
      
      \draw[->] ([yshift=-4mm]V2.east) .. controls +(right:20mm) and 
      +(left:20mm) .. ([yshift=4mm]P2.west) node [right, pos=0.4] {$\reg{X}_2$};
      
      \draw[->] ([yshift=4mm]P0.east) .. controls +(right:20mm) and 
      +(left:20mm) .. ([yshift=-4mm]V1.west) node [left, pos=0.6] {$\reg{Y}_0$};
      
      \draw[->] ([yshift=4mm]P1.east) .. controls +(right:20mm) and 
      +(left:20mm) .. ([yshift=-4mm]V2.west) node [left, pos=0.6] {$\reg{Y}_1$};
      
      \draw[->] ([yshift=-4mm]P0.east) -- ([yshift=-4mm]P1.west)
      node [below, midway] {$\reg{W}_1$};
      
      \draw[->] ([yshift=-4mm]P1.east) -- ([yshift=-4mm]P2.west)
      node [below, midway] {$\reg{W}_2$};
      
      \draw[->] ([yshift=4mm]V2.east) -- (M.west) node [above, midway]
           {$\reg{Z}_2$};
      
      \draw[very thick, ->] (-11.2,3.1) 
      .. controls +(up:10mm) and +(left:70mm) .. (-3.78,5.1)
      node [above, pos = 0.9] {$\Phi_1$};
      
      \draw[very thick, ->] (5.1,3.1) 
      .. controls +(up:10mm) and +(right:70mm) .. (-2.29,5.1)
      node [above, pos = 0.9] {$\Phi_2$};

      \node[draw, very thick, circle, minimum size=6mm] at (5.1,2.45) {};
      \node[draw, very thick, circle, minimum size=6mm] at (-11.2,2.45) {};
      \node[draw, very thick, circle, minimum size=6mm] at (-3,5.1) {};

    \end{tikzpicture}
  \end{center}
  \caption{The action of the channels $\Phi_1$ and $\Phi_2$.}
  \label{fig:QIP=PSPACE-channels}
\end{figure}
Finally, define channels $\Psi_1,\Psi_2\in\Channel(\Y_0\otimes\X_2,\Z_1)$ as
\begin{equation}
  \Psi_1 = \Phi_1 \otimes \textup{Tr}
  \quad\text{and}\quad
  \Psi_2 = \textup{Tr} \otimes \Phi_2,
\end{equation}
where the traces are defined on $\X_2$ and $\Y_0$, respectively.

Now, for $\Xi = \Psi_1 - \Psi_2$ being the difference between these channels,
one may consider the min-max quantity
\begin{equation}
  \label{eq:min-max-value}
  \eta = \min_{\rho\in\Density(\Y_0\otimes\X_2)} \max_{\Pi\in\Proj(\Z_1)}
  \bigip{\Pi}{\Xi(\rho)}.
\end{equation}
There are two relevant cases to consider.

\begin{trivlist}
\item \emph{Case 1:} $\omega(V) = 1$.
  In this case, one may set $\rho = \rho_1\otimes\rho_2$, for
  $\rho_1\in\Density(\Y_0)$ and $\rho_2\in\Density(\X_2)$ maximizing the
  expression on the right-hand side of \eqref{eq:max-fidelity}.
  As $\fid(\Phi_1(\rho_1),\Phi_2(\rho_2)) = 1$, it follows that
  $\Phi_1(\rho_1) = \Phi_2(\rho_2)$, so $\Xi(\rho) = 0$, and therefore
  $\eta = 0$.

\item \emph{Case 2:} $\omega(V) \leq \delta$ for $\delta \in (0,1)$.
  In this case, for every choice of $\rho\in\Density(\Y_0\otimes\X_2)$ one has
  \begin{equation}
    \norm{\Xi(\rho)}_1 = \bignorm{\Phi_1(\rho_1) - \Phi_2(\rho_2)}_1
    \geq 2 - 2 \fid\bigl(\Phi_1(\rho_1), \Phi_2(\rho_2)\bigr),
  \end{equation}
  for $\rho_1 = \tr_{\X_2}(\rho)$ and $\rho_2 = \tr_{\Y_0}(\rho)$.
  The fidelity 
  $\fid\bigl(\Phi_1(\rho_1), \Phi_2(\rho_2)\bigr)$ can be no larger than
  $\sqrt{\delta}$ by \eqref{eq:max-fidelity}, so that
  \begin{equation}
    \norm{\Xi(\rho)}_1 \geq 2 - 2 \sqrt{\delta},
  \end{equation}
  and thus, for an optimal choice of $\Pi$, one has
  $\bigip{\Pi}{\Xi(\rho)} \geq 1 - \sqrt{\delta}$.
  Therefore, one has $\eta \geq 1 - \sqrt{\delta}$ in this case.
\end{trivlist}

An implication of the case analysis performed above may be stated as a lemma as
follows.

\begin{lemma}\label{lem:qip-pspace-eta}
  Let $V$ be the verifier in a three-turn interactive game, and let $\eta$ be
  the min-max quantity defined from $V$ in~\eqref{eq:min-max-value}. 
  The following implications hold: 
  \begin{equation}
    \bigl[\omega(V) = 1\bigr]\, \Rightarrow\, \bigl[ \eta = 0 \bigr]
    \quad\text{and}\quad
    \biggl[\omega(V) \leq \frac{1}{4}\biggr]\, \Rightarrow \,
    \biggl[\eta \geq \frac{1}{2}\biggr]
  \end{equation}
\end{lemma}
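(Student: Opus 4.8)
The plan is to combine into one argument the two cases already worked out in the discussion preceding the statement, specializing the second case to $\delta = 1/4$ so that $1 - \sqrt{\delta} = 1/2$. Two elementary observations are used throughout. First, since the zero operator lies in $\Proj(\Z_1)$, the inner maximum $\max_{\Pi \in \Proj(\Z_1)} \bigip{\Pi}{\Xi(\rho)}$ is always nonnegative, so $\eta \geq 0$ unconditionally. Second, because $\Phi_1$ and $\Phi_2$ are trace-preserving, for every $\rho$ the operator $\Xi(\rho) = \Phi_1(\tr_{\X_2}\rho) - \Phi_2(\tr_{\Y_0}\rho)$ is a traceless Hermitian operator, and hence $\max_{\Pi \in \Proj(\Z_1)} \bigip{\Pi}{\Xi(\rho)} = \tfrac{1}{2}\norm{\Xi(\rho)}_1$, the maximum being attained by projecting onto the positive part of $\Xi(\rho)$. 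I would also recall the identity~\eqref{eq:max-fidelity}, $\omega(V) = \max_{\rho_1,\rho_2}\fid(\Phi_1(\rho_1),\Phi_2(\rho_2))^2$, which follows from the fidelity characterization of the value of a three-turn game via Uhlmann's theorem (Theorem~\ref{theorem:Uhlmann}), exactly as in~\eqref{eq:qip-parallelization-fidelity}.

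For the first implication, assuming $\omega(V) = 1$, I would choose $\rho_1 \in \Density(\Y_0)$ and $\rho_2 \in \Density(\X_2)$ attaining the maximum in~\eqref{eq:max-fidelity}; then $\fid(\Phi_1(\rho_1),\Phi_2(\rho_2)) = 1$ forces $\Phi_1(\rho_1) = \Phi_2(\rho_2)$. Taking $\rho = \rho_1 \otimes \rho_2$ gives $\Xi(\rho) = \Phi_1(\rho_1) - \Phi_2(\rho_2) = 0$, so the inner maximum vanishes for this particular $\rho$ and therefore $\eta \leq 0$; together with $\eta \geq 0$ this yields $\eta = 0$.

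For the second implication, assuming $\omega(V) \leq 1/4$, I would fix an arbitrary $\rho \in \Density(\Y_0\otimes\X_2)$ and set $\rho_1 = \tr_{\X_2}(\rho)$, $\rho_2 = \tr_{\Y_0}(\rho)$, so that $\Xi(\rho) = \Phi_1(\rho_1) - \Phi_2(\rho_2)$. The Fuchs--van de Graaf inequalities~\eqref{eq:fuchs-graaf} give $\norm{\Xi(\rho)}_1 \geq 2\bigl(1 - \fid(\Phi_1(\rho_1),\Phi_2(\rho_2))\bigr)$, and~\eqref{eq:max-fidelity} gives $\fid(\Phi_1(\rho_1),\Phi_2(\rho_2)) \leq \sqrt{\omega(V)} \leq 1/2$, whence $\norm{\Xi(\rho)}_1 \geq 1$. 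By the second observation above, the inner maximum equals $\tfrac{1}{2}\norm{\Xi(\rho)}_1 \geq 1/2$, and since $\rho$ was arbitrary, $\eta \geq 1/2$.

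This is essentially bookkeeping on top of facts already established; the substantive work (the semidefinite-programming/fidelity reformulation of $\omega(V)$ and the reduction to three-turn games) has already been carried out. The only point meriting care is the asymmetry between the two cases: in Case~1 we are upper-bounding a minimum, so the single feasible choice $\rho = \rho_1 \otimes \rho_2$ suffices, whereas in Case~2 we must handle an arbitrary $\rho$ and cannot assume product structure --- but none is needed, precisely because $\Psi_1$ discards $\X_2$ and $\Psi_2$ discards $\Y_0$, so only the reduced states $\rho_1$ and $\rho_2$ ever enter. I do not anticipate a genuine obstacle.
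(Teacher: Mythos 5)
Your proposal is correct and follows essentially the same argument as the paper, which proves the lemma by exactly this two-case analysis (product state $\rho_1\otimes\rho_2$ with $\Phi_1(\rho_1)=\Phi_2(\rho_2)$ in the first case; Fuchs--van de Graaf plus the bound $\fid \leq \sqrt{\omega(V)}$ from~\eqref{eq:max-fidelity} in the second, with $\delta = 1/4$). The only additions are your explicit notes that $\eta \geq 0$ and that the inner maximum equals $\tfrac{1}{2}\norm{\Xi(\rho)}_1$ for the traceless operator $\Xi(\rho)$, which the paper uses implicitly.
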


\subsection{Matrix multiplicative weights update method}

At the heart of the proof that QIP $\subseteq$ PSPACE is an algorithmic method
known as the \emph{matrix multiplicative weights update method}.
There are a variety of known algorithms that make use of this method,
which is well-suited to certain specialized forms of convex optimization
problems.
This subsection will present one algorithm in this family that will allow the
containment QIP $\subseteq$ PSPACE to be proved, based on the min-max problem
described in the previous subsection.
The specific formulation of this algorithm as a PSPACE algorithm will be
discussed in the next subsection.

The input to the problem is an explicit matrix description of a mapping
$\Xi = \Psi_1 - \Psi_2$, which is assumed to be the difference of two channels
of the form $\Psi_1,\Psi_2\in\Channel\bigl(\complex^N,\complex^M\bigr)$.
The goal of the algorithm will be to approximate the min-max value
\begin{equation}\label{eq:qip-pspace-def-eta}
  \eta = \min_{\rho} \max_{P} \,\ip{P}{\Xi(\rho)},
\end{equation}
where the minimum is over all density operators $\rho\in\Density(\complex^N)$
and the maximum is over all measurement operators $P\in\Pos(\complex^M)$
(meaning that $P$ satisfies $0\leq P \leq \I$).
As the function
\begin{equation}
  (\rho,P)\mapsto\ip{P}{\Xi(\rho)}
\end{equation}
is bilinear and the minimum and maximum are over convex and compact sets, one
may freely change the order of the minimum and maximum by Sion's min-max
theorem.
It is convenient to observe that
\begin{equation}
  \eta = \min_{\rho} \max_{\Pi} \,\ip{\Pi}{\Xi(\rho)},
\end{equation}
where the maximum is over all projection operators $\Pi\in\Proj(\complex^M)$
(as it was in the min-max problem described in the previous subsection);
when the min-max expression is viewed as a game and $\rho$ is played first,
there is nothing lost in restricting the maximization to the extreme points of
the set of measurement operators (i.e., the projection operators).
Along similar lines, if $P$ were played first, one could
then choose a minimizing $\rho$ among the pure states in
$\Density(\complex^N)$.

The accuracy of the algorithm is determined by the setting of an accuracy
parameter $\varepsilon > 0$, which is left as an indeterminate value for the
time being; a choice of $\varepsilon$ that is appropriate to the application of
the algorithm in proving QIP = PSPACE is made in the subsection following this
one.
(To obtain a cleaner expression on the accuracy of the algorithm, it is
convenient to make the assumption that $\varepsilon < 1/4$.)
The algorithm is described in Figure~\ref{fig:MMWU-algorithm}.

\begin{figure}
  \hrulefill
  \begin{center}
    \begin{minipage}{\textwidth}
      \begin{mylist}{6mm}
      \item[1.]
        Set $X_1=\I_N$ and $T=\bigl\lceil 2\ln(N)/\varepsilon^2\bigr\rceil$.
      \item[2.]
        For each $t=1,\ldots,T$, let
        \begin{equation}
          \rho_t = \frac{X_t}{\tr(X_t)},
        \end{equation}
        let $\Pi_t\in\Proj(\complex^M)$ be the projection operator
        corresponding to the positive eigenspace of $\Xi(\rho_t)$, and let
        \begin{equation}
          X_{t+1} = \exp\bigl(-\varepsilon\,\Xi^{\ast} 
          \bigl(\Pi_1 + \cdots + \Pi_t\bigr)\bigr).
        \end{equation}

      \item[3.]
        Output
        \begin{equation}
          \frac{1}{T}\sum_{t = 1}^T\bigip{\Pi_t}{\Xi(\rho_t)}.
        \end{equation}
      \end{mylist}
    \end{minipage}
  \end{center}
  \hrulefill
  \caption{Matrix multiplicative weights update algorithm for approximating
    the min-max value $\eta$ defined in~\eqref{eq:qip-pspace-def-eta}.}
  \label{fig:MMWU-algorithm}
\end{figure}

To analyze this algorithm, one may begin with the following technical lemma.
We omit the proof, which is based on elementary and routine calculus and matrix
theory.

\begin{lemma}
  \label{lemma:matrix-inner-product-bound}
  Let $N$ be a positive integer, let $H\in \Herm(\complex^N)$ be a Hermitian
  operator satisfying $\norm{H}\leq 1$, and let $\rho\in\Density(\complex^N)$ be
  a density operator.
  For every positive real number $\varepsilon > 0$, it holds that
  \begin{equation}
    \bigip{\rho}{\exp\bigl(-\varepsilon H\bigr)}
    \leq \exp\bigl(-\varepsilon\exp(-2 \varepsilon)\ip{\rho}{H}\bigr)
    \exp\bigl(2\varepsilon\sinh(2\varepsilon)\bigr).
  \end{equation}
\end{lemma}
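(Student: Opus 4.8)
The plan is to reduce the operator inequality to a one-variable inequality, lift it back to operators via the spectral theorem, pair it against $\rho$, and then convert the resulting affine bound into the claimed product of exponentials by elementary calculus. The first step is to record a scalar estimate: for every real $h$ with $\abs{h}\leq 1$ and every $\varepsilon>0$,
\begin{equation}
  e^{-\varepsilon h}\;\leq\;1-\varepsilon h+\tfrac{1}{2}\varepsilon^2 e^{\varepsilon}.
\end{equation}
This comes from Taylor's theorem with Lagrange remainder, $e^{-\varepsilon h}=1-\varepsilon h+\tfrac{1}{2}\varepsilon^2 e^{-\varepsilon\xi}h^2$ for some $\xi$ between $0$ and $h$, together with $\abs{\xi}\leq 1$ (so $e^{-\varepsilon\xi}\leq e^{\varepsilon}$) and $h^2\leq 1$. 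The essential feature is that the right-hand side is \emph{affine} in $h$: this is the shape that survives the two averaging operations to come, whereas the quadratic term would not.

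Next I would lift this to operators. Since $\norm{H}\leq 1$, every eigenvalue of $H$ lies in $[-1,1]$, so applying the scalar inequality eigenvalue by eigenvalue---legitimate because $\exp(-\varepsilon H)$ and $\I-\varepsilon H+\tfrac{1}{2}\varepsilon^2 e^{\varepsilon}\I$ are both functions of $H$ and hence diagonal in a common orthonormal basis---gives the operator inequality $\exp(-\varepsilon H)\preceq\I-\varepsilon H+\tfrac{1}{2}\varepsilon^2 e^{\varepsilon}\I$. Pairing with $\rho\succeq 0$ and using $\ip{\rho}{\I}=\tr(\rho)=1$ yields
\begin{equation}
  \ip{\rho}{\exp(-\varepsilon H)}\;\leq\;1-\varepsilon x+\tfrac{1}{2}\varepsilon^2 e^{\varepsilon},
\end{equation}
where $x:=\ip{\rho}{H}=\tr(\rho H)$ satisfies $\abs{x}\leq\norm{\rho}_1\norm{H}\leq 1$.

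The last step is pure one-variable calculus. From $1+v\leq e^{v}$ one gets $1-\varepsilon x+\tfrac{1}{2}\varepsilon^2 e^{\varepsilon}\leq\exp\bigl(-\varepsilon x+\tfrac{1}{2}\varepsilon^2 e^{\varepsilon}\bigr)$, so it remains to check that $-\varepsilon x+\tfrac{1}{2}\varepsilon^2 e^{\varepsilon}\leq-\varepsilon e^{-2\varepsilon}x+2\varepsilon\sinh(2\varepsilon)$, equivalently $\tfrac{1}{2}\varepsilon^2 e^{\varepsilon}-2\varepsilon\sinh(2\varepsilon)\leq\varepsilon(1-e^{-2\varepsilon})\,x$. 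The right-hand side is increasing in $x$ on $[-1,1]$, so it suffices to verify this at $x=-1$; dividing by $\varepsilon>0$, this reduces to $\tfrac{1}{2}\varepsilon e^{\varepsilon}+1\leq e^{2\varepsilon}$, which holds because $\varphi(\varepsilon):=e^{2\varepsilon}-\tfrac{1}{2}\varepsilon e^{\varepsilon}-1$ satisfies $\varphi(0)=0$ and $\varphi'(\varepsilon)=2e^{2\varepsilon}-\tfrac{1}{2}(1+\varepsilon)e^{\varepsilon}\geq 0$ (using $e^{2\varepsilon}=(e^{\varepsilon})^2\geq(1+\varepsilon)e^{\varepsilon}$).

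I do not expect a genuine obstacle here, consistent with the authors' description of the proof as routine. The one point that calls for a little care is the choice of the scalar bound: the more familiar convexity estimate $e^{-\varepsilon h}\leq\cosh\varepsilon-h\sinh\varepsilon$ also lifts and averages correctly, but it does not mesh as cleanly with the particular constants $\varepsilon e^{-2\varepsilon}$ and $2\varepsilon\sinh(2\varepsilon)$ that appear in the statement; the deliberately loose remainder term $\tfrac{1}{2}\varepsilon^2 e^{\varepsilon}$ is exactly what keeps the closing calculus step short.
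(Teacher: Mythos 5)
Your proof is correct: the scalar Taylor bound with Lagrange remainder is valid for $\abs{h}\leq 1$, the lift through the spectral theorem and the pairing with $\rho$ (using $\abs{\tr(\rho H)}\leq 1$) are sound, and the closing reduction to $\tfrac{1}{2}\varepsilon e^{\varepsilon}+1\leq e^{2\varepsilon}$ checks out. The paper omits its proof of this lemma, describing it only as elementary and routine calculus and matrix theory, and your argument is exactly a proof of that kind, so there is no substantive difference in approach to report.
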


The next lemma forms the mathematical backbone of the algorithm analysis.

\begin{lemma} \label{lemma:MMWU}
  Let $T$ and $N$ be positive integers, let 
  $H_1,\ldots,H_T \in \Herm(\complex^N)$ be Hermitian operators
  satisfying $\norm{H_t}\leq 1$ for each $t\in\{1,\ldots,T\}$,
  and let $\varepsilon > 0$.
  Define
  \begin{equation}
    X_1 = \I,\;\;
    X_{t+1} = \exp\bigl(-\varepsilon (H_1 + \cdots + H_t)\bigr),
    \;\;\text{and}\;\;
    \rho_t = \frac{X_t}{\tr(X_t)}
  \end{equation}
  for each $t\in\{1,\ldots,T\}$.
  It holds that
  \begin{equation}
    \label{eq:MMW-lemma-bound}
    \begin{multlined}
      \lambda_{\textup{min}}\bigl(H_1 + \cdots + H_T\bigr)\\
      \geq \exp(-2 \varepsilon) \sum_{t = 1}^T \ip{\rho_t}{H_t}
      - \frac{\ln(N)}{\varepsilon} - 2T\sinh(2\varepsilon).
    \end{multlined}
  \end{equation}
\end{lemma}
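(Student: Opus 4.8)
The plan is to run the standard analysis of the matrix multiplicative weights update method: track the scalar potential $\tr(X_t)$ through the iterations, bound its multiplicative growth from above, and then compare the final potential with a lower bound expressed in terms of $\lambda_{\textup{min}}(H_1+\cdots+H_T)$.

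First I would write $X_{t+1} = \exp\bigl(-\varepsilon(H_1+\cdots+H_{t-1}) + (-\varepsilon H_t)\bigr)$ and apply the Golden--Thompson inequality $\tr(\exp(A+B))\leq\tr(\exp(A)\exp(B))$ (the analogue of $e^{a+b}=e^a e^b$, which fails for non-commuting Hermitian operators, but whose trace version survives). Since $X_t = \exp(-\varepsilon(H_1+\cdots+H_{t-1}))$ is Hermitian and positive definite, this gives
\begin{equation}
  \tr(X_{t+1}) \leq \tr\bigl(X_t \exp(-\varepsilon H_t)\bigr) = \tr(X_t)\,\bigip{\rho_t}{\exp(-\varepsilon H_t)}.
\end{equation}
Because $\norm{H_t}\leq 1$, Lemma~\ref{lemma:matrix-inner-product-bound} applies to the right-hand factor and yields $\bigip{\rho_t}{\exp(-\varepsilon H_t)} \leq \exp\bigl(-\varepsilon\exp(-2\varepsilon)\ip{\rho_t}{H_t}\bigr)\exp\bigl(2\varepsilon\sinh(2\varepsilon)\bigr)$. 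Iterating the resulting recursion for $\tr(X_{t+1})$ from $t=1$ to $t=T$, starting from $\tr(X_1) = \tr(\I) = N$, produces
\begin{equation}
  \tr(X_{T+1}) \leq N\exp\Bigl(-\varepsilon\exp(-2\varepsilon)\sum_{t=1}^T\ip{\rho_t}{H_t}\Bigr)\exp\bigl(2T\varepsilon\sinh(2\varepsilon)\bigr).
\end{equation}

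Next I would lower-bound the same quantity: since $X_{T+1} = \exp\bigl(-\varepsilon(H_1+\cdots+H_T)\bigr)$ is positive definite, its trace is at least its largest eigenvalue, which is $\exp\bigl(-\varepsilon\lambda_{\textup{min}}(H_1+\cdots+H_T)\bigr)$. Combining the two estimates, taking logarithms, and dividing through by $-\varepsilon$ (which reverses the inequality) gives exactly the bound~\eqref{eq:MMW-lemma-bound}.

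Every step here is routine calculus and matrix manipulation; there is no real obstacle. The only points that deserve a moment of care are the invocation of Golden--Thompson (needed precisely because the $H_t$ do not commute) and the bookkeeping that makes the $\exp(-2\varepsilon)$ factor and the $\sinh(2\varepsilon)$ error terms accumulate as claimed when the per-step bound is iterated $T$ times. I would also remark that nothing in this lemma uses the specific value of $T$ from the algorithm of Figure~\ref{fig:MMWU-algorithm} nor the choice $H_t = \Xi^\ast(\Pi_t)$; those enter only later, when $\varepsilon$ is tuned to balance the $\ln(N)/\varepsilon$ and $\sinh$ terms against the gap provided by Lemma~\ref{lemma:MMWU}.
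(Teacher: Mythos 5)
Your proposal is correct and matches the paper's own proof essentially step for step: Golden--Thompson applied per iteration, Lemma~\ref{lemma:matrix-inner-product-bound} to bound each factor $\bigip{\rho_t}{\exp(-\varepsilon H_t)}$, the lower bound $\tr(X_{T+1})\geq\lambda_{\textup{max}}(X_{T+1})=\exp\bigl(-\varepsilon\lambda_{\textup{min}}(H_1+\cdots+H_T)\bigr)$, and a logarithm at the end. The only cosmetic difference is that the paper applies the technical lemma after forming the product over all $t$ rather than inside the recursion, which changes nothing.
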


\begin{proof}
  For each $t\in\{1,\ldots,T\}$, one has
  \begin{equation}
    \tr(X_{t+1}) \leq \tr \bigl( X_t
    \exp\bigl(-\varepsilon H_t\bigr)\bigr)
    = \tr(X_t) \bigip{\rho_t}{\exp\bigl(-\varepsilon H_t\bigr)}
  \end{equation}
  by a matrix inequality known as the \emph{Golden--Thompson inequality}, which
  states that $\tr(\exp(A+B))\leq\tr(\exp(A)\exp(B))$
  for every choice of Hermitian operators $A$ and $B$.
  By applying this inequality repeatedly, and noting that $\tr(X_1) = N$, one
  finds that
  \begin{equation}
    \label{eq:MMWU-1}
    \tr(X_{T+1})
    \leq N\, \prod_{t = 1}^T
    \bigip{\rho_t}{\exp\bigl(-\varepsilon H_t\bigr)}.
  \end{equation}
  Because the trace of a positive semidefinite operator is at least as large as
  its largest eigenvalue, it follows that
  \begin{equation}
    \label{eq:MMWU-2}
    \tr(X_{T+1})
    \geq \lambda_{\textup{max}}(X_{T+1})
    = \exp\bigl(-\varepsilon \lambda_{\textup{min}}(H_1 + \cdots + H_T)\bigr).
  \end{equation}
  The required bound \eqref{eq:MMW-lemma-bound} is obtained by combining
  \eqref{eq:MMWU-1} and \eqref{eq:MMWU-2} with
  Lemma~\ref{lemma:matrix-inner-product-bound}.
\end{proof}

Now let us compare the output value of the algorithm to the min-max quantity
$\eta$.
It is evident that
\begin{equation}
  \eta \leq \frac{1}{T}\sum_{t = 1}^T\bigip{\Pi_t}{\Xi(\rho_t)};
\end{equation}
as each $\Pi_t$ is selected by the algorithm so that the quantity
$\ip{\Pi_t}{\Xi(\rho_t)}$ is maximized, it follows that
$\eta\leq\ip{\Pi_t}{\Xi(\rho_t)}$ for each $t\in\{1,\ldots,T\}$.
An upper bound on the output value is obtained from Lemma~\ref{lemma:MMWU}
by setting $H_t = \Xi^{\ast}(\Pi_t)$ for each $t\in\{1,\ldots,T\}$.
One has that $\norm{H_t} \leq 1$ by virtue of the fact that $\Xi$ is a
difference between two channels, and so it follows that
\begin{equation}
  \begin{split}
    \frac{1}{T}\sum_{t = 1}^T\bigip{\Pi_t}{\Xi(\rho_t)} 
    & \leq
    \exp(2\varepsilon) \,
    \lambda_{\textup{min}}\biggl(\Xi^{\ast} \biggl(
    \frac{\Pi_1 + \cdots + \Pi_T}{T} \biggr)\biggr)\\
    & \qquad
    + \frac{\ln(N)\exp(2\varepsilon)}{\varepsilon T}
    + \bigl(\exp(4\varepsilon) - 1\bigr).
  \end{split}
\end{equation}
One may observe that
\begin{equation}
  \begin{multlined}
    \lambda_{\textup{min}}\biggl(\Xi^{\ast} \biggl(
    \frac{\Pi_1 + \cdots + \Pi_T}{T} \biggr)\biggr)\\
    = \min_\rho \,\biggip{\Xi(\rho)}{\frac{\Pi_1 + \cdots + \Pi_T}{T}}
    \leq \eta,
  \end{multlined}
\end{equation}
and therefore
\begin{equation}
  \frac{1}{T}\sum_{t = 1}^T\bigip{\Pi_t}{\Xi(\rho_t)} 
  \leq \exp(2\varepsilon) \eta + 
  \frac{\varepsilon \exp(2\varepsilon)}{2}
  + \bigl(\exp(4\varepsilon) - 1\bigr).
\end{equation}
For any choice of $\delta \leq 1$, it holds that
$\exp(\delta) - 1 \leq 2\delta$, and by combining this bound with the 
observation that $\eta \leq 1$, one obtains
\begin{equation}
  \frac{1}{T}\sum_{t = 1}^T\bigip{\Pi_t}{\Xi(\rho_t)} 
  \leq \eta + 16 \varepsilon.
\end{equation}

\subsection{PSPACE and bounded-depth circuits}

The final step in the proof that $\class{QIP} = \class{PSPACE}$ is to analyze
the complexity of an algorithm that has been suggested by the previous two
subsections.
In more explicit terms, it is to be assumed that $A = (A_{\yes},A_{\no})$ is an
arbitrary promise problem in $\class{QIP}$, and our goal is to prove
$A\in\class{PSPACE}$.
There are two primary steps in the following algorithm, which will soon be
shown to be implementable as a $\class{PSPACE}$ algorithm for~$A$.

\pagebreak

\begin{mylist}{\parindent}
\item[1.]
  It holds that $\class{QIP} = \class{QIP}_{1,1/4}(3)$, and therefore, for a
  given input string $x$, there exists a corresponding three-turn verifier
  $V$ such that
  \begin{equation}
    x\in A_{\yes} \Rightarrow \omega(V) = 1
    \quad\text{and}\quad
    x\in A_{\no} \Rightarrow \omega(V) \leq \frac{1}{4}.
  \end{equation}
  Compute an explicit description of the mapping $\Xi$ corresponding to $V$,
  as described in Section~\ref{sec:QIP-reduction}.

\item[2.]
  Run the matrix multiplicative weights update algorithm described in
  Figure~\ref{fig:MMWU-algorithm} for approximating the min-max value $\eta$
  associated with $\Xi$, with sufficient precision to distinguish the cases
  $\eta = 0$ and $\eta \geq 1/2$ (e.g., $\varepsilon = 1/64$).
  \emph{Accept} if $\eta = 0$ and \emph{reject} if $\eta\geq 1/2$.
\end{mylist}

Of course, an explicit description of the mapping $\Xi$ will generally have
size exponential in $\abs{x}$, so one will not obtain a \class{PSPACE}
algorithm by applying the two steps above in the most straightforward way.
Instead, the steps are to be implemented by bounded-depth Boolean circuits,
and the implementation of the resulting algorithm as a \class{PSPACE} algorithm
will follow from a circuit complexity result due to
Borodin~\cite{Borodin77}.

In greater detail, consider the complexity class $\class{NC}$, which we will
take to include all functions, including predicates representing decision
problems, computable by logarithmic-space uniform Boolean circuit families of
polylogarithmic depth.
The requirement that such a family is logarithmic-space uniform implies that
its circuits are polynomial in size, and therefore represent polynomial-time
computations.

We also consider a ``scaled-up'' variant of $\class{NC}$, to be denoted
$\class{NC}(\mathit{poly})$, that consists of all functions computable by
\emph{polynomial-space uniform} families of Boolean circuits having
\emph{polynomial-depth}.
(The notation $\smash{\class{NC}(2^{\mathit{poly}})}$ has also previously been
used for this class \cite{BorodinCP83}.)
A family of circuits meeting these requirements could potentially have
exponential size, and therefore does not necessarily represent an efficient
computation.
However, the polynomial bound on the depth of these circuits does represent a
significant computational restriction.
In particular, restricting our attention to decision problems, we have
\begin{equation}
  \class{NC}(\mathit{poly}) = \class{PSPACE},
\end{equation}
which is the result of Borodin \cite{Borodin77} suggested above.
An appeal of this reformulation is that it allows one to make use of known
parallel algorithms for performing various computational tasks when designing
$\class{PSPACE}$ algorithms.

Given that $\class{NC}(\mathit{poly}) = \class{PSPACE}$, we may consider that
the task at hand is to prove $A\in\class{NC}(\mathit{poly})$.
When doing this we will make use of a property of $\class{NC}$ and
$\class{NC}(\mathit{poly})$, which is that functions in these classes
compose well.
Specifically, if $F$ is a function in $\class{NC}(\mathit{poly})$ and $G$ is a
function in $\class{NC}$, then the composition $G\circ F$ is also in
$\class{NC}(\mathit{poly})$.
This follows from the most straightforward way of composing the families of
circuits that compute $F$ and $G$.
Consequently, if it is proved that the first step of the algorithm above can be
implemented as an $\class{NC}(\mathit{poly})$ computation, and the second step
can be implemented as an $\class{NC}$ computation, then their composition
represents an $\class{NC}(\mathit{poly})$ computation.

Implementing the first step of the algorithm above as an
$\class{NC}(\mathit{poly})$ computation turns out to be straightforward,
through the use of elementary facts about quantum computations and
bounded-depth circuits.
For a given input string $x$, there are at most polynomially many quantum gates
in the circuit description of the verifier $V$ on this input, and they may be
listed in the order they are to be applied in polynomial time.
Expanding this list of gates into their explicit matrix representations
yields a polynomial-length list of exponential-size matrices, and producing
such a list is possible using an $\class{NC}(\mathit{poly})$ computation.
From such a list of matrices, one may obtain an explicit description of $\Xi$
through elementary matrix operations performed on the explicit matrix
representations of the gates.
As sums, differences, and iterated products of matrices can be implemented as
$\class{NC}$ computations, the fact that step 1 can be implemented as an
$\class{NC}(\mathit{poly})$ computation follows.

The implementation of the second step of the algorithm as an $\class{NC}$
computation is more involved, but the key observation underlying such an
implementation is that the main loop in step 2 of the algorithm described in
Figure~\ref{fig:MMWU-algorithm} runs for a number of iterations that is
logarithmic in $N$.
Based on this observation, the fact that the entire algorithm can be
implemented as an $\class{NC}$ computation follows, provided that
(i) each individual step of the algorithm can be implemented as an $\class{NC}$
computation, and (ii) the storage requirements of the algorithm grow at most
linearly with each iteration.
By fixing ahead of time the number of bits of precision to which each matrix
entry is stored, the second requirement is evidently met.
The first requirement follows from these observations:
\begin{mylist}{\parindent}
\item[1.] Elementary matrix computations can be performed in $\class{NC}$.
\item[2.] Matrix exponentials can be approximated with high accuracy in
  $\class{NC}$ (simply by truncating the power series representation
  of the exponential function to polynomially many terms).
\item[3.] Positive eigenspace computations for Hermitian operators can be
  approximated with high accuracy using the fact that polynomial root
  approximation is in \class{NC} \cite{BenOrFKT86,Neff94,BiniP98}.
\end{mylist}

It turns out to be somewhat tedious, although conceptually quite
straightforward, to verify that the approximations suggested above can be
performed in $\class{NC}$ with sufficient precision to yield a correct answer
to the problem of determining whether $\eta = 0$ or $\eta \geq 1/2$. Using
Lemma~\ref{lem:qip-pspace-eta} this finishes the proof of
Theorem~\ref{theorem:QIP=PSPACE}.
Interested readers may find further details in \cite{JainJUW11}.

\subsection{A complete problem for \class{QIP}}

As a simple byproduct of one part of the analysis described above in
Section~\ref{sec:QIP-reduction}, one finds that there is an interesting complete
promise problem for the class \class{QIP}, called the 
\emph{quantum circuit distinguishability} problem. 

Given that $\class{QIP} = \class{PSPACE}$, it holds that every
\class{PSPACE}-complete problem is \class{QIP}-complete (and by the same
reasoning the quantum circuit distinguishability problem to be described
shortly is also \class{PSPACE}-complete).
It is useful nevertheless to highlight this particular problem for two reasons.
One reason is that the quantum circuit distinguishability problem is a fairly
natural problem (within the setting of quantum information and computation)
having a more direct connection to the quantum interactive proof system
model than other known \class{PSPACE}-complete problems.
Indeed, the problem was known to be complete for \class{QIP} for quite some
time before it was known that \class{QIP} = \class{PSPACE}.
The second reason is that variants of this problem can, in some cases, be shown
to be complete for other complexity classes based on quantum interactive proof
systems.
One example, the Non-Identity Check problem, was already encountered in the
previous chapter.
Another example, concerning zero-knowledge quantum interactive proof systems,
will play an important role in the chapter following this one.

The quantum circuit distinguishability problem is parameterized by two real
numbers, $\alpha$ and $\beta$, satisfying $0\leq \beta < \alpha \leq 1$.
One may take $\alpha$ and $\beta$ to be fixed constants or functions of the
input length.
The statement of the problem is as follows.

\begin{center}
  \underline{$(\alpha,\beta)$-quantum circuit distinguishability
    ($(\alpha,\beta)$-QCD)}\nopagebreak\\[2mm]
  \begin{tabular}{lp{0.75\textwidth}}
      \emph{Input:} & 
      Quantum circuits $Q_0$ and $Q_1$, agreeing on both the number of input
      qubits they take and on the number of output qubits they produce.\\[2mm]
      \emph{Yes:} &
      $\frac{1}{2}\bignorm{Q_0 - Q_1}_{\Diamond} \geq \alpha$.\\[2mm]
      \emph{No:} &
      $\frac{1}{2}\bignorm{Q_0 - Q_1}_{\Diamond} \leq \beta$.
  \end{tabular}
\end{center}

The fact that the quantum circuit distinguishability problem is in \class{QIP}
(for a wide range of choices of $\alpha$ and $\beta$) may be shown directly
through a common sort of interactive proof system reminiscent of a ``blind
taste-test.''
In essence, the verifier chooses a bit $a\in\{0,1\}$ uniformly at random,
and agrees to perform for the prover one evaluation of the circuit $Q_a$,
for whichever value of $a$ was randomly selected.
The verifier accepts if and only if the prover successfully identifies the
value of $a$ after the circuit evaluation is performed on an input of the
prover's choice and the output qubits are returned to the prover.
The maximum acceptance probability is precisely
\begin{equation}
  \frac{1}{2} + \frac{1}{4} \bignorm{Q_0 - Q_1}_{\Diamond},
\end{equation}
yielding a quantum interactive proof system for the quantum circuit
distinguishability problem with completeness and soundness bounds
$(1 + \alpha)/2$ and $(1 + \beta)/2$, respectively.

One way to prove that the quantum circuit distinguishability problem is
complete for \class{QIP} is to first observe that the following 
\emph{close circuit images} problem is \class{QIP}-complete.

\begin{center}
  \underline{$(\alpha,\beta)$-close circuit images 
    ($(\alpha,\beta)$-CCI)}\\[2mm]
  \begin{tabular}{lp{0.75\textwidth}}
      \emph{Input:} &
      Quantum circuits $R_0$ and $R_1$, agreeing on both the number of input
      qubits they take and on the number of output qubits they produce.\\[2mm]
      \emph{Yes:} &
      $\fid(R_0(\rho_0),R_1(\rho_1))^2 \geq \alpha$ for some choice of
      input states $\rho_0$ and $\rho_1$.\\[2mm]
      \emph{No:} &
      $\fid(R_0(\rho_0),R_1(\rho_1))^2 \leq \beta$ for all choices of input
      states $\rho_0$ and $\rho_1$.
  \end{tabular}
\end{center}

\noindent
It follows from the analysis in Section~\ref{sec:QIP-reduction} that this
problem is \class{QIP}-complete for any choice of constants $\alpha$ and
$\beta$ satisfying $0 < \beta < \alpha \leq 1$.
It is also complete for $\alpha$ and $\beta$ being functions of the input
length, provided that $\alpha$ and $\beta$ can be computed in polynomial time,
that $\alpha$ and $\beta$ are separated by an inverse polynomial gap, and that
$\beta$ is at least inverse exponentially large.

Now, to prove that the quantum circuit distinguishability problem is
\class{QIP}-complete, it suffices to exhibit a reduction to it from the close
images problem.
The reduction itself is quite simple, and is illustrated in
Figure~\ref{fig:quantum-circuit-distinguishability}.
From an instance of the close images problem, one first creates a
controlled-unitary implementation of the two input circuits $R_0$ and $R_1$.
The circuits $Q_0$ and $Q_1$, representing an instance of the quantum circuit
distinguishability problem, take this control qubit as both an input and output,
but reverse the roles of the output qubits and ``garbage'' qubits of both $R_0$
and $R_1$.
The circuits $Q_0$ and $Q_1$ differ only by a
\begin{equation}
  Z = \begin{pmatrix}
    1 & 0\\
    0 & -1
  \end{pmatrix}
\end{equation}
operation being applied to the control qubit for one of the two circuits.

\begin{figure}
  \begin{center}
    \begin{minipage}[b]{0.55\textwidth}
      \begin{tikzpicture}[scale=0.5,
          channel/.style={draw, minimum height=10mm, minimum width=8mm,
            fill = ChannelColor, text=ChannelTextColor},
          unitary/.style={draw, minimum height=22mm, minimum width=10mm,
            fill = ChannelColor, text=ChannelTextColor},
          invisible/.style={minimum height=0mm, minimum width=8mm, 
            inner sep=0mm},
          gate/.style={draw, minimum height=4mm, minimum width=4mm,
            fill = ChannelColor, text=ChannelTextColor},
          control/.style={rounded corners = 1.5pt, draw, fill = Black, 
            inner sep = 0pt, minimum width = 3pt, minimum height = 3pt},
          >=latex]
        
        \node (R) at (0,0) [unitary] {$R_a$};
        \node (input) at (-5.4,1) [invisible] {$\rho$};
        \node (ancilla) at (-3,-1) [channel, minimum height=11mm] {\small
          $\ket{0^m}$};
        \node (garbage) at (3,-0.7) [channel, minimum height=14mm] {\small Tr};
        \node (output) at (5.4,1.4) [invisible] {$\;R_a(\rho)$};
        
        \draw ([yshift=8mm]input.east) -- ([yshift=18mm]R.west);
        \draw ([yshift=6mm]input.east) -- ([yshift=16mm]R.west);
        \draw ([yshift=4mm]input.east) -- ([yshift=14mm]R.west);      
        \draw ([yshift=2mm]input.east) -- ([yshift=12mm]R.west);
        \draw ([yshift=0mm]input.east) -- ([yshift=10mm]R.west);
        \draw ([yshift=-2mm]input.east) -- ([yshift=8mm]R.west);
        \draw ([yshift=-4mm]input.east) -- ([yshift=6mm]R.west);
        \draw ([yshift=-6mm]input.east) -- ([yshift=4mm]R.west);
        
        \draw ([yshift=8mm]ancilla.east) -- ([yshift=-2mm]R.west);
        \draw ([yshift=6mm]ancilla.east) -- ([yshift=-4mm]R.west);
        \draw ([yshift=4mm]ancilla.east) -- ([yshift=-6mm]R.west);
        \draw ([yshift=2mm]ancilla.east) -- ([yshift=-8mm]R.west);
        \draw ([yshift=0mm]ancilla.east) -- ([yshift=-10mm]R.west);
        \draw ([yshift=-2mm]ancilla.east) -- ([yshift=-12mm]R.west);
        \draw ([yshift=-4mm]ancilla.east) -- ([yshift=-14mm]R.west);
        \draw ([yshift=-6mm]ancilla.east) -- ([yshift=-16mm]R.west);
        \draw ([yshift=-8mm]ancilla.east) -- ([yshift=-18mm]R.west);
        
        \draw ([yshift=18mm]R.east) -- ([yshift=4mm]output.west);
        \draw ([yshift=16mm]R.east) -- ([yshift=2mm]output.west);
        \draw ([yshift=14mm]R.east) -- ([yshift=0mm]output.west);
        \draw ([yshift=12mm]R.east) -- ([yshift=-2mm]output.west);
        \draw ([yshift=10mm]R.east) -- ([yshift=-4mm]output.west);
        
        \draw ([yshift=4mm]R.east) -- ([yshift=11mm]garbage.west);
        \draw ([yshift=2mm]R.east) -- ([yshift=9mm]garbage.west);
        \draw ([yshift=0mm]R.east) -- ([yshift=7mm]garbage.west);
        \draw ([yshift=-2mm]R.east) -- ([yshift=5mm]garbage.west);
        \draw ([yshift=-4mm]R.east) -- ([yshift=3mm]garbage.west);
        \draw ([yshift=-6mm]R.east) -- ([yshift=1mm]garbage.west);
        \draw ([yshift=-8mm]R.east) -- ([yshift=-1mm]garbage.west);
        \draw ([yshift=-10mm]R.east) -- ([yshift=-3mm]garbage.west);
        \draw ([yshift=-12mm]R.east) -- ([yshift=-5mm]garbage.west);
        \draw ([yshift=-14mm]R.east) -- ([yshift=-7mm]garbage.west);
        \draw ([yshift=-16mm]R.east) -- ([yshift=-9mm]garbage.west);
        \draw ([yshift=-18mm]R.east) -- ([yshift=-11mm]garbage.west);
      \end{tikzpicture}
      \caption*{Unitary implementations of circuits $R_0$ and $R_1$ are assumed
        to act on the same number of qubits.}
    \end{minipage}\\[8mm]
    \begin{minipage}[b]{0.45\textwidth}
      \begin{tikzpicture}[scale=0.5,
          channel/.style={draw, minimum height=10mm, minimum width=8mm,
            fill = ChannelColor, text=ChannelTextColor},
          unitary/.style={draw, minimum height=22mm, minimum width=10mm,
            fill = ChannelColor, text=ChannelTextColor},
          invisible/.style={minimum height=0mm, minimum width=0mm, 
            inner sep=0mm},
          gate/.style={draw, minimum height=4mm, minimum width=4mm,
            fill = ChannelColor, text=ChannelTextColor},
          control/.style={rounded corners = 1.5pt, draw, fill = Black, 
            inner sep = 0pt, minimum width = 3pt, minimum height = 3pt},
          >=latex]
        
        \node (R) at (0,0) [unitary] {$R$};
        \node (input) at (-5,1) [invisible] {};
        \node (ancilla) at (-3,-1) [channel, minimum height=11mm] {\small
          $\ket{0^m}$};
        \node (garbage) at (5,-1) [invisible] {};
        \node (output) at (3,1.4) [channel, minimum height=7mm] {\small Tr};
        
        \node (cIn) at (-5,3.5) [invisible] {};
        \node (cOut) at (5,3.5) [invisible] {};
        \node (control) at (0,3.5) [control] {};
        
        \draw (cIn) -- (control) -- (cOut);
        
        \draw (control.south) -- (R.north);
        
        \draw ([yshift=8mm]input.east) -- ([yshift=18mm]R.west);
        \draw ([yshift=6mm]input.east) -- ([yshift=16mm]R.west);
        \draw ([yshift=4mm]input.east) -- ([yshift=14mm]R.west);      
        \draw ([yshift=2mm]input.east) -- ([yshift=12mm]R.west);
        \draw ([yshift=0mm]input.east) -- ([yshift=10mm]R.west);
        \draw ([yshift=-2mm]input.east) -- ([yshift=8mm]R.west);
        \draw ([yshift=-4mm]input.east) -- ([yshift=6mm]R.west);
        \draw ([yshift=-6mm]input.east) -- ([yshift=4mm]R.west);
        
        \draw ([yshift=8mm]ancilla.east) -- ([yshift=-2mm]R.west);
        \draw ([yshift=6mm]ancilla.east) -- ([yshift=-4mm]R.west);
        \draw ([yshift=4mm]ancilla.east) -- ([yshift=-6mm]R.west);
        \draw ([yshift=2mm]ancilla.east) -- ([yshift=-8mm]R.west);
        \draw ([yshift=0mm]ancilla.east) -- ([yshift=-10mm]R.west);
        \draw ([yshift=-2mm]ancilla.east) -- ([yshift=-12mm]R.west);
        \draw ([yshift=-4mm]ancilla.east) -- ([yshift=-14mm]R.west);
        \draw ([yshift=-6mm]ancilla.east) -- ([yshift=-16mm]R.west);
        \draw ([yshift=-8mm]ancilla.east) -- ([yshift=-18mm]R.west);
        
        \draw ([yshift=18mm]R.east) -- ([yshift=4mm]output.west);
        \draw ([yshift=16mm]R.east) -- ([yshift=2mm]output.west);
        \draw ([yshift=14mm]R.east) -- ([yshift=0mm]output.west);
        \draw ([yshift=12mm]R.east) -- ([yshift=-2mm]output.west);
        \draw ([yshift=10mm]R.east) -- ([yshift=-4mm]output.west);
        
        \draw ([yshift=4mm]R.east) -- ([yshift=14mm]garbage.west);
        \draw ([yshift=2mm]R.east) -- ([yshift=12mm]garbage.west);
        \draw ([yshift=0mm]R.east) -- ([yshift=10mm]garbage.west);
        \draw ([yshift=-2mm]R.east) -- ([yshift=8mm]garbage.west);
        \draw ([yshift=-4mm]R.east) -- ([yshift=6mm]garbage.west);
        \draw ([yshift=-6mm]R.east) -- ([yshift=4mm]garbage.west);
        \draw ([yshift=-8mm]R.east) -- ([yshift=2mm]garbage.west);
        \draw ([yshift=-10mm]R.east) -- ([yshift=0mm]garbage.west);
        \draw ([yshift=-12mm]R.east) -- ([yshift=-2mm]garbage.west);
        \draw ([yshift=-14mm]R.east) -- ([yshift=-4mm]garbage.west);
        \draw ([yshift=-16mm]R.east) -- ([yshift=-6mm]garbage.west);
        \draw ([yshift=-18mm]R.east) -- ([yshift=-8mm]garbage.west);
      \end{tikzpicture}
      \caption*{The circuit $Q_0$.}
    \end{minipage}
    \hfill
    \begin{minipage}[b]{0.45\textwidth}
      \begin{tikzpicture}[scale=0.5,
          channel/.style={draw, minimum height=10mm, minimum width=8mm,
            fill = ChannelColor, text=ChannelTextColor},
          unitary/.style={draw, minimum height=22mm, minimum width=10mm,
            fill = ChannelColor, text=ChannelTextColor},
          invisible/.style={minimum height=0mm, minimum width=0mm,
          inner sep=0mm},
          gate/.style={draw, minimum height=4mm, minimum width=4mm,
            fill = ChannelColor, text=ChannelTextColor},
          control/.style={rounded corners = 1.5pt, draw, fill = Black, 
            inner sep = 0pt, minimum width = 3pt, minimum height = 3pt},
          >=latex]
        
        \node (R) at (0,0) [unitary] {$R$};
        \node (input) at (-5,1) [invisible] {};
        \node (ancilla) at (-3,-1) [channel, minimum height=11mm] {\small
          $\ket{0^m}$};
        \node (garbage) at (5,-1) [invisible] {};
        \node (output) at (3,1.4) [channel, minimum height=7mm] {\small Tr};
        
        \node (cIn) at (-5,3.5) [invisible] {};
        \node (cOut) at (5,3.5) [invisible] {};
        \node (Z) at (3,3.5) [gate] {\scriptsize $Z$};
        \node (control) at (0,3.5) [control] {};
        
        \draw (cIn) -- (control) -- (Z) -- (cOut);
        
        \draw (control.south) -- (R.north);
        
        \draw ([yshift=8mm]input.east) -- ([yshift=18mm]R.west);
        \draw ([yshift=6mm]input.east) -- ([yshift=16mm]R.west);
        \draw ([yshift=4mm]input.east) -- ([yshift=14mm]R.west);      
        \draw ([yshift=2mm]input.east) -- ([yshift=12mm]R.west);
        \draw ([yshift=0mm]input.east) -- ([yshift=10mm]R.west);
        \draw ([yshift=-2mm]input.east) -- ([yshift=8mm]R.west);
        \draw ([yshift=-4mm]input.east) -- ([yshift=6mm]R.west);
        \draw ([yshift=-6mm]input.east) -- ([yshift=4mm]R.west);
        
        \draw ([yshift=8mm]ancilla.east) -- ([yshift=-2mm]R.west);
        \draw ([yshift=6mm]ancilla.east) -- ([yshift=-4mm]R.west);
        \draw ([yshift=4mm]ancilla.east) -- ([yshift=-6mm]R.west);
        \draw ([yshift=2mm]ancilla.east) -- ([yshift=-8mm]R.west);
        \draw ([yshift=0mm]ancilla.east) -- ([yshift=-10mm]R.west);
        \draw ([yshift=-2mm]ancilla.east) -- ([yshift=-12mm]R.west);
        \draw ([yshift=-4mm]ancilla.east) -- ([yshift=-14mm]R.west);
        \draw ([yshift=-6mm]ancilla.east) -- ([yshift=-16mm]R.west);
        \draw ([yshift=-8mm]ancilla.east) -- ([yshift=-18mm]R.west);
        
        \draw ([yshift=18mm]R.east) -- ([yshift=4mm]output.west);
        \draw ([yshift=16mm]R.east) -- ([yshift=2mm]output.west);
        \draw ([yshift=14mm]R.east) -- ([yshift=0mm]output.west);
        \draw ([yshift=12mm]R.east) -- ([yshift=-2mm]output.west);
        \draw ([yshift=10mm]R.east) -- ([yshift=-4mm]output.west);
        
        \draw ([yshift=4mm]R.east) -- ([yshift=14mm]garbage.west);
        \draw ([yshift=2mm]R.east) -- ([yshift=12mm]garbage.west);
        \draw ([yshift=0mm]R.east) -- ([yshift=10mm]garbage.west);
        \draw ([yshift=-2mm]R.east) -- ([yshift=8mm]garbage.west);
        \draw ([yshift=-4mm]R.east) -- ([yshift=6mm]garbage.west);
        \draw ([yshift=-6mm]R.east) -- ([yshift=4mm]garbage.west);
        \draw ([yshift=-8mm]R.east) -- ([yshift=2mm]garbage.west);
        \draw ([yshift=-10mm]R.east) -- ([yshift=0mm]garbage.west);
        \draw ([yshift=-12mm]R.east) -- ([yshift=-2mm]garbage.west);
        \draw ([yshift=-14mm]R.east) -- ([yshift=-4mm]garbage.west);
        \draw ([yshift=-16mm]R.east) -- ([yshift=-6mm]garbage.west);
        \draw ([yshift=-18mm]R.east) -- ([yshift=-8mm]garbage.west);
      \end{tikzpicture}
      \caption*{The circuit $Q_1$.}
    \end{minipage}
  \end{center}
 \caption{The circuit construction used to prove the \class{QIP}-completeness
   of the quantum circuit distinguishability problem.
   The circuits $Q_0$ and $Q_1$ are constructed from a controlled-unitary
   circuit implementation of $R_0/R_1$.
   The circuits $Q_0$ and $Q_1$ differ only in the $Z$ gate on the control
   qubit.}
 \label{fig:quantum-circuit-distinguishability}
\end{figure}

Intuitively speaking, $Q_0$ and $Q_1$ differ significantly precisely when
$R_0$ and $R_1$ can be forced to output similar states---for this is the
only way to produce a ``coherent'' superposition of $\ket{0}$ and $\ket{1}$ on
the control qubit, allowing the $Z$ gate to act nontrivially.
An analysis reveals that
\begin{equation}
  \frac{1}{2}\bignorm{Q_0 - Q_1}_{\Diamond}
  = \max_{\rho_0,\rho_1}\fid(R_0(\rho_0),R_1(\rho_1)),
\end{equation}
where the maximum is over all input states to $R_0$ and $R_1$
(and where the exponent 2 has intentionally been omitted on the right-hand
side expression), which directly translates to the required completeness and
soundness conditions being met.

\section{Chapter notes}

Quantum interactive proof systems were first proposed and studied in
\cite{Watrous99}, where it was proved that
$\class{PSPACE}\subseteq\class{QIP}(3)$.
(A journal version of this paper appeared later as \cite{Watrous03}.)
This result was subsumed shortly after by the results of
Kitaev and Watrous \cite{KitaevW00}, who proved the perfect completeness,
parallelization to three turns, and parallel repetition results described in
Sections~\ref{sec:QIP-perfect-completeness-and-parallelization}
and \ref{sec:qip-sdp}, as well as the upper-bound
$\class{QIP}\subseteq\class{EXP}$ through semidefinite programming.
The parallelization proof presented in this chapter is due to
Kempe, Kobayashi, Matsumoto, and Vidick \cite{KempeKMV09}, while the parallel
repetition proof presented was inspired by the work of Mittal and
Szegedy \cite{MittalS07} on product properties of semidefinite programs.
It is an open question whether the parallelization results could
be extended to interactive game with only two turns, and aside from the obvious
inclusions $\class{QMA}\subseteq \class{QIP(2)}\subseteq\class{QIP}(3)$ few
results are known concerning $\class{QIP}(2)$.

The public-coin variant of quantum interactive proof systems suggested in the
text, where Arthur (the verifier) generates random coin-flips and Merlin (the
prover) sends quantum information to Arthur, alternating in turns, and finally
Arthur performs a measurement on all of the quantum information sent by Merlin,
was considered by Marriott and Watrous \cite{MarriottW05} and shown to be
equivalent in power to the ordinary quantum interactive proof system model.

The equality \class{QIP} = \class{PSPACE} was proved by Jain, Ji, Upadhyay, and
Watrous \cite{JainJUW11}, following related but weaker results of Jain and 
Watrous \cite{JainW09} and Jain, Upadhyay, and Watrous \cite{JainUW09}.
In particular, the paper of Jain, Upadhyay, and Watrous proved the containment
$\class{QIP}(2)\subseteq\class{PSPACE}$. 
As stated in the text, the proof of \class{QIP} = \class{PSPACE} presented in
this chapter makes use of simplifications due to Wu \cite{Wu10}.
The specific formulation of the matrix multiplicative weights update method
upon which the proof that \class{QIP} = \class{PSPACE} relies was discovered
independently by Warmuth and Kuzmin \cite{WarmuthK06} and Arora and Kale
\cite{AroraK07}.
Readers interested in learning more about the matrix multiplicative weights
update method, as well as a bit of its history, are referred to Kale's PhD
thesis \cite{Kale07}.
Quantum interactive proof systems with unbounded error were considered in
\cite{ItoKW12} and shown to be equivalent in power to \class{EXP}, which
suggests (unless \class{PSPACE} = \class{EXP}) that a nonnegligible separation
between the completeness and soundness probability bounds in the definition of
\class{QIP} is essential to the proof of $\class{QIP} \subseteq \class{PSPACE}$.
(In contrast, the containment $\class{IP} \subseteq \class{PSPACE}$ does not
rely on a similar assumption.)

The quantum circuit distinguishability problem was proved to be complete for
\class{QIP} by Rosgen and Watrous \cite{RosgenW05}, and the proof of this fact
that has been summarized in this chapter makes use of a simplification due to
Kobayashi (through a personal communication).
Further work due to Rosgen on the computational hardness of distinguishing
between restricted classes of quantum channels appears in
\cite{Rosgen08b,Rosgen08a,Rosgen09}.
Another example of a complete promise problem for \class{QIP} relating to
quantum channels appears in \cite{HaydenMW14}.
Other problems of a similar nature are known to be complete for the class
\class{QSZK}, which is discussed in the chapter following this one.

Several interesting facts are known to hold concerning \emph{competing prover}
quantum interactive proof systems, where a \emph{yes-prover} and
\emph{no-prover} compete for the verifier's decision in the most natural way.
Proof systems of this sort were first considered by Gutoski and
Watrous \cite{GutoskiW05} and studied further in 
\cite{Gutoski05, Gutoski09, GutoskiW07, GutoskiW13}.
In particular, \cite{GutoskiW07} proved that the class \class{QRG} (short for
\emph{quantum refereed games}) representing problems for which a bounded-error
competing-prover quantum interactive proof system exists coincides with
\class{EXP};
while \cite{GutoskiW13} extended the machinery used to prove \class{QIP} =
\class{PSPACE} to obtain the result $\class{QRG}(2) = \class{PSPACE}$,
exactly characterizing the complexity of two-turn competing-prover quantum
interactive proof systems.

Quantum interactive proof systems in which only logarithmically many qubits
are exchanged between the prover and verifier were considered in
\cite{BeigiSW11} and shown to have only the power of \class{BQP}, thereby
offering no significant computational advantages over quantum computers that
do not interact with a prover.

\chapter{Quantum Zero-Knowledge} \label{chapter:QSZK}

Some interactive proof systems possess the property of being
\emph{zero-knowledge}, which means that the verifier can learn nothing (or
almost nothing) from an interaction with the prover, beyond the validity of the
statement being proved.
Although it might initially seem paradoxical, or perhaps impossible to fulfill,
there are many interesting interactive proof systems that possess this
property.

Perhaps the most well-known example is the Goldreich--Micali--Wigderson graph
isomorphism proof system described in Figure~\ref{fig:GMW-graph-isomorphism}.
\begin{figure}
\noindent\hrulefill
\begin{trivlist}
\item The input is a pair $(G_0,G_1)$ of simple, undirected $n$-vertex graphs.
  It is assumed that the prover knows a permutation $\sigma\in S_n$,
  for $S_n$ denoting the symmetric group on indices $\{1,\ldots,n\}$, that
  satisfies $\sigma(G_1) = G_0$ if $G_0$ and $G_1$ are isomorphic.
  \vspace{1mm}
\item {\bf Prover's step 1:}
  Choose a permutation $\pi\in S_n$ uniformly at random and send the graph $H =
  \pi(G_0)$ to the verifier.
  \vspace{1mm}
\item {\bf Verifier's step 1:}
  Choose $a\in\{0,1\}$ uniformly at random and send $a$ to the prover.
  (Implicitly, the verifier is challenging the prover to exhibit an
  isomorphism between $G_a$ and $H$.)
  \vspace{1mm}
\item {\bf Prover's step 2:}
  Set $\tau = \pi \sigma^a$ and send $\tau$ to the verifier.
  (If it is the case that $\sigma(G_1) = G_0$, then it must hold that
  $\tau(G_a) = H$.)
  \vspace{1mm}
\item {\bf Verifier's step 2:}
  Accept if $\tau(G_a) = H$, reject otherwise.
\end{trivlist}
\noindent\hrulefill
\caption{The Goldreich--Micali--Wigderson graph isomorphism proof system.}
\label{fig:GMW-graph-isomorphism}
\end{figure}
When one considers this proof system, it is intuitively clear that a verifier
can learn nothing from an interaction with the prover on a positive instance of
the problem, even if it ``cheats'' by deviating from the proof system---for all
it sees during an execution of the proof system is an isomorphism between
one of two input graphs and a randomly chosen permutation of that graph.
To prove in a more formal sense that this proof system is zero-knowledge, one
defines a polynomial-time \emph{simulator} that is capable of mimicking
anything that a \emph{cheating verifier} could compute by means of an
interaction with the prover (under the assumption that the input graphs are
isomorphic).
Because the simulator runs in polynomial time and (by assumption) does not
interact with the prover, one interprets that no ``knowledge'' about the input
graphs is revealed by the prover---whatever a cheating verifier could have
learned from the interaction could equally well have been computed efficiently
without the prover's help.

In the classical setting, the construction of such a simulator is fairly
straightforward in the case of the Goldreich--Micali--Wigderson graph
isomorphism proof system: it randomly guesses the cheating verifier's challenge
$a\in\{0,1\}$, computes the graph $H$ that satisfies $\tau(H) = G_a$ for a
random permutation $\tau$, and hopes the cheating verifier issues the challenge
$a$ when sent the graph $H$.
If it does, the simulator can output the prover's correct response, which is
$\tau$.
Otherwise the simulation has failed, but in this case one can simply
``rewind'' the simulator and try again, repeating the process with a new choice
of a random guess $a$, a random permutation $\tau$, and so on.
An analysis reveals that the simulator's guess must agree with the cheating
verifier's challenge with probability 1/2 (assuming a uniform selection of
$a$ and~$\tau$), and conditioned on a correct guess the output of the
simulation will be in perfect agreement with the cheating verifier/prover
interaction.

The quantum setting brings a new challenge to the subject of zero-knowledge,
as compared with the classical setting.
The algorithmic capabilities of quantum attackers (including
the ability to efficiently factor and compute discrete logarithms using Shor's
algorithms) represents one aspect of this challenge. 
More problematic from an analytic viewpoint is the failure of basic classical
techniques used in the study of zero-knowledge, such as the \emph{rewinding}
technique suggested above, to carry over directly to the quantum setting.
(In particular, if a cheating verifier begins its attack holding a potentially
useful quantum state, it could be that an unsuccessful simulation will have
irreparably damaged that state, ruling out a straightforward rewinding approach
like the one suggested above.)
Indeed, for some time even the simplest examples of classical zero-knowledge
proof systems, such as the Goldreich--Micali--Wigderson graph isomorphism
proof system, were not known to be zero-knowledge against quantum attacks,
leading some to question whether the zero-knowledge property could ever be
established in the quantum setting (aside from trivial cases in which the
zero-knowledge property holds vacuously).

The present chapter surveys some of the definitions and known facts concerning
quantum zero-knowledge proof systems.
Our focus will be on a variant of zero-knowledge known as \emph{statistical}
zero-knowledge, which demands an information-theoretic security condition as a
part of its definition.
The other fundamental variant is \emph{computational} zero-knowledge,
which relaxes the security condition in a natural complexity-theoretic way.
While the notion of quantum computational zero-knowledge is certainly
well-motivated (see, for instance, \cite{Kobayashi08} for work in this
direction) it is reasonable to claim that most of the uniquely quantum
aspects of zero-knowledge that are currently known are well-represented by the
study of statistical zero-knowledge.\footnote{
  There would also be a further overhead required to develop the requisite
  definitions and facts for a proper discussion of quantum computational
  zero-knowledge, so limiting our attention to quantum statistical
  zero-knowledge also serves to simplify the chapter in this respect.}

\pagebreak

Two highlights of the results to be described in this chapter are as follows:
\begin{mylist}{\parindent}
\item[1.]
  A fact known as the \emph{quantum rewinding lemma} may be used to prove that
  some simple interactive proof systems are zero-knowledge against general
  quantum verifier attacks.
  This is illustrated for the Goldreich--Micali--Wigderson graph isomorphism
  proof system introduced above.
  
\item[2.]
  It is proved that a promise problem called the \emph{close quantum states}
  problem is complete for the class \class{QSZK} of problems having quantum
  statistical zero-knowledge proof systems.
  The completeness of this problem allows one to conclude various facts,
  including the fact that every quantum statistical zero-knowledge proof system
  can be parallelized to just two turns (so that
  $\class{QSZK}\subseteq\class{QIP}(2)$).
\end{mylist}

\section{Definitions of zero-knowledge}

As suggested above, the two most fundamental variants of zero-knowledge are
\emph{statistical} zero-knowledge and \emph{computational} zero-knowledge.
In addition, there are two specific formulations of the zero-knowledge property
in both cases:
\emph{general verifier} zero-knowledge and \emph{honest verifier}
zero-knowledge.
Intuitively speaking, the distinction concerns how hard the verifier, now
thought of as an ``adversary'' to the proof system, might try to extract
knowledge from a prover.

General verifier zero-knowledge is the more cryptographically meaningful of the
two formulations, as one is required to prove security against arbitrary
attacks (made by computationally restricted adversaries) in this case.
Honest verifier zero-knowledge, on the other hand, only requires that a
verifier (or someone looking over the verifier's shoulder) learns nothing by
interacting with a prover, assuming that the verifier does not deviate from a
fixed specification of how it is supposed to behave.

The motivation for considering honest verifier zero-knowledge is that it is
generally much easier to reason about.
It may be viewed as a relaxation of the general verifier definition, so any
limitation that one proves on honest-verifier zero-knowledge proof systems must
also hold for general-verifier zero-knowledge proof systems.
In the particular case of statistical zero-knowledge (both in the classical and
quantum settings), it turns out that the honest verifier and general verifier
definitions lead to identical complexity classes.

\subsection{General verifier zero-knowledge}

Suppose that $P = (P_0,\ldots,P_{n-1})$ is a prover in an interactive game.
Under ordinary circumstances, it is to be expected that a fixed verifier $V$
engages in an interaction with $P$ and outputs a single binary value, indicating
acceptance or rejection.
In the context of zero-knowledge, one considers not just the interaction of
such a verifier $V$ with $P$, but of other entities (to be called
\emph{cheating verifiers}, as was done at the beginning of the chapter) that aim
to extract ``knowledge'' from $P$.
Figure~\ref{fig:cheating-verifier} illustrates a hypothetical cheating verifier
$W$ interacting with a prover $P = (P_0,P_1,P_2)$.
The cheating verifier $W$ does not necessarily output 0 or 1 like an ordinary
verifier, but instead aims to implement some channel $\Phi$ by means of an
interaction with $P$.
\begin{figure}
  \begin{center}
    \footnotesize
    \begin{tikzpicture}[scale=0.32, 
        turn/.style={draw, minimum height=14mm, minimum width=10mm,
          fill = ChannelColor, text=ChannelTextColor},
        invisible/.style={minimum width=7mm, minimum height=7mm},
        >=latex]
      
      \node (V0) at (-16,4.4) [invisible] {$\rho$};
      \node (V1) at (-8,4) [turn] {$W_1$};
      \node (V2) at (2,4) [turn] {$W_2$};
      \node (V3) at (12,4) [turn] {$W_3$};
      
      \node (M) at (17,4.4) [invisible] {$\Phi(\rho)$};
     
      \node (P0) at (-13,-1) [turn] {$P_0$};
      \node (P1) at (-3,-1) [turn] {$P_1$};
      \node (P2) at (7,-1) [turn] {$P_2$};
      
      \draw[->] ([yshift=4mm]V1.east) -- ([yshift=4mm]V2.west)
      node [above, midway] {$\reg{Z}_1$};
      
      \draw[->] (V0.east) -- ([yshift=4mm]V1.west)
      node [above, midway] {$\reg{Z}_0$};
      
      \draw[->] ([yshift=4mm]V2.east) -- ([yshift=4mm]V3.west)
      node [above, midway] {$\reg{Z}_2$};
      
      \draw[->] ([yshift=-4mm]V1.east) .. controls +(right:20mm) and 
      +(left:20mm) .. ([yshift=4mm]P1.west) node [right, pos=0.4] {$\reg{X}_1$};
      
      \draw[->] ([yshift=-4mm]V2.east) .. controls +(right:20mm) and 
      +(left:20mm) .. ([yshift=4mm]P2.west) node [right, pos=0.4] {$\reg{X}_2$};
      
      \draw[->] ([yshift=4mm]P0.east) .. controls +(right:20mm) and 
      +(left:20mm) .. ([yshift=-4mm]V1.west) node [left, pos=0.6] 
      {$\reg{Y}_0$};
      
      \draw[->] ([yshift=4mm]P1.east) .. controls +(right:20mm) and 
      +(left:20mm) .. ([yshift=-4mm]V2.west) node [left, pos=0.6] 
      {$\reg{Y}_1$};
      
      \draw[->] ([yshift=4mm]P2.east) .. controls +(right:20mm) and 
      +(left:20mm) .. ([yshift=-4mm]V3.west) node [left, pos=0.6]
      {$\reg{Y}_2$};
      
      \draw[->] ([yshift=-4mm]P0.east) -- ([yshift=-4mm]P1.west)
      node [below, midway] {$\reg{W}_0$};
      
      \draw[->] ([yshift=-4mm]P1.east) -- ([yshift=-4mm]P2.west)
      node [below, midway] {$\reg{W}_1$};
      
      \draw[->] ([yshift=4mm]V3.east) -- (M.west)
      node [above, midway] {$\reg{Z}_3$};

    \end{tikzpicture}
  \end{center}
  \caption{A cheating verifier $W = (W_1,\ldots,W_n)$ implements a channel
    $\Phi$ by means of an interaction with a prover
    $P = (P_0,\ldots,P_{n-1})$.} 
  \label{fig:cheating-verifier}
\end{figure}
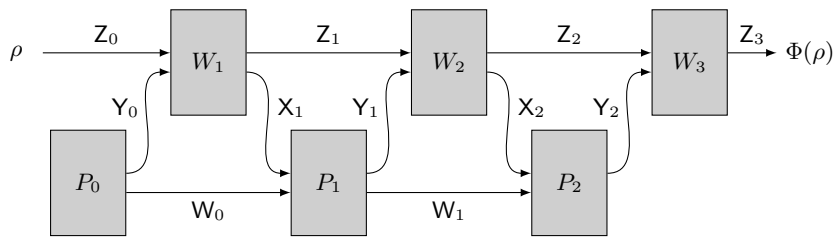

There is not a formal definition of ``knowledge'' in this situation, and the
term is placed in quotes to emphasize this point.
Rather, one considers the complexity-theoretic aspects of the class of channels
$\Phi$ that can be implemented through an interaction with $P$.
Intuitively speaking, if every such channel can be efficiently implemented
\emph{without} making use of an interaction with $P$, then it is to be
interpreted that no ``knowledge'' was revealed by $P$---for whatever the
cheating verifier might have learned from $P$ could equally well have been
computed efficiently without interacting with~$P$.
Although it is not necessary to use this terminology in the definitions to
follow, the term \emph{simulator} is typically used to refer to a hypothetical
entity that implements or approximates the channel implemented by a cheating
verifier/prover interaction in this way.

The following definitions aim to formalize the notions just suggested.
The first definition concerns the collection of channels that may be
implemented through an interaction with a given prover.

\begin{definition}
  Let $A\subseteq\Sigma^{\ast}$ be a set of binary strings, let $r$ and $s$ be
  polynomially bounded functions, and let $P(x)$ be a prover in an interactive
  game for each $x\in A$.
  Suppose further that
  \begin{equation}
    \label{eq:another-collection-of-channels}
    \{\Phi_x\,:\,x\in A\}
  \end{equation}
  is a collection of channels, where $\Phi_x$ transforms $r(\abs{x})$ qubits
  to $s(\abs{x})$ qubits, for each $x\in A$.
  The collection \eqref{eq:another-collection-of-channels} is
  \emph{efficiently implementable through an interaction with $P$} if 
  there exists a polynomial-time computable function $W$ possessing the
  following properties for each $x\in A$:
  \begin{mylist}{\parindent}
  \item[1.]
    $W(x)$ is an encoding of the quantum circuit description of a cheating
    verifier that is compatible with $P(x)$.
  \item[2.]
    In an interaction with $P(x)$, the cheating verifier encoded by $W(x)$
    takes $r(\abs{x})$ input qubits, produces $s(\abs{x})$ output qubits, and
    implements the channel $\Phi_x$ through its interaction with $P(x)$.
  \end{mylist}
\end{definition}

The next definition formalizes what it means for one collection of channels to
\emph{efficiently approximate} another.

\begin{definition}
  \label{definition:efficient-channel-approximation}
  Let $A\subseteq\Sigma^{\ast}$ be a set of binary strings, let $r$ and $s$ be
  polynomially bounded functions, and let
  $\varepsilon:\natural\rightarrow[0,1]$ be a function.
  Suppose further that
  \begin{equation}
    \label{eq:collection-of-channels}
    \{\Phi_x\,:\,x\in A\}
  \end{equation}
  is a collection of channels, where $\Phi_x$ transforms $r(\abs{x})$ qubits
  to $s(\abs{x})$ qubits for each $x\in A$.
  The collection \eqref{eq:collection-of-channels} is
  \emph{efficiently $\varepsilon$-approximable} if there exists a
  polynomial-time computable function $Q$ possessing the following properties
  for each $x\in A$:
  \begin{mylist}{\parindent}
  \item[1.]
    $Q(x)$ is an encoding of a quantum circuit transforming $r(\abs{x})$ qubits
    to $s(\abs{x})$ qubits.
  \item[2.]
    The circuit with encoding $Q(x)$ implements an
    $\varepsilon(\abs{x})$-approximation to the channel $\Phi_x$.
  \end{mylist}
\end{definition}

The property of a given prover being 
\emph{quantum statistical zero-knowledge} on a particular set of input strings
may now be defined.
Intuitively speaking, a prover has this property for a set of inputs if
every channel that can be implemented through an interaction with it on this
set of inputs can also be efficiently approximated without interacting with
it.

\begin{definition}
  Let $A\subseteq\Sigma^{\ast}$ be a set of binary strings and let $P(x)$ be a
  prover in an interactive game for each $x\in A$.
  It is said that $P$ is a \emph{quantum statistical zero-knowledge} prover
  on the set $A$ if, for every collection of channels
  \begin{equation}
    \label{eq:Phi-collection-of-channels-zero-knowledge}
    \{\Phi_x\,:\,x\in A\}
  \end{equation}
  that is efficiently implementable through an interaction with $P$,
  one has that the collection
  \eqref{eq:Phi-collection-of-channels-zero-knowledge} is also efficiently
  $\varepsilon$-approximable for some choice of a negligible\footnote{
    A function $\varepsilon:\natural\rightarrow[0,1]$ is typically defined to
    be \emph{negligible} if $\varepsilon(n) = o(n^{-c})$ for every
    positive constant $c > 0$.
    An alternative definition that serves equally well for the purposes of this
    survey is that $\varepsilon$ is negligible if and only if
    $\varepsilon(n) = O\bigl(2^{-n^c}\bigr)$ for some choice of a positive
    constant $c > 0$.}
  function $\varepsilon:\natural\rightarrow[0,1]$.
\end{definition}

\begin{definition}
  A promise problem $A = (A_{\textup{yes}},A_{\textup{no}})$ is contained in
  the complexity class $\class{QSZK}_{a,b}(m)$ if there exists
  a polynomial-time computable function $V$ and a function $P$ that possess the
  following properties:
  \begin{mylist}{\parindent}
  \item[1.]
    For every string $x\in A_{\textup{yes}} \cup A_{\textup{no}}$, one has that
    $V(x)$ is an encoding of a quantum circuit description of an $m$-message
    verifier in an interactive game, and $P(x)$ is a prover that is compatible
    with $V(x)$.
  \item[2.]
    \emph{Completeness.}
    For every string $x\in A_{\textup{yes}}$, the interaction between $V(x)$
    and $P(x)$ causes $V(x)$ to output 1 with probability at least $a$.
  \item[3.]
    \emph{Soundness.}
    For every string $x\in A_{\textup{no}}$, the value of the verifier
    $V(x)$ satisfies $\omega(V(x))\leq b$.
  \item[4.]
    \emph{Zero-knowledge.}
    $P$ is a quantum statistical zero-knowledge prover on the set
    $A_{\textup{yes}}$.
  \end{mylist}
\end{definition}

We will make use of shorthand conventions when referring to complexity classes
of the form $\class{QSZK}_{a,b}(m)$ that are similar to those used when
referring to $\class{QIP}_{a,b}(m)$.
In particular, it is to be understood that $a = 2/3$ and $b = 1/3$ when these
subscripts are omitted, while the omission of $m$ indicates a union over all
polynomially bounded functions $m$.

A few remarks concerning the previous definition are in order.
One is that the containment
\begin{equation}
  \class{QSZK}_{a,b}(m) \subseteq \class{QIP}_{a,b}(m)
\end{equation}
is immediate, for every choice of $m$, $a$, and $b$;
the verifier $V$ whose existence is required for a given promise problem $A$ to
be included in $\class{QSZK}_{a,b}(m)$ fulfills, without any modifications
being necessary, the requirements for $A$ to be contained in
$\class{QIP}_{a,b}(m)$.
The additional requirements for a promise problem to be included in
$\class{QSZK}_{a,b}(m)$, as compared with $\class{QIP}_{a,b}(m)$, therefore
concern the prover $P$ referred to in the definition.
First, it must be the case that this prover $P$ fulfills the requirements of
the completeness condition for $V$, and second, it must not be possible for
any cheating verifier to extract ``knowledge'' from $P(x)$, for any yes-input
string $x\in A_{\textup{yes}}$.

It is natural to ask why the zero-knowledge condition is required to hold only
on the set $A_{\textup{yes}}$, and not also on $A_{\textup{no}}$, for
instance.
Practically speaking, extending the zero-knowledge requirement from the set
$A_{\textup{yes}}$ to all of $A_{\textup{yes}}\cup A_{\textup{no}}$ would
immediately lead to the collapse $\class{QSZK}_{a,b}(m) = \class{BQP}$
(assuming $a$ and $b$ are representative of reasonable, bounded-error
probability bounds).
Indeed, even approximating the channel implemented by the \emph{honest}
verifier $V$ when interacting with $P$, which has no input qubits and one
output qubit, would provide a $\class{BQP}$ algorithm for $A$.
Requiring the zero-knowledge condition to hold for all input strings is
therefore too strong of a requirement to be interesting.
A more philosophical reason for requiring the zero-knowledge condition only
on $A_{\textup{yes}}$ is that there is no ``knowledge'' to be gained from a
prover $P$ on an input string $x\in A_{\textup{no}}$.
(Alternatively, one may view that it is not the role of cryptography to protect
the dishonest against one another.)

Another natural question regarding the definition above is why one considers
the possible \emph{channels} that can be implemented through an interaction
with a given prover, as opposed (for instance) to the states that may be
prepared through an interaction with a prover.
Again, there is a practical reason and a philosophical reason.
Practically speaking, the definition above that is based on channels rather
than state preparations is more robust, possessing various closure properties
that one would hope for such a concept to fulfill.
Philosophically speaking, the definition based on channels captures the idea
that a cheating verifier cannot \emph{increase} its knowledge through an
interaction with a given prover, while an analogous definition based on state
preparations would represent the idea that a verifier that knows nothing cannot
learn something through an interaction with a prover.
The channel-based definition is the more satisfying definition from a
cryptographic point of view for both reasons.

As suggested above, it is possible to formulate other natural variants of
zero-knowledge, such as \emph{quantum computational zero-knowledge}, by
modifying the definitions of quantum statistical zero-knowledge.
In particular, it is only the notion of one channel approximating another
(Definition~\ref{definition:channel-approximation}) that needs to be 
modified in a substantive way to lead to the notion of quantum computational
zero-knowledge.
Rather than requiring two channels to be close with respect to the diamond
norm distance, which represents the impossibility for the two channels to be
distinguished in an information-theoretic sense, quantum computational
zero-knowledge requires only that two channels cannot be
\emph{efficiently distinguished} (by polynomial-size quantum circuit families).

\subsection{Honest verifier zero-knowledge}

The zero-knowledge property of a prover $P$ on a collection of yes-inputs, for
a given promise problem $A = (A_{\textup{yes}},A_{\textup{no}})$, is sometimes
difficult to establish---one must prove the existence of an efficient
\mbox{$\varepsilon$-approximation} (i.e., a simulator) for \emph{every}
collection of 
channels that can be efficiently implemented through an interaction with $P$.
A simpler and less restrictive notion of zero-knowledge is known as
\emph{honest-verifier} zero-knowledge.
To define this notion, one must first define what is meant by the \emph{view}
of a given verifier in an interactive game.

\begin{definition}
  Suppose $V = (V_1,\ldots,V_n)$ is a verifier in an interactive game and
  $P = (P_0,\ldots,P_{n-1})$ is a prover compatible with $V$.
  The \emph{view} of $V$ when interacting with $P$ is the state
  \begin{equation}
    \op{view}(V,P) = \rho_0\otimes \rho_1 \otimes \cdots \otimes \rho_{n-1},
  \end{equation}
  where
  \begin{equation}
    \rho_0 \in\Density(\Y_0),\;
    \rho_1 \in \Density(\Z_1\otimes\Y_1),\;
    \ldots,\;
    \rho_{n-1} \in \Density(\Z_{n-1}\otimes\Y_{n-1})
  \end{equation}
  represent the states of the registers $\reg{Y}_0$, $(\reg{Z}_1,\reg{Y}_1)$,
  \ldots, $(\reg{Z}_{n-1},\reg{Y}_{n-1})$ at their corresponding times in the
  interaction between $V$ and $P$.
\end{definition}

\noindent
Figure~\ref{fig:verifier-view} illustrates the state representing the view of
a 5-message verifier interacting with a prover.
\begin{figure}
  \begin{center}
    \footnotesize
    \begin{tikzpicture}[scale=0.35, 
        turn/.style={draw, minimum height=14mm, minimum width=10mm,
          fill = ChannelColor, text=ChannelTextColor},
        measure/.style={draw, minimum width=7mm, minimum height=7mm,
          fill = ChannelColor},
        >=latex]
      
      \node (V1) at (-8,4) [turn] {$V_1$};
      \node (V2) at (2,4) [turn] {$V_2$};
      \node (V3) at (12,4) [turn] {$V_3$};
      
      \node (M) at (17,4.4) [measure] {};
     
      \node (P0) at (-13,-1) [turn] {$P_0$};
      \node (P1) at (-3,-1) [turn] {$P_1$};
      \node (P2) at (7,-1) [turn] {$P_2$};

      \node[draw, minimum width=5mm, minimum height=3.5mm, fill=ReadoutColor]
      (readout) at (M) {};
      
      \draw[thick] ($(M)+(0.3,-0.15)$) arc (0:180:3mm);
      \draw[thick] ($(M)+(0.2,0.2)$) -- ($(M)+(0,-0.2)$);
      \draw[fill] ($(M)+(0,-0.2)$) circle (0.5mm);
      
      \draw[->] ([yshift=4mm]V1.east) -- ([yshift=4mm]V2.west)
      node [above, midway] {$\reg{Z}_1$};
      
      \draw[->] ([yshift=4mm]V2.east) -- ([yshift=4mm]V3.west)
      node [above, midway] {$\reg{Z}_2$};
      
      \draw[->] ([yshift=-4mm]V1.east) .. controls +(right:20mm) and 
      +(left:20mm) .. ([yshift=4mm]P1.west) node [right, pos=0.4] {$\reg{X}_1$};
      
      \draw[->] ([yshift=-4mm]V2.east) .. controls +(right:20mm) and 
      +(left:20mm) .. ([yshift=4mm]P2.west) node [right, pos=0.4] {$\reg{X}_2$};
      
      \draw[->] ([yshift=4mm]P0.east) .. controls +(right:20mm) and 
      +(left:20mm) .. ([yshift=-4mm]V1.west) node [left, pos=0.6] 
      {$\reg{Y}_0$};
      
      \draw[->] ([yshift=4mm]P1.east) .. controls +(right:20mm) and 
      +(left:20mm) .. ([yshift=-4mm]V2.west) node [left, pos=0.6] 
      {$\reg{Y}_1$};
      
      \draw[->] ([yshift=4mm]P2.east) .. controls +(right:20mm) and 
      +(left:20mm) .. ([yshift=-4mm]V3.west) node [left, pos=0.6]
      {$\reg{Y}_2$};
      
      \draw[->] ([yshift=-4mm]P0.east) -- ([yshift=-4mm]P1.west)
      node [below, midway] {$\reg{W}_0$};
      
      \draw[->] ([yshift=-4mm]P1.east) -- ([yshift=-4mm]P2.west)
      node [below, midway] {$\reg{W}_1$};
      
      \draw[->] ([yshift=4mm]V3.east) -- (M.west)
      node [above, midway] {$\reg{Z}_3$};
      
      \node[rotate=31, draw, thick, ellipse, minimum width=6mm,
        minimum height=18mm] at (-2.2,3.6) {};
      
      \node[rotate=31, draw, thick, ellipse, minimum width=6mm,
        minimum height=18mm] at (7.8,3.6) {};
      
      \node[draw, circle, thick, minimum width=5.2mm, minimum
        height=5.2mm] at (-11.4,2.1) {};

    \end{tikzpicture}
  \end{center}
  \caption{The verifier's \emph{view} of an interaction is represented by
    the sequence of states corresponding to the registers it hold immediately
    prior to its actions.
    In the 5-message case, the view corresponds to the states of the
    registers $\reg{Y}_0$, $(\reg{Z}_1,\reg{Y}_1)$, and
    $(\reg{Z}_2,\reg{Y}_2)$.}
  \label{fig:verifier-view}
\end{figure}
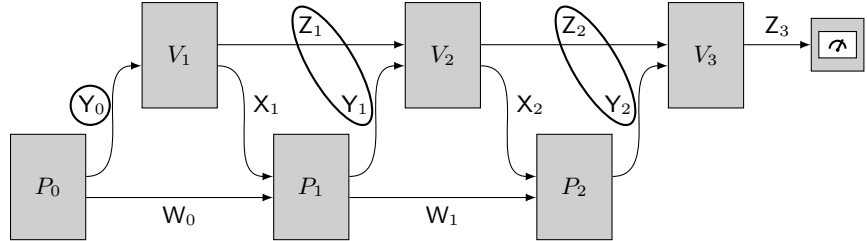

It must be noted that the state
$\rho_0\otimes\rho_1\otimes\cdots\otimes\rho_{n-1}$ representing the view of a
verifier interacting with a prover does not itself represent a state of a
collection of registers that ever co-exist at any moment.
Rather, one should view this state as a mathematical abstraction---it is not
possible to faithfully represent a transcript of a quantum interaction by a
single quantum state, but the state 
$\rho_0\otimes\rho_1\otimes\cdots\otimes\rho_{n-1}$ is ``good enough'' to
serve the needs of a study of the zero-knowledge property.


Next, we require a definition that formalizes what it means for a set of states
to be efficiently approximable.
This definition is analogous to
Definition~\ref{definition:efficient-channel-approximation}, but for states
rather than channels.

\begin{definition}
  Let $A\subseteq\Sigma^{\ast}$ be a set of binary strings, let $r$ be a
  polynomially bounded function, and let $\varepsilon:\natural\rightarrow[0,1]$
  be a function.
  Suppose further that
  \begin{equation}
    \label{eq:collection-of-states}
    \{\rho_x\,:\,x\in A\}
  \end{equation}
  is a collection of states on $r(\abs{x})$ qubits for each $x\in A$.
  The collection \eqref{eq:collection-of-states} is
  \emph{efficiently $\varepsilon$-approximable} if there exists a
  polynomial-time computable function $Q$ possessing the following properties
  for each $x\in A$:
  \begin{mylist}{\parindent}
  \item[1.]
    $Q(x)$ is an encoding of a quantum circuit taking no input qubits and
    outputting a state on $r(\abs{x})$ qubits.
  \item[2.]
    The state output by the circuit with encoding $Q(x)$ is an
    $\varepsilon(\abs{x})$-approximation to the state $\rho_x$.
  \end{mylist}
\end{definition}

The honest-verifier zero-knowledge property is concerned only with one specific
prover/verifier pair (for each yes-input to the problem under
consideration)---it states that the view of the verifier when
interacting with the prover must be close to a state that could be efficiently
prepared by a quantum circuit (without an interaction with a prover).
One critical requirement of the definition that must be highlighted is that
the verifier with respect to which it is defined must be represented by a
collection of isometric circuits, and not by general quantum circuits.
The reason for this requirement will be discussed shortly, but first we will
state the definition itself.

\begin{definition}[Honest-verifier quantum statistical zero-knowledge]
  \label{definition:HVQSZK}
  A promise problem $A = (A_{\textup{yes}},A_{\textup{no}})$ is contained
  in the complexity class $\class{HVQSZK}_{a,b}(m)$ if there exists
  a polynomial-time computable function $V$ and a function $P$ that possess the
  following properties:
  \begin{mylist}{\parindent}
  \item[1.]
    For every string $x\in A_{\textup{yes}} \cup A_{\textup{no}}$, one has that
    $V(x)$ is an encoding of an isometric quantum circuit description of an
    $m$-message verifier in an interactive game, and $P(x)$ is a prover
    compatible with $V(x)$.
  \item[2.]
    \emph{Completeness.}
    For every string $x\in A_{\textup{yes}}$, the interaction between $V(x)$
    and $P(x)$ causes $V(x)$ to output 1 with probability at least $a$.
  \item[3.]
    \emph{Soundness.}
    For every string $x\in A_{\textup{no}}$, it holds that 
    $\omega(V(x))\leq b$. 
  \item[4.]
    \emph{Zero-knowledge.}
    The collection of states
    \begin{equation}
      \bigl\{ \textup{view}(V(x),P(x))\,:\,x\in A_{\textup{yes}}\bigr\}
    \end{equation}
    is efficiently $\varepsilon$-approximable for some choice of a negligible
    function $\varepsilon:\natural\rightarrow[0,1]$.
  \end{mylist}
\end{definition}

It is appropriate to make a few comments regarding the above definition of
honest-verifier quantum statistical zero-knowledge.
One may reasonably view the definition to be representative of an easily
checked constraint on a prover/verifier pair that is inspired by the
zero-knowledge condition, and chosen so that a general zero-knowledge
prover/verifier pair is clearly honest-verifier zero-knowledge as well.
It is therefore evident that \class{QSZK} is contained in \class{HVQSZK}.
On the other hand, while the constraint represented by the definition of
\class{HVQSZK} is intended to be strong enough to allow nontrivial bounds on
the power of quantum statistical zero-knowledge interactive proofs to be
obtained, it should not be viewed that the definition is necessarily
cryptographically well-motivated or satisfying.
As it turns out, it is indeed the case that \class{HVQSZK} and \class{QSZK}
coincide, but this fact should perhaps be seen as good fortune rather than
something that should be expected.\footnote{
  It should be said, however, that it was not \emph{unexpected} that this
  equality would hold, as the analogous fact was known to hold in the classical
  setting prior to the formulation of the definition \cite{GoldreichSV98}.}

An important aspect of the above definition already alluded to is the
requirement that the verifier $V$ in the definition is constrained to be
isometric, meaning that each of the channels it performs must correspond to a
linear isometry.
By making this requirement of the verifier, one is potentially making the
zero-knowledge condition harder to satisfy, and this potential of added
difficulty must simply be accepted as an aspect of the definition.
The utility of this assumption is that a complementary relationship between
what the verifier sees and what the prover controls naturally emerges.

\section{Quantum rewinding}

This section discusses a fact known as the \emph{quantum rewinding lemma},
which allows for some interesting quantum interactive proof systems to be
proved to possess the (general verifier) zero-knowledge property.
The quantum rewinding lemma has a close connection to the strong error
reduction procedure for \class{QMA}, which was discussed in
Section~\ref{section:QMA-error-reduction}.

\subsection{Exact quantum rewinding lemma}

When introducing the quantum rewinding lemma, it is helpful to begin with an
exact version, which conveys the most relevant ideas of the lemma while
avoiding complications present in an approximate version.

Consider a unitary quantum circuit $Q$ acting on $k+m$ qubits, for positive
integers $k$ and $m$.
It is to be assumed that the input qubits comprise two registers: an $k$ qubit
register $\reg{X}$ and an $m$ qubit register $\reg{Y}$.
The first qubit output by $Q$ will be considered as a single-qubit register
$\reg{A}$, while the remaining $k + m - 1$ qubits form a register $\reg{Z}$.
Figure~\ref{fig:rewinding-circuit-1} provides an illustration of such a
circuit.

\begin{figure}
  \begin{center}
    \footnotesize
    \begin{tikzpicture}[scale=0.4, 
        circuit/.style={draw, minimum height=24mm, minimum width=20mm,
          fill = ChannelColor, text=ChannelTextColor},
        empty/.style={minimum width=10mm},
        >=latex]
      
      \node (Q) at (0,0) [circuit] {$Q$};
      \node (in) at (-6,0) [empty] {};
      \node (out) at (6,0) [empty] {};

      \foreach \y in {-24,-22,...,2} {
        \draw ([yshift=\y mm]in.east) -- ([yshift=\y mm]Q.west) {};
      }

      \foreach \y in {24,22,...,10} {
        \draw ([yshift=\y mm]in.east) -- ([yshift=\y mm]Q.west) {};
      }

      \foreach \y in {-24,-22,...,16} {
        \draw ([yshift=\y mm]Q.east) -- ([yshift=\y mm]out.west) {};
      }

      \draw ([yshift=24mm]Q.east) -- ([yshift=24mm]out.west) {};
      
      \node at (-7,1.5) {\begin{tabular}{c}
          $\reg{X}$\\[-1mm]
          \scriptsize{($k$ qubits)}
        \end{tabular}};
      
      \node at (-7,-1.1) {\begin{tabular}{c}
          $\reg{Y}$\\[-1mm]
          \scriptsize{($m$ qubits)}
        \end{tabular}};
      
      \node at (7,-0.5) {\begin{tabular}{c}
          $\reg{Z}$\\[-1mm]
          \scriptsize{($k+m-1$}\\[-1mm]
          \scriptsize{qubits)}
        \end{tabular}};
      
      \node at (7,2.5) {$\reg{A}$};

    \end{tikzpicture}
  \end{center}
  \caption{A unitary circuit $Q$ for the quantum rewinding lemma.}
  \label{fig:rewinding-circuit-1}
\end{figure}
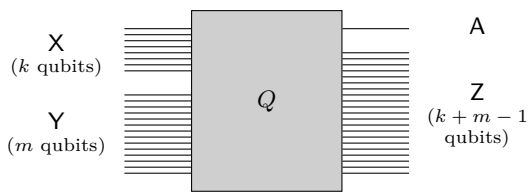

The situation that the quantum rewinding lemma concerns is that
$Q$ acts on a pure state of the form $\ket{\psi}\ket{0^m}$, resulting in a
state
\begin{equation}
  Q\ket{\psi}\ket{0^m}
  = \sqrt{p(\psi)} \ket{0} \ket{\phi_0(\psi)} + 
  \sqrt{1-p(\psi)} \ket{1} \ket{\phi_1(\psi)}
\end{equation}
for some real number $p(\psi) \in [0,1]$ and $(k+m-1)$-qubit unit vectors
$\ket{\phi_0(\psi)}$ and $\ket{\phi_1(\psi)}$.
If one measures the qubit $\reg{A}$ with respect to the standard basis, the
result will be 0 with probability $p(\psi)$, and conditioned on this outcome
the state of $\reg{Z}$ becomes $\ket{\phi_0(\psi)}$.
Otherwise, the measurement outcome is 1, and the state of $\reg{Z}$ becomes
$\ket{\phi_1(\psi)}$.
It is to be viewed that one desires to obtain the state $\ket{\phi_0(\psi)}$.
Evidently, if $Q$ is performed on $\ket{\psi}\ket{0^m}$, the register $\reg{A}$
is measured with respect to the standard basis, and the outcome $0$ is
obtained, then this goal is achieved---but this only happens with probability
$p(\psi)$, and one may hope to improve on this probability of success.

In general, there is little that can be done to 
successfully recover the state $\ket{\phi_0(\psi)}$ with probability greater
than $p(\psi)$, under the assumption that a single copy of the input state
$\ket{\psi}$ is made available.
What the quantum rewinding lemma provides is a condition under which it is
possible to efficiently recover the state $\ket{\phi_0(\psi)}$ with high
probability, from a single copy of $\ket{\psi}$.
The condition is that the probability $p(\psi)$ is nonzero and independent of
$\ket{\psi}$.
This is, in some sense, quite intuitive---if the probability $p(\psi)$ is
independent of $\ket{\psi}$, then a measurement of $\reg{A}$ reveals absolutely
no information about $\ket{\psi}$, so one might expect the quantum information
represented by $\ket{\psi}$ to be somehow contained in~$\reg{Z}$.
One can therefore hope to recover $\ket{\psi}$, and therefore have another try
at obtaining $\ket{\phi_0(\psi)}$ by running $Q$ and measuring $\reg{A}$.
The proof of the quantum rewinding lemma explains precisely how this may be
done.

\begin{lemma}[Exact quantum rewinding lemma]
  Let $Q$ be a unitary quantum circuit acting on $k+m$ qubits, and suppose that
  there exists a real number $p\in(0,1)$ for which the following statement
  holds: for every choice of an $k$-qubit unit vector $\ket{\psi}$, there exist
  $(k+m-1)$-qubit unit vectors $\ket{\phi_0(\psi)}$ and $\ket{\phi_1(\psi)}$
  such that
  \begin{equation}
    Q\ket{\psi}\ket{0^m}
    = \sqrt{p}\, \ket{0} \ket{\phi_0(\psi)} + 
    \sqrt{1-p}\, \ket{1} \ket{\phi_1(\psi)}.
  \end{equation}
  For every $\varepsilon \in (0,1/2)$, there exists a quantum circuit $R$
  taking $k$ input qubits, outputting $k+m-1$ qubits, having size
  \begin{equation}
    \label{eq:size-of-rewinder}
    \op{size}(R) = O\Biggl(\frac{\log(1/\varepsilon)\tinyspace\op{size}(Q)}{p
      (1-p)}\Biggr),
  \end{equation}
  and satisfying
  \begin{equation}
    \label{eq:rewinding-error-bound}
    \bignorm{\bigket{\phi_0(\psi)}\bigbra{\phi_0(\psi)}
      - R\bigl(\ket{\psi}\bra{\psi}\bigr)}_1 < \varepsilon
  \end{equation}
  for every $k$-qubit unit vector $\ket{\psi}$.
  Moreover, a description of the circuit $R$ can be generated from a
  description of $Q$ in polynomial time.
\end{lemma}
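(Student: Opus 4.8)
The plan is to realise $R$ as a \emph{quantum amplitude amplification} procedure built from $Q$, closely paralleling the witness-preserving error reduction for \class{QMA} of Section~\ref{sec:witness-preserving-error-reduction}. The idea is to steer the ``start'' vector $Q\ket{\psi}\ket{0^m}$, which can be prepared because $\ket{\psi}$ is supplied to $R$, onto the ``target'' vector $\ket{0}\ket{\phi_0(\psi)}$, after which a standard-basis measurement of the register $\reg{A}$ yields the outcome $0$ and leaves $\ket{\phi_0(\psi)}$ in $\reg{Z}$. I would work with the reflection $S_{\reg{A}} = \ket{0}\bra{0}_{\reg{A}}\otimes\I - \ket{1}\bra{1}_{\reg{A}}\otimes\I$, which fixes the subspace $\reg{A} = \ket{0}$, and the reflection $S_{0} = Q\bigl(\I_{\X}\otimes(2\ket{0^m}\bra{0^m} - \I_{\Y})\bigr)Q^{\ast}$, which fixes the subspace $Q(\X\otimes\ket{0^m})$; both are implementable by circuits of size $O(\op{size}(Q))$, and the Grover-type iterate is $G = S_{0}S_{\reg{A}}$. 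It is worth noting already that the only input-side reflection we can build, $S_{0}$, reflects about the \emph{entire} $2^{k}$-dimensional subspace $Q(\X\otimes\ket{0^m})$ rather than about the single line $\complex\,Q\ket{\psi}\ket{0^m}$ that amplitude amplification nominally calls for --- this is unavoidable, since $\ket{\psi}$ is not known to the circuit in any explicit form.

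The heart of the argument is a geometric claim: for every unit vector $\ket{\psi}$, the two-dimensional subspace $\mathcal{K}_{\psi} = \op{span}\{\ket{0}\ket{\phi_0(\psi)},\,\ket{1}\ket{\phi_1(\psi)}\}$ is invariant under both $S_{\reg{A}}$ and $S_{0}$, and \emph{within} $\mathcal{K}_{\psi}$ the operator $S_{\reg{A}}$ is the reflection about $\complex\,\ket{0}\ket{\phi_0(\psi)}$ while $S_{0}$ is the reflection about $\complex\,Q\ket{\psi}\ket{0^m}$. The statement for $S_{\reg{A}}$ is immediate. For $S_{0}$, I would first observe that the hypothesis that $p$ does not depend on $\ket{\psi}$ means exactly that the Hermitian operator $(\I_{\X}\otimes\bra{0^m})Q^{\ast}(\ket{0}\bra{0}_{\reg{A}}\otimes\I)Q(\I_{\X}\otimes\ket{0^m})$ on $\X$ has expectation $p$ in every unit vector, hence equals $p\,\I_{\X}$; equivalently, the linear maps $F_{0}\colon\ket{\psi}\mapsto\tfrac{1}{\sqrt{p}}(\bra{0}_{\reg{A}}\otimes\I_{\Z})Q\ket{\psi}\ket{0^m}$ and $F_{1}\colon\ket{\psi}\mapsto\tfrac{1}{\sqrt{1-p}}(\bra{1}_{\reg{A}}\otimes\I_{\Z})Q\ket{\psi}\ket{0^m}$ are \emph{isometries} (and $\ket{\phi_{j}(\psi)}$ coincides with $F_{j}\ket{\psi}$ up to a phase that is irrelevant at the level of density operators). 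A short computation using $\langle F_{j}\chi\,|\,F_{j}\psi\rangle = \langle\chi\,|\,\psi\rangle$ then shows that $\ket{v_{\psi}} = \sqrt{1-p}\,\ket{0}\ket{\phi_0(\psi)} - \sqrt{p}\,\ket{1}\ket{\phi_1(\psi)}$, which is the orthocomplement of $Q\ket{\psi}\ket{0^m}$ inside $\mathcal{K}_{\psi}$, is in fact orthogonal to all of $Q(\X\otimes\ket{0^m})$: indeed $\langle Q\chi\, 0^m\,|\,v_{\psi}\rangle = \sqrt{p(1-p)}\bigl(\langle F_{0}\chi\,|\,F_{0}\psi\rangle - \langle F_{1}\chi\,|\,F_{1}\psi\rangle\bigr) = 0$ for every $\ket{\chi}$. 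Hence $S_{0}$ fixes $Q\ket{\psi}\ket{0^m}$ and negates $\ket{v_{\psi}}$, which is the asserted reflection, and $\mathcal{K}_{\psi}$ is $S_{0}$-invariant.

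Granting the claim, the analysis of $G$ restricted to each $\mathcal{K}_{\psi}$ is the textbook computation: $G$ acts as a rotation by a fixed angle $2\theta$ with $\sin^{2}\theta = p$ --- the \emph{same} angle for every $\ket{\psi}$, precisely because $p$ is --- rotating the start vector toward the target. One then chooses the number of iterations $\ell = O\bigl(1/\sqrt{p(1-p)}\bigr)$, and, so that the amplification lands on the target rather than merely near it, either uses a fixed-point variant or first precomposes $Q$ with a single-qubit ``coin flip'' of a precomputable probability; the required fixed single-qubit rotations are synthesised from the universal gate set of Section~\ref{sec:circuits} to accuracy $O(\varepsilon/\ell)$, which is what introduces the dependence on $\log(1/\varepsilon)$. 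The resulting circuit, followed by the standard-basis measurement of $\reg{A}$ and the discarding of the ancillas, is the desired $R$: it outputs $k+m-1$ qubits, satisfies \eqref{eq:rewinding-error-bound} for every unit $\ket{\psi}$ uniformly, obeys the size bound \eqref{eq:size-of-rewinder}, and its description is produced from that of $Q$ by a polynomial-time computation (only elementary circuit manipulations and $O(\log(1/\varepsilon))$-size gate synthesis are involved).

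The main obstacle is the geometric claim of the second paragraph, and specifically the mismatch between the input-side reflection $S_{0}$ that we can implement --- which reflects about the whole subspace $Q(\X\otimes\ket{0^m})$ --- and the reflection about $\complex\,Q\ket{\psi}\ket{0^m}$ that amplitude amplification wants. What the hypothesis ``$p$ is independent of $\ket{\psi}$'' buys is exactly that these two reflections nonetheless agree on $\mathcal{K}_{\psi}$, simultaneously for all $\ket{\psi}$: without it the identities $\langle F_{j}\chi\,|\,F_{j}\psi\rangle = \langle\chi\,|\,\psi\rangle$ fail, $\ket{v_{\psi}}$ acquires a component inside $Q(\X\otimes\ket{0^m})$, and $G$ no longer steers $Q\ket{\psi}\ket{0^m}$ cleanly to the target. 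Once this alignment --- together with the $\psi$-independence of the rotation angle $\theta$, so that a single choice of $\ell$ works for all inputs at once --- is established, the rest is amplitude-amplification bookkeeping imported directly from the \class{QMA} error-reduction analysis.
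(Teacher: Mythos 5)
Your proposal is correct in its essentials, but it takes a genuinely different route from the paper's proof, so a comparison is worthwhile. The paper's circuit $R$ (Figure~\ref{fig:rewinding-procedure}) is a \emph{measured}, repeat-until-success iteration: apply $Q$, measure the output qubit $\reg{A}$, and whenever the outcome is $1$ apply exactly your reflection $S_0 = Q\bigl(\I\otimes(2\ket{0^m}\bra{0^m}-\I)\bigr)Q^{\ast}$ and measure again. The algebraic engine is the same fact you isolate --- $P_0 = p\,\I$ and $P_1 = (1-p)\,\I$, so that each two-dimensional space $\op{span}\{\ket{0}\ket{\phi_0(\psi)},\ket{1}\ket{\phi_1(\psi)}\}$ is preserved and $S_0$ restricts on it to the reflection about $Q\ket{\psi}\ket{0^m}$; your isometry argument showing $\ket{v_\psi}\perp Q(\X\otimes\ket{0^m})$ is precisely the structural content of the paper's displayed computations --- but the paper interleaves $S_0$ with measurements of $\reg{A}$ rather than with the coherent reflection $S_{\reg{A}}$. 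Each post-failure round then yields outcome $0$ with probability $4p(1-p)$, the output is the mixture $(1-\delta)\ket{\phi_0(\psi)}\bra{\phi_0(\psi)}+\delta\ket{\phi_1(\psi)}\bra{\phi_1(\psi)}$ with $\delta=(1-p)(1-2p)^{2T}$, and one simply takes $T=\lceil\log(1/\varepsilon)/(p(1-p))\rceil$. What the measured iteration buys: it is insensitive to overshooting (extra iterations only shrink $\delta$), it needs only a lower bound on $p(1-p)$ rather than $p$ itself, and its only nonstandard operation is the reflection about $\ket{0^m}$, which is implementable exactly from the universal gate set. Your coherent amplitude-amplification route buys a quadratic saving in the number of calls to $Q$ and $Q^{\ast}$, roughly $O(1/\sqrt{p(1-p)})$ versus $O(\log(1/\varepsilon)/(p(1-p)))$, together with a nearly pure output, but at the cost of the delicacies you yourself flag: the iteration count must be tuned to $\theta=\arcsin\sqrt{p}$, so avoiding rotation past the target requires either a fixed-point variant or knowledge of $p$ to a precision scaling with $\varepsilon$, and the compensating single-qubit rotation exists in the gate set only approximately, so its synthesis error must be budgeted across the iterations. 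These points are handleable and your size and error bounds then follow, so the proof goes through; the paper's choice is best read as trading the Grover-type speedup (which the stated size bound does not demand) for robustness and exactness of the elementary operations.
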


\begin{proof}
  Consider the procedure described in Figure~\ref{fig:rewinding-procedure}, and
  let $R$ be a quantum circuit implementing this procedure.
  \begin{figure}
    \hrulefill
    \vspace{-1mm}
    \begin{tabbing}
      xxx\=xxx\=xxx\=\kill
      \emph{Initial conditions:}\+\\[1mm]
      The register $\reg{X}$ contains a $k$-qubit quantum input $\ket{\psi}$.\\
      The register $\reg{Y}$ is initialized to the state $\ket{0^m}$.\\[1mm]
      \-\kill
      \emph{The procedure:}\+\\[1mm]
      Apply the circuit $Q$ to the pair $(\reg{X},\reg{Y})$ obtaining 
      $(\reg{A},\reg{Z})$.\\[1mm]
      Repeat $\lceil \log(1/\varepsilon)/(p(1-p))\rceil$ times:\\[1mm]
      \+\kill
      Measure $\reg{A}$ with respect to the standard basis.\\[1mm]
      If the outcome of the measurement is 1, do the following:\\[1mm]
      \+\kill
      Apply $Q^{-1}$ to $(\reg{A},\reg{Z})$, obtaining
      $(\reg{X},\reg{Y})$.\\[1mm]
      Apply the unitary operation $U = 2 \ket{0^m}\bra{0^m}-\I$ to
      $\reg{Y}$.\\[1mm]
      Apply $Q$ to the pair $(\reg{X},\reg{Y})$ obtaining $(\reg{A},\reg{Z})$.
      \\[1mm]
      \-\kill
      \-\kill
      Output the register $\reg{Y}$.
    \end{tabbing}
    \vspace{-3mm}
    \hrulefill
    \caption{The quantum rewinding procedure.}
    \label{fig:rewinding-procedure}
  \end{figure}
  It is evident from the description of the procedure that one may design $R$
  so that \eqref{eq:size-of-rewinder} holds, and moreover that a description of
  $R$ can be generated from a description of $Q$ in polynomial time.
  It remains to consider the output of $R$.

  The first step of the analysis is to consider the two operators
  \begin{equation}
    \begin{split}
      P_0 & = (\I\otimes\bra{0^m}) Q^{\ast} (\ket{0}\bra{0} \otimes \I) Q
      (\I\otimes\ket{0^m}),\\
      P_1 & = (\I\otimes\bra{0^m}) Q^{\ast} (\ket{1}\bra{1} \otimes \I) Q
      (\I\otimes\ket{0^m}),
    \end{split}
  \end{equation}
  where $Q$ is being regarded as a unitary operator in these equations.
  Because $P_0$ and $P_1$ are positive semidefinite and satisfy 
  $P_0 + P_1 = \I$, they describe a measurement $\{P_0,P_1\}$.
  This is the operator description of the measurement that is applied to
  $\reg{X}$ when $\reg{Y}$ is initialized to $\ket{0^m}$, $Q$ is applied to
  $(\reg{X},\reg{Y})$, yielding $(\reg{A},\reg{Z})$, and $\reg{A}$ is measured
  with respect to the standard basis.
  By the assumptions of the lemma, one has
  \begin{equation}
    \bra{\psi} P_0 \ket{\psi} = p
    \quad\text{and}\quad
    \bra{\psi} P_1 \ket{\psi} = 1-p
  \end{equation}
  for every choice of $\ket{\psi}$, which implies that
  \begin{equation}
    \label{eq:rewinding-measurement-trivial}
    P_0 = p \I \quad\text{and}\quad P_1 = (1-p) \I.
  \end{equation}
  This is a consequence of the fact that an operator $X$ acting on some complex
  vector space is uniquely determined by the function 
  $\ket{\psi}\mapsto\bra{\psi}X\ket{\psi}$, defined over the unit sphere of
  that space.
  
  Now, suppose $\ket{\psi}$ is a particular $k$-qubit unit vector, and the
  procedure described in Figure~\ref{fig:rewinding-procedure} is run when
  $\reg{X}$ is in the state $\ket{\psi}$ and $\reg{Y}$ is initialized to the
  state $\ket{0^m}$.
  The first step of the procedure is the application of $Q$, which leaves
  $(\reg{A},\reg{Z})$ in the state
  \begin{equation}
    Q\ket{\psi}\ket{0^m}
    = \sqrt{p}\,\ket{0}\ket{\phi_0(\psi)}+\sqrt{1-p}\,\ket{1}\ket{\phi_1(\psi)}.
  \end{equation}
  The loop is then iterated, beginning with a measurement of $\reg{A}$ with
  respect to the standard basis.
  If the measurement outcome is 0, the state of $\reg{Z}$ becomes
  $\ket{\phi_0(\psi)}$.
  An inspection of the procedure reveals that once the measurement outcome 0 is
  obtained, no further changes are made to $\reg{Z}$, so the state
  $\ket{\phi_0(\psi)}$ will indeed be the output of the procedure, conditioned
  on this first measurement resulting in 0.

  If the measurement result is 1, the state of the pair $(\reg{A},\reg{Z})$
  becomes $\ket{1}\ket{\phi_1(\psi)}$, and the conditional statement is
  performed.
  This results in $(\reg{A},\reg{Z})$ being left in the state
  \begin{equation}
    \begin{multlined}
    Q (\I\otimes U) Q^{\ast} \ket{1}\ket{\phi_1(\psi)}\\
    = -\ket{1}\ket{\phi_1(\psi)} + 2 
    Q (\I\otimes \ket{0^m}\bra{0^m}) Q^{\ast} \ket{1}\ket{\phi_1(\psi)}.
    \end{multlined}
  \end{equation}
  This expression can be simplified by making use of the equations
  \eqref{eq:rewinding-measurement-trivial}.
  In particular, one has
  \begin{equation}
    \ket{1}\ket{\phi_1(\psi)}
      = \frac{1}{\sqrt{1-p}} (\ket{1}\bra{1}\otimes\I) 
      Q (\I \otimes \ket{0^m}) \ket{\psi},
  \end{equation}
  and therefore
  \begin{equation}
    \begin{split}
      Q (\I\otimes \ket{0^m}\bra{0^m}) Q^{\ast} \ket{1}\ket{\phi_1(\psi)}
      \hspace{-4.6cm}\\
      & = \frac{1}{\sqrt{1-p}}
      Q (\I\otimes \ket{0^m}\bra{0^m}) Q^{\ast} (\ket{1}\bra{1}\otimes\I) 
      Q (\I \otimes \ket{0^m}) \ket{\psi}\\
      & = \frac{1}{\sqrt{1-p}} Q(\I\otimes\ket{0^m}) P_1 \ket{\psi}\\
      & = \sqrt{1-p} \, Q \ket{\psi}\ket{0^m}\\
      & = \sqrt{p(1-p)} \ket{0}\ket{\phi_0(\psi)} +
      (1-p)\ket{1}\ket{\phi_1(\psi)},
    \end{split}
  \end{equation}
  which implies that the state of $(\reg{A},\reg{Z})$ immediately after the
  conditional statement becomes
  \begin{equation}
    \label{eq:state-after-rewinding}
    \begin{multlined}
      Q (\I\otimes U) Q^{\ast} \ket{1}\ket{\phi_1(\psi)}\\
      = 2 \sqrt{p(1-p)} \ket{0}\ket{\phi_0(\psi)}
      + (1-2p)\ket{1}\ket{\phi_1(\psi)}.
    \end{multlined}
  \end{equation}

  On subsequent iterations of the loop, the analysis is the same.
  The measurement of $\reg{A}$ now gives the outcome 0 with probability
  $4p(1-p)$, and conditioned on this outcome the register $\reg{Z}$ is left
  in state $\ket{\phi_0(\psi)}$ until the end of the procedure.
  If the measurement outcome is 1, the state of $(\reg{A},\reg{Z})$ reverts
  back to $\ket{1}\ket{\phi_1(\psi)}$, and the conditional statement is
  applied precisely as before.

  As a result of this analysis, one obtains the following closed-form
  expression for the output of the procedure:
  \begin{equation}
    R(\ket{\psi}\bra{\psi}) = 
    (1 - \delta) \ket{\phi_0(\psi)}\bra{\phi_0(\psi)}
    + \delta \ket{\phi_1(\psi)}\bra{\phi_1(\psi)},
  \end{equation}
  where
  \begin{equation}
    \delta = (1-p) (1 - 2p)^{2T}
    \quad\text{and}\quad
    T = \biggl\lceil\frac{\log(1/\varepsilon)}{p(1-p)}\biggr\rceil.
  \end{equation}
  It holds that $\delta < \varepsilon/2$, from which the bound
  \eqref{eq:rewinding-error-bound} follows.
\end{proof}

%

\subsection{Example: graph isomorphism}
\label{sec:GMW-graph-isomorphism}

We will now illustrate the use of the quantum rewinding lemma to prove that
the Goldreich--Micali--Wigderson proof system for graph isomorphism (described
at the beginning of the chapter in Figure~\ref{fig:GMW-graph-isomorphism}) has
the zero-knowledge property against cheating quantum verifiers.


Let $(G_0,G_1)$ be a pair of isomorphic graphs, and $P$ the prover described by
the Goldreich--Micali--Wigderson graph isomorphism proof system on the input
$(G_0,G_1)$.
Suppose $W = (W_1,W_2)$ is a (computationally efficient) cheating verifier that
implements some channel $\Phi$ by means of an interaction with $P$.
To prove that the proof system is zero-knowledge against quantum attacks, it
must be shown that every channel $\Phi$ that can be implemented in this way can
be efficiently implemented without making use of an interaction with $P$.

Note that, by the assumption that the operations performed by the cheating
verifier $W$ are efficiently implementable, it suffices to focus on the state
contained in the registers $(\reg{Z}_1,\reg{Y}_1)$ after the prover's second
message---for if the state obtained by the cheating verifier $W$ at this point
can be efficiently computed, then by simply applying the channel $W_2$ to this
state, one obtains $\Phi$.

There is also no loss of generality in assuming that the cheating verifier is
such that its first action is given by an isometry
\begin{equation}
  W_1\in\Unitary(\Z_0\otimes\Y_0,\Z_1\otimes\X_1).
\end{equation}
It is convenient to express this isometry as
\begin{equation}
  W_1 = A_0\otimes\ket{0} + A_1\otimes\ket{1}
\end{equation}
for operators $A_0,A_1\in\Lin(\Z_0\otimes\Y_0,\Z_1)$, which is possible because
$\X_1$ corresponds to a single qubit (which will be seen as a single bit
response by the honest, classical prover in the Goldreich--Micali--Wigderson
proof system).
One may verify that the channel implemented by such a cheating verifier
when interacting with the honest prover is given by
\begin{equation}
  \label{eq:graph-isomorphism-cheater-channel}
  \Phi(\rho)
  = \frac{1}{n!} \sum_{\substack{\pi\in S_n\\a \in\{0,1\}}}
  A_a\bigl(\rho \otimes \ket{\pi(G_0)}\bra{\pi(G_0)}\bigr) A_a^{\ast} \otimes 
  \ket{\pi \sigma^a}\bra{\pi \sigma^a}.
\end{equation}
It is not immediately clear that the channel
\eqref{eq:graph-isomorphism-cheater-channel} can be implemented efficiently
because the permutation $\sigma$ may not be efficiently computable from
$(G_0,G_1)$ (unless the graph isomorphism problem is in \textup{BQP}).
We will need to make use of the quantum rewinding lemma to prove that it can be
efficiently implemented.

We note first that it is possible to efficiently implement a unitary
computation $Q$ transforming $\ket{\psi}\ket{0^m}$, for a suitable choice of
$m$, to the state
\begin{equation}
  \label{eq:graph-isomorphism-guess}
  Q \ket{\psi}\ket{0^m}
  = \frac{1}{\sqrt{2}} \ket{0} \ket{\phi_0(\psi)}
  + \frac{1}{\sqrt{2}} \ket{1} \ket{\phi_1(\psi)}
\end{equation}
for
\begin{equation}
  \begin{split}
    \ket{\phi_0(\psi)} & = \frac{1}{\sqrt{n!}}
    \sum_{\tau\in S_n} \sum_{b\in\{0,1\}} 
    A_b (\ket{\psi}\ket{\tau(G_b)}) \ket{b} \ket{\tau},\\
    \ket{\phi_1(\psi)} & = \frac{1}{\sqrt{n!}}
    \sum_{\tau\in S_n} \sum_{b\in\{0,1\}}
    A_{1-b}(\ket{\psi}\ket{\tau(G_b)})\ket{b} \ket{\tau}.
  \end{split}
\end{equation}
For instance, one may create the state
\begin{equation}
  \label{eq:graph-isomorphism-2}
  \frac{1}{\sqrt{2n!}} \sum_{\tau\in S_n} \sum_{b\in\{0,1\}}
  \ket{\tau(G_b)}\ket{b}\ket{\tau},
\end{equation}
apply $W_1$ to the first register of this state together with $\ket{\psi}$,
and then finally apply a controlled-NOT gate (and a permutation of the ordering
of the qubits) to obtain \eqref{eq:graph-isomorphism-guess}.

Given that the above transformation $Q$ can be performed efficiently, one may
recover $\ket{\phi_0(\psi)}$ using the quantum rewinding lemma.
By post-processing this state (by measuring the last register with respect to
the standard basis and discarding the qubit corresponding to $b$), the
outcome $\Phi\bigl(\ket{\psi}\bra{\psi}\bigr)$ is obtained.
As this is so for every state $\ket{\psi}$, the channel $\Phi$ has been
implemented efficiently.
This implies that the prover $P$ is a statistical zero-knowledge prover
on inputs $(G_0,G_1)$ for which $G_0$ and $G_1$ are isomorphic.

It may be noted that there is a fairly direct correspondence between the
classical argument (which was briefly summarized at the beginning of the
chapter) for proving that the Goldreich--Micali--Wigderson graph isomorphism
proof system is zero-knowledge against classical cheating verifiers
and the quantum case just discussed.
The state \eqref{eq:graph-isomorphism-2} is effectively a purified form of a
guess for the messages exchanged between the prover and verifier, and the
first qubit of \eqref{eq:graph-isomorphism-guess} indicates whether this guess
was correct: if this qubit is in the 0 state, the guess was correct and the
correct output can be produced, while if this qubit is in the 1 state, then
rewinding takes place, giving the simulation another chance for success.

\subsection{Approximate quantum rewinding lemma}

For an arbitrary quantum circuit $Q$ acting on $k+m$ qubits, as described in
the previous section, one may always write
\begin{equation}
  Q\ket{\psi}\ket{0^m}
  = \sqrt{p(\psi)} \ket{0} \ket{\phi_0(\psi)} + 
  \sqrt{1-p(\psi)} \ket{1} \ket{\phi_1(\psi)}
\end{equation}
for any given pure state $\ket{\psi}$, for some real number $p(\psi) \in [0,1]$
and $(k+m-1)$-qubit unit vectors $\ket{\phi_0(\psi)}$ and $\ket{\phi_1(\psi)}$.
The exact quantum rewinding lemma will generally not be applicable to such a
circuit, as $p(\psi)$ will generally depend on $\psi$.
Although one cannot hope that an approximate version of the quantum
rewinding lemma should hold in cases where $p(\psi)$ varies significantly as
$\ket{\psi}$ ranges over all $n$-qubit unit vectors, it is reasonable to expect
that small variations in $p(\psi)$ can be tolerated.
The following lemma states that this is indeed possible, 
provided that one accepts a small loss in the procedure's accuracy.

\begin{lemma}[Approximate quantum rewinding lemma]
  Let $Q$ be a unitary quantum circuit acting on $k+m$ qubits, and write
  \begin{equation}
    Q\ket{\psi}\ket{0^m}
    = \sqrt{p(\psi)}\, \ket{0} \ket{\phi_0(\psi)} + 
    \sqrt{1-p(\psi)}\, \ket{1} \ket{\phi_1(\psi)}
  \end{equation}
  for each $k$-qubit unit vector $\ket{\psi}$.
  Assume that there exist real numbers $\delta,q\in(0,1)$ such that
  $\abs{p(\psi) - q} < \delta$ for every $k$-qubit unit vector $\ket{\psi}$.
  For every $\varepsilon \in (0,1/2)$, there exists a quantum circuit $R$
  taking $k$ input qubits, outputting $k+m-1$ qubits, and satisfying
  $\op{size}(R) = O(T \op{size}(Q))$ and
  \begin{equation}
    \label{eq:approximate-rewinding-error-bound}
    \bignorm{
      \bigket{\phi_0(\psi)}\bigbra{\phi_0(\psi)}
      - R\bigl(\ket{\psi}\bra{\psi}\bigr)}_1 < \varepsilon + 
    (4 T + 2) \sqrt{2\delta}
  \end{equation}
  for every $k$-qubit unit vector $\ket{\psi}$, for
  \begin{equation}
    T = \Biggl\lceil \frac{\log(1/\varepsilon)}{q (1-q)}\Biggr\rceil.
  \end{equation}
  Moreover, a description of the circuit $R$ can be generated from a
  description of $Q$ in time polynomial in $\op{size}(Q)$ and $T$.
\end{lemma}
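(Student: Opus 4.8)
The plan is to adapt the proof of the exact quantum rewinding lemma, treating the variation of $p(\psi)$ around $q$ as a perturbation and tracking how the errors propagate through the iterated rewinding loop. The circuit $R$ is exactly the one from Figure~\ref{fig:rewinding-procedure}, except that the reflection $U = 2\ket{0^m}\bra{0^m} - \I$ on $\reg{Y}$ is replaced by a reflection that is tuned to the \emph{expected} success probability $q$ rather than to the exact (state-dependent) probability $p(\psi)$; concretely one uses a rotation that would be the correct amplitude-amplification step if $p(\psi)$ were exactly equal to $q$, and one runs the loop $T = \lceil \log(1/\varepsilon)/(q(1-q))\rceil$ times. The first thing I would do is record the key operator identity: the measurement $\{P_0, P_1\}$ induced on $\reg{X}$ (with $\reg{Y}$ initialized to $\ket{0^m}$, applying $Q$, and measuring $\reg{A}$) now satisfies only $\bra{\psi} P_0 \ket{\psi} = p(\psi)$ with $|p(\psi) - q| < \delta$, so rather than $P_0 = p\,\I$ exactly, one has $\norm{P_0 - q\I} < \delta$, i.e.\ $P_0$ is within $\delta$ of a scalar multiple of the identity in operator norm.

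Next I would set up a hybrid argument. Define an ``idealized'' circuit $\tilde Q$ that agrees with $Q$ except that the induced measurement operator is \emph{exactly} $q\I$; formally one can realize this by a small correction to $Q$, and $\bignorm{Q - \tilde Q}_{\diamond}$ (or the appropriate operator-norm distance on the relevant isometries) is $O(\sqrt{\delta})$, using the standard fact that two channels whose measurement statistics on all pure inputs differ by at most $\delta$ are $O(\sqrt{\delta})$-close. The exact rewinding lemma applies to $\tilde Q$ and produces $\ket{\tilde\phi_0(\psi)}$ up to trace distance $\varepsilon$. Then I would argue that running the \emph{real} rewinding procedure differs from running the idealized one by an error that accumulates at most linearly in the number of uses of $Q$ and $Q^{-1}$: the loop body invokes $Q$ and $Q^{-1}$ a constant number of times per iteration, for $T$ iterations, plus the initial application, so the total number of invocations is $O(T)$, giving a total perturbation of $O(T\sqrt{\delta})$ by the triangle inequality for the diamond norm together with the fact that interleaving bounded operations (the measurement of $\reg{A}$ and the reflection $U$) is non-expansive in trace distance. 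Combining the $\varepsilon$ from the idealized analysis with the $O(T\sqrt{\delta})$ accumulated perturbation yields a bound of the shape $\varepsilon + (4T+2)\sqrt{2\delta}$; the explicit constant $4T+2$ comes from counting precisely the applications of $Q$, $Q^{-1}$, and the reflection across the procedure, and one would verify it by a careful but routine bookkeeping of the steps in Figure~\ref{fig:rewinding-procedure}.

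The main obstacle I expect is the hybrid step controlling the accumulation of error across the loop iterations. One cannot simply bound things state-by-state after each iteration, because the conditional structure of the procedure (the reflection is applied only when the measurement of $\reg{A}$ yields $1$) means the intermediate ``state'' is really a sub-normalized branch, and the idealized and real procedures can diverge in how much weight they place on the ``outcome $0$, done'' branch versus the ``outcome $1$, keep going'' branch. The clean way to handle this is to fix a \emph{purified, measurement-deferred} version of both procedures—replace each standard-basis measurement of $\reg{A}$ by a CNOT into a fresh ancilla that is retained until the end—so that the real procedure and the idealized procedure become genuine unitaries $R^{\mathrm{pur}}$ and $\tilde R^{\mathrm{pur}}$ on a common space, each a product of $O(T)$ factors drawn from $\{Q, Q^{-1}, \text{(controlled) }U, \text{CNOT}\}$ with corresponding factors differing only in the $Q$-versus-$\tilde Q$ slots. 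Then $\bignorm{R^{\mathrm{pur}} - \tilde R^{\mathrm{pur}}} \le (\text{number of }Q\text{-type factors})\cdot O(\sqrt\delta) = O(T\sqrt\delta)$ is immediate from the telescoping identity for products of unitaries, and tracing out the deferred-measurement ancillas (a non-expansive operation) transports this bound to the actual mixed-state outputs. Finally I would note that $Q^{-1}$ has size $O(\op{size}(Q))$ since $Q$ is built from gates whose inverses are again in the gate set (Toffoli, Hadamard, and $P$, whose inverse is $P^3$), and that $U = 2\ket{0^m}\bra{0^m}-\I$ is implementable with $O(m)$ gates; hence $\op{size}(R) = O(T\op{size}(Q))$ and $R$ is constructible from $Q$ in time $\poly(\op{size}(Q), T)$, as claimed.
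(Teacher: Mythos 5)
Your proposal follows essentially the same route as the paper's proof: run the unchanged rewinding procedure with $T$ determined by $q$, compare $Q$ against an idealized unitary whose induced measurement operator is exactly $q\I$ (the paper constructs such a unitary with $\norm{Q-U}\leq\sqrt{2\delta}$, which is your hybrid step, made rigorous by your observation that $\norm{P_0-q\I}<\delta$ rather than by the loosely stated ``standard fact'' about channels), and let the errors accumulate additively over the uses of $Q^{\pm 1}$ via the triangle inequality, your deferred-measurement telescoping being one clean way to carry this out. Note only that no retuning of the reflection $2\ket{0^m}\bra{0^m}-\I$ is needed---it is independent of $p$ and $q$, which enter solely through the iteration count $T$---and that the reflection contributes nothing to the error term: the bound $(4T+2)\sqrt{2\delta}$ arises from the $2T+1$ substitutions of $Q^{\pm 1}$ together with the factor $2$ converting Euclidean distance between vectors to trace distance between the corresponding pure states.
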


The principle behind the proof of this approximate version of the quantum
rewinding lemma is simple: one runs precisely the same rewinding procedure as
in the exact case, imagining that the action of $Q$ is given by a
unitary operator $U$ satisfying
\begin{equation}
  U\ket{\psi}\ket{0^m}
  = \sqrt{q}\, \ket{0} \ket{\phi_0(\psi)} + 
  \sqrt{1-q}\, \ket{1} \ket{\phi_1(\psi)}
\end{equation}
for all $\ket{\psi}$.
The analysis from the proof of the exact quantum rewinding lemma reveals what
the output of the procedure would be if $Q$ acted in this idealized way.
Finally, it is proved that there must exist a unitary operator acting as above
and satisfying $\norm{Q - U} \leq \sqrt{2\delta}$.
The bound in the statement of the approximate quantum rewinding lemma is then
recovered from the fact that errors accumulate at most additively in
compositions of quantum circuits, which may be proved through iterative
applications of the triangle inequality.

\section{Quantum statistical zero knowledge} \label{sec:QSZK-properties}

A collection of interesting properties of the complexity class \class{QSZK} are
known to hold.
For instance, \class{QSZK} is closed under complementation, contained
in \class{QIP}(2), and equal to \class{HVQSZK}.
Moreover, a complete promise problem is known for \class{QSZK} that is
both simple and fundamental from the viewpoint of quantum computation.
Each of these facts has a classical analogue that is also known to hold, and
the techniques for proving the quantum variants of these facts borrow heavily
from known proofs of their classical analogues.

\subsection{The close quantum states problem}

Much of the discussion of the complexity class \class{QSZK} to follow in the
present subsection is centered on the following promise problem, 
called the \emph{$(\alpha,\beta)$-close quantum states problem}.
This is the problem mentioned above that will soon be proved complete for
\class{QSZK}.

\begin{center}
  \begin{minipage}{0.95\textwidth}
    \begin{center}
      \underline{$(\alpha,\beta)$-close quantum states
        ($(\alpha,\beta)$-CQS)}\\[2mm]
      \begin{tabular}{lp{0.75\textwidth}}
        \emph{Input:} & 
        Quantum circuits $Q_0$ and $Q_1$ taking no input and outputting
        quantum states $\rho_0$ and $\rho_1$ on the same number of
        qubits.\\[1mm]
        \emph{Yes:} &
        $\frac{1}{2}\norm{\rho_0 - \rho_1}_1 < \alpha$.\\[1mm]
        \emph{No:} &
        $\frac{1}{2}\norm{\rho_0 - \rho_1}_1 > \beta$.
      \end{tabular}
    \end{center}
  \end{minipage}
\end{center}

\noindent
Formally speaking, each choice of a pair $(\alpha,\beta)$ defines a different
promise problem $(\alpha,\beta)$-CQS.
One may consider the situation in which $\alpha$ and $\beta$ are
constants or functions; and if they are functions, it should be
understood that they are functions of the number $n$ representing the size of
the input description of the pair $(Q_0,Q_1)$.

It is not surprising that the complexity of the $(\alpha,\beta)$-CQS problem
may depend on the choice of $\alpha$ and $\beta$.
As it turns out, there is a wide range of choices for these values that lead to
polynomial-time equivalent problems.

\begin{theorem}[Sahai--Vadhan]
  \label{theorem:Sahai-Vadhan}
  Let $p$ and $q$ be polynomially bounded functions and let
  $\alpha$ and $\beta$ be polynomial-time computable functions
  satisfying 
  \begin{equation}
    0 < \alpha(n) \leq \beta(n)^2 - \frac{1}{q(n)}
  \end{equation}
  for all sufficiently large positive integers $n$.
  It holds that
  \begin{equation}
    \class{$(\alpha,\beta)$-CQS} \leq_m \class{$(\delta,1-\delta)$-CQS}
  \end{equation}
  for $\delta(n) = 2^{-p(n)}$.
\end{theorem}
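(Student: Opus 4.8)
The plan is to follow the classical polarization-lemma strategy of Sahai and Vadhan, adapted to quantum states and the trace distance. The proof proceeds in two independent reductions which are then composed: first a \emph{XOR/direct-product amplification} step that pushes $\beta$ (the no-case bound) up toward $1$, and second a \emph{tensoring/product} step that, combined with the first, drives $\alpha$ (the yes-case bound) down toward $0$. The crucial obstacle is the absence of a clean product formula for trace distance: unlike statistical distance of classical distributions, one does not have $\frac{1}{2}\norm{\rho_0^{\otimes k}-\rho_1^{\otimes k}}_1$ expressible simply in terms of $\frac{1}{2}\norm{\rho_0-\rho_1}_1$. The workaround, which is the heart of the argument, is to pass through the fidelity via the Fuchs--van de Graaf inequalities~\eqref{eq:fuchs-graaf}: fidelity \emph{does} multiply, since $\fid(\rho_0^{\otimes k},\rho_1^{\otimes k}) = \fid(\rho_0,\rho_1)^k$. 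So the amplification is really performed on the fidelity and then translated back to trace distance at the cost of the quadratic slack visible in the hypothesis $\alpha(n)\leq\beta(n)^2-1/q(n)$.

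First I would set up the direct-product construction. Given circuits $(Q_0,Q_1)$ producing $\rho_0,\rho_1$, and a polynomially bounded parameter $k$, build circuits $Q_0',Q_1'$ that run $k$ independent copies, producing $\rho_0^{\otimes k}$ and $\rho_1^{\otimes k}$; this is clearly polynomial-time computable from the descriptions of $Q_0,Q_1$. In the yes-case, $\frac12\norm{\rho_0-\rho_1}_1<\alpha$ gives $\fid(\rho_0,\rho_1)>1-\alpha$ by the left Fuchs--van de Graaf inequality, hence $\fid(\rho_0^{\otimes k},\rho_1^{\otimes k})>(1-\alpha)^k$, and then $\frac12\norm{\rho_0^{\otimes k}-\rho_1^{\otimes k}}_1<\sqrt{1-(1-\alpha)^{2k}}$ by the right inequality. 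In the no-case, $\frac12\norm{\rho_0-\rho_1}_1>\beta$ gives (via the right inequality applied to the originals) $\fid(\rho_0,\rho_1)<\sqrt{1-\beta^2}$, so $\fid(\rho_0^{\otimes k},\rho_1^{\otimes k})<(1-\beta^2)^{k/2}$, and then the left inequality gives $\frac12\norm{\rho_0^{\otimes k}-\rho_1^{\otimes k}}_1>1-(1-\beta^2)^{k/2}$. Choosing $k$ to be a sufficiently large polynomial multiple of $\max\{q,p\}$, and using the hypothesis $\alpha\leq\beta^2-1/q$ to ensure the yes-bound stays well below the no-bound throughout, one checks that $\sqrt{1-(1-\alpha)^{2k}}$ stays below, say, $1/3$ in the yes-case (this needs $k\alpha$ bounded, which is why the hypothesis bounds $\alpha$ and forces the separation to be a \emph{quadratic} gap rather than a linear one), while $1-(1-\beta^2)^{k/2}$ exceeds $1-\delta$ in the no-case. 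This is the main place where care with constants is required.

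Second, having obtained an instance with no-bound $\geq 1-\delta$ but yes-bound only a constant like $1/3$, I would apply the tensoring direction: take many fresh copies again to push the yes-bound down. Here the yes-case is the easy direction—fidelity close to $1$ tensorizes to fidelity still (less) close to $1$, but the trace-distance bound $\sqrt{1-\fid^{2k'}}$ \emph{increases} with $k'$, so naive tensoring hurts the yes-case. The correct move, exactly as in Sahai--Vadhan, is to interleave the two operations: alternate ``direct product'' (squares the gap favorably for no-instances) with a complementation/XOR-type gadget so that the effect on the yes-side is controlled. Concretely, one builds a recursion whose one step takes $(\alpha,\beta)$ to roughly $(\alpha^2+\mathrm{poly}^{-1}, \beta-\mathrm{poly}^{-1})$ or its dual, and iterates logarithmically many times; each circuit in the recursion has only polynomially larger size, so the final reduction is polynomial-time computable. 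The endpoint is a pair $(Q_0,Q_1)$ with yes-bound $<2^{-p}$ and no-bound $>1-2^{-p}$, i.e.\ an instance of $(\delta,1-\delta)$-CQS with $\delta=2^{-p}$, completing the Karp reduction $(\alpha,\beta)\text{-CQS}\leq_m(\delta,1-\delta)\text{-CQS}$. The main obstacle, to reiterate, is bookkeeping the fidelity/trace-distance translation so that the quadratic loss is absorbed exactly by the hypothesis $\alpha\leq\beta^2-1/q$; everything else is routine amplification once that translation is in place.
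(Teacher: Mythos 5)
There is a genuine gap: your argument never actually uses the one construction that makes the polarization work, namely the XOR (parity) mixture. The paper's proof rests on the observation that for
$\xi_b = 2^{-(m-1)}\sum\{\rho_{a_1}\otimes\cdots\otimes\rho_{a_m} : a_1+\cdots+a_m \equiv b \pmod 2\}$
one has the \emph{exact} identity $\frac{1}{2}\norm{\xi_0-\xi_1}_1 = \bigl(\frac{1}{2}\norm{\rho_0-\rho_1}_1\bigr)^m$, and this step must come \emph{first}: it is the only place where the hypothesis $\alpha \leq \beta^2 - 1/q$ is exploited, since it guarantees $(\alpha/\beta^2)^m \to 0$, i.e.\ after XOR-ing the yes-distance $\alpha^m$ is super-polynomially smaller than $\beta^{2m}$, which is what later allows tensoring roughly $k \approx \poly/\beta^{2m}$ times without damaging the yes-case. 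Your first step instead tensorizes the original instance directly and translates through fidelity. This fails whenever $\alpha$ is a non-negligible constant, which the hypothesis fully permits (e.g.\ $\alpha = 1/2$, $\beta = 9/10$ satisfies $\alpha \leq \beta^2 - 1/q$): to push the no-case distance up to $1-2^{-p}$ you need $k$ to be a growing polynomial, but then the yes-case bound—whether you use your fidelity bound $\sqrt{1-(1-\alpha)^{2k}}$ or the subadditivity bound $k\alpha$—tends to $1$, so the gap is destroyed rather than polarized. You half-notice this (``this needs $k\alpha$ bounded''), but the hypothesis does not bound $\alpha$ itself; it only bounds $\alpha$ relative to $\beta^2$, and that relative bound is useless until the XOR step has been applied.

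Your second paragraph gestures at the fix (``interleave ... with a complementation/XOR-type gadget''), but the gadget is never defined, its effect on trace distance is never stated, and the recursion you sketch, sending $(\alpha,\beta)$ to roughly $(\alpha^2+\poly^{-1},\beta-\poly^{-1})$ and iterating logarithmically many times, does not correspond to either construction (XOR sends $(\alpha,\beta)\to(\alpha^m,\beta^m)$, tensoring sends the no-distance up and the yes-distance up by at most a factor $m$); moreover, iterating a polynomial-size blow-up logarithmically many times would not obviously stay polynomial. The actual argument needs only three phases with explicitly chosen polynomial parameters—XOR, then tensor (whose no-case lower bound $1-\exp(-m\norm{\rho_0-\rho_1}_1^2/8)$ is indeed proved via fidelity multiplicativity and Fuchs--van de Graaf, the one place your fidelity intuition is on target), then XOR again to drive the yes-distance down to $2^{-p}$ while the no-distance, already exponentially close to $1$, barely moves. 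Without the parity construction and its exact multiplicativity, the quantitative skeleton of your proposal does not go through. (As a side remark, the claim that classical statistical distance multiplies under products is also incorrect; the classical Sahai--Vadhan proof needs the same XOR lemma, so the quantum case is a translation of that argument rather than a new workaround for a quantum-specific obstacle.)
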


\noindent
This theorem may be attributed to Sahai and Vadhan \cite{SahaiV03}, who proved
a classical analogue to this fact---but the extension to quantum states is
direct, requiring nothing more than a verification that the techniques and
bounds indeed extend to quantum states \cite{Watrous02}.

There are two main observations required to prove this theorem.
The first observation is that, for any choice of quantum states $\rho_0$ and
$\rho_1$ (on the same number of qubits) and a positive integer $m$, and for
states
\begin{equation}
  \begin{split}
    \xi_0 = \frac{1}{2^{m-1}}
    \sum_{\substack{a_1,\ldots,a_m\in\{0,1\}\\
        a_1+ \cdots + a_m \equiv 0 \;(\op{mod} 2)}}
    \rho_{a_1} \otimes \cdots \otimes \rho_{a_m},\\
    \xi_1 = \frac{1}{2^{m-1}}
    \sum_{\substack{a_1,\ldots,a_m\in\{0,1\}\\
        a_1+ \cdots + a_m \equiv 1 \;(\op{mod} 2)}}
    \rho_{a_1} \otimes \cdots \otimes \rho_{a_m},
  \end{split}
\end{equation}
one has
\begin{equation}
  \frac{1}{2} \norm{\xi_0 - \xi_1}_1 =
  \Biggl(\frac{1}{2}\norm{\rho_0 - \rho_1}_1\Biggr)^m.
\end{equation}
The second observation is that, again for any choice of quantum states $\rho_0$
and $\rho_1$ and a positive integer $m$, one has
\begin{equation}
  1 - \exp\biggl(-\frac{m \norm{\rho_0 - \rho_1}_1^2}{8}\biggr)
  \leq \frac{1}{2} \bignorm{\rho_0^{\otimes m} - \rho_1^{\otimes m}}_1
  \leq \frac{m}{2}\norm{\rho_0 - \rho_1}_1.
\end{equation}

These two facts suggest constructions that may be applied to two circuits $Q_0$
and $Q_1$ given as input to the close quantum states problem.
The first construction produces states (corresponding to $\xi_0$ and $\xi_1$)
that are generally much closer to one another than the states $\rho_0$ and
$\rho_1$ produced by $Q_0$ and $Q_1$---but the rate at which $\xi_0$ and
$\xi_1$ approach one another is much faster (as a function of $m$) in the case
that $\rho_0$ and $\rho_1$ are close.
The second construction simply repeats a state many times, causing the
resulting states to move away from one another, this time more rapidly when the
original states are farther apart.
By alternating the two constructions in a suitable fashion
(first the first construction, then the second, then the first again), the
input circuits are transformed so as to produce output states that are either
very close or very far, depending on the initial closeness of the states.
The requirement $\alpha(n) \leq \beta(n)^2 - 1/ q(n)$ is needed to guarantee
that the correct behavior results from these constructions.

\subsection{HVQSZK-completeness of the close quantum states problem}

There are two main steps required to prove that the close quantum states
problem is complete for \class{QSZK}, for any choice of $\alpha$ and $\beta$
for which Theorem~\ref{theorem:Sahai-Vadhan} holds.
The first step is to prove that this problem is complete for \class{HVQSZK},
and the second is to prove that the problem itself is contained in
\class{QSZK}.
Because $\class{QSZK}$ is closed under Karp reductions and
$\class{QSZK} \subseteq \class{HVQSZK}$, the \class{QSZK}-completeness of the
problem follows---and in the process one finds that
$\class{QSZK} = \class{HVQSZK}$.
The fact that the close quantum states problem is contained in \class{QSZK}
will make use of the quantum rewinding lemma, and is discussed in the
subsection following this one.
The present section concerns the fact that this problem is complete for
\class{HVQSZK}.

The \class{HVQSZK}-completeness of the close quantum states problem can be
proved in multiple ways.
Here we will sketch a proof that is somewhat different from the original
proof \cite{Watrous02}, although the essential ideas are similar.

Suppose that $A$ is any promise problem for which $A\in\class{HVQSZK}$.
There must therefore exist a prover $P(x)$ and verifier $V(x)$ satisfying the
conditions of Definition~\ref{definition:HVQSZK}, for $a = 2/3$, $b=1/3$, and
$m$ being a polynomially bounded function.
By considering the three transformations associated with the perfect
completeness, parallelization, and error reduction by parallel repetition
constructions discussed in the previous chapter, one may transform both the
verifier $V$ and the prover $P$ into a new, three-message prover/verifier pair
$(P',V')$ for $A$ having perfect completeness and exponentially small soundness
error.
Moreover, it may be assumed that the new verifier $V'$ is an isometric
verifier, and that the honest-verifier zero-knowledge condition is still in
place for the pair $(P',V')$.
The fact that these assumptions are justified follows from a consideration
of each of the three constructions:
in each case, the most natural and direct way of defining the actions of the 
honest prover and the corresponding view of the verifier maintains the
honest-verifier zero-knowledge condition.
\begin{figure}
  \begin{center}
    \begin{tikzpicture}[scale=0.45, 
        turn/.style={draw, minimum height=14mm, minimum width=10mm,
          fill = ChannelColor, text=ChannelTextColor},
        emptyturn/.style={minimum height=14mm, minimum width=10mm},
        >=latex]

      \node (V0) at (-14,4) [emptyturn] {};      
      \node (V1) at (-8,4) [turn] {$V_1$};
      \node (V2) at (2,4) [turn] {$V_2$};
      \node (M) at (7,4.4) [emptyturn] {};
      \node (P0) at (-13,0) [turn] {$P_0$};
      \node (P1) at (-3,0) [turn] {$P_1$};
      \node (P2) at (7,0) [emptyturn] {};
      
      \draw[->] ([yshift=4mm]V1.east) -- ([yshift=4mm]V2.west)
      node [above, pos=0.74] {$\reg{Z}_1$};
      
      \draw[->] ([yshift=4mm]V0.east) -- ([yshift=4mm]V1.west)
      node [above, pos=0.46] {$\reg{Z}_0$};
      
      \draw[->] ([yshift=-4mm]V1.east) .. controls +(right:20mm) and 
      +(left:20mm) .. ([yshift=4mm]P1.west) node [right, pos=0.4] {$\reg{X}_1$};
      
      \draw[->] ([yshift=-4mm]V2.east) .. controls +(right:20mm) and 
      +(left:20mm) .. ([yshift=4mm]P2.west) node [right, pos=0.4] {$\reg{X}_2$};
      
      \draw[->] ([yshift=4mm]P0.east) .. controls +(right:20mm) and 
      +(left:20mm) .. ([yshift=-4mm]V1.west) node [left, pos=0.6] 
      {$\reg{Y}_0$};
      
      \draw[->] ([yshift=4mm]P1.east) .. controls +(right:20mm) and 
      +(left:20mm) .. ([yshift=-4mm]V2.west) node [left, pos=0.6]
      {$\reg{Y}_1$};
      
      \draw[->] ([yshift=-4mm]P0.east) -- ([yshift=-4mm]P1.west)
      node [below, midway] {$\reg{W}_1$};
      
      \draw[->] ([yshift=-4mm]P1.east) -- ([yshift=-4mm]P2.west)
      node [below, midway] {$\reg{W}_2$};
      
      \draw[->] ([yshift=4mm]V2.east) -- (M.west) node [above, midway]
           {$\reg{Z}_2$};
      
      \node[draw, thick, ellipse, minimum width=6.5mm,
        minimum height=20mm] at (-1.18,3.8) {};
      
      \node[draw, thick, ellipse, minimum width=6.5mm,
        minimum height=20mm] at (-11.18,3.8) {};
      
    \end{tikzpicture}
  \end{center}
  \caption{A unitary verifier's view in a three-message interactive game.}
  \label{fig:unitary-3-message-HVQSZK-verifier}
\end{figure}
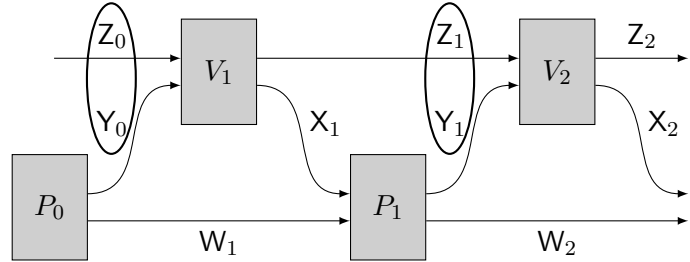

Given a pair $(P',V')$ as just described, one may output an instance of the
close quantum states problem for each input 
$x\in A_{\textup{yes}}\cup A_{\textup{no}}$ in the following way.
First, one may compute circuits $R_0$ and $R_1$ that take no input qubits and
output states of the registers $(\reg{Z}_0,\reg{Y}_0)$ and
$(\reg{Z}_1,\reg{Y}_1)$, respectively, with the property that these states are
close to the true states of these registers when $V'$ interacts with $P'$
(as illustrated in Figure~\ref{fig:unitary-3-message-HVQSZK-verifier}),
assuming the input is a yes-input for $A$.
This is possible by the honest-verifier statistical zero-knowledge property of
$(P',V')$.
The circuits $Q_0$ and $Q_1$, representing the instance of the close quantum
states problem being produced, are derived from $R_0$ and $R_1$ in the
following way:
\begin{enumerate}
\item[1.]
  $Q_0$ is the circuit obtained by first running $R_0$ to obtain a state of
  $(\reg{Z}_0,\reg{Y}_0)$, replacing each qubit of $\reg{Z}_0$ by an
  initialized qubit in the $\ket{0}$ state, applying $V_1$ to the pair
  $(\reg{Z}_0,\reg{Y}_0)$ to obtain $(\reg{Z}_1,\reg{X}_1)$, and then
  outputting just the register $\reg{Z}_1$.
\item[2.]
  $Q_1$ is the circuit obtained by first running $R_1$ to obtain a state of
  $(\reg{Z}_1,\reg{Y}_1)$, applying $V_2$ to obtain a state of
  $(\reg{Z}_2,\reg{X}_2)$, replacing the single-qubit register $\reg{Z}_2$ with
  a qubit in the $\ket{1}$ state, applying $V_2^{-1}$ to
  $(\reg{Z}_2,\reg{X}_2)$ to obtain $(\reg{Z}_1,\reg{Y}_1)$, and then 
  outputting just the register $\reg{Z}_1$.
\end{enumerate}
If $P'$ causes $V'$ to accept with probability 1, then the states output by
$Q_0$ and $Q_1$ will necessarily be close, by virtue of the honest-verifier
statistical zero-knowledge property: on yes-instances of the problem $A$, the
circuits $R_0$ and $R_1$ output close approximations of the true states of the
registers $(\reg{Z}_0,\reg{Y}_0)$ and $(\reg{Z}_1,\reg{Y}_1)$, which leads to
$Q_0$ and $Q_1$ outputting nearly identical states.
On the other hand, in the case that $\omega(V')$ is small, the states output by
$Q_0$ and $Q_1$ must necessarily be far apart, irrespective of the
zero-knowledge properties of $(P',V')$, as follows from a similar argument to
the one described in the previous chapter concerning the close circuit images
problem.

\subsection{Close quantum states in QSZK}

We will now sketch an argument demonstrating that the close quantum states
problem is contained in \class{QSZK}.
By Theorem~\ref{theorem:Sahai-Vadhan}, it suffices to consider the problem
$(\delta,1-\delta)$-CQS for $\delta(n)$ being exponentially small.

For a given input $(Q_0,Q_1)$ to the $(\delta,1-\delta)$-CQS problem, consider
the interactive game described in Figure~\ref{fig:CQS-public-coin}.
\begin{figure}[t]
\noindent\hrulefill
\begin{trivlist}
\item The input is a pair $(Q_0,Q_1)$ of quantum circuits that take no input
  and output states $\rho_0$ and $\rho_1$ of a $k$-qubit register $\reg{X}$.
  By purifying $Q_0$ and $Q_1$, and padding one of the circuits with
  extra qubits if necessary, unitary quantum circuits $R_0$ and $R_1$
  acting on registers $(\reg{X},\reg{Y})$, for $\reg{Y}$ being an $m$-qubit
  register, are obtained.
  Initializing $(\reg{X},\reg{Y})$ to $\ket{0^k}\ket{0^m}$, applying
  $R_a$, and tracing out $\reg{Y}$ leaves $\reg{X}$ in the state $\rho_a$.
  \vspace{1mm}
\item {\bf Prover's step 1:}
  Initialize $(\reg{X},\reg{Y})$ to $\ket{0^k}\ket{0^m}$, apply $R_0$ to the
  pair $(\reg{X},\reg{Y})$, and send $\reg{X}$ to the verifier.
  \vspace{1mm}
\item {\bf Verifier's step 1:}
  Choose $a\in\{0,1\}$ uniformly at random and send $a$ to the prover.
  \vspace{1mm}
\item {\bf Prover's step 2:}
  If $a = 0$, send $\reg{Y}$ to the verifier.
  If $a = 1$, apply $U$ to $\reg{Y}$ and then send $\reg{Y}$ to the verifier,
  where $U$ is an $m$-qubit unitary satisfying
  \begin{equation}
    \bignorm{
    (\I \otimes U) R_0 \ket{0^{k+m}} - R_1 \ket{0^{k+m}}}_1 \leq \varepsilon.
  \end{equation}
\item {\bf Verifier's step 2:}
  Apply $R_a^{-1}$ to $(\reg{X},\reg{Y})$ and measure all of the qubits in
  these registers with respect to the standard basis.
  Accept if all measurement results are 0, reject otherwise.
\end{trivlist}
\noindent\hrulefill
\caption{Public-coin proof system for close quantum states.}
\label{fig:CQS-public-coin}
\end{figure}
It is evident that this game is both complete and sound as an interactive game.
In particular, if $Q_0$ and $Q_1$ output close quantum states, the verifier
will accept with probability exponentially close to 1 when interacting with the
honest prover described in the figure.
On the other hand, if $Q_0$ and $Q_1$ output quantum states that are almost
perfectly distinguishable, there cannot exist a state of $\reg{X}$ that
is consistent with both of the pure states output by $R_0$ and $R_1$ when run
on the all-zero input state---and an analysis reveals that the verifier
rejects with probability exponentially close to 1/2 in this case.

It remains to consider the zero-knowledge property, which may be established
using the (approximate) quantum rewinding lemma, through a similar methodology
to the proof that the Goldreich--Micali--Wigderson graph isomorphism 
proof system is zero-knowledge against quantum attacks, as described in
Section~\ref{sec:GMW-graph-isomorphism}.
As in that proof, it suffices to consider a cheating verifier whose first
action is given by an isometry
$W_1\in\Unitary(\Z_0\otimes\Y_0,\Z_1\otimes\X_1)$, expressed as
\begin{equation}
  W_1 = A_0 \otimes \ket{0} + A_1 \otimes \ket{1}
\end{equation}
for $A_0,A_1\in\Lin(\Z_0\otimes\Y_0,\Z_1)$.
In the present case, the channel implemented by such a cheating verifier
(disregarding the application of $W_2$) through an interaction with the honest
prover $P$ is given by
\begin{equation}
  \Phi(\rho) = \sum_{a\in\{0,1\}}
  \bigl(A_a \otimes \I_{\Y_1}\bigr)
  \bigl(\rho\otimes\ket{\gamma_a}\bra{\gamma_a}\bigr)
  \bigl(A_a \otimes \I_{\Y_1}\bigr)^{\ast}
\end{equation}
for
\begin{equation}
  \ket{\gamma_0} = R_0\ket{0^{k+m}} \quad\text{and}\quad
  \ket{\gamma_1} = (\I\otimes U)R_0\ket{0^{k+m}}.
\end{equation}
This time, it is the fact that $U$ may not be efficiently implementable
that represents the main obstacle to efficiently implementing $\Phi$ in the
most obvious way.

It is possible, however, to efficiently implement a unitary quantum circuit
$Q$ that operates as follows:
for a given pure state $\ket{\psi}$ and a suitably chosen value $r$, the
circuit $Q$ transforms $\ket{\psi}\ket{0^r}$ as follows:
\begin{equation}
  Q \ket{\psi} \ket{0^r} = 
  \frac{1}{\sqrt{2}} \ket{0} \ket{\phi_0(\psi)}
  + \frac{1}{\sqrt{2}} \ket{1} \ket{\phi_1(\psi)}
\end{equation}
for
\begin{equation}
  \begin{split}
    \ket{\phi_0(\psi)}
    & = \sum_{b\in\{0,1\}} \bigl( A_b \otimes \I) \ket{\psi} R_b 
    \ket{0^{k+m}} \ket{b},\\
    \ket{\phi_1(\psi)}
    & = \sum_{b\in\{0,1\}} \bigl( A_{1-b} \otimes \I) \ket{\psi} R_b 
    \ket{0^{k+m}}\ket{b}.
  \end{split}
\end{equation}
The vectors $\ket{\phi_0(\psi)}$ and $\ket{\phi_1(\psi)}$ are not necessarily
unit vectors in this case, but they are exponentially close to unit vectors
under the assumption that $Q_0$ and $Q_1$ output states that are exponentially
close to one another.
Applying the approximate quantum rewinding lemma allows for the recovery of
$\ket{\phi_0(\psi)}$ with high probability, which may then be processed to
implement a close approximation to $\Phi$.

\begin{theorem}
  Let $q$ be a polynomially bounded function and let
  $\alpha$ and $\beta$ be polynomial-time computable functions
  satisfying 
  \begin{equation}
    0 < \alpha(n) \leq \beta(n)^2 - \frac{1}{q(n)}
  \end{equation}
  for all sufficiently large positive integers $n$.
  It holds that $\class{$(\alpha,\beta)$-CQS} \in \class{QSZK}$.
\end{theorem}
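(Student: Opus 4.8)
The plan is to prove the theorem by a two-part strategy that exactly mirrors the structure already laid out in the text: first exhibit a suitable public-coin quantum interactive proof system for the problem $(\delta,1-\delta)$-CQS with $\delta(n)$ exponentially small, and then invoke Theorem~\ref{theorem:Sahai-Vadhan} to reduce the general case $(\alpha,\beta)$-CQS (with $\alpha(n)\leq\beta(n)^2-1/q(n)$) down to this special case via a Karp reduction. Since $\class{QSZK}$ is closed under Karp reductions, membership of $(\delta,1-\delta)$-CQS in $\class{QSZK}$ yields membership of $(\alpha,\beta)$-CQS in $\class{QSZK}$, which is exactly what is claimed.

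First I would fix an input $(Q_0,Q_1)$ to $(\delta,1-\delta)$-CQS, purify each circuit to obtain unitary circuits $R_0,R_1$ acting on registers $(\reg{X},\reg{Y})$ so that tracing out $\reg{Y}$ from $R_a\ket{0^{k+m}}$ yields $\rho_a$, and then analyze the interactive game of Figure~\ref{fig:CQS-public-coin}: the prover sends $R_0\ket{0^{k+m}}$'s $\reg{X}$-part, receives a random challenge bit $a$, and must return a $\reg{Y}$-register consistent with $R_a\ket{0^{k+m}}$ (using a unitary $U$ on $\reg{Y}$ that approximately maps $R_0\ket{0^{k+m}}$ to $R_1\ket{0^{k+m}}$ when the two output states are close; such a $U$ exists, up to exponentially small error, by the unitary equivalence of purifications, Theorem~\ref{thm:unitary-equivalence}). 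Completeness follows because when $\rho_0$ and $\rho_1$ are $\delta$-close the honest prover succeeds with probability exponentially close to $1$; soundness follows because when $\rho_0$ and $\rho_1$ are almost perfectly distinguishable no single state of $\reg{X}$ can be jointly consistent with the purifications $R_0\ket{0^{k+m}}$ and $R_1\ket{0^{k+m}}$, so the verifier's final $R_a^{-1}$-then-measure check rejects with probability exponentially close to $1/2$, giving a constant completeness/soundness gap.

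The main work — and the step I expect to be the principal obstacle — is verifying the zero-knowledge condition, i.e. that the honest prover $P$ above is a quantum statistical zero-knowledge prover on yes-instances. Following the methodology already used for the Goldreich--Micali--Wigderson graph isomorphism proof system in Section~\ref{sec:GMW-graph-isomorphism}, I would reduce to a cheating verifier whose first action is an isometry $W_1 = A_0\otimes\ket{0} + A_1\otimes\ket{1}$ with $A_0,A_1\in\Lin(\Z_0\otimes\Y_0,\Z_1)$, write down the channel $\Phi$ it implements through interaction with $P$ (in terms of $A_0$, $A_1$, and the vectors $\ket{\gamma_0}=R_0\ket{0^{k+m}}$, $\ket{\gamma_1}=(\I\otimes U)R_0\ket{0^{k+m}}$), and then construct an efficiently implementable unitary $Q$ that maps $\ket{\psi}\ket{0^r}$ to $\tfrac{1}{\sqrt 2}\ket{0}\ket{\phi_0(\psi)} + \tfrac{1}{\sqrt2}\ket{1}\ket{\phi_1(\psi)}$, where $\ket{\phi_0(\psi)}$ carries the information needed to produce $\Phi(\ket{\psi}\bra{\psi})$. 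The obstacle is precisely that $U$ may not be efficiently computable, so one cannot simply run the honest prover inside the simulator; this is where the \emph{approximate} quantum rewinding lemma is needed, applied with a constant target probability $q=1/2$ and an exponentially small $\delta$-parameter (the vectors $\ket{\phi_b(\psi)}$ are only exponentially close to unit vectors, which is exactly the regime the approximate lemma tolerates). The rewinding lemma recovers $\ket{\phi_0(\psi)}$ with high probability from a single copy of $\ket{\psi}$, and post-processing (measuring the branch register, applying $W_2$, discarding auxiliary qubits) yields an efficient $\varepsilon$-approximation to $\Phi$ for a negligible $\varepsilon$; since this works uniformly in $\ket{\psi}$, the zero-knowledge condition holds. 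Assembling completeness, soundness, and zero-knowledge establishes $(\delta,1-\delta)\text{-CQS}\in\class{QSZK}$, and the Sahai--Vadhan reduction then completes the proof of the theorem.
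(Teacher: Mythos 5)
Your proposal is correct and follows essentially the same route as the paper: reduce to $(\delta,1-\delta)$-CQS via the Sahai--Vadhan theorem and closure of \class{QSZK} under Karp reductions, use the public-coin game of Figure~\ref{fig:CQS-public-coin} for completeness and soundness, and establish zero-knowledge by the approximate quantum rewinding lemma applied to the superposed guess state, exactly as in the paper's sketch. No gaps to report.
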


\begin{cor}
  \class{QSZK} = \class{HVQSZK}.
\end{cor}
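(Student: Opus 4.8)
The plan is to combine the two main results established in the chapter so far: the \class{HVQSZK}-completeness of the $(\alpha,\beta)$-CQS problem (for parameters satisfying Theorem~\ref{theorem:Sahai-Vadhan}), and the containment $\class{$(\alpha,\beta)$-CQS}\in\class{QSZK}$ just stated in the theorem immediately preceding the corollary. The corollary $\class{QSZK}=\class{HVQSZK}$ then follows from a short closure argument, which I would spell out as follows.

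First I would recall the easy direction. Every \class{QSZK} proof system is in particular an honest-verifier \class{QSZK} proof system: if $(V,P)$ witnesses $A\in\class{QSZK}_{a,b}(m)$, then the honest prover/verifier pair already satisfies the completeness and soundness conditions of Definition~\ref{definition:HVQSZK}, and the view of the honest verifier is just a particular channel (with no input qubits) that can be implemented through an interaction with $P$; hence the zero-knowledge condition of \class{QSZK} specializes to the view being efficiently $\varepsilon$-approximable. (One must take a tiny bit of care that the verifier in the \class{HVQSZK} definition is required to be \emph{isometric}; but the standard purification process of Section~\ref{sec:qip-purification} turns any verifier into an isometric one while only enlarging the view by private ancilla registers in the fixed state $\ket{0^\cdots 0}$, which does not affect efficient approximability.) This gives $\class{QSZK}\subseteq\class{HVQSZK}$.

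For the reverse containment, let $A=(A_{\yes},A_{\no})\in\class{HVQSZK}$ be arbitrary. The \class{HVQSZK}-completeness of the close quantum states problem gives a Karp reduction $A\leq_m \class{$(\alpha,\beta)$-CQS}$ for a suitable pair $(\alpha,\beta)$ to which Theorem~\ref{theorem:Sahai-Vadhan} applies (for instance $(\delta,1-\delta)$ with $\delta$ exponentially small, after composing with the Sahai--Vadhan polarization reduction). By the preceding theorem, $\class{$(\alpha,\beta)$-CQS}\in\class{QSZK}$. Since \class{QSZK} is closed under Karp reductions --- which I would note follows by precomposing the polynomial-time function $V$ defining the \class{QSZK} verifier with the reduction $f$, the zero-knowledge prover transporting along in the obvious way --- we conclude $A\in\class{QSZK}$. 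Combining the two inclusions yields $\class{QSZK}=\class{HVQSZK}$.

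The main obstacle is not in this final assembly but in the two ingredients it rests on, both of which are developed (as sketches) earlier in the section: establishing that the close quantum states problem is \class{HVQSZK}-hard (which uses the perfect-completeness, parallelization, and parallel-repetition transformations of the previous chapter, together with the honest-verifier zero-knowledge property being preserved under each), and establishing that it lies in \class{QSZK} (which is the genuinely quantum step, relying on the approximate quantum rewinding lemma to simulate a cheating verifier against the public-coin proof system of Figure~\ref{fig:CQS-public-coin}). Granting those, and granting that \class{QSZK} is closed under Karp reductions, the corollary is immediate; the only routine points to double-check are the preservation of isometry and of efficient approximability under the purification and reduction bookkeeping mentioned above.
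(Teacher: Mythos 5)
Your proposal is correct and follows essentially the same route as the paper: the inclusion $\class{QSZK}\subseteq\class{HVQSZK}$ is noted to be immediate from the definitions, and the reverse inclusion is obtained by combining the \class{HVQSZK}-completeness of the close quantum states problem, its membership in \class{QSZK} (via the approximate quantum rewinding lemma), and the closure of \class{QSZK} under Karp reductions. Your added remarks on the isometric-verifier requirement and the reduction bookkeeping are exactly the routine points the paper glosses over.
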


\subsection{Closure of QSZK under complementation}

Finally, we may observe that the class \class{QSZK} is closed under
complementation, meaning that a promise problem 
$A = (A_{\textup{yes}}, A_{\textup{no}})$ is contained in \class{QSZK} if and
only if the same is true of the promise problem
$\overline{A} = (A_{\textup{no}}, A_{\textup{yes}})$.
By virtue of the fact that $\class{QSZK} = \class{HVQSZK}$ and the
close quantum states problem is complete for \class{QSZK}, it suffices to prove
that the complement of $(\delta,1-\delta)$-CQS is contained in
$\class{HVQSZK}$ for $\delta(n) = 2^{-n}$.

To prove that this is so, one may consider the very simple interactive game
described in Figure~\ref{fig:CQS-complement}.
\begin{figure}
\noindent\hrulefill
\begin{trivlist}
\item The input is a pair $(Q_0,Q_1)$ of quantum circuits that take no input
  and output states $\rho_0$ and $\rho_1$ of an $n$-qubit register $\reg{X}$.
\item {\bf Verifier's step 1:}
  Choose $a\in\{0,1\}$ uniformly at random, prepare the state
  $\rho_a$ produced by $Q_a$ in a register $\reg{X}$, and send $\reg{X}$ to the
  prover.
  \vspace{1mm}
\item {\bf Prover's step 1:}
  Perform an optimal measurement to distinguish the states
  $\rho_0$ and $\rho_1$.
  If the measurement indicates that the state is $\rho_0$, send $b = 0$ to the
  verifier, and otherwise send $b = 1$ to the verifier.
\item {\bf Verifier's step 2:}
  Accept if $a = b$, reject otherwise.
\end{trivlist}
\noindent\hrulefill
\caption{A proof system for the complement of the close quantum states
  problem.}
\label{fig:CQS-complement}
\end{figure}
The value of the verifier $V$ for the input $(Q_0,Q_1)$ described in the
figure is given by the expression
\begin{equation}
  \omega(V) = \frac{1}{2} + \frac{1}{4}\bignorm{\rho_0 - \rho_1}_1,
\end{equation}
and the prover $P$ described in the figure achieves this optimal winning
probability.

It remains to prove that, for a purified form of the verifier described in the
figure, one has that the prover and verifier pair $(P,V)$ possesses the
honest-verifier zero-knowledge property on yes-inputs, which are those for
which $\frac{1}{2}\norm{\rho_0 - \rho_1}_1$ is exponentially close to 1
(i.e., $\rho_0$ and $\rho_1$ are almost perfectly distinguishable).
The view of $V$ in this case is simply the state of the pair 
$(\reg{Z}_1,\reg{Y}_1)$ immediately after the prover sends its only message in
the proof system.
It is nearly trivial to approximate this state efficiently---one may simply
perform the computation represented by the verifier's first action $V_1$,
then substitute the correct identification of $b$ for the prover's message.
Specifically, this may be done by discarding the register $\reg{X}_1$, 
measuring the qubit corresponding to the verifier's random choice of
$a\in\{0,1\}$, and then setting $\reg{Y}_1$ to contain this classical value.
Because the honest prover correctly determines the value of $a$ with a
negligible error probability, this nearly trivial approximation to the
verifier's view will deviate from the verifier's actual view on yes-inputs
by a negligible quantity.

\section{Chapter notes}

The notion of zero-knowledge was proposed by Goldwasser, Micali, and Rackoff
\cite{GoldwasserMR85,GoldwasserMR89}, and has been investigated by many
researchers in theoretical computer science and cryptography since then.
The survey of Goldreich \cite{Goldreich02} may be consulted by readers
interested in learning more about this topic of study in the classical
setting.
The topic of quantum zero-knowledge, as well as the problematic issue of
classical techniques for proving interactive proof systems to possess the
zero-knowledge property not carrying over to the quantum setting, was raised
by van~de~Graaf in his PhD thesis \cite{Graaf97}.

Honest-verifier quantum statistical zero-knowledge was defined and studied in
\cite{Watrous02}, wherein it was proved that \class{HVQSZK} is closed under
complementation, contained in $\class{QIP}(2)$, and has the complete promise
problems mentioned in Section~\ref{sec:QSZK-properties}.
As suggested in the main text, these facts have classical analogues---the paper
of Sahai and Vadhan \cite{SahaiV03} proves several facts along these lines.
Earlier papers, including ones of Fortnow \cite{Fortnow89},
Aiello and H{\aa}stad \cite{AielloH91}, and
Okamoto \cite{Okamoto00}, proved related results on the complexity-theoretic
aspects of statistical zero-knowledge proof systems.
Kobayashi \cite{Kobayashi03} proved several results of a similar nature
for a non-interactive variant of quantum statistical zero-knowledge.

The quantum rewinding lemma was proved in \cite{Watrous09}, along with its
application to the security of the Goldreich--Micali--Wigderson graph
isomorphism proof system against quantum attacks, the equality of
\class{QSZK} and \class{HVQSZK}, and to the security of a (computational)
zero-knowledge proof system for graph 3-coloring (also due to Goldreich,
Micali, and Wigderson).
Hallgren, Kolla, Sen, and Zhang \cite{HallgrenKSZ08} extended these results to
prove that a wide range of classical statistical zero-knowledge proof systems
remain zero-knowledge against quantum attacks, and Kobayashi \cite{Kobayashi08}
proved several results of a similar nature concerning quantum computational
zero-knowledge.
The quantum rewinding lemma has been applied in a couple of other settings
relating to quantum cryptography---see, for instance,
Damg{\aa}rd and Lunemann \cite{DamgaardL09} and
Hallgren, Smith, and Song \cite{HallgrenSS11}.
Unruh \cite{Unruh12} has considered quantum proofs of knowledge, making use of
the quantum rewinding lemma and a different rewinding technique
to prove interesting results concerning this notion.
Ambainis, Rosmanis, and Unruh \cite{AmbainisRU14} have proved limitations on
the applicability of these rewinding techniques, which indeed appear to be
rather limited when compared to their classical counterparts.

In addition to the close quantum states problem (and its complement), there are
a few other promise problems known to be complete for the class \class{QSZK}.
Ben-Aroya, Schwartz, and Ta-Shma \cite{Ben-AroyaST10} proved that a promise
problem based on deciding which of two quantum circuits produces a state with
greater von Neumann entropy is complete for \class{QSZK}, and
Gutoski, Hayden, Milner, and Wilde \cite{GutoskiHMW15} proved the
\class{QSZK}-completeness of a promise problem relating to separability
testing.
Gutoski, Hayden, Milner, and Wilde also prove the completeness of other
problems relating to separability testing for other complexity classes based on
quantum proofs.

\chapter{Multi-Prover Quantum Interactive Proofs}
\label{chapter:multiple-provers}

This chapter considers \emph{multi-prover} interactive proof systems, an
extension of the model of single-prover interactive proofs in which the
verifier interacts simultaneously with two or more provers. Each prover is
modeled as a separate participant in the game, and together the provers attempt
to maximize the verifier's probability of eventually outputting~1, meaning that
it accepts the interaction. 

What makes these games particularly interesting is that the provers, which will
always be assumed to cooperate in this chapter, are not permitted to exchange
messages with each other---the only communication is between the verifier and
each individual prover.
This restriction empowers the verifier, allowing for games in which the
prover's answers are checked against each other, akin to a detective attempting
to confound a suspected team of robbers by submitting them to isolated
interrogations and cross-checking their answers against one other. 

The first scenario that will be considered is the case of a quantum verifier
interacting with quantum provers that are restricted to applying local
transformations and do not have any further means of coordinating their
actions. 
After introducing the required definitions in Section~\ref{sec:qmip-nexp}, it
will be shown that the class $\QMIP$ of problems that can be decided in this
model is precisely equal to its classical counterpart $\MIP$, in which all
parties are classical: $\QMIP = \MIP$.

While the model evidently collapses to the single-prover model as soon as
communication between the provers is allowed, it is interesting to consider
provers having access to sources of correlations that do not require
communication.
For the case of classical provers, shared randomness may be considered, but it
does not affect the computational power of the model---any shared random string
used by the provers can always be replaced by a deterministic setting of the
shared string that maximizes the probability with which the verifier accepts.
For the case of quantum provers, quantum physics suggests that it may be
beneficial to the provers to share a quantum state that is entangled across the
registers associated with different provers.
The study of Bell inequalities demonstrates that by performing local
measurements on a shared entangled state (such as an EPR pair) the provers are
able to generate correlations that, although they do not imply communication,
are stronger than the correlations that can be generated by shared randomness
alone. 
Thus, the use of entanglement may enhance the provers' ability to coordinate
their answers, leading to a class $\QMIP^*$ of problems having entangled-prover
multi-prover interactive proof systems that is \emph{a priori} distinct from
$\QMIP$.

Most of this chapter is concerned with results on the class $\QMIP^*$. 
We begin by considering the effect that the use of entanglement can have on the
soundness of multi-prover interactive proof systems. In
Section~\ref{sec:oracularization} it will be shown that the important classical
technique of \emph{oracularization} fails in the presence of entanglement, and
in Section~\ref{sec:xor} we will see that entanglement leads to the collapse of
a certain restricted class of proof systems, namely two-prover XOR interactive
proofs.

Section~\ref{sec:qmip-structure} discusses structural results on $\QMIP^*$.
Many of these results have the interesting peculiarity that they are only known
to be achievable when honest provers make use of shared entanglement. 
These results include the parallelization of arbitrary multi-prover interactive 
proofs to ones having a single round of interaction, the transformation of
multi-prover interactive proofs into ones possessing the property of perfect
completeness, and the simulation of arbitrary multi-prover quantum interactive
proofs by ones in which the verifier is classical, leading to the equality
$\QMIP^*=\MIP^*$.

Section~\ref{sec:nexp-in-mipstar} is devoted to a proof that
$\NEXP\subseteq\QMIP^*$. 
This shows that, in spite of the possible use of entanglement by the provers,
the verifier in a quantum multi-prover interactive proof system has no less
verification power than that of classical multi-prover interactive proofs,
which is characterized as $\MIP=\NEXP$. 
The analysis will introduce a three-prover variant of oracularization and
discuss its relation to a phenomenon known as the 
\emph{monogamy of entanglement}. 

In the concluding Section~\ref{sec:qmip-further}, we discuss two important
topics in the study of quantum multi-prover interactive proof systems that
remain largely unsettled. 
The first topic is the question of parallel error amplification, and the second
is the problem of placing upper bounds on the class $\QMIP^*$.

\section{Definitions of multi-prover interactive proof systems}
\label{sec:def-multi}

As was suggested in Chapter~\ref{chapter:single-prover}, the interactive game
model through which single-prover quantum interactive proof systems were defined
may be extended in a straightforward way that allows a verifier to interact
with multiple provers.
A multi-prover interactive game is completely determined by the description of
the verifier, and we will always assume that messages are sent synchronously 
in \emph{turns}, consisting either of a set of messages from the verifier to 
each of the provers, or a set of messages from the provers to the verifier. 
A \emph{round} is made of two turns, the first consisting of messages from the
verifier to the provers and the second consisting of messages from the provers
back to the verifier. 
Except when stated otherwise, for notational convenience we will usually assume
that all interactive games have an integral number of rounds (and in particular
the first turn consists of a set of messages from the verifier to the provers).

We will use the same labeling convention for the registers corresponding to
different messages and private memories as in the single-prover case, 
introducing superscripts to distinguish registers associated with distinct
provers. 
For instance, $\rX_1^2$ denotes the register containing the first message sent
by the verifier to the second prover and $\reg{W}_0^1$ the register representing
the first prover's private memory at the start of the game. 
Figure~\ref{fig:multi-game-1} provides an illustration of a four-turn (or
two-round) interactive game between a verifier and two provers.
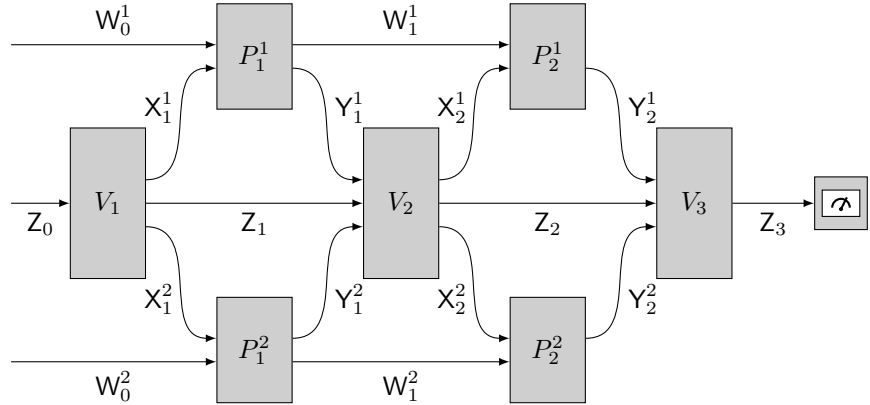
\begin{figure}
  \begin{center}
    \small
    \begin{tikzpicture}[scale=0.39, 
        turn/.style={draw, minimum height=14mm, minimum width=10mm,
          fill = ChannelColor, text=ChannelTextColor},
        invisibleturn/.style={minimum height=14mm, minimum width=10mm},
        bigturn/.style={draw, minimum height=20mm, minimum width=10mm,
          fill = ChannelColor, text=ChannelTextColor},
        measure/.style={draw, minimum width=7mm, minimum height=7mm,
          fill = ChannelColor},
        >=latex]
      
      \node (Vleft) at (-15,5) [invisibleturn] {};
      \node (V1) at (-8,5) [turn] {$P_1^1$};
      \node (V2) at (2,5) [turn] {$P_2^1$};      
     
      \node (Xleft) at (-15,-5) [invisibleturn] {};
      \node (X1) at (-8,-5) [turn] {$P_1^2$};
      \node (X2) at (2,-5) [turn] {$P_2^2$};
      
      \node (N) at (12,0) [measure] {};
      
      \node[draw, minimum width=5mm, minimum height=3.5mm, fill=ReadoutColor]
      (readout) at (N) {};
      
      \draw[thick] ($(N)+(0.3,-0.15)$) arc (0:180:3mm);
      \draw[thick] ($(N)+(0.2,0.2)$) -- ($(N)+(0,-0.2)$);
      \draw[fill] ($(N)+(0,-0.2)$) circle (0.5mm);
      
      \node (Pleft) at (-15,0) [invisibleturn] {};

      \node (P0) at (-13,0) [bigturn] {$V_1$};
      \node (P1) at (-3,0) [bigturn] {$V_2$};
      \node (P2) at (7,0) [bigturn] {$V_3$};
      
      \draw[->] ([yshift=4mm]V1.east) -- ([yshift=4mm]V2.west)
      node [above, midway] {$\reg{W}_1^1$};
			
      \draw[->] ([yshift=4mm]Vleft.west) -- ([yshift=4mm]V1.west)
      node [above, midway] {$\reg{W}_0^1$};
      
      \draw[->] ([yshift=-4mm]V1.east) .. controls +(right:20mm) and 
      +(left:20mm) .. ([yshift=8mm]P1.west) node [right, pos=0.4]
      {$\reg{Y}_1^1$};
      
      \draw[->] ([yshift=-4mm]V2.east) .. controls +(right:20mm) and 
      +(left:20mm) .. ([yshift=8mm]P2.west) node [right, pos=0.4] 
      {$\reg{Y}_2^1$};
      
      \draw[->] ([yshift=8mm]P0.east) .. controls +(right:20mm) and 
      +(left:20mm) .. ([yshift=-4mm]V1.west) node [left, pos=0.6]
      {$\reg{X}_1^1$};
      
      \draw[->] ([yshift=8mm]P1.east) .. controls +(right:20mm) and 
      +(left:20mm) .. ([yshift=-4mm]V2.west) node [left, pos=0.6]
      {$\reg{X}_2^1$};
      
      \draw[->] (Pleft.west) -- (P0.west) node [below, midway] {$\reg{Z}_0$};

      \draw[->] (P0.east) -- (P1.west) node [below, midway] {$\reg{Z}_1$};
      
      \draw[->] (P1.east) -- (P2.west) node [below, midway] {$\reg{Z}_2$};
      
      \draw[->] (P2.east) -- (N.west)
      node [below, midway] {$\reg{Z}_3$};
      
      \draw[->] ([yshift=-4mm]X1.east) -- ([yshift=-4mm]X2.west)
      node [below, midway] {$\reg{W}_1^2$};
      
      \draw[->] ([yshift=-4mm]Xleft.west) -- ([yshift=-4mm]X1.west)
      node [below, midway] {$\reg{W}_0^2$};
      
      \draw[->] ([yshift=4mm]X1.east) .. controls +(right:20mm) and 
      +(left:20mm) .. ([yshift=-8mm]P1.west) node [right, pos=0.4]
      {$\reg{Y}_1^2$};
      
      \draw[->] ([yshift=4mm]X2.east) .. controls +(right:20mm) and 
      +(left:20mm) .. ([yshift=-8mm]P2.west) node [right, pos=0.4] 
      {$\reg{Y}_2^2$};
      
      \draw[->] ([yshift=-8mm]P0.east) .. controls +(right:20mm) and 
      +(left:20mm) .. ([yshift=4mm]X1.west) node [left, pos=0.6]
      {$\reg{X}_1^2$};
      
      \draw[->] ([yshift=-8mm]P1.east) .. controls +(right:20mm) and 
      +(left:20mm) .. ([yshift=4mm]X2.west) node [left, pos=0.6]
      {$\reg{X}_2^2$};
      
    \end{tikzpicture}
  \end{center}
  \caption{A four-turn interactive game between a verifier and two
    provers. The provers do not exchange any messages between themselves. Their
    initial private registers $(\rW_0^1,\rW_0^2)$ may be initialized in an
    arbitrary joint entangled state.}
  \label{fig:multi-game-1}
\end{figure}
As in the single-prover case, one may assume without loss of generality that
all of the verifier and prover actions in a multi-prover interactive game are
represented by isometric channels acting on pure states, or unitary channels
provided that sufficiently large ancillary spaces are made available for each
participant at the start of the game.

The initial state of the $k$ provers' private registers 
$\reg{W}_0^1,\ldots,\reg{W}_0^k$ will play a particularly important role in
multi-prover interactive games.
While it is always possible in the case of single-prover games to assume,
without any loss of generality, that the prover's starting register $\reg{W}_0$
is initialized to the all-zero standard basis state (or to ignore the existence
of this register altogether), this is no longer the case for multiple provers.
An alternative preparation of a single prover's starting register could always
be incorporated into this prover's first action, but multiple provers might
benefit from shared starting states (especially \emph{entangled states}) that
cannot be prepared locally.
Provers whose private registers are initialized to a product state, which could
be prepared locally and independently by each prover, will be referred to as
\emph{unentangled} provers. 
General provers, on the other hand, are permitted to start the game with the
collection of registers $(\reg{W}_0^1,\ldots,\reg{W}_0^k)$ initialized in an
arbitrary quantum state.
Such provers will typically be called \emph{entangled} provers, and the shared
starting state will be
referred to as their \emph{prior shared entanglement}. 

The following example demonstrates that the set of \emph{entangled strategies},
or strategies that can be implemented by entangled provers having access to
prior shared entanglement, is strictly larger than the set of
\emph{unentangled strategies} associated with provers restricted to
initial product states. 
  
\begin{example}[Coherent state exchange game]\label{ex:dltw-1}
  Consider the following one-round two-prover verifier $V=(V_1,V_2)$. 
  Following a well-established convention the two provers will be given the
  names \emph{Alice} and \emph{Bob}.
  In the game defined by this verifier, the registers $\reg{Z}_1$,
  $\reg{Y}_1^1$, and $\reg{Y}_1^2$ are qubit registers while $\reg{X}_1^1$ and
  $\reg{X}_1^2$ are qutrit registers (having standard basis states
  $\ket{0}$, $\ket{1}$, and $\ket{2}$).
  \begin{enumerate}
  \item 
    The verifier prepares the registers $(\reg{Z}_1,\reg{X}_1^1,\reg{X}_1^2)$
    in the pure state
    \begin{equation}
      \frac{1}{\sqrt{2}}\ket{0}\ket{00} + \frac{1}{\sqrt{2}}\ket{1} \ket{\phi},
    \end{equation}
    for
    \begin{equation}
      \ket{\phi} =
      \frac{1}{\sqrt{2}}\ket{11} + \frac{1}{\sqrt{2}}\ket{22}.
    \end{equation}
    It sends $\rX_1^1$ to Alice and $\rX_1^2$ to Bob.
  \item 
    Alice and Bob respond with the registers $\rY_1^1$ and $\rY_1^2$,
    respectively.
  \item 
    The verifier measures the registers $(\rZ_1,\rY_1^1,\rY_1^2)$ using a binary
    projective measurement $\{\Pi_1,\I-\Pi_1\}$, with the outcome $1$ being 
    associated with the projector $\Pi_1=\ket{\gamma}\bra{\gamma}$, for
    \begin{equation}
      \ket{\gamma} = \frac{1}{\sqrt{2}}\ket{0}\ket{00} 
      + \frac{1}{\sqrt{2}}\ket{1}\ket{11}. 
    \end{equation}
  \end{enumerate}
  Intuitively speaking, the provers Alice in Bob are aiming to transform
  $(\reg{X}_1^1,\reg{X}_1^2)$ into $(\reg{Y}_1^1,\reg{Y}_1^2)$ in such a way
  that (i) the state $\ket{00}$ is transformed to $\ket{00}$, (ii) the state
  $\ket{\phi}$ is transformed to $\ket{11}$, and (iii) the ``superposition''
  between $\reg{Z}_1$ being in the states $\ket{0}$ and $\ket{1}$ is not
  disturbed.
  This is challenging for them because $\ket{\phi}$ is entangled while
  $\ket{00}$ is not.

  This game has the particularity that the provers' maximum probability of
  convincing the verifier to produce the outcome $1$ increases with the
  dimension of their initial private registers $\reg{W}_0^1$ and $\reg{W}_0^2$.
  Informally speaking, this is so because the entanglement present in the state
  $\ket{\phi}$ can be ``hidden'' within a vast reservoir of entanglement in such
  a way that the ``superposition'' between $\reg{Z}_1$ being in the states
  $\ket{0}$ and $\ket{1}$ is not disturbed.
  (The idea is essentially the reverse of the \emph{embezzling} of entanglement
  phenomenon of van Dam and Hayden \cite{vanDamH03}.)
  More quantitatively, as shown in~\cite{LeungTW13}, unentangled provers can
  achieve a success probability of at most $3/4$ in this game, but optimal
  entangled provers sharing a state of local dimension $d$ succeed with
  probability $1-\Theta(\log^{-2} d)$, which tends to $1$ as $d\to\infty$.

  In particular, a simple strategy achieving a success probability that
  approaches $1$ as the dimension of the provers' shared entangled state grows
  can be devised as follows. 
  Suppose that Alice and Bob share the entangled state
  \begin{equation}
    \frac{1}{\sqrt{N}} \sum_{i = 1}^N \ket{00}^{\otimes i} \otimes
    \ket{\phi}^{\otimes (N-i+1)}
  \end{equation}
  for a very large value of $N$, where each copy of $\ket{00}$ and $\ket{\phi}$
  represents the state of a pair of qutrits shared between Alice and Bob.
  Using this state as a resource, Alice and Bob can approximately convert
  $\ket{00}$ to $\ket{\phi}$ (or \emph{vice versa}) by the unitary process
  which performs a cyclic rotation of the $(N+1)$ registers in their
  possession; the term \emph{embezzlement} comes from the fact that this
  process will leave the entanglement almost unchanged (for large $N$).  
  When used as a subroutine, this process allows Alice and Bob to win the game
  described above with probability approaching 1 as $N$ goes to infinity.
 \end{example}

Example~\ref{ex:dltw-1} suggests the introduction of two distinct quantities to
measure the maximum acceptance probability of the verifier in a multi-prover
interactive game:
\begin{itemize}
\item The \emph{unentangled value}, denoted $\val(V)$,
  is the highest probability with which the verifier can be made to output $1$
  when interacting with provers whose private registers are all initialized to
  the all-zero product state $\ket{0\cdots 0}$.
\item The \emph{entangled value} $\omega^*(V)$ is defined as the supremum over
  all finite-dimensional Hilbert spaces $\W_0^1,\ldots,\W_0^k$, corresponding to
  the provers' initial private registers $\rW_0^1,\ldots,\rW_0^k$, and all 
  initial pure states\footnote{
    Similar to the classical setting in which shared randomness does not
    affect the power of multiple provers, there is no increase in power for
    multiple quantum provers when their initial private registers are
    in a mixed quantum state, as compared with a pure state.}
  $\ket{\psi}\in\W_0^1\otimes\cdots\otimes \W_0^k$
  of these registers, of the provers' maximum probability of
  causing the verifier to output $1$.
\end{itemize}  

These two values lead to potentially distinct classes of problems having
multi-prover interactive proof systems:
$\QMIP$ for the case of unentangled provers and $\QMIP^*$ when the provers are
allowed to share arbitrary entangled states.

\begin{definition}\label{def:mips}
A promise problem $A = (A_{\textup{yes}},A_{\textup{no}})$ is contained in the
complexity class $\class{QMIP}_{a,b}(k,m)$ if and only if there exists a
polynomial-time computable function $V$ that possesses the following properties:
\begin{mylist}{\parindent}
\item[1.] For every string $x\in A_{\textup{yes}} \cup A_{\textup{no}}$, one
  has that $V(x)$ is an encoding of a quantum circuit description of an
  $m$-turn verifier in an interactive game with $k$ provers.
\item[2.] \emph{Completeness.} 
  For every string $x\in A_{\textup{yes}}$, it holds that
  $\omega(V(x)) \geq a$.
\item[3.] \emph{Soundness.}
  For every string $x\in A_{\textup{no}}$, it holds that $\omega(V(x)) \leq b$.
\end{mylist}
The complexity class $\QMIP^*_{a,b}(k,m)$ is defined in the same way, except
that the quantity $\omega^*(V(x))$ replaces $\omega(V(x))$.
\end{definition}

Similar conventions to those in the single-prover setting will be used to refer
to the classes above.
For instance, we denote
\begin{equation}
  \begin{gathered}
  \QMIP(k,m) = \QMIP_{2/3,1/3}(k,m),\\
  \QMIP^*(k,m) = \QMIP^*_{2/3,1/3}(k,m).
  \end{gathered}
\end{equation}
We let $\QMIP_{a,b}$ and $\QMIP^{\ast}_{a,b}$ denote the classes of promise
problems $A$ for which $A \in \QMIP_{a,b}(k,m)$ or $A\in\QMIP^*_{a,b}(k,m)$,
respectively, for some choice of polynomially bounded functions $k$ and $m$,
and we denote $\QMIP = \QMIP_{2/3,1/3}$ and $\QMIP^* = \QMIP^{\ast}_{2/3,1/3}$.

In addition, the classes $\MIP_{a,b}(k,m)$ and $\MIP^*_{a,b}(k,m)$ are defined
in an analogous way, except that the verifier is classical (specified by a
classical Boolean circuit that may take a uniformly random bit string as an
auxiliary input).
All messages exchanged with the provers are restricted to being classical
strings in this case.

The fact that both the completeness and soundness parameters of the classes
$\QMIP_{a,b}(k,m)$ and $\QMIP^{\ast}_{a,b}(k,m)$ are defined with respect to
$\val$ and $\val^*$, respectively, makes their relationship non-obvious.
The inequality $\val^* \geq \val$ always holds, but it can have countervailing 
effects. 
First, it implies that a proof system sound against unentangled provers may no
longer be sound when the provers are allowed to share entanglement. 
Second, a proof system achieving a certain completeness parameter with entangled
provers may not have the same property when the provers are restricted to
unentangled strategies. 
Because both the soundness and completeness parameters are affected in possibly
different ways, it is not clear in which cases the presence of a gap between the
parameters (corresponding to the distinction between yes- and no-inputs) is
preserved. 
This phenomenon will be discussed in greater detail in subsequent sections.

As in the single-prover setting, the choice of completeness and soundness
parameters $a,b$ does not affect the class of problems that lie in
$\QMIP_{a,b}$ or $\QMIP^*_{a,b}$, so long as they are polynomially
separated---any inverse polynomial separation between $a$ and $b$ can be
amplified in a straightforward way, either by repeating the game sequentially
or with different sets of provers.
The following proposition states this fact in more precise terms.

\begin{prop}\label{prop:amplify}
  Let $V$ be a verifier in a $k$-prover $m$-turn interactive game and
  let $a, b\in [0,1]$ be real numbers such that $a>b$.
  For every positive integer $T$, there exists a verifier $V'$ in a $k$-prover,
  $T m$-turn interactive game (or, alternatively, a verifier $V'$ in a
  $T k$-prover, $m$-turn interactive game) for which the implications
  \begin{equation}
    \begin{aligned}
      \omega(V) \geq a & \;\Rightarrow\;
      \omega(V') \geq 1 - \exp\biggl(-\frac{(a-b)^2}{2}T\biggr)\\
      \omega(V) \leq b & \;\Rightarrow\;
      \omega(V') \leq \exp\biggl(-\frac{(a-b)^2}{2}T\biggr)
    \end{aligned}
  \end{equation}
  hold.
  Furthermore, the same implications hold for $\omega^*$ (under the same
  transformation).
\end{prop}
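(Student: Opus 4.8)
The plan is to establish the two-step amplification in the most direct way: run $T$ independent copies of the game, either sequentially (multiplying the number of turns by $T$, reusing the same $k$ provers) or in parallel with fresh sets of provers (multiplying the number of provers by $Tk$, keeping the number of turns fixed at $m$), and have $V'$ accept if and only if at least a $\tfrac{a+b}{2}$-fraction of the copies accept. The analysis is then a routine Chernoff-bound calculation, but the only genuine subtlety is verifying that the value of the composite game factorizes correctly over the copies in each of the two regimes, and in particular that this holds for $\omega^{*}$ as well as for $\omega$.

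First I would treat the sequential ($Tm$-turn) construction. Here $V'$ simulates $V$ on the $j$-th block of turns for $j=1,\dots,T$, obtaining binary outcomes $c_1,\dots,c_T$, and outputs $1$ iff $c_1+\cdots+c_T\geq \lceil \tfrac{a+b}{2}T\rceil$. The key point is that the provers, although they may carry private memory from one block to the next (and, in the entangled case, hold arbitrary prior shared entanglement), cannot exploit this: after running the verifier's SDP-style description from Section~\ref{sec:qip-sdp} (or, in the multi-prover case, simply arguing turn-by-turn), one sees that the optimal acceptance probability of $V'$ is exactly a convolution of $T$ independent Bernoulli random variables with parameter $p = \omega(V)$ (respectively $p=\omega^{*}(V)$). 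The cleanest way to justify this is to note that, since each block of $V'$ is an exact copy of $V$ using only the verifier's fresh ancilla qubits, the probability that $c_j=1$ conditioned on the entire history of prior blocks is at most $\omega(V)$ (respectively $\omega^{*}(V)$): the provers' residual state entering block $j$ is just \emph{some} shared (entangled) state, and by definition $\omega^{*}(V)$ is the supremum over \emph{all} such states. Similarly, this bound is achievable by playing the optimal strategy in each block, so the composite value equals the value of the $T$-fold Bernoulli sum. Then standard tail bounds on the binomial distribution give
\begin{equation}
  \Pr\Bigl[\sum_j c_j \geq \tfrac{a+b}{2}T\Bigr] \geq 1 - \exp\Bigl(-\tfrac{(a-b)^2}{2}T\Bigr)
  \quad\text{when } p\geq a,
\end{equation}
and
\begin{equation}
  \Pr\Bigl[\sum_j c_j \geq \tfrac{a+b}{2}T\Bigr] \leq \exp\Bigl(-\tfrac{(a-b)^2}{2}T\Bigr)
  \quad\text{when } p\leq b,
\end{equation}
which are precisely the claimed implications.

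For the parallel ($Tk$-prover, $m$-turn) construction, $V'$ uses $T$ disjoint groups of $k$ provers, runs one copy of $V$ against each group, and applies the same threshold. Now the provers are prevented from transferring information \emph{across} groups by the no-communication restriction of the multi-prover model, but they may initialize all $Tk$ private registers $\reg{W}_0^{i,j}$ in a single global entangled state. The main obstacle—and the reason the parallel case is not completely trivial—is to rule out that such a global entangled resource helps the provers beat the product bound. In fact it does not: fixing the strategy of every group except group $j$ and tracing out their registers leaves group $j$ with \emph{some} shared state, against which its acceptance probability is at most $\omega^{*}(V)$ by definition; iterating this argument over all groups shows the joint acceptance pattern is stochastically dominated by $T$ independent Bernoulli$(\omega^{*}(V))$ trials, and conversely this is attained by a product strategy. (For the unentangled class $\QMIP$, the same argument applies with $\omega$ in place of $\omega^{*}$, and initial product states suffice throughout.) Once this factorization is in hand, the identical Chernoff computation as above finishes the proof. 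I would present the sequential construction in full and remark that the parallel one is handled by the same Bernoulli-domination argument, with the no-communication constraint playing the role that the fresh-ancilla property plays in the sequential case.
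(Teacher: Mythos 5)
Your overall route is the same one the paper intends: a threshold test at roughly $\frac{a+b}{2}T$ acceptances, a per-repetition conditional bound on the success probability, a Chernoff--Hoeffding tail estimate, and, for the $Tk$-prover variant, the observation that the groups can be analyzed one at a time (the paper records only the one-line remark that the parallel case ``can always be considered'' as sequential, so your write-up is essentially an expansion of that sketch). Your treatment of $\omega^*$ in both constructions, and of $\omega$ in the parallel construction, is sound: conditioned on any history (or on the other groups' outcomes), the relevant provers hold \emph{some} shared state, and $\omega^*$ is by definition a supremum over all such states, which gives the stochastic domination you invoke.

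The one step that does not hold as stated is your claim that, in the sequential ($Tm$-turn) construction, the probability that block $j$ accepts conditioned on the history of earlier blocks is at most $\omega(V)$ for initially unentangled provers. Conditioning leaves the $k$ provers in some shared state, but that state need not be a product state: the verifier's own quantum messages in earlier blocks can distribute entanglement among the provers, and this entanglement persists into block $j$, so the conditional per-block bound you actually get is $\omega^*(V)$, not $\omega(V)$. Concretely, let $V$ play the CHSH game with classical messages in a first round and then, in a second round, send each prover half of an EPR pair while ignoring their replies; then $\omega(V)=3/4$, but from the second block onward the provers can win each block with probability $\cos^2(\pi/8)$, so for $b$ slightly above $3/4$ and $a$ slightly above $b$ the claimed soundness bound for $\omega$ under sequential repetition fails outright. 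The unentangled case is instead handled cleanly by the $Tk$-prover construction: the initial state is product across groups and $V'$ acts on each copy's registers independently, so the $T$ executions remain in tensor product and the outcomes are genuinely independent trials with success probability at most $\omega(V)$. You should therefore either reserve the sequential construction for the $\omega^*$ statements or add this caveat explicitly; the subtlety is glossed over in the proposition as the paper states it, and your per-block assertion is exactly the point where an expanded proof has to be more careful.
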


The fact that this procedure works as described when repeated in parallel with
$Tk$ entangled provers follows along the same lines as for sequential
repetition, as it can always be considered that the interaction with each group
of $k$ provers is performed in sequence.

\section{The importance of entanglement}
\label{sec:multiprover-entanglement}

The first part of this section, Section~\ref{sec:qmip-nexp}, is devoted to
proof systems with multiple unentangled provers.
It will be shown that quantum verifiers have exactly the same power as
classical verifiers in this setting: $\QMIP=\MIP=\NEXP$. 
The proof relies on the characterization $\MIP=\NEXP$, but is otherwise
not difficult.
Thus, in the absence of entanglement between the provers, quantum verifiers are
neither less nor more powerful than their classical counterparts.

In the second part of this section, Section~\ref{sec:oracularization}, it will
be argued that the situation is markedly different in the presence of entangled
provers.
In particular, the technique of \emph{oracularization},
which is central to establishing the soundness property of natural proof
systems for $\NEXP$-complete problems, is shown to fail for entangled provers
in its most standard form.
 
Not only does entanglement allow provers to break the soundness of simple
proof systems, but for certain restricted classes of verifiers it appears to be
impossible (under commonly conjectured complexity-theoretic assumptions) to
modify proof systems in such a way as to make them sound against entangled
provers.
This will be demonstrated in Section~\ref{sec:xor} for the special case of XOR
proof systems, for which the associated class with unentangled provers,
$\oplus\MIP$, equals $\NEXP$, but collapses to a subset of
$\PSPACE$ when the provers are allowed to share prior entanglement.

\subsection{Provers without prior shared entanglement: 
  \class{QMIP}=\class{NEXP}}
\label{sec:qmip-nexp}

This section considers interactive proof systems based on games in which the
provers do not share any prior entanglement. 
As will be shown, the class $\QMIP$ of promise problems 
that can be decided by such proof systems exactly coincides with the class
$\MIP = \NEXP$ of problems that can be decided by a classical verifier
interacting with multiple unentangled provers.
The situation in this case is therefore analogous to the single-prover setting,
where the equality $\QIP = \IP$ demonstrates that the ability to exchange 
quantum information does not affect the verification power of the verifier.

The proof that $\QMIP$ coincides with $\NEXP$ relies on two separate
inclusions.
The first inclusion is
\begin{equation}\label{eq:qmip-in-nexp}
\QMIP\subseteq \NEXP,
\end{equation}
which follows from the existence of a nondeterministic exponential-time
procedure for determining the unentangled value of a multi-prover interactive
game with high accuracy.
Second, the containment
\begin{equation}\label{eq:mip-in-qmip}
\MIP \subseteq \QMIP
\end{equation}
is easily seen to hold, as a quantum verifier can simulate a classical verifier
in a straightforward way by systematically measuring the provers' messages in
the standard basis. 
This leaves the value of the game unchanged, as unentangled provers gain no
advantage from using quantum information against a classical verifier.
The equality
\begin{equation}
  \class{QMIP}=\class{NEXP}
\end{equation}
follows by combining~\eqref{eq:qmip-in-nexp} and~\eqref{eq:mip-in-qmip}
together with the characterization $\class{MIP}=\class{NEXP}$, which is an
important classical result to which we will return in
Section~\ref{sec:nexp-in-mipstar}.
We are not aware of a direct proof of $\QMIP=\MIP$ that does not rely on this
characterization.

One consequence of the equality $\class{QMIP}=\class{MIP}$ is that various
results applying to classical multi-prover interactive proof systems immediately
extend to their quantum unentangled counterparts.
For instance, it is known that such proof systems can be given perfect
completeness and exponentially small soundness error, and can be parallelized
to a single round of interaction with just two provers.

\begin{theorem}
  \label{theorem:QMIP=QMIP(2,1)}
  For every positive polynomially bounded function $p$ it holds that
  \begin{equation}
    \class{QMIP} = \class{QMIP}_{1,2^{-p}}(2,2) = \class{NEXP}.
  \end{equation}
\end{theorem}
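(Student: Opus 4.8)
The plan is to establish Theorem~\ref{theorem:QMIP=QMIP(2,1)} by assembling three ingredients: the inclusion $\class{QMIP}\subseteq\class{NEXP}$, the trivial inclusion $\class{MIP}\subseteq\class{QMIP}$, and the classical characterization $\class{MIP}=\class{NEXP}$ together with its known refinements (perfect completeness, exponentially small soundness error, two-prover one-round parallelization). Granting the classical result $\class{MIP}=\class{NEXP}$ as cited (it will be revisited in Section~\ref{sec:nexp-in-mipstar}), the heart of the matter is the pair of inclusions \eqref{eq:qmip-in-nexp} and \eqref{eq:mip-in-qmip}, after which the chain $\class{NEXP}=\class{MIP}\subseteq\class{QMIP}\subseteq\class{NEXP}$ closes up, and the refined form $\class{QMIP}_{1,2^{-p}}(2,2)=\class{NEXP}$ follows by quoting the classical refinements and observing that a quantum verifier simulating a classical one preserves all the relevant parameters.

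For \eqref{eq:mip-in-qmip}, the key step is the observation already noted in the text: a quantum verifier can run any classical verifier by measuring each incoming prover message in the standard basis before processing it. One must argue that this transformation does not change the unentangled value of the game. The point is that against a classical verifier whose first and only use of a prover's message is to measure it, a quantum prover restricted to an initial product state gains nothing from sending quantum information—one may as well measure the message register at the prover's end, which only commutes past the verifier's own measurement. Since the provers start unentangled (all private registers initialized to $\ket{0\cdots 0}$), there is no correlating resource to exploit, so the supremum over quantum prover strategies equals the supremum over classical ones. This gives $\class{MIP}_{a,b}(k,m)\subseteq\class{QMIP}_{a,b}(k,m)$ for all parameters, and in particular preserves perfect completeness and the $(2,2)$ structure.

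For \eqref{eq:qmip-in-nexp}, the plan is to show that the unentangled value $\val(V(x))$ of a quantum multi-prover interactive game can be estimated to within inverse-polynomial accuracy by a nondeterministic exponential-time computation. An $\class{NEXP}$ machine can write down exponential-size explicit matrix representations of the verifier's isometries $V_1,\ldots,V_n$ and can nondeterministically guess exponential-size matrix representations of the provers' channels $P_0^j,\ldots$ for each prover $j$ (the relevant dimensions are at most exponential because unentangled provers can use only polynomially many message qubits and an initialized private workspace that need be no larger than the message registers, by a Stinespring-type argument applied prover-by-prover). Given such a guess, the acceptance probability $\ip{\Pi}{\rho_n}$ is computed by a sequence of exponential-size matrix products and partial traces, which is a deterministic exponential-time operation. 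One verifies that the guessed acceptance probability meets the completeness threshold $a$; soundness follows because no guess can do better than the true supremum, and the supremum is attained (as in the single-prover case). A minor technical point to handle is that the provers' optimal strategies might in principle require irrational matrix entries, so the argument should guess rational approximations of sufficient precision and carry a careful but routine error analysis, exactly as one does for $\class{MIP}\subseteq\class{NEXP}$ classically; since we already have $\class{MIP}=\class{NEXP}$ in hand, one may alternatively phrase \eqref{eq:qmip-in-nexp} by reducing an optimal quantum unentangled strategy to a classical one message-by-message and invoking the classical bound directly.

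The main obstacle is making the dimension bound in \eqref{eq:qmip-in-nexp} rigorous: one must argue that, for unentangled provers, there is no loss in restricting each prover's private memory to exponential (indeed polynomial-in-message-length) dimension, so that the nondeterministic guess has exponential size. For a \emph{single} prover this is immediate—any strategy with a large ancilla can be replaced by one with an ancilla no larger than the total message space it will ever touch, by purification and Stinespring's theorem (Theorem~\ref{theorem:Stinespring-equivalence}). For \emph{unentangled} provers the same argument applies independently to each prover, precisely because the product-state initialization means there is no shared resource whose dimension must be preserved jointly. This is exactly the feature that \emph{fails} for $\class{QMIP}^*$, where the entangled value is only a supremum over unbounded shared dimension (cf.\ Example~\ref{ex:dltw-1}), and it is the crux of why $\class{QMIP}=\class{NEXP}$ while no matching upper bound is known for $\class{QMIP}^*$. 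Once this dimension bound is secured, the remaining steps are bookkeeping, and the refined equality $\class{QMIP}=\class{QMIP}_{1,2^{-p}}(2,2)=\class{NEXP}$ is obtained by pushing the classical two-prover one-round perfect-completeness characterization of $\class{NEXP}$ through the value-preserving simulation of \eqref{eq:mip-in-qmip}.
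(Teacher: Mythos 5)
Your proposal follows essentially the same route as the paper: $\class{MIP}\subseteq\class{QMIP}$ via standard-basis measurement of the provers' messages, $\class{QMIP}\subseteq\class{NEXP}$ via a per-prover Stinespring-type bound on the private memory (the paper's Theorem~\ref{theorem:dim-reg-multi}, applied round by round so each prover's workspace is bounded by the product of its message registers) followed by nondeterministically guessing exponential-size matrix descriptions with rational precision, and the refined two-prover one-round perfect-completeness form imported from the classical $\class{MIP}=\class{NEXP}$ results. The only caveat is your parenthetical alternative of converting an optimal unentangled quantum strategy to a classical one ``message-by-message'' and invoking the classical bound directly, which does not work as stated because the verifier itself is quantum and exchanges quantum messages; your main argument does not rely on it.
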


Assuming the known results on \class{MIP} just suggested (about which more will
be said when we discuss their entangled-prover counterparts in
Section~\ref{sec:qmip-structure}), one therefore has that
Theorem~\ref{theorem:QMIP=QMIP(2,1)} follows from the
inclusion~\eqref{eq:qmip-in-nexp}.

With the goal of proving \eqref{eq:qmip-in-nexp} in mind, consider the problem
of certifying the provers' maximum acceptance probability in a given
$k$-prover, $m$-turn interactive game.
An arbitrary strategy for the provers can be specified by an explicit
description of the $j$-th prover's isometry in the $i$-th round,
\begin{equation}
  \label{eq:porover-isometries}
  P_i^j\in \Unitary(\W_{i-1}^j\otimes \X_i^j, \W_i^j\otimes \Y_i^j),
\end{equation}
for all $j=1,\ldots,k$ and $i=1,\ldots,\lceil m/2\rceil$. 
Putting issues of precision aside, which can be handled by specifying rational
approximations with exponential accuracy to the real and imaginary part of each
complex matrix entry, the probability of the verifier outputting $1$ in the
corresponding interaction can be computed by performing the appropriate matrix
operations.

The inclusion $\QMIP\subseteq \NEXP$ will therefore follow once it is proved
that there exists an optimal prover strategy that can be specified by
isometries of dimension at most exponential in the description size of the
verifier.
Because the message registers necessarily satisfy such a bound, it will suffice
to bound the dimension of the private register $\reg{W}_i^j$ associated with
the $j$-th prover's isometry in the $i$-th round.
Such a bound can be obtained based on the following theorem (which represents a
very minor extension of Theorem~\ref{theorem:Stinespring-equivalence}).

\begin{theorem}
  \label{theorem:dim-reg-multi}
  Let $\X$, $\Y$, $\V$, and $\W$ be finite-dimensional Hilbert spaces with
  $\dim(\W) \geq \dim(\V) = \dim(\X\otimes\Y)$, and let
  $A \in \Unitary(\X,\W\otimes\Y)$ be an isometry.
  There exist isometries $B\in \Unitary(\X, \V\otimes\Y)$ and
  $C\in\Unitary(\V,\W)$ such that
  \begin{equation}
    A = (C \otimes \I_{\Y})B.
  \end{equation}
\end{theorem}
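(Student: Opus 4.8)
The plan is to reduce Theorem~\ref{theorem:dim-reg-multi} to the Stinespring equivalence theorem (Theorem~\ref{theorem:Stinespring-equivalence}) already available in the excerpt. First I would observe that an isometry $A \in \Unitary(\X,\W\otimes\Y)$ determines a channel $\Phi \in \Channel(\X,\Y)$ by setting $\Phi(X) = \Tr_{\W}(A X A^{\ast})$ for all $X \in \Lin(\X)$; since $A^{\ast}A = \I_{\X}$, this $\Phi$ is completely positive and trace-preserving, so it is a legitimate channel. The key point is that $A$ is itself a Stinespring dilation of $\Phi$ with ancilla space $\W$ --- and I want to compare it against a ``minimal'' dilation whose ancilla space is exactly $\V$.

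The second step is to produce a Stinespring dilation of $\Phi$ of the form $B \in \Unitary(\X,\V\otimes\Y)$, i.e.\ with $\Phi(X) = \Tr_{\V}(B X B^{\ast})$. Here I would invoke Stinespring's dilation theorem as stated in Section~\ref{section:states-channels-measurements}: for a channel with input space $\X$ and output space $\Y$, there is an isometric dilation whose ancilla space has dimension at most $\dim(\X)\dim(\Y)$. Since $\dim(\V) = \dim(\X \otimes \Y) = \dim(\X)\dim(\Y)$ by hypothesis, the space $\V$ is large enough to serve as the ancilla, so such a $B \in \Unitary(\X,\V\otimes\Y)$ exists. (If the minimal dilation uses a strictly smaller ancilla, one pads $B$ by tensoring the ancilla register with an additional fixed $\ket{0}$ state, which keeps $B$ an isometry and does not change $\Phi$; this is the same padding trick used in Section~\ref{sec:circuits}.)

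The third and main step is to apply Theorem~\ref{theorem:Stinespring-equivalence} to the two Stinespring representations $A \in \Lin(\X,\Y\otimes\W)$ and $B \in \Lin(\X,\Y\otimes\V)$ of the same channel $\Phi$. There is a small bookkeeping subtlety here: Theorem~\ref{theorem:Stinespring-equivalence} is stated for two isometries into $\Y\otimes\Z$ with a \emph{common} ancilla space $\Z$, whereas here $A$ has ancilla $\W$ and $B$ has ancilla $\V$ with $\dim(\W) \geq \dim(\V)$. To bridge this, I would first embed $\V$ isometrically into $\W$ via some fixed isometry $E \in \Unitary(\V,\W)$, so that $(\I_{\X} \otimes E)B =: B'$ is an isometry into $\Y \otimes \W$ that is also a Stinespring dilation of $\Phi$ with ancilla $\W$ (note $\Tr_{\W}(B' X B'^{\ast}) = \Tr_{\V}(B X B^{\ast}) = \Phi(X)$ since $E^{\ast}E = \I_{\V}$). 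Now Theorem~\ref{theorem:Stinespring-equivalence} applies to $A$ and $B'$ with common ancilla $\W$: there exists a unitary $U \in \Unitary(\W)$ with $A = (\I_{\Y} \otimes U) B' = (\I_{\Y} \otimes U)(\I_{\Y}\otimes E) B = (\I_{\Y} \otimes (UE)) B$. Setting $C = UE \in \Lin(\V,\W)$, one checks $C^{\ast}C = E^{\ast}U^{\ast}U E = E^{\ast}E = \I_{\V}$, so $C \in \Unitary(\V,\W)$ is an isometry, and $A = (C \otimes \I_{\Y})B$ up to reordering the tensor factors $\Y\otimes\W$ versus $\W\otimes\Y$ (which is harmless and can be absorbed into the statement's convention, matching the form $\Unitary(\X,\W\otimes\Y)$ used in the theorem).

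The only real obstacle is the tensor-factor ordering and the common-ancilla mismatch just discussed; the mathematical content is entirely carried by Stinespring's dilation theorem (to construct $B$) and Theorem~\ref{theorem:Stinespring-equivalence} (to relate $A$ and $B$), so once those bookkeeping issues are handled cleanly the proof is short. I would close by remarking that this is indeed the ``very minor extension'' of Theorem~\ref{theorem:Stinespring-equivalence} advertised in the text: the latter is the special case $\V = \W$, and the only new ingredient here is that a possibly-larger target ancilla $\W$ factors through the canonically-sized ancilla $\V$ via an isometry.
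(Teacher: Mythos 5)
Your proof is correct, and it follows exactly the route the paper intends: the text gives no explicit proof, describing the theorem only as a ``very minor extension'' of Theorem~\ref{theorem:Stinespring-equivalence}, and your reduction---viewing $A$ as a Stinespring dilation of the channel $X\mapsto\Tr_{\W}(AXA^{\ast})$, producing a dilation $B$ with ancilla $\V$ of dimension $\dim(\X)\dim(\Y)$, and invoking the equivalence theorem after embedding $\V$ into $\W$---is precisely that extension, with the tensor-ordering and padding issues handled correctly.
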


Figure~\ref{fig:prover-memory1} illustrates this theorem in the form of a
picture suggestive of a circuit diagram.
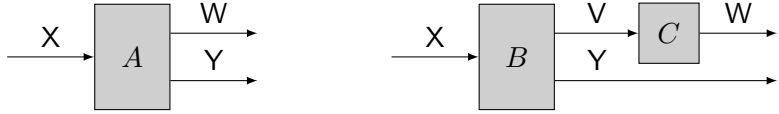
\begin{figure}
  \begin{center}
    \begin{tikzpicture}[scale=0.45, 
        isometry/.style={draw, minimum height=14mm, minimum width=10mm,
          fill = ChannelColor, text=ChannelTextColor},
        empty/.style={minimum height=14mm, minimum width=1mm},
        >=latex]

      \node (In) at (-4,0) [empty] {};      
      \node (A) at (0,0) [isometry] {$A$};
      \node (Out) at (4,0) [empty] {};
      
      \draw[->] (In.east) -- (A.west) node [above, midway] {$\reg{X}$};
      
      \draw[->] ([yshift=7mm]A.east) -- ([yshift=7mm]Out.west)
      node [above, midway] {$\reg{W}$};
      
      \draw[->] ([yshift=-7mm]A.east) -- ([yshift=-7mm]Out.west)
      node [above,midway] {$\reg{Y}$};
      
    \end{tikzpicture}
    \hspace*{10mm}
    \begin{tikzpicture}[scale=0.45, 
        isometry/.style={draw, minimum height=14mm, minimum width=10mm,
          fill = ChannelColor, text=ChannelTextColor},
        smallisometry/.style={draw, minimum height=8mm, minimum width=8mm,
          fill = ChannelColor, text=ChannelTextColor},
        empty/.style={minimum height=14mm, minimum width=1mm},
        >=latex]

      \node (In) at (-4,0) [empty] {};      
      \node (B) at (0,0) [isometry] {$B$};
      \node (C) at (4.5,0.7) [smallisometry] {$C$};
      \node (Out) at (8,0) [empty] {};
      
      \draw[->] (In.east) -- (B.west) node [above, midway] {$\reg{X}$};
      
      \draw[->] ([yshift=-7mm]B.east) -- ([yshift=-7mm]Out.west)
      node [above, pos=0.19] {$\reg{Y}$};
      
      \draw[->] ([yshift=7mm]B.east) -- (C.west)
      node [above,midway] {$\reg{V}$};
      
      \draw[->] (C.east) -- ([yshift=7mm]Out.west)
      node [above,midway] {$\reg{W}$};
      
    \end{tikzpicture}
  \end{center}
  \caption{An isometry $A$ transforming $\reg{X}$ to $(\reg{W},\reg{Y})$ is
    necessarily equivalent to an isometry $B$ transforming
    $\reg{X}$ to $(\reg{V},\reg{Y})$, followed by an isometry $C$ transforming
    $\reg{V}$ to $\reg{W}$, assuming $\reg{V}$ has the same size as
    $(\reg{X},\reg{Y})$ and $\reg{W}$ is at least this large.}
  \label{fig:prover-memory1}
\end{figure}

Through the use of this theorem, one may replace a given prover $P^j$ by an
equivalent prover $Q^j$ that substitutes a register $\reg{V}^j_i$, which
has dimension equal to the product of the dimensions of the message registers
\begin{equation}
  \reg{X}^j_1,\reg{Y}^j_1,\ldots, \reg{X}^j_i,\reg{Y}^j_i,
\end{equation}
for each register $\reg{W}^j_i$ used by $P^j$ (which we assume has been
specified by a collection of isometries $\{P^j_i\}$ as in
\eqref{eq:porover-isometries}).
The theorem is applied independently to each prover action, beginning with
$i = 1$ and increasing to $i = \lceil m/2 \rceil$.

In particular, one starts with $i = 1$, and takes $\X = \X^j_1$, $\Y = \Y^j_1$,
$\W = \W^j_1$, and $\V$ being a space with
\begin{equation}
  \dim(\V) = \dim(\X\otimes\Y) = \dim\bigl(\X^j_1 \otimes \Y^j_1\bigr),
\end{equation}
corresponding to a new private memory register $\reg{V}^j_1$ that will replace
the register $\reg{W}^j_1$.
The first isometry performed by the new prover $Q^j$ is the isometry
$Q^j_1\in\Unitary(\X^j_1,\V^j_1\otimes\Y^j_1)$ that is represented by $B$ in
the theorem.
The isometry $C$ from the theorem is composed with $P^j_2$, and the process is
repeated for $i = 2,\ldots, \lceil m/2 \rceil$.
In general, one applies the theorem with
$\X = \V^j_{i-1}\otimes \X^j_i$ (with $\V_0 = \complex$), 
$\Y = \Y^j_i$, $\W = \W^j_i$, and $\V$ being a space with
\begin{equation}
  \dim(\V) = \dim(\X\otimes\Y) 
  = \dim\bigl(\V^j_{i-1} \otimes \X^j_i \otimes \Y^j_i\bigr),
\end{equation}
corresponding to a new private memory register $\reg{V}^j_i$ that will replace
the register $\reg{W}^j_i$.
The size of the memory register $\reg{V}^j_i$ obtained in this way
therefore has the same size as the tuple of registers
\begin{equation}
  (\reg{X}^j_1,\reg{Y}^j_1,\ldots,\reg{X}^j_i,\reg{Y}^j_i).
\end{equation}
When the theorem is applied to each prover's final operation, the isometry
$C$ is simply discarded---as the verifier never touches the provers' private
memory registers, nothing is lost in disregarding this isometry.

It is worth noting that, in contrast to the single-prover case, it is not known
if an efficient optimization over strategies for the provers in a multi-prover
interactive game is possible (given an explicit matrix description of a
verifier).
One cannot accomplish such an optimization by considering only the local
properties of a sequence of reduced states of the verifier's private and
message registers at each turn of the interactive game in a manner similar to
the single-prover setting, as there is no known analogue of
Theorem~\ref{thm:unitary-equivalence} on the unitary equivalence of
purifications that would apply to the setting of multiple quantum provers.


For example, consider a setting in which three single-qubit registers
$(\reg{X}_1,\reg{Z},\reg{X}_2)$ are in the mixed state
\begin{equation}
  \label{eq:example-state-in}
  \ket{0}\bra{0} \otimes \frac{\I}{2} \otimes \ket{0}\bra{0}.
\end{equation}
Prover 1 is permitted to transform $\reg{X}_1$ into $\reg{Y}_1$ and prover 2
transforms $\reg{X}_2$ into $\reg{Y}_2$, where $\reg{Y}_1$ and $\reg{Y}_2$ are
also single-qubit registers.
One may ask if it is possible for the provers to transform the original state
\eqref{eq:example-state-in} into one of the three states
\begin{equation}
  \label{eq:example-states-out}
  \ket{\phi}\bra{\phi} \otimes \ket{0}\bra{0},
  \quad
  \ket{0}\bra{0} \otimes \ket{\phi}\bra{\phi},
  \quad\text{or}\quad
  \ket{\gamma}\bra{\gamma}
\end{equation}
of $(\reg{Y}_1,\reg{Z},\reg{Y}_2)$, where
\begin{equation}
  \begin{aligned}
    \ket{\phi} & = \frac{1}{\sqrt{2}}\ket{0}\ket{0} +
    \frac{1}{\sqrt{2}}\ket{1}\ket{1},\\
    \ket{\gamma} & = \frac{1}{\sqrt{2}}\ket{0}\ket{0}\ket{0} +
    \frac{1}{\sqrt{2}}\ket{1}\ket{1}\ket{1}.
  \end{aligned}
\end{equation}
All three transformations may or may not be possible, depending on the initial
correlations among the registers $\reg{X}_1$, $\reg{Z}$, and $\reg{X}_2$ and
two additional registers $\reg{W}_1$ and $\reg{W}_2$ representing the memories
of prover 1 and prover 2, respectively.
For instance, if $(\reg{W}_1,\reg{Z})$ is initially in the pure state
$\ket{\phi}$, then the provers are capable of transforming the original state
\eqref{eq:example-state-in} into the first state of
\eqref{eq:example-states-out}, but neither the second nor the third.
A transformation to either the second or third state is also possible assuming
different initial states of $(\reg{W}_1,\reg{X}_1,\reg{Z},\reg{X}_2,\reg{W}_2)$.
The ability of the provers to perform a particular transformation is therefore
not a function of the states in question, but also depends on the initial
state of the provers' memories.

\subsection{The failure of oracularization}
\label{sec:oracularization}

The equality $\class{QMIP}=\class{MIP}$ demonstrates that quantum multi-prover
interactive proof systems with unentangled provers are no more powerful than
their classical counterparts. 
As already discussed, allowing the use of prior shared entanglement for the
provers can raise the value of an interactive game, affecting both the
soundness and completeness parameters of a proof system. 
As a result, both inclusions on which the aforementioned equality are based, 
\begin{equation}
  \class{QMIP}\subseteq\class{NEXP}
  \quad\text{and}\quad
  \class{MIP}\subseteq\class{QMIP},
\end{equation}
may in principle fail for entangled provers.

The possible failure of the inclusion $\QMIP^*\opn\class{NEXP}$ is directly
related to the absence of an analogue of Theorem~\ref{theorem:dim-reg-multi}
for provers sharing prior entanglement of arbitrary dimension. 
This issue will be discussed in greater detail in
Section~\ref{sec:entanglement}.

The possible failure of the second inclusion, $\MIP\opn\class{QMIP}^*$, is
suggested by Example~\ref{ex:dltw-1}, which demonstrates that the entangled
value can be much larger than the unentangled value. 
To investigate how entanglement may allow the provers to break the soundness
property of a proof system, we study this effect in more detail in the context
of so-called \emph{oracularized games}. 
Oracularization is a technique frequently employed in the study of the class
$\MIP$---it allows for a reduction in both the numbers of rounds and provers
required by proof systems for problems in $\MIP$, and it plays an important
role in known proofs of the inclusion $\class{NEXP}\subseteq\class{MIP}$. 
The failure of this technique in the presence of entanglement between the
provers is a source of considerable difficulty in working with the class
$\class{QMIP}^*$.

Oracularization leverages the presence of multiple provers by using one of the
provers to check that the others provide answers in a non-adaptive manner. 
As an example demonstrating this technique, consider the one-round two-prover
\emph{clause-versus-variable} interactive game described in
Figure~\ref{figure:clause-versus-variables}.
In this game, the provers' goal is to convince the verifier of the
satisfiability of a set of constraints $\varphi$, where each constraint acts on
a subset of $n$ variables $x_1,\ldots,x_n$ taking values in some finite
alphabet $\Gamma$.
Similar ideas play an important role in the proof that 
$\MIP = \NEXP$.

\begin{figure}
  \noindent\hrulefill
  \begin{trivlist}
  \item
    The input is a collection of constrains 
    $\varphi = (C_1,\ldots,C_m)$ on variables $x_1,\ldots,x_n$
  \item
    {\bf Verifier's step 1:}
    Select a constraint $C_j$ uniformly at random, and send $C_j$ to Alice.
    Also select a variable $x_i$ on which $C_j$ acts, uniformly at random, and
    send $x_i$ to Bob.
  \item
    {\bf Provers' actions:}
    Alice replies with an assignment to all variables in $C_j$. 
    Bob replies with an assignment to $x_i$.
  \item
    {\bf Verifier's step 2:}
    Output $1$ if and only if the provers' assignments are consistent on $x_i$
    and satisfy the constraint $C_j$.
  \end{trivlist}
  \noindent\hrulefill
  \caption{Clause-versus-variable interactive game.}
  \label{figure:clause-versus-variables}
\end{figure}

Consider first the value of this game when the two provers, Alice and Bob, are
restricted to classical deterministic strategies. 
In this case, a strategy for Bob is a function mapping each variable $x_i$ to
an element of the alphabet $\Gamma$, so that the strategy coincides with a
complete assignment to the variables.
For each constraint $C$ that this assignment fails to satisfy, there is a
probability at least $1/\ell$ that the provers will fail, for $\ell$ being the
number of variables in the constraint $C$; either Alice's assignments fail to
satisfy $C$, or they must differ from the assignment represented by Bob's
strategy on at least one variable.
Consequently, for $V_{\varphi}$ being the verifier defined in the
clause-versus-variable game for $\varphi$, one has that
\begin{equation}\label{eq:clause-var-classical}
  \omega(\varphi)\,\leq\, \omega(V_\varphi)\,\leq\,
  1-\frac{1-\omega(\varphi)}{\ell},
\end{equation}
where $\omega(\varphi)$ is the maximum fraction of constraints that are
simultaneously satisfiable in $\varphi$, and where it has been assumed that
$\ell$ variables appear in each constraint.

Unfortunately this technique fails in the presence of shared entanglement
between the provers, as is demonstrated by the following example.

\begin{example}[The Magic Square game]\label{ex:ms-1}
  Consider a $3\times 3$ matrix of Boolean variables
  \begin{equation}
    \begin{pmatrix}
      X_1 & X_2 & X_3\\
      X_4 & X_5 & X_6\\
      X_7 & X_8 & X_9
    \end{pmatrix},
  \end{equation}
  and define a one-round two-prover interactive game as follows.
  \begin{mylist}{\parindent}
  \item[1.]
    The verifier first chooses either a row or a column in the $3\times 3$
    matrix of Boolean random variables, uniformly at random from the 6 possible
    choices, and sends these variables to the first prover Alice.
    The verifier also selects one of the three variables in the chosen row or
    column, uniformly at random from the 3 possibilities, and sends this
    variable to the second prover Bob.
  \item[2.]
    The provers must respond with Boolean assignments to the variables they
    were sent.
  \item[3.]
    The verifier outputs 1 (i.e., accepts) if and only if the following
    conditions hold:
    \begin{mylist}{8mm}
      \item[(a)]
        Both Alice and Bob give the same assignment to the one variable they
        were sent in common.
      \item[(b)]
        If the verifier initially selected a \emph{row} in the $3\times 3$
        matrix of Boolean random variables, then Alice's assignments
        to these variables must have \emph{even parity}.
      \item[(b)]
        If the verifier initially selected a \emph{column} in the $3\times 3$
        matrix of Boolean random variables, then Alice's assignments
        to these variables must have \emph{odd parity}.
    \end{mylist}
  \end{mylist}

  If the provers employ a classical strategy in this game, their probability of
  causing the verifier to accept is at most 17/18.
  This follows from the fact that any deterministic strategy for Bob must
  determine an assignment to the Boolean variables $X_1,\ldots,X_9$, and no
  assignment to these variables can satisfy all six of the parity constraints
  (because the parity of all 9 Boolean variables cannot be both even and
  odd).
  It is straightforward to see that there exists a deterministic strategy 
  for the provers that succeeds with probability exactly 17/18, which
  establishes that this upper-bound is achievable.
  (For instance, Alice may respond with assignments (0,0,0), (0,0,0), and
  (1,1,0) for rows 1, 2, and 3, respectively, and assignment (0,0,1) for all
  three columns;
  and Bob may answer in a manner consistent with the assigment
  $(X_1,\ldots,X_9) = (0,0,0,0,0,0,1,1,0)$.
  This strategy only loses in the case that the verifier asks Alice for an
  assignment of column 3 and Bob for an assignment to $X_9$.)

  In contrast, entangled provers have a perfect strategy for this game---they
  can win with certainty.
  One strategy for the provers that achieves this goal is based on the
  construction of nine $\pm 1$-\emph{observables} $H_1,\ldots,H_9\in
  \Herm(\complex^4)$, meaning that they are Hermitian operators whose
  eigenvalues are all either $1$ or $-1$, having the following properties: 
  if $H_i$ is placed in the $i$-th position of the $3\times 3$ magic square,
  then
  \begin{mylist}{8mm}
  \item[(i)]
    the operators appearing in the same row or in the same column must commute,
    and
  \item[(ii)]
    the product of the operators appearing in each row is $\I$, and the
    product of the operators appearing in each column is $-\I$.
  \end{mylist}
  Such operators can be constructed from the Pauli operators
  \begin{equation}
    \sigma_x =
    \begin{pmatrix}
      0 & 1\\
      1 & 0
    \end{pmatrix},
    \quad
    \sigma_y =
    \begin{pmatrix}
      0 & -i\\
      i & 0
    \end{pmatrix},
    \quad
    \sigma_z =
    \begin{pmatrix}
      1 & 0\\
      0 & -1
    \end{pmatrix}
  \end{equation}
  as follows:
  \begin{equation}
    \label{eq:magic-square-operators}
    \begin{pmatrix}
      H_1 & H_2 & H_3\\
      H_4 & H_5 & H_6\\
      H_7 & H_8 & H_9
    \end{pmatrix}
    =
    \begin{pmatrix}
      \sigma_x \otimes \sigma_y & \sigma_y \otimes \sigma_x &
      \sigma_z\otimes\sigma_z \\
      \sigma_y \otimes \sigma_z & \sigma_z \otimes \sigma_y &
      \sigma_x\otimes\sigma_x \\
      \sigma_z \otimes \sigma_x & \sigma_x \otimes \sigma_z &
      \sigma_y\otimes\sigma_y
    \end{pmatrix}.
  \end{equation}
  From each of these observables a projective measurement
  $\{\Pi^k_0,\Pi^k_1\}$, for $k = 1,\ldots,9$, can be defined with $\Pi^k_a =
  (\I\otimes\I + (-1)^a H_k)/2$, for $a\in\{0,1\}$, being the projector on the
  eigenspace of $H_k$ with associated eigenvalue $(-1)^a$.
 Suppose the provers share the entangled state
     \begin{equation}
    \label{eq:magic-square-shared-state}
    \ket{\psi} =
    \frac{1}{2} \ket{00}\ket{00}
    + \frac{1}{2} \ket{01}\ket{01}
    + \frac{1}{2} \ket{10}\ket{10}
    + \frac{1}{2} \ket{11}\ket{11},
  \end{equation}
    in which they both hold two qubits. This state has the property that
    \begin{equation}\label{eq:magic-square-me}
    \bra{\psi}A\otimes B\ket{\psi}=\frac{1}{4}\,\Tr\big(A B^{\t}\big)
    \end{equation}
    for any $A,B\in\Lin(\complex^4)$.
   When Alice (or Bob) receives the labels of some variables, she will perform
   the measurements described above, in sequence, to the pair of qubits she
   holds, using the measurement $\{\Pi^k_0,\Pi^k_1\}$ to determine the
   assignment she responds with for the variable $X_k$. 
   Property (i) of the $H_i$ ensures that measurements within any single row
   or column commute, so it does not matter which order Alice would choose to
   perform these measurements. 
   Using property (ii) and~\eqref{eq:magic-square-me}, it may be verified
   that the required parity conditions will always hold for the outcomes of
   these measurements, and that Alice and Bob will always produce the same
   assignment to the variable they both received.
\end{example} 

Based on Example~\ref{ex:ms-1} a simple $3$-SAT formula $\varphi$ with $9$
variables and $24$ clauses can be devised such that $\varphi$ is not
satisfiable but the clause-vs-variable interactive game defined from this
formula can be won with certainly by entangled provers.
In contrast, the unentangled value is strictly less than 1.

Consequently, one has that the oracularization technique does not extend
directly to the case of entangled provers. 
In the following section it is shown that this failure is not limited to
specific examples such as the Magic Square game, but can affect the
verification possibilities of broad classes of verifiers in multi-prover
interactive games.

\subsection{XOR games}
\label{sec:xor}

XOR games are a class of two-prover one-round interactive games in which the
verifier is restricted to have the following form. The verifier's message (also
called its \emph{question}) to each prover in the first turn is classical. 
Each prover's message (its \emph{answer}) to the verifier in the second turn is
classical and consists of a single bit. Finally the verifier decides on its
output bit based solely on the parity of the provers' answers. It is possible
to consider XOR games with any number of provers, but in this section we focus
on the case of two-prover XOR games. 

The class of promise problems that can be decided by verifiers having this
restricted form is denoted $\oplus\MIP_{a,b}(2,2)$ in case the provers are
unentangled, and $\oplus\MIP_{a,b}^*(2,2)$ with entangled provers. An important
result in the field of hardness of approximation states that the
unentangled-prover class is powerful enough to capture all problems in $\NEXP$,
meaning that the inclusion 
\begin{equation}\label{eq:xor-mip}
\NEXP\subseteq  \oplus\MIP_{a,b}(2,2)
\end{equation}
 holds for a specific choice of constants $0<b<a<1$. In contrast, allowing
 entanglement between the provers reduces the verifier's decision power (under
 the assumption that $\PSPACE$ is properly contained in $\NEXP$):
\begin{equation}\label{eq:xor-mip-star}
\oplus\MIP_{a,b}^*(2,2) \subseteq\PSPACE,
\end{equation}
which holds for any $0\leq b<a\leq 1$ separated by at least an inverse 
polynomial. 
Thus, the introduction of entanglement has the effect of collapsing the
verifier's ability to decide problems, from $\NEXP$ to $\PSPACE$. 

The inclusion~\eqref{eq:xor-mip-star} can be shown by giving a direct
simulation of any $\oplus\MIP^*(2,2)$ verifier by a $\class{QIP}(2)$ verifier,
concluding via the inclusion $\QIP\subseteq\PSPACE$ described in
Section~\ref{sec:QIP=PSPACE}. 
In the remainder of this section we will describe the weaker inclusion
$\oplus\MIP_{a,b}^*(2,2) \subseteq\EXP$, which has the advantage that it can be
proven by expressing the entangled value $\val^*$ of an XOR game  directly as
the optimum of a semidefinite program.

The verifier $V$ in an XOR game can be specified explicitly as a pair
$(\pi,V)$, consisting of a distribution $\pi$ on pairs of questions
$(x,y)\in X\times Y$ and a predicate $V(c\mathbin{|}x,y)$ that dictates the
parities $c = a\oplus b$ for the provers' answers $a$ and $b$ that cause the
verifier to accept.
A strategy for the provers consists of a choice of Hilbert spaces $\V$ and
$\W$, a pure state $\ket{\psi}\in \V\otimes\W$, and two families of
binary-valued measurements
\begin{equation}
  \{P^x_0,P^x_1\} \quad\text{and}\quad \{Q^y_0,Q^y_1\}
\end{equation}
on the spaces $\V$ and $\W$, respectively.
Upon receiving questions $(x,y)$, the probability that the provers return 
answers $(a,b)$ is
\begin{equation}
  \bra{\psi} P^x_a \otimes Q^y_b \ket{\psi}.
\end{equation}
Thus, $\omega^{\ast}(V)$ is equal to the supremum value of the expression
\begin{equation}
  \sum_{x,y} \pi(x,y) \sum_c V(c\mathbin{|}x,y) 
  \sum_{\substack{a,b\\a\oplus b = c}}
  \bra{\psi} P^x_a \otimes Q^y_b \ket{\psi},
\end{equation}
over all strategies for the provers, as described above. 

Up to an additive scaling of $\omega^*$, one may assume that for each pair
$(x,y)$, there is a unique $c\in\{0,1\}$ such that $V(c\mathbin{|}x,y)=1$.
For each question pair $(x,y)$, let $r(x,y)=(-1)^c \pi(x,y)$ for this unique
choice of $c$.
It holds that
\begin{equation}
  \omega^*(V)= \frac{1}{2} + \frac{1}{2} \beta^*(V),
\end{equation}
where the \emph{bias} $\beta^*(V)$ is defined as 
\begin{equation}\label{eq:opt-xor}
  \beta^*(V) = \sup \sum_{x,y} r(x,y) \bra{\psi} A_x \otimes B_y\ket{\psi},
\end{equation}
where $A_x = P^x_0 - P^x_1$ and $B_y = Q^y_0 - Q^y_1$, and the supremum is
over all strategies as before.
(The operators $A_x$ and $B_y$ are observables that represent the
binary-valued measurements $\{P^x_0, P^x_1\}$ and $\{Q^y_0,Q^y_1\}$.)

\begin{example}[CHSH game]\label{ex:chsh}
  A simple XOR game is the CHSH game, named after its inventors Clauser, Horne,
  Shimony and Holt~\cite{Clauser:69a}. 
  In this game, the verifier's questions consist of a single bit each, the
  distribution $\pi$ is uniform on $\{0,1\}\times\{0,1\}$, and the predicate
  representing the verifier's final decision is defined as
  \begin{equation}
    V(c\mathbin{|}x,y) =
    \begin{cases}
      1 & \text{if $c=x\wedge y$}\\
      0 & \text{if $c\not=x\wedge y$}.
    \end{cases}
  \end{equation}
  It therefore holds that $r(x,y)=(-1)^{x \wedge y}/4$.
  The bias $\beta^*(\CHSH)$ is given by the expression
  \begin{equation}
    \sup_{\ket{\psi},A_0,A_1,B_0,B_1}\frac{1}{4}
    \bra{\psi}\big(A_0\otimes B_0 + A_1\otimes B_0 + A_0\otimes B_1 
    - A_1 \otimes B_1\big)\ket{\psi},
  \end{equation}
  where the supremum is over all bipartite states $\ket{\psi}$ and observables
  $A_0,A_1,B_0,B_1$. 
  By considering the choices
  \begin{equation}
    \begin{gathered}
      \ket{\psi} = \frac{1}{\sqrt{2}}(\ket{00}+\ket{11}),\\
      A_0 = \sigma_x,\quad B_0 = (\sigma_x + \sigma_z)/\sqrt{2},\\
      A_1 = \sigma_z,\quad B_1 = (\sigma_x - \sigma_z)/\sqrt{2},
    \end{gathered}
  \end{equation}
  one finds that $\beta^*(\CHSH)\geq \sqrt{2}/2$. 
  That this holds with equality will be show below. 
  In contrast, unentangled provers are easily seen to achieve a bias at most
  $\beta(\CHSH)=1/2$.
\end{example}

There exists a natural semidefinite programming relaxation for the bias of a
given XOR game as follows.
First, let $R$ be a matrix indexed by the disjoint union $X \sqcup Y$ of the
question sets, defined as
\begin{equation}
  R(x,y) = R(y,x) = \frac{r(x,y)}{2}
  \quad\text{and}\quad
  R(x,x') = R(y,y') = 0
\end{equation}
for all $x,x'\in X$ and $y,y'\in Y$.
Next, for a given strategy, defined by a shared entangled state $\ket{\psi}$
and collections of $\pm 1$-observables $\{A_x\}$ and $\{B_y\}$, define unit
vectors
\begin{equation}
  u_x = (A_x\otimes\I)\ket{\psi}
  \quad\text{and}\quad
  v_y = (\I \otimes B_y)\ket{\psi},
\end{equation}
and observe that
\begin{equation}
  \bra{\psi} A_x \otimes B_y \ket{\psi} = \ip{u_x}{v_y}
\end{equation}
for every pair $(x,y)\in X\times Y$.
One finds that the bias obtained by this particular strategy is given by
\begin{equation}
  \sum_{x,y} r(x,y) \bra{\psi} A_x \otimes B_y\ket{\psi}
  = \ip{R}{Z},
\end{equation}
for $Z$ the Gram matrix of the collection 
$\{u_x\,:\,x\in X\}\cup\{v_y\,:\,y\in Y\}$, i.e.,
\begin{equation}
  \begin{alignedat}{2}
    Z(x,y) & = \ip{u_x}{v_y}, \quad & Z(x,x') & = \ip{u_x}{u_{x'}},\\
    Z(y,x) & = \ip{v_y}{u_x}, \quad & Z(y,y') & = \ip{v_y}{v_{y'}}.
  \end{alignedat}
\end{equation}
(In the present case, one has that each of the values $Z(x,y)$ is real
and satisfies $Z(x,y) = Z(y,x)$ because the value
$\bra{\psi} A_x \otimes B_y \ket{\psi}$ is a real number.)

It is therefore the case that
\begin{equation}
  \label{eq:sdp-xor}
  \beta^{\ast}(V) \leq \op{SDP}(V) = 
  \sup_Z \, \ip{R}{Z},
\end{equation}
where the supremum is over all positive semidefinite matrices $Z$ indexed by
$X\sqcup Y$ and satisfying $Z(x,x) = Z(y,y) = 1$ for each $x\in X$ and $y\in Y$
(which reflects the fact that the vectors $\{u_x\}\cup\{v_y\}$ are unit
vectors).

The relaxation~\eqref{eq:sdp-xor} is very useful to prove upper bounds on the
bias of two-prover XOR games. 
For the case of the CHSH game (as described in Example~\ref{ex:chsh}), the
matrix $R$ is a $4\times 4$ matrix with both its $2\times 2$ diagonal blocks
equal to $0$, and each off-diagonal block equal to
\begin{equation}
  \frac{1}{8}\begin{pmatrix} 1 & 1 \\ 1 & -1\end{pmatrix}.
\end{equation}
The dual problem to~\eqref{eq:sdp-xor} is
\begin{equation}
  \op{SDP}^*(\op{CHSH}) = \inf_H \Tr(H),
\end{equation}
where the infimum is over all Hermitian matrices $H$ such that $H\geq R$. 
Using the fact that $R$ squares to $(1/32)\I$, one finds that 
$H = (\sqrt{2}/8)\I$ provides a dual certificate with objective value
$\sqrt{2}/2$.
Because this value is achieved by the strategy described earlier, it follows 
by weak duality that $\beta^*(\op{CHSH})=\op{SDP}(\op{CHSH})=\sqrt{2}/2$.

This is not a coincidence: the equality $\beta^*(V)=\op{SDP}(V)$ always holds.
There is an explicit mapping, due to Tsirelson \cite{Tsirelson87}, that shows
how any feasible solution to the semidefinite program (corresponding to the
operator $Z$ above) can be transformed into a strategy for the provers (a state
$\ket{\psi}$ and binary-valued measurements $\{P^x_0,P^x_1\}$ and
$\{Q^y_0,Q^y_1\}$) achieving a bias equal to the objective value given by $Z$
in~\eqref{eq:sdp-xor}.

The characterization of the bias of two-prover XOR games as the optimum of a
semidefinite program has multiple consequences. 
First, it allows one to replace the supremum in~\eqref{eq:opt-xor} by an
efficiently computable quantity. 
The inclusion of $\MIP^*\subseteq\EXP$ follows, as an explicit representation
of the matrix $R$ specifying an XOR game can be computed in exponential time
from a description of a quantum circuit for the verifier, and the optimum of
the resulting exponential-size semidefinite program can be approximated to
within exponential precision in time polynomial in its size.

A second noteworthy consequence is a bound on the entanglement of optimal
strategies in XOR games. 
The optimum of~\eqref{eq:sdp-xor} is always achieved by a matrix of dimension
$N = \abs{X} + \abs{Y}$, whose Gram factorization involves vectors of the same
dimension.
Tsirelson's transformation can be used to map these vectors onto a state and
two collections of measurements in which each prover holds
$\lfloor N/2\rfloor$ qubits.
Thus, for every XOR game there exists an optimal strategy that uses a number of
qubits linear in the number of questions in the game. 
This is not true of more general interactive games, as demonstrated for
instance by the game from Example~\ref{ex:dltw-1}, for which the entangled
value is only achieved in the limit as the dimension of the provers' shared
entangled state goes to infinity.

\section{Using entanglement in multi-prover games}
\label{sec:qmip-structure}

This section is devoted to the presentation of structural results, such as
parallelization and perfect completeness, that apply to the class $\QMIP^*$. 
Some of these results parallel similar properties known to hold for classical
multi-prover interactive proof systems. 
Proofs of the latter type of results, however, usually rely on the technique of
oracularization, which was shown to fail in the presence of entangled provers
in the preceding section. 
Thus, a direct extension of the classical results to the entangled-prover
setting is not generally possible, and different proofs must be devised. 

The reductions established in this section will make crucial use of
entanglement between the provers---it will typically be the case that, even if
honest unentangled provers could win with high probability in a certain
interactive game, provers in the modified game will nevertheless still need to
make use of prior shared entanglement in order to win with high probability.
In some cases, entanglement will be used to achieve reductions unlikely to hold
in the classical setting, such as a reduction to public-coin systems. 
The following properties will be shown:  
\begin{mylist}{\parindent}
\item[1.] \emph{Perfect completeness.}
  Multi-prover quantum interactive proof systems can be made perfectly
  complete.
\item[2.] \emph{Parallelization.}
  Multi-prover quantum interactive proof systems can be parallelized to three
  turns of interaction.
  Moreover, any three-turn multi-prover quantum interactive proof system can be
  transformed into one that is \emph{public-coin}: the verifier's unique
  message is a single random bit broadcast to all provers.
  In addition, public-coin proof systems can be further parallelized to only two
  turns of interaction by introducing an additional prover.
\item[3.] \emph{Classical verifiers.}
  Any multi-prover quantum interactive proof system can be transformed into one
  in which the verifier is classical at the cost of considering two additional
  provers and a polynomial increase in the number of rounds of interaction.
\end{mylist}
Putting these properties together, any $k$-prover quantum interactive proof
system can be transformed into a one-round proof system with $k+1$ provers,
perfect completeness, and soundness bounded away from $1$ by an inverse
polynomial.

\begin{theorem}\label{thm:qmipstar-parallel}
For every polynomially bounded functions $k$ and $m$ it holds that
\begin{equation}\label{eq:qmipstar-reductions}
  \QMIP^*(k,m) \subseteq \QMIP^{*}_{1,1-1/p}(k+1,2),
\end{equation}
for some choice of a polynomially bounded function $p = O(m^{2})$.
\end{theorem}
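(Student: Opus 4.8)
The plan is to establish the chain of containments stated in \eqref{eq:qmipstar-reductions} by composing the three structural results (perfect completeness, parallelization to three turns together with the public-coin transformation, and the reduction to two turns via an extra prover) that were advertised in the bulleted list preceding the theorem, keeping careful track of how the soundness gap degrades at each stage. The overall target is a one-round (two-turn), $(k+1)$-prover proof system with perfect completeness and soundness at most $1 - 1/p$ for $p = O(m^2)$. Since the hypothesis only guarantees completeness $2/3$ and soundness $1/3$, the very first move is to invoke parallel amplification (Proposition~\ref{prop:amplify}) with, say, $T$ polynomial in $m$, so that we begin with completeness exponentially close to $1$ and soundness exponentially close to $0$; this ensures the gaps we start with are as large as we like, which matters because several later transformations shrink the gap multiplicatively.

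The order I would carry out the construction is as follows. First, apply the perfect-completeness transformation for multi-prover quantum interactive proofs. This is the quantum analogue of the single-prover construction in Section~\ref{sec:qip-completeness}: given a verifier $V$ with $\omega^*(V) \ge a$, the honest provers can, using shared entanglement, force the verifier's output qubit into a state that the verifier can then "rotate" into the accepting state with certainty, while a cheating prover cannot gain by this extra round — the soundness error increases only from $1-\varepsilon$ to roughly $1 - c\varepsilon^2$ for an absolute constant $c$, adding two turns. Second, apply the parallelization-to-three-turns transformation for $\QMIP^*$; unlike the single-prover case this cannot go through oracularization, so one uses the entanglement-friendly cut-and-choose parallelization (the multi-prover analogue of the Kempe--Kobayashi--Matsumoto--Vidick construction), which I would present as reducing an $m'$-turn game to a three-turn game at the cost of the soundness gap dropping from $\varepsilon$ to $\Omega(\varepsilon/(m')^2)$ — this is where the $p = O(m^2)$ estimate in the theorem statement comes from. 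Third, convert the resulting three-turn proof system into a public-coin one, exactly as in the single-prover remark at the end of Section~\ref{sec:qip-parallelization}: after parallelization the verifier's only message is the single random bit $a$, so no work beyond bookkeeping is needed here. Fourth, apply the public-coin-to-two-turns transformation, which introduces one additional prover (raising the prover count from $k$ to $k+1$) whose role is to supply, in the single provers-to-verifier turn, the message that the verifier would have broadcast, with consistency enforced across the provers; this step preserves perfect completeness and degrades the soundness gap by at most a constant factor.

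Composing these steps, perfect completeness is maintained throughout (each transformation after the first preserves value-$1$ strategies for honest entangled provers), the number of provers grows by exactly one, the number of turns collapses to two, and the soundness gap ends up inverse-polynomial with the dominant loss being the $(m')^2$ factor from parallelization, so that the final soundness is $1 - 1/p$ with $p = O(m^2)$ as claimed. The only subtlety in the accounting is to make sure that the constant-factor and quadratic losses, when multiplied together, still leave a gap that is $\Omega(1/\mathrm{poly})$; since we amplified at the outset to a constant (indeed near-$1$ versus near-$0$) gap, and the total multiplicative shrinkage is $O(m^2)$, this is automatic.

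The main obstacle I expect is the parallelization step in the entangled-prover setting. Establishing soundness of the cut-and-choose / "run-the-game-forwards-or-backwards" reduction requires an argument analogous to the Uhlmann-theorem-based analysis used for $\class{QIP} = \class{QIP}(3)$, but in the multi-prover case one cannot freely invoke the unitary equivalence of purifications across the provers (as emphasized in the footnote and the examples in Section~\ref{sec:qmip-structure} and Section~\ref{sec:qmip-nexp}): the provers' joint state and local operations are constrained by the tensor-product structure, so the clean min-over-$\rho$ fidelity identity that makes the single-prover proof work has no direct analogue. The resolution is that the soundness analysis for the multi-prover parallelization does not need a tight characterization — it suffices to show that if every "thread" of the parallelized game could be won with high probability, then one could reconstruct a good strategy for the original sequential game, and the loss incurred in this reconstruction is what produces the $(m')^2$ factor rather than a constant. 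I would therefore spend the bulk of the write-up carefully stating this reconstruction and verifying that entanglement, rather than being an obstruction, is in fact what makes the honest strategy for the parallelized game work (the honest provers must hold a purification of the would-be sequential interaction as shared entanglement). The remaining steps — amplification, perfect completeness, public-coin conversion, and the extra-prover two-turn reduction — are routine adaptations of arguments already present in the single-prover chapter and will be treated briefly.
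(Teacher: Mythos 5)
Your high-level route is exactly the paper's: a preliminary amplification, then perfect completeness, then the forward/backward parallelization to three turns, then the public-coin form, and finally one extra prover to collapse to two turns (i.e., the chain of Theorems~\ref{lem:qmip-complete}, \ref{lem:qmip-parallel}, \ref{lem:qmip-public} and \ref{lem:qmip-oneround}), and you correctly identify that honest provers need shared entanglement in the parallelized game because they must jointly hold the mid-protocol state. The genuine gap is in the step you dismiss as routine: perfect completeness. The single-prover construction you invoke---pseudo-copy the output qubit, send everything to the prover, and have the prover disentangle in two extra turns---does not transfer, because the registers on which the prover must perform the coherent disentangling map \eqref{eq:last-prover-transformation-perfect-completeness} are now distributed among several provers, who cannot implement that unitary locally. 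The paper's construction instead \emph{triples} the number of turns: the verifier runs the whole game backward and then forward again, combining a rewinding test with an invertibility test, and the resulting soundness bound is not of the form $1-c\varepsilon^2$ but $1/2 + 2\sqrt{b} + 5b/2$, which is only nontrivial after the soundness has first been amplified below a sufficiently small constant (this is precisely why the preliminary amplification is needed, not merely convenient). So your claim that this step adds two turns with quadratic soundness loss and "will be treated briefly" is where the write-up would actually fail; aside from parallelization, this is where the real multi-prover work of the theorem lives. (The final parameters survive, since $3m = O(m)$ still yields $p = O(m^2)$ after parallelization.)

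Two smaller points. For the stated $p = O(m^2)$ you must amplify with a \emph{constant} number of sequential repetitions, just enough to push the soundness below a small constant: if you take $T$ polynomial in $m$ as you propose, the game fed into parallelization has $\Theta(Tm)$ turns and the quadratic loss gives only $p = O((Tm)^2)$, which is polynomially bounded but not $O(m^2)$. Also, your description of the two-turn reduction is backwards: in the paper's construction the verifier still broadcasts its own public coin to the original $k$ provers, and the extra prover's role is to supply, without seeing the coin, the joint state of the provers' first-turn messages (interpreted by the verifier as the register $\reg{Z}_1$); there is no consistency check, and both completeness and soundness are preserved exactly. If instead the extra prover supplied "the message that the verifier would have broadcast," the coin would be under the provers' control and soundness would collapse.
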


It is not known whether the soundness parameter of multi-prover interactive
proof systems with quantum verifiers can be amplified in parallel with the same
set of provers, a problem that will be discussed in
Section~\ref{sec:qmip-par-rep}.
Thus, amplifying the inverse-polynomial gap in completeness and soundness
from~\eqref{eq:qmipstar-reductions} requires a polynomial increase in either
the number of provers or the number of rounds of interaction
(q.v. Proposition~\ref{prop:amplify}).

Allowing for a polynomial number of rounds of interaction, the verifier can
further be made classical. 

\begin{theorem}\label{thm:ruv}
For all polynomially bounded functions $k$, $m$ and $q$, it holds that
\begin{equation}
  \QMIP^*(k,m) \subseteq \MIP^{*}_{1,2^{-q}}(k+2, p), 
\end{equation}
for some choice of a polynomially bounded function $p = O(q\cdot m^{2})$.
\end{theorem}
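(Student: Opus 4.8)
The plan is to prove Theorem~\ref{thm:ruv} by composing Theorem~\ref{thm:qmipstar-parallel} with a procedure that replaces a quantum verifier by a classical one at the expense of two extra provers and a polynomial blowup in the number of rounds. First I would invoke Theorem~\ref{thm:qmipstar-parallel} to assume, without loss of generality, that the starting proof system is a one-round $(k+1)$-prover system with perfect completeness and soundness $1-1/p$ for some $p=O(m^2)$; by Proposition~\ref{prop:amplify} (applied with $T = O(p\cdot q) = O(q\cdot m^2)$ repetitions, either sequential or with fresh prover groups) we may further boost this to completeness $1$ and soundness $2^{-q'}$ for a suitable polynomially bounded $q'$, at the cost of $O(q\cdot m^2)$ rounds. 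So after this first step the task reduces to: simulate a \emph{quantum} verifier in a one-round multi-prover game by a \emph{classical} verifier, paying only two additional provers and a polynomial number of rounds.

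The heart of the argument is a protocol in which the classical verifier forces two of the provers to honestly play the role of the ``quantum channel'' that the original quantum verifier would have carried out, using the remaining provers as the genuine provers of the simulated game. Concretely, I would have the classical verifier command two designated provers (call them the \emph{quantum provers}) to share a large supply of EPR pairs and to perform, qubit by qubit, the gates of the original verifier's circuit via teleportation-based (or measurement-based) computation, reporting their measurement outcomes classically. The classical verifier checks consistency of these reported outcomes against the known circuit description, and—crucially—tests that the quantum provers are actually sharing the claimed entangled state and applying the claimed operations. This is exactly the paradigm of \emph{classical verification of quantum computation via entangled provers}: one interleaves, with the ``computation rounds'' just described, a polynomial number of \emph{test rounds} (CHSH-type or self-testing rounds) in which the verifier sends random classical challenges and checks correlation statistics. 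Rigidity of the CHSH game (and of the associated Pauli-braiding / stabilizer tests) guarantees that any strategy passing all test rounds with probability close to $1$ must be, up to a local isometry, performing exactly the prescribed measurements on (close to) the prescribed maximally entangled state; this is where the earlier-discussed phenomenon of the \emph{monogamy of entanglement} enters, ensuring the two quantum provers cannot use their shared entanglement to cheat while simultaneously being entangled with the other $k$ provers in a harmful way. Completeness is straightforward: honest provers run the teleportation-based simulation and win all test rounds with certainty, so the simulated game is played faithfully and perfect completeness is inherited (giving completeness parameter $1$). Soundness follows by combining the rigidity guarantee with the soundness $2^{-q}$ of the one-round game being simulated: either the quantum provers fail a test round with non-negligible probability, or their behavior is forced to be $\varepsilon$-close to honest, in which case the effective game played by the $k$ genuine provers has entangled value at most $2^{-q} + O(\varepsilon)$, and the test rounds can be tuned (polynomially many, with inverse-polynomial gap amplified as above) to make $\varepsilon$ small enough.

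Accounting for resources: the classical verifier interacts with the original $k+1$ provers plus the two quantum provers, for a total of $k+2$ provers after identifying one of the original $k+1$ with one of the computation provers if desired—in any case $k+2$ suffices after re-indexing. The round complexity is $O(q\cdot m^2)$: the $O(m^2)$ comes from parallelizing to one round and from the per-gate simulation overhead (the circuit has $\mathrm{poly}(m)$ gates, but this is absorbed into a single ``round'' of parallel classical messages with per-qubit challenges, at the cost of the sequential test-round structure), and the factor $q$ comes from the number of repetitions needed to drive the soundness error down to $2^{-q}$; tracking constants carefully yields $p = O(q\cdot m^2)$ rounds, matching the statement.

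The main obstacle—and the place where the real work lies—is establishing the \emph{quantitative} soundness of the classical-verifier simulation: one needs a robust rigidity theorem stating that passing the test rounds with probability $1-\delta$ forces the quantum provers' strategy to be $O(\mathrm{poly}(\delta))$-close (in an appropriate operational sense, e.g.\ on the relevant input states of the simulated circuit) to the honest teleportation-based strategy, \emph{and} that this closeness composes through the $\mathrm{poly}(m)$ gates of the circuit without blowing up catastrophically. This requires a self-testing / device-independent analysis (building on Tsirelson's characterization, which appears in Section~\ref{sec:xor}, together with Pauli-braiding-type tests) that is uniform over the circuit and whose error accumulates only polynomially; setting up the bookkeeping so that the final error after $\mathrm{poly}(m)$ composed approximate isometries is still inverse-polynomial—and then amplifiable to $2^{-q}$ via Proposition~\ref{prop:amplify}—is the technically demanding part. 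A secondary subtlety is verifying that introducing the two extra entangled quantum provers does not inadvertently give the original $k$ provers new cheating correlations; this is precisely controlled by monogamy-of-entanglement bounds, which limit how much the genuine provers can be entangled with the quantum provers without degrading the correlations the test rounds certify, and I would isolate this into a lemma of the ``three-prover oracularization'' flavor alluded to in the introduction to this chapter.
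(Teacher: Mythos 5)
Your overall architecture coincides with the one the paper sketches (the Reichardt--Unger--Vazirani simulation cited in the chapter notes): a classical verifier exploits rigidity of sequentially repeated CHSH games to force added provers to carry out the quantum verifier's operations, quantum messages are replaced by teleportation with the verifier relaying correction bits, and the genuinely hard part is, as you say, a robust rigidity statement whose error composes only polynomially across the simulation. One small conceptual correction: the guarantee that the added provers' entanglement with the original provers causes no harm is delivered by the rigidity/self-testing statement itself (the certified EPR pairs are, up to local isometry, in tensor product with everything else), rather than by a separate oracularization-style monogamy lemma.

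The genuine gap is in your prover accounting. After invoking Theorem~\ref{thm:qmipstar-parallel} you are at $k+1$ provers, and your simulation then adds two more, for $k+3$ in total; the proposed repair of ``identifying one of the original $k+1$ provers with one of the computation provers'' is not a sound operation. Merging two provers into one can only increase the entangled value, since the merged party sees both sets of messages and can correlate its answers, and in this construction a game prover merged with the verifier-simulating prover would learn the simulated verifier's private state and outgoing questions, which destroys the soundness of the game being simulated. The paper obtains exactly $k+2$ by applying the classical-verifier orchestration directly to the original $k$-prover, $m$-turn interaction: one added prover plays the quantum verifier $V$, the other is used only to control it via distributed process tomography based on sequential CHSH rigidity, and all dependence on $m$ and $q$ is pushed into the (polynomial) number of rounds rather than into extra provers. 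Two further quantitative slips: honest provers cannot ``win all test rounds with certainty,'' since $\omega^*(\CHSH)=\cos^2(\pi/8)<1$, so the verifier must test closeness of the observed success rate to the optimum, which makes the claimed completeness parameter delicate; and Proposition~\ref{prop:amplify} as stated does not preserve perfect completeness and, with gap $1/p$, its Chernoff-type bound requires $T=O(q\cdot p^2)$ rather than $O(q\cdot p)$ repetitions--to get your count one must instead use the accept-iff-all-accept sequential repetition, which is only available once perfect completeness has already been secured.
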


When comparing these results with those known to hold for unentangled
provers, there is a significant gap: 
to determine whether or not the number of provers be reduced.
There is currently no compelling evidence in favor of
$\QMIP^*(k,\poly)$ being a larger class than $\QMIP^*(2,\poly)$, but also
there is no known transformation allowing a reduction of the number of
provers.

\subsection{Perfect completeness}
\label{sec:qmip-complete}

The standard transformation to achieve perfect completeness for the class
$\MIP$ proceeds as follows.
Given a verifier $V$, a modified verifier $V'$ is defined that executes the
same procedure as $V$ many times in parallel with a carefully chosen set of
distinct private random strings. 
The strings are chosen so as to guarantee that, provided the provers had
successful strategies for at least half of the possible choices of a random
string for $V'$, there will always be at least one string in the set for which
the provers can convince the verifier $V$ to output $1$ with certainty, when
$V$ is executed with this choice of randomness.

In the case of a quantum verifier, this sort of transformation is 
meaningless---there is no discrete set of ``random bits'' for the verifier that
parametrizes its verification procedure and can be easily manipulated.
For this reason a different transformation is required.
The reduction to be described will be similar in spirit to the one introduced
in Section~\ref{sec:qip-completeness} for the single-prover case, with an
important twist. 
Recall that, in that transformation, during the last round of interaction, the
prover is required to apply a certain unitary transformation on its private
register to disentangle it from the message register (q.v. 
Eq.~\ref{eq:last-prover-transformation-perfect-completeness}). 
If the corresponding register in the multi-prover setting is shared between
multiple provers, it may not be possible for them to implement such a unitary
transformation locally. 
A more complicated transformation, which requires that the number of turns
in the game increases from $m$ to $3m$, will allow the provers to achieve
the desired effect: they execute the entire game backward ($m$ extra
turns), and then forward again ($m$ extra turns).
It is interesting to note that even if the original verifier was classical, and
the provers could achieve their maximum success probability without using any
prior entanglement, the new verifier will make use of quantum messages and in
general the provers may need to use prior entanglement in order to achieve the
optimal success probability of $1$ in the modified game.

Suppose a verifier $V$ in a quantum multi-prover interactive game is given,
along with a target threshold $\alpha \geq 1/2$ for its maximum acceptance
probability $\omega^*(V)$. 
We will describe a transformation mapping $V$ to a new verifier $V'$ such that
the following properties hold:
\begin{mylist}{\parindent}
\item[1.]
  If $V$ is an $m$-turn verifier, then $V'$ is a $3m$-turn verifier.
\item[2.]
  If it is the case that $\omega^*(V) \geq \alpha$, then $\omega^*(V') = 1$.
\item[3.]
  It always holds that
  $\omega^*(V') \leq 1/2+2\sqrt{\omega^*(V)}+5\omega^*(V)/2$.
\end{mylist}
By this transformation one may conclude that the following theorem holds.

\begin{theorem}\label{lem:qmip-complete}
Let $a\geq 1/2$ and $b \leq 1/25$.
For every choice of $k$ and $m$ it holds that
\begin{equation}
  \QMIP^*_{a,b}(k,m) \subseteq \QMIP^*_{1,c}(k, 3m),
\end{equation}
for $c = 1/2+2\sqrt{b}+5b/2$.
\end{theorem}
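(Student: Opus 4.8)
The plan is to describe explicitly the transformation $V\mapsto V'$ that adds $2m$ turns to $V$ (so that $V'$ has $3m$ turns), verify the two easy directions of the claimed properties, and then carefully analyze the soundness degradation, which is the crux. First I would set up notation by purifying $V$ as in Section~\ref{sec:qip-purification}, so that $V$ is described by isometries $(V_1,\ldots,V_n)$ and its actions, together with those of a purified $k$-prover strategy, determine a pure state of all co-existing registers at each instant. As in the single-prover case (Section~\ref{sec:qip-completeness}) I would arrange, by preprocessing the circuit of $V$ with an extra "throw-the-game" qubit, that whenever $\omega^*(V)\geq\alpha$ there is in fact an entangled prover strategy causing $V$ to accept with probability \emph{exactly} $\alpha$ (this only raises the target threshold; the preprocessing does not change $\omega^*$ meaningfully for the range of parameters we care about). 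I would also assume $\alpha$ is a dyadic rational so that a measurement against a state $\sqrt{1-\alpha}\ket{00}+\sqrt{\alpha}\ket{11}$ can be performed exactly with the standard gate set.

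The verifier $V'$ then operates as follows. It runs $V$ forward for its $m$ turns, up to but not including the final measurement of the output qubit; it makes a pseudo-copy of the output qubit via the isometry $\ket{00}\bra{0}+\ket{11}\bra{1}$, retaining one of the two copies and keeping everything else in its possession. Then, instead of asking a single prover to disentangle its private register (impossible when that register is shared), $V'$ runs the interaction \emph{backward}: it re-sends messages to the provers in the reverse order, applying $V_n^{\ast},\ldots,V_1^{\ast}$, which would ideally return all registers to their initial product form $\ket{0\cdots0}$ if the provers cooperate; this takes $m$ extra turns. Then $V'$ runs the interaction \emph{forward} again, applying $V_1,\ldots,V_n$ in order ($m$ more extra turns), so that at the end the output qubit of $V$ has (ideally) been recreated. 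Finally $V'$ measures the retained pseudo-copy qubit together with the freshly recreated output qubit against $\ket{\gamma}=\sqrt{1-\alpha}\ket{00}+\sqrt{\alpha}\ket{11}$, accepting iff the outcome is consistent. For the completeness direction: if the provers have a strategy making $V$ accept with probability exactly $\alpha$, they purify it, and in the backward phase they apply the inverses of their actions; in the last turn of the backward phase they perform the single-prover-style transformation sending $\ket{0}\ket{\phi_0}\mapsto\ket{0}\ket{\psi}$ and $\ket{1}\ket{\phi_1}\mapsto\ket{1}\ket{\psi}$ acting only on the verifier-facing message registers, which \emph{can} be done because each prover acts only on its own message register plus private memory. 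Replaying forward then yields the state $\ket{\gamma}\ket{\psi}$ on the relevant registers, so $V'$ accepts with certainty.

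The main obstacle, as expected, is the soundness analysis, i.e.\ proving $\omega^*(V')\leq 1/2+2\sqrt{\omega^*(V)}+5\omega^*(V)/2$. The difficulty is that $V'$ is \emph{forced} to hand the registers back to the provers twice, and a cheating prover collective can use the extra turns to repair a state that would otherwise have failed; also, because multiple provers share entanglement, we cannot invoke the unitary equivalence of purifications to reduce to a clean "verifier's reduced state is unchanged" picture as in the single-prover example. The plan here is: let $\beta=\omega^*(V)$; for any prover strategy against $V'$, look at the joint pure state immediately after the forward-backward-forward loop, and show its reduced state on the two qubits $V'$ measures is close to a state $\sigma$ extending the one-qubit reduced state $\rho$ of the retained pseudo-copy. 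If the provers had played "honestly" through the loop, $\rho$ would be diagonal with entries $(1-\alpha)\pm\varepsilon$ where $\alpha-\beta\le\varepsilon$-ish, and the acceptance probability would be at most $\fid(\ket{\gamma}\bra{\gamma},\sigma)^2\le 1-\varepsilon^2$ by the arithmetic-geometric-mean bound already used in Section~\ref{sec:qip-completeness}. The real work is bounding, via the gentle-measurement / Fuchs--van~de~Graaf inequalities \eqref{eq:fuchs-graaf}, how far the actual loop output can be from the honest one: running backward then forward through isometries can only "succeed" in recreating the output qubit if the provers essentially reversed their own actions, and any deviation that helps on the final measurement must cost elsewhere. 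Quantitatively one tracks the "return-to-$\ket{0\cdots0}$" fidelity after the backward phase, which is controlled by $\sqrt{\beta}$ (a strategy convincing $V$ with probability $\le\beta$ cannot, even on the optimal input, produce the accepting output with amplitude exceeding $\sqrt{\beta}$, and reversing it cannot overshoot), and the triangle inequality for trace distance accumulates the errors across the three phases, producing the stated $O(\sqrt{\beta})$ bound with the explicit constants $2\sqrt{\beta}+5\beta/2$. Choosing $a\ge 1/2$ and $b\le 1/25$ makes $c=1/2+2\sqrt{b}+5b/2<1$, and feeding this into Proposition~\ref{prop:amplify} (to first amplify the original gap so that the hypotheses $a\ge 1/2$, $b\le 1/25$ are met) together with the $3m$ bound on the number of turns yields Theorem~\ref{lem:qmip-complete}.
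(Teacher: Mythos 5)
There is a genuine gap, and it is in the completeness direction. Your construction ports the single-prover recipe (pseudo-copy the output qubit, then measure the retained copy together with a recreated output qubit against $\ket{\gamma}=\sqrt{1-\alpha}\ket{00}+\sqrt{\alpha}\ket{11}$), with a backward--forward loop replacing the step in which the single prover disentangles the verifier's registers. But for $V'$ to accept with probability $1$, the final two-qubit state must be exactly $\ket{\gamma}$, unentangled from everything else; starting from the post-copy state $\sqrt{1-\alpha}\,\ket{0}\ket{0}\ket{\phi_0}+\sqrt{\alpha}\,\ket{1}\ket{1}\ket{\phi_1}$, this requires implementing the merge isometry $\ket{b}\ket{\phi_b}\mapsto\ket{b}\ket{\psi}$ on registers that include the verifier's private memory (which in your protocol the verifier never relinquishes) and the \emph{union} of all the provers' private registers. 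No collection of local prover operations, interleaved with the verifier's fixed $V_i^{\ast}$'s and $V_i$'s, is claimed or shown to realize this map, and your one-line justification---that it ``can be done because each prover acts only on its own message register plus private memory''---asserts exactly the opposite of what locality permits. If the provers merely invert and replay (the only honest behavior your description makes precise), the backward--forward loop composes to the identity, the two measured qubits retain off-diagonal weight $\sqrt{\alpha(1-\alpha)}\,\langle\phi_1|\phi_0\rangle$, and acceptance probability $1$ is reached only when $\ket{\phi_0}=\ket{\phi_1}$, i.e.\ essentially never. This is precisely the obstruction the paper identifies as the reason the single-prover construction fails with shared entanglement, and it is why the paper's $V'$ is different: it uses no pseudo-copy and no $\ket{\gamma}$-measurement, but instead (i) measures the output qubit and, on outcome $0$, rewinds, applies a \emph{verifier-side} phase flip $2\ket{0\cdots 0}\bra{0\cdots 0}-\I$ on $\reg{Z}_0$, and replays forward (a Grover-type reflection that, at success probability exactly $1/2$, yields acceptance with certainty using only operations the verifier itself performs), and (ii) with probability $1/2$ instead runs an \emph{invertibility test} checking that $\reg{Z}_0$ really returns to the all-zero state. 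Your protocol contains neither ingredient.

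Two further remarks. Your soundness sketch does not match your own construction's needs: the paper's bound $\omega^*(V')\leq \tfrac12+2\sqrt{\omega^*(V)}+\tfrac52\omega^*(V)$ comes from a three-term triangle-inequality tradeoff between the rewinding branch and the invertibility test ($\sqrt{p_2}\leq 2\sqrt{\omega^*(V)}+\sqrt{1-p_3}$), which has no analogue in your protocol; invoking gentle measurement and Fuchs--van de Graaf is not a substitute for this tradeoff. (Ironically, for your protocol soundness would be the \emph{easy} part---the retained copy qubit's marginal is $\operatorname{diag}(1-p,p)$ with $p\leq\omega^*(V)$ and is never touched again, so monotonicity of fidelity under partial trace bounds acceptance directly---but that does not rescue the broken completeness.) Finally, the concluding appeal to Proposition~\ref{prop:amplify} ``to first amplify the original gap so that the hypotheses $a\geq 1/2$, $b\leq 1/25$ are met'' is misplaced: those hypotheses are part of the theorem's statement, and amplification would change $k$ or $m$, which the theorem does not allow.
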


To explain the idea behind the reduction, it will be convenient to replace the
assumption $\omega^*(V) \geq \alpha$ in item 2 by the more specific requirement
that there exists a fixed strategy for the provers with the property that the
optimal success probability of this strategy, when maximized over all possible
initial states of the provers' private registers, is exactly $1/2$. 
This is easily achieved by allowing the provers to force a rejection in order
to artificially lower their success probability, along the same lines as was
discussed in the single-prover setting. 
It will also be convenient to assume that the first turn of the game is
executed by the verifier, sending a message to each of the provers.

Assuming $V$ is given in purified form, the construction of the $3m$-turn
verifier $V'$ can be described as follows. 

\begin{mylist}{\parindent}
\item[1.] $V'$ simulates $V$ for the first $m$ turns, up to
  but not including the final measurement of the output qubit of~$V$.
\item[2.] $V'$ chooses a bit $b\in\{0,1\}$ uniformly at random. If $b=0$ it
  executes the \emph{rewinding test} described in step 3.
  If $b=1$ it performs the \emph{invertibility test} described in step 4.
\item[3.] \emph{Rewinding test:}
\begin{mylist}{8mm}
\item[(i)]\label{step:rewind-1} $V'$ measures the output qubit of $V$. 
  If the result is $1$ it stops the game and outputs $1$.
  If it is $0$, the original interactive game is executed backward in time for
  $m$ turns, interacting with the provers as needed. 
  At the last step $V'$ applies $V_1^{-1}$ to the registers
  $(\reg{Z}_1,\reg{X}_1^1,\ldots,\reg{X}_1^k)$, obtaining
  $(\reg{Z}_0,\reg{Y}_0^1,\ldots,\reg{Y}_0^k)$.
\item[(ii)] $V'$ performs a controlled-phase flip $Z$, multiplying the phase by
  $-1$ if all the qubits in $\reg{Z}_0$ are in state $\ket{0}$.
\item[(iii)] $V'$ executes the original interactive game forward in time for
  $m$ turns.
  It measures the output qubit of $V$ and returns the outcome.
\end{mylist}
\item[4.] \emph{Invertibility test:}
  \begin{mylist}{8mm}
    \item[(i)]
      $V'$ executes the original interactive game backward in time for $m$
      turns. 
    \item[(ii)]
      After applying $V_1^{-1}$ it applies the measurement
      $\{\Pi_{\mathit{init}},\I-\Pi_{\mathit{init}}\}$, which measures all
      qubits of register $\reg{Z}_0$ in the computational basis. 
      If the outcome associated with $\Pi_{\mathit{init}}$, corresponding to
      all qubits being in the $\ket{0}$ state, is obtained $V'$ returns the
      outcome $1$; otherwise it returns $0$.
  \end{mylist}
\end{mylist} 

Consider first the case where there exist $k$ provers $P^1,\ldots,P^k$ such
that, with the optimal choice of initial state of their private registers
$\reg{W}_0^1,\ldots,\reg{W}_0^k$, the provers cause $V$ to accept with
probability exactly $1/2$. Define new provers $R^1,\ldots,R^k$ who perform
precisely the same actions as the original provers when asked by the verifier,
including performing the reverse action when asked to do so. It is clear that
such provers will always cause the verifier to output $1$ with certainty in the
invertibility test. That they also cause the verifier to output $1$ with
certainty in the rewinding test follows from a similar analysis as was
performed in the single-prover case in Section~\ref{sec:qip-completeness}.

Now suppose the provers' maximum probability to convince $V$ to accept is less
than $1/25$.
Let $R^1,\ldots,R^k$ be arbitrary provers in the interactive game specified by
$V'$, and let $\ket{\psi}$ be the initial state of all parties' private
registers, including the provers' shared entanglement, at the beginning of the
game. 
We may introduce three unitary operators that capture the actions performed
jointly by the verifier and provers in the \emph{forward}, \emph{backward}, and
\emph{forward} phases of the game.
Unitary $U_1$ implements all parties' actions in the \emph{forward} phase,
including the verifier's last unitary operation $V_{m/2+1}$, but without
measuring the output qubit. 
Unitary $U_2$ implements all parties' actions in the \emph{backward} phase, 
starting with the verifier's application of $V_{m/2+1}^{\ast}$ and ending with
$V_1^{\ast}$.
Finally, unitary $U_3$ implements all parties' actions in the second
\emph{forward} phase. 
For the verifier, these are the same transformations that were used in the
definition of $U_1$, but in general the provers' actions may be different.

With respect to the operators $U_1$, $U_2$, and $U_3$ just defined, the
provers' success probability in the game may be characterized as follows. 
Let $p_1/2$, where $p_1 = \|\Pi_1 U_1 \ket{\psi}\|^2$, be the probability that
the verifier stops and accepts in step~3(i).
It holds that $p_1\leq \omega^*(V)$. 
The probability that the verifier stops and accepts in step 3(iii) is $p_2/2$,
where
\begin{equation}
  p_2 = \|\Pi_1 U_3ZU_2(\I-\Pi_1)U_1\ket{\psi}\|^2.
\end{equation}
Finally let $p_3=\|\Pi_{\mathit{init}} U_2U_1 \ket{\psi}\|^2$, so that the
probability that the verifier stops and accepts in step~4 is $p_3/2$. 

The value $\omega^*(V') = (p_1+p_2+p_3)/2$ is bounded by expressing a tradeoff
between $p_2$ and $p_3$. 
Either $U_2$ is such that the combined unitary $U_2U_1$ brings the state of all
registers into one that is consistent with a possible initial state of the game
specified by $V$. 
In this case the invertibility test will accept, but $p_2$ will be bounded by
$\omega^*(V)$. 
Alternatively, the provers' actions in the \emph{backwards} phase of the
game are such that the combined action $U_2U_1$ results in a state in which
the verifier's private register is not in the $\ket{0}$ state, in which case
the invertibility test will reject and $p_3$ will be small. 
This tradeoff can be expressed by applying the triangle inequality as follows.
\begin{equation}
  \begin{aligned}
    \sqrt{p_2} & = \|\Pi_1 U_3ZU_2(\I-\Pi_1)U_1\ket{\psi}\| \\
    & \leq \|\Pi_1 U_3ZU_2\Pi_1 U_1\ket{\psi}\| 
    + \|\Pi_1 U_3ZU_2 U_1\ket{\psi}\| \\
    & \leq \sqrt{\omega^*(V)} + \|\Pi_1 U_3Z (\I-\Pi_{\mathit{init}})U_2
    U_1\ket{\psi}\| \\
    & \qquad +\|\Pi_1 U_3Z \Pi_{\mathit{init}}U_2 U_1\ket{\psi}\|\\
    & \leq \sqrt{\omega^*(V)} + \sqrt{1-p_3} + \sqrt{\omega^*(V)},
  \end{aligned}
\end{equation}
where the last term is bounded by using the fact that $\Pi_{\mathit{init}}U_2
U_1\ket{\psi}$ is a valid initial state for the game specified by $V$, and can
therefore not lead to a higher acceptance probability than $\omega^*(V)$.

Putting everything together, one has
\begin{equation}
  \begin{aligned}
    \omega^*(V') & \leq \frac{1}{2}\Bigl( \omega^*(V) +
    \Bigl(2\sqrt{\omega^*(V)}+\sqrt{1-p_3}\Bigr)^2 + p_3\Bigr)\\
    & \leq \frac{1}{2} \Bigl( \omega^*(V) +
    \Bigl(1+4\sqrt{\omega^*(V)}+4\omega^*(V)-p_3\Bigr) + p_3\Bigr)\\
    & = \frac{1}{2} + 2\sqrt{\omega^*(V)} + \frac{5}{2}\omega^*(V),
  \end{aligned}
\end{equation}
as desired.

\subsection{Parallelization and public-coin systems}
\label{sec:qmip-pub}

Classical multi-prover interactive proof systems can be parallelized to a
single round of interaction by using the oracularization technique. 
Starting from a verifier $V$ in an $m$-turn interactive game, the two-turn
verifier $V'$ selects a random string $r$ representing the private random bits
of $V$ and asks a first prover to provide a complete transcript, including all
messages that would have been exchanged between the verifier and all provers,
for the execution of the $m$-turn game using the random string $r$. 
The other provers are used to check that the transcript is one that could
indeed have arisen in the original game, and in particular that messages from
the provers in a certain turn, as described in the transcript, do not depend on
messages sent by the verifier in subsequent turns. 
To check this condition, the provers are only given access to those random bits
that determine messages from $V$ sent in the first $t$ turns, where 
$1\leq t \leq m$ is randomly chosen; they are asked for a transcript of the
game until that round. 
The verifier $V'$ checks the transcripts received from the provers for
consistency. 
The same transformation allows for a reduction of the number of provers to
two.

For quantum interactive games the notion of a transcript is ill-defined, a
difficulty that was already encountered in the single-prover settings discussed
in Chapters~\ref{chapter:single-prover} and \ref{chapter:QSZK}.
As was also discussed previously, the technique of oracularization will in
general not apply even to classical verifiers in the presence of 
prior shared entanglement between the provers.
 
Fortunately, it turns out that the same transformation used to parallelize
single-prover quantum interactive proof systems does extend to the multi-prover
setting. 
Recall that this transformation requires the provers to start the interaction
in the state they would be in halfway through the original interaction,
proceeding either \emph{forward} or \emph{backward} depending on a coin-flip
made by the verifier. 
The same idea can be applied to multiple provers, who will have no more
latitude to cheat than in the single-prover setting. 
As for the transformation achieving perfect completeness, honest provers may be
required to use entanglement in order to succeed in the modified proof system,
irrespective of whether it is required by honest provers in the original proof
system.
This is because the joint state of their message registers halfway through the
original game may contain entanglement generated by the verifier's messages. 
The consequence of this transformation for interactive proof systems is stated
in the following theorem.

\begin{theorem}\label{lem:qmip-parallel}
  For all polynomially bounded functions $k$ and $m \geq 4$, and for every
  function $\eps:\natural\to[0,1]$, it holds that
  \begin{equation}
    \QMIP^*_{1,1-\eps}(k,m)\subseteq \QMIP^*_{1,1-\delta}(k, 3)
  \end{equation}
  for $\delta=\eps/m^2$.
\end{theorem}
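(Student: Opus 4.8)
The plan is to mimic the single-prover parallelization of Section~\ref{sec:qip-parallelization} (Theorem~\ref{theorem:QIP=QIP(3)}), iterating a ``cut-and-choose'' transformation that roughly halves the number of turns while degrading the soundness parameter by a constant factor and keeping the number of provers fixed at $k$. First I would reduce to the case $m = 2^{r+1}+1$ by inserting dummy turns once (at most slightly more than doubling $m$), purify all actions so that the verifier is an $n$-tuple of isometries $V_1,\dots,V_n$ with $n = 2^r+1$, and pad message registers so that all $\reg{X}_i^j,\reg{Y}_i^j$ have equal dimension. Set $t = 2^{r-1}+1$. The new verifier $V'$ receives in its first turn the registers the provers would collectively hold together with $V$ just before the application of $V_t$ in the original game: one designated prover sends $(\reg{Z}_{t-1},\reg{Y}_{t-1}^1)$ and each other prover $j$ sends $\reg{Y}_{t-1}^j$. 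In the second turn $V'$ broadcasts a uniformly random bit $a$. In the remaining turns $V'$ runs the original game forward from round $t$ if $a=0$ (accepting exactly as $V$ would), or backward if $a=1$, applying $V_{t-1}^{-1},\dots,V_1^{-1}$ interleaved with the appropriate message exchanges and finally accepting iff a standard-basis measurement of $\reg{Z}_0$ yields the all-zero string. One iteration yields a $(2^r+1)$-turn verifier, so $r$ iterations reach three turns, and since each iteration only increases the description size by a constant factor the whole transformation is polynomial-time computable.

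For completeness, suppose $\omega^{\ast}(V) = 1$, witnessed by a purified entangled prover strategy. In the transformed game I would \emph{define} the provers' prior shared entanglement to be the fixed global pure state $\ket{\Theta}$ reached after $t-1$ rounds of this honest strategy, with prover $j$ holding its private register $\reg{W}_{t-1}^j$ and its message register $\reg{Y}_{t-1}^j$ (and the designated prover additionally holding $\reg{Z}_{t-1}$). Since $\ket{\Theta}$ is determined by the input, there is no obstacle to the provers sharing it even though they could not prepare it locally from scratch. Each prover forwards its portion of the verifier's registers, retains $\reg{W}_{t-1}^j$, and then replays the purified honest strategy forward or backward according to $a$; both branches are accepted with certainty. (This is precisely the phenomenon that honest provers may now \emph{require} entanglement even when they did not in the original game, since $\ket{\Theta}$ carries correlations created by $V$'s messages.) Iterating preserves $\omega^{\ast}=1$, and the amplification statement of Proposition~\ref{prop:amplify} and the perfect-completeness transformation of Theorem~\ref{lem:qmip-complete} are not needed for this particular inclusion.

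For soundness, the argument follows the single-prover template. Fix a prover strategy against $V'$, let $\rho$ be the reduced state it produces on the verifier's ``halfway'' registers $(\reg{Z}_{t-1},\{\reg{Y}_{t-1}^j\}_j)$, and let $p_0,p_1$ be the acceptance probabilities of the forward and backward branches. The forward branch is literally an execution of the second half of $V$ from a configuration with verifier-marginal $\rho$, and the backward branch is an inversion test; arguing as in the derivation of \eqref{eq:qip-parallelization-fidelity}, one bounds $p_0$ by the squared fidelity between $\rho$ and the set $\C_1$ of verifier-marginals of configurations from which $V$'s second half is won with certainty, and $p_1$ by the squared fidelity between $\rho$ and the set $\C_0$ of verifier-marginals legitimately producible by a first half of $V$. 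Maximizing $\tfrac12(p_0+p_1)$ over $\rho$ and invoking the sum-of-squares identity of Spekkens and Rudolph used in Section~\ref{sec:qip-parallelization} gives $\omega^{\ast}(V') \le \tfrac12 + \tfrac12\max\{\fid(\sigma_0,\sigma_1):\sigma_0\in\C_0,\ \sigma_1\in\C_1\}$. Provided this maximum satisfies $\fid(\sigma_0,\sigma_1)^2 \le \omega^{\ast}(V)$, one obtains $\omega^{\ast}(V') \le \tfrac12 + \tfrac12\sqrt{\omega^{\ast}(V)}$, so $\omega^{\ast}(V)\le 1-\eps$ forces $\omega^{\ast}(V')\le 1-\eps/4$; after $r$ iterations on the padded game (where $4^r<m^2$) the soundness error is at most $1-\eps/4^r < 1-\eps/m^2$, yielding $\delta = \eps/m^2$ and a three-turn verifier.

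The main obstacle is exactly the relation $\fid(\sigma_0,\sigma_1)^2 \le \omega^{\ast}(V)$ and, more generally, the step of ``gluing a first half of $V$ to a second half.'' In the single-prover proof this gluing is justified by the unitary equivalence of purifications (Theorem~\ref{thm:unitary-equivalence}, equivalently Uhlmann's theorem, Theorem~\ref{theorem:Uhlmann}): a single prover holding a purification of $\sigma_0$ rotates it into a purification of $\sigma_1$ with success $\fid(\sigma_0,\sigma_1)^2$. No analogue of Theorem~\ref{thm:unitary-equivalence} is available for $k\ge 2$ parties, as already noted in Section~\ref{sec:qmip-nexp}, so this rotation must be replaced by a direct argument. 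The point that makes ``the provers have no more latitude to cheat than in the single-prover setting'' is that the consistency enforced by $V'$ through its two branches constrains only the marginal on the verifier's halfway registers, so the $k$-fold split of the provers' purifying system is irrelevant to the fidelity bounds on $p_0$ and $p_1$; and the combined strategy certifying $\max\fid^2\le\omega^{\ast}(V)$ is assembled by running the second-half operations of the forward branch after the reverse of the backward-branch operations, every piece of which is a purely local prover map, so no multi-party purification equivalence is invoked. Carrying out this bookkeeping prover-by-prover, and checking that the forward/backward success probabilities trade off against one another exactly as in the single-prover case despite the loss of Theorem~\ref{thm:unitary-equivalence}, is the technical heart of the proof; everything else is a routine transcription of Section~\ref{sec:qip-parallelization}.
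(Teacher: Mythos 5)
Your transformation, the iteration scheme, and the completeness argument (the provers pre-share the halfway state as prior entanglement) coincide with what the paper does; the survey itself only sketches this, deferring details to \cite{KempeKMV09}. The genuine gap is in soundness, at exactly the step you single out, and it cannot be repaired the way you suggest: with $\C_0,\C_1$ defined as sets of marginals of reachable, respectively perfectly-winnable, halfway configurations (whether on $\reg{Z}_{t-1}$ alone or together with the $\reg{Y}_{t-1}^j$), the inequality $\fid(\sigma_0,\sigma_1)^2\leq\omega^{\ast}(V)$ is simply \emph{false} for $k\geq 2$. For a counterexample, let the verifier's first action create an EPR pair between its memory $\reg{Z}$ and a message to prover $1$, and let the final test accept if and only if a qubit supplied by prover $2$ is maximally entangled with $\reg{Z}$ (pad with dummy turns so the halving cut sits in between). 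Every reachable halfway marginal and every perfectly-winnable one has $\reg{Z}$ maximally mixed, with all message registers in the state $\ket{0}$ in both cases, so the maximal fidelity is $1$; yet $\omega^{\ast}(V)=1/4$, because $\reg{Z}$ is purified entirely on prover $1$'s side and is therefore in product with anything prover $2$ can send (monogamy). The reason your proposed fix does not certify the set-level inequality is that $\C_0$ and $\C_1$ quantify over two \emph{unrelated} prover strategies, whose purifications of $\sigma_0$ and $\sigma_1$ may be split among the $k$ provers in incompatible ways; there is no ``backward branch'' attached to such a pair that one could reverse. Your reverse-then-forward construction only ties together the two branches of one fixed strategy against $V'$, and if it is exploited merely through a triangle inequality (``the replayed state is close to the committed state''), the per-halving loss is quadratic in $1-\omega^{\ast}(V)$, which after $\log m$ halvings does not give $\delta=\eps/m^2$.

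The correct repair keeps your local-reversal idea but abandons $\C_0,\C_1$ and organizes the estimate around the fixed cheating strategy. With all actions dilated to local unitaries, let $\ket{\Psi}$ be the committed global state, $F$ and $B$ the joint forward- and backward-branch evolutions, and $\Pi_{\mathrm{acc}}$, $\Pi_{\mathrm{init}}$ the accepting and all-zero projectors, so the branch success probabilities are $p_0=\norm{u}^2$ and $p_1=\norm{v}^2$ for $u=F^{\ast}\Pi_{\mathrm{acc}}F\ket{\Psi}$ and $v=B^{\ast}\Pi_{\mathrm{init}}B\ket{\Psi}$, and also $p_0=\langle\Psi|u\rangle$, $p_1=\langle\Psi|v\rangle$. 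Writing $\hat{u}=u/\norm{u}$, $\hat{v}=v/\norm{v}$, the elementary bound $\abs{\langle\Psi|\hat{u}\rangle}^2+\abs{\langle\Psi|\hat{v}\rangle}^2\leq 1+\abs{\langle\hat{u}|\hat{v}\rangle}$ replaces the Spekkens--Rudolph step, and $\abs{\langle\hat{u}|\hat{v}\rangle}^2\leq\omega^{\ast}(V)$ is certified by the locally implementable strategy you mention: the provers for $V$ pre-share the non-verifier part of $\Pi_{\mathrm{init}}B\ket{\Psi}/\sqrt{p_1}$, replay the inverses of their own backward-branch maps to reach $\hat{v}$ after $t-1$ rounds, then play their forward-branch maps, winning with probability $\norm{\Pi_{\mathrm{acc}}F\hat{v}}^2\geq\abs{\langle\hat{u}|\hat{v}\rangle}^2$ by Cauchy--Schwarz. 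This yields $\omega^{\ast}(V')\leq\frac{1}{2}+\frac{1}{2}\sqrt{\omega^{\ast}(V)}$, the same per-iteration loss as in Section~\ref{sec:qip-parallelization}, and hence $\delta=\eps/m^2$; the required algebra is essentially the $U_1,U_2,U_3$ manipulation already carried out for perfect completeness in Section~\ref{sec:qmip-complete}, not a transcription of the single-prover fidelity argument.
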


Even if the verifier in the original $m$-turn game is classical, the same
transformation will require a quantum verifier to execute the three-turn game,
and it is not known if a similar transformation can be performed while keeping
the verifier classical.
If the soundness property of the original proof system is only known to hold
against unentangled provers (that is, the proof system is a $\QMIP$ proof
system), the reduction will not apply, as the provers must be able to use shared
entanglement in order to succeed even in the honest case. 
To handle this case, the quantum verifier would first have to be simulated by a
classical verifier through the circuitous route described in
Section~\ref{sec:qmip-nexp} (involving encoding the problem decided by the
$\QMIP$ verifier as an $\MIP$ problem, and going through the constructions
proving $\QMIP\subseteq\NEXP\subseteq\MIP$).
The resulting classical verifier can be parallelized to a single round using
oracularization as described earlier.
It is an open question whether $\QMIP$ systems can be directly parallelized to
a single round of interaction with a classical verifier and without requiring
the addition of a prover.

Quantum multi-prover interactive proof systems can be further parallelized to a
single round (two turns) of interaction by introducing an additional prover. 
The transformation proceeds in two steps, each of which is of interest in its
own right. 
The first step establishes that any three-turn verifier can be transformed into
one whose single message to each prover consists of a uniformly random bit
broadcast simultaneously to all provers. 
This public-coin form is unique to quantum multi-prover games, and it is
unlikely to be achievable for classical games: because the verifier's message
to all provers is publicly known, all provers receive the same information and
can coordinate their actions perfectly.
Thus $\MIP^{\textrm{pub}}$, the public-coin variant of $\MIP$, collapses to the
single-prover class $\IP$, and $\MIP=\MIP^{\textrm{pub}}$ would imply
$\PSPACE=\NEXP$.

What makes this result possible in the case of quantum provers is that, even
upon receiving the same message, the provers are still restricted to applying a
local transformation on their respective registers.
The verifier thus has the guarantee that the joint states of the message
registers that could be sent by the provers in the third turn
are related by the action of a quantum channel in tensor product form.
This guarantee turns out to be sufficient to establish soundness of the
public-coin proof system.

Given a three-turn verifier $V=(V_1,V_2)$, a three-turn public-coin verifier
$V'$ can be constructed as follows.
\begin{mylist}{\parindent}
\item[1.] $V'$ receives message register $\reg{Z}_1$ from the first prover,
  and nothing from the other provers.
\item[2.] $V'$ chooses $c\in\{0,1\}$ uniformly at random and broadcasts it to
  all provers.
\item[3.] $V'$ receives register $\reg{Y}^i_1$ from the $i$-th prover, for
  $i=1,\ldots,k$.
\begin{mylist}{8mm} 
 \item[(i)]
    If ${c=0}$, $V'$ applies $V_2$ to the qubits in
    $(\reg{Z}_1,\reg{Y}_1^1,\ldots,\reg{Y}_1^k)$, and outputs the outcome of
    the measurement performed by $V_2$.
  \item[(ii)]
    If ${c=1}$, $V'$ applies $V_1^{\ast}$ to the qubits in
    $(\reg{Z}_1,\reg{Y}_1^1,\ldots,\reg{Y}_1^k)$ and produces the output $1$
    if and only if all the qubits in $\reg{Z}_0$ are in state $\ket{0}$.
 \end{mylist}
\end{mylist}
The analysis of the completeness and soundness properties of $V'$ follows along
the same lines as the analysis of the min-max formulation of the value of a
$\QIP(3)$ interactive game given in Section~\ref{sec:QIP-reduction}. 
The consequence for interactive proofs is stated in the following theorem.

\begin{theorem}\label{lem:qmip-public}
  For every polynomially bounded function $k$ and every function
  $b:\natural\to[0,1]$, it holds that
  \begin{equation}
    \QMIP^*_{1,b}(k,3)\subseteq\QMIP^{*,\textrm{pub}}_{1,(1+\sqrt{b})/2}(k,3),
  \end{equation}
  where $\QMIP^{*,\textrm{pub}}_{a,b}(k, m )$ is the class of promise problems
  having quantum $k$-prover $m$-turn interactive proof systems in which all the
  verifier's messages to the provers are public coins.
\end{theorem}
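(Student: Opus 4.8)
The plan is to imitate, in the multi-prover setting, the transformation used to parallelize single-prover quantum interactive proofs and put them in public-coin form (Section~\ref{sec:qip-parallelization}), and to analyze it exactly as in the min-max characterization of the value of a three-turn game in Section~\ref{sec:QIP-reduction}. Concretely, given a three-turn verifier $V=(V_1,V_2)$ put first in purified form (so $V_1,V_2$ are isometries), the public-coin verifier $V'$ is the one described in the text: it receives $\reg{Z}_1$ from the first prover in turn one, flips $c\in\{0,1\}$ uniformly and broadcasts it, and in turn three collects a register from each prover; if $c=0$ it applies $V_2$ to the collected registers and outputs the resulting bit, and if $c=1$ it applies $V_1^{\ast}$ and accepts iff the recovered ancilla register $\reg{Z}_0$ is in the state $\ket{0\cdots0}$. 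First I would check that, as in Section~\ref{sec:circuits} and Section~\ref{sec:def-multi}, $V'$ is a legitimate polynomial-time public-coin $k$-prover verifier and that its only message to the provers is the single coin $c$.

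For completeness I would argue as follows. Suppose $\omega^{\ast}(V)=1$ on a yes-instance, fix a perfect entangled strategy, and let $\ket{\psi}$ be the global pure state of all co-existing registers at the midpoint of the interaction — i.e.\ just after the provers apply their responses but before the verifier applies $V_2$. Define honest provers for $V'$ whose prior shared entanglement is $\ket{\psi}$ (partitioned so prover $i$ holds its parts of $\reg{Y}_1^i,\reg{W}_2^i$, with the first prover additionally entrusted with the verifier register $\reg{Z}_1$); this is a legitimate initial entangled state by the definition of $\QMIP^{\ast}$. The first prover sends $\reg{Z}_1$ in turn one. If $c=0$ the provers send their $\reg{Y}_1^i$; applying $V_2$ accepts with certainty since $\omega^{\ast}(V)=1$. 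If $c=1$ each prover applies, \emph{locally}, the inverse of its last action in the original game to recover its copy of $\reg{X}_1^i$ and sends it; then $V_1^{\ast}$ recovers $\reg{Z}_0=\ket{0\cdots0}$ with certainty and $V'$ accepts. As in Section~\ref{sec:qmip-pub}, I would emphasize that these honest provers generically need prior shared entanglement even when $V$ is classical and its optimal provers are unentangled, since $\ket{\psi}$ must be prepared in advance.

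For soundness, consider an arbitrary entangled strategy against $V'$ on a no-instance, and let $\rho_1\in\Density(\Z_1)$ be the reduced state of the register sent in the first turn. Because $c$ is revealed afterwards, the provers choose their third-turn (local, tensor-product) transformations separately for $c=0$ and $c=1$. Conditioned on $c=0$, the acceptance probability is at most the largest value of $\bigip{V_2^{\ast}(\ket{1}\bra{1}\otimes\I)V_2}{\tau}$ over extensions $\tau$ of $\rho_1$ reachable by a tensor-product channel on the provers' residual registers; by Uhlmann's theorem (Theorem~\ref{theorem:Uhlmann}), exactly as in the derivation of~\eqref{eq:max-fidelity}, this equals $\max_{\sigma}\fid(\rho_1,\sigma)^2$ over a suitable set of ``backward'' states $\sigma$ on $\Z_1$ built from $\Phi_2$, and symmetrically for $c=1$ with the ``forward'' channel $\Phi_1$. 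Averaging over $c$, exchanging the maximization over $\rho_1$ with that over the forward/backward states, and applying the Spekkens--Rudolph identity $\max_{\rho}\bigl(\fid(\sigma_0,\rho)^2+\fid(\sigma_1,\rho)^2\bigr)=1+\fid(\sigma_0,\sigma_1)$ used in Section~\ref{sec:qip-parallelization}, one arrives at $\omega^{\ast}(V')\le \frac{1}{2}\bigl(1+\max_{\rho_0,\rho_2}\fid(\Phi_1(\rho_0),\Phi_2(\rho_2))\bigr)$, which in the single-prover case is precisely $\frac{1}{2}\bigl(1+\sqrt{\omega^{\ast}(V)}\bigr)$.

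The hard part will be this last identification in the multi-prover setting: the quantity $\max_{\rho_0,\rho_2}\fid(\Phi_1(\rho_0),\Phi_2(\rho_2))^2$ is only an \emph{a priori} upper bound on $\omega^{\ast}(V)$, because the provers cannot jointly manipulate the distributed purification of the verifier's register $\reg{Z}_1$ and hence cannot in general realize the purification that the bare fidelity expression implicitly uses. The real work is therefore to carry out the Uhlmann step while respecting the tensor-product structure of the provers' third-turn transformations — the very guarantee singled out in Section~\ref{sec:qmip-pub} as ``sufficient to establish soundness'' — so that the forward/backward state sets appearing above are exactly those realizable by local prover strategies, yielding the bound $\frac{1}{2}(1+\sqrt{\omega^{\ast}(V)})\le\frac{1}{2}(1+\sqrt{b})$ on no-instances. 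Everything else — the parameter bookkeeping and the polynomial-time checks — is routine and mirrors the single-prover development in Chapter~\ref{chapter:single-prover}.
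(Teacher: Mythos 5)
Your construction and completeness argument coincide with the paper's (modulo the minor point that $\omega^{\ast}$ is a supremum which need not be attained, so completeness should be argued with strategies of value $1-\eps$ and $\eps\to 0$ rather than with a ``perfect'' strategy). The problem is the soundness analysis: you have correctly located the multi-prover difficulty, but the route you sketch does not close, and the step you defer as ``the real work'' is essentially the whole content of the theorem. The chain you propose---per-branch bounds $p_0,p_1\le\max_{\sigma}\fid(\rho_1,\sigma)^2$ followed by the Spekkens--Rudolph identity---is valid only when the forward/backward sets of states on $\Z_1$ are the \emph{unrestricted} ones of Section~\ref{sec:QIP-reduction} (and even then your claimed equality with the tensor-product-restricted maximum is really only an inequality, which happens to point the right way). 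But with the unrestricted sets, the quantity you end with, $\max_{\rho_0,\rho_2}\fid(\Phi_1(\rho_0),\Phi_2(\rho_2))^2$, is the value of the game in which the $k$ provers are merged into a single prover holding all of $\reg{X}_1^1,\ldots,\reg{X}_1^k$, and this can strictly exceed $\omega^{\ast}(V)$---isolating the provers is exactly what makes the multi-prover value smaller---so the bound $(1+\sqrt{b})/2$ does not follow. If instead you shrink the sets to ``states realizable by local prover strategies,'' as your last paragraph suggests, the Uhlmann/monotonicity step that produced the per-branch upper bounds is no longer justified. The reduced-state-on-$\Z_1$ detour cannot be repaired by redefining the sets: the tensor-product restriction must enter through the provers' operators, not through sets of reduced states.

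The argument the chapter points to keeps those operators explicit. Purify the provers, let $\ket{\psi}$ be the joint state after the first turn, let $A=A_1\otimes\cdots\otimes A_k$ and $B=B_1\otimes\cdots\otimes B_k$ be the local isometries applied upon receiving $c=0$ and $c=1$, let $\Pi_{\mathrm{acc}}$ be the projector corresponding to acceptance after $V_2$, and let $\Pi_{\mathrm{init}}$ project $\reg{Z}_0$ onto the all-zero state. Setting $C=\Pi_{\mathrm{acc}}V_2(\I\otimes A)$ and $D=\Pi_{\mathrm{init}}V_1^{\ast}(\I\otimes B)$, one has $p_0+p_1=\bra{\psi}(C^{\ast}C+D^{\ast}D)\ket{\psi}\le 1+\norm{CD^{\ast}}$ by the elementary bound $\norm{C^{\ast}C+D^{\ast}D}\le 1+\norm{CD^{\ast}}$ for contractions, and $CD^{\ast}=\Pi_{\mathrm{acc}}V_2(\I\otimes AB^{\ast})V_1\Pi_{\mathrm{init}}$. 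The crucial observation is that $AB^{\ast}=A_1B_1^{\ast}\otimes\cdots\otimes A_kB_k^{\ast}$ is a tensor product of \emph{local} contractions: completing each $A_iB_i^{\ast}$ to a local isometry whose extra flag register is kept by prover $i$, and letting the provers pre-share (as prior entanglement) the near-maximizing unit vector, which may be taken of the form $\ket{0\cdots 0}\otimes\ket{\phi}$ with $\ket{\phi}$ distributed among them, yields a legitimate entangled strategy for $V$ accepting with probability at least $\norm{CD^{\ast}}^2$. Hence $\norm{CD^{\ast}}\le\sqrt{\omega^{\ast}(V)}$ and $\omega^{\ast}(V')\le\frac{1}{2}\bigl(1+\sqrt{\omega^{\ast}(V)}\bigr)$. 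This is precisely the ``tensor product form'' guarantee emphasized in Section~\ref{sec:qmip-pub}, and it is the step missing from your proposal.
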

  
In the second step of the parallelization procedure it is shown how any
three-turn public-coin verifier $V$ interacting with $k$ provers can be
transformed into a two-turn verifier $V'$ (no longer public-coin) interacting
with $k+1$ provers.
Because $V$ is public-coin, we may assume that its first action $V_1$ consists
of generating uniformly random bits and broadcasting them to the provers. 
At the end of the game the verifier applies a unitary $V_2$ to the joint
state formed by the provers' message registers $\reg{Y}_1^1,\ldots,\reg{Y}_1^2$
received in the first turn, $\reg{Y}_2^1,\ldots,\reg{Y}_2^k$ received in the
third turn, and its own private register. 
Define the new verifier $V'$ as follows:
\begin{mylist}{\parindent}
\item[1.] $V'$ broadcasts public coins to the first $k$ provers exactly as $V$
  would. No message is sent to the $(k+1)$-st prover.
\item[2.] $V'$ applies the unitary $V_2$ to the messages received, treating
  the first $k$ provers' answer registers as if they contained the provers'
  messages in the second turn of the original game, and the $(k+1)$-st prover's
  message as if it contained the joint state of all provers' messages in the
  first turn of the original game. 
  $V'$ then measures the output qubit and produces the outcome.
\end{mylist}

First we claim that for any strategy for the provers $P^1,\ldots,P^k$ in the
interactive game specified by $V$ there exists a strategy for the provers
$R^1,\ldots,R^{k+1}$ with the same probability of being accepted by $V'$. To
achieve this $R^1,\ldots,R^k$ can simulate the actions of $P^1,\ldots,P^k$ in
the first turn of their interaction with $V$, handing over their joint message
registers to $R^{k+1}$ before the interaction with $V'$ starts (which is
allowed as part of the provers' prior shared entanglement).
When the game specified by $V'$ is initiated, $R^{k+1}$ sends all its
registers to $V'$ and $R^1,\ldots,R^k$ continue as if they were
$P^1,\ldots,P^k$ interacting with $V$. 
The new provers' probability of being accepted by $V'$ is
identical to the original provers' probability of being accepted by $V$.

Conversely, fix a strategy for provers $R^1,\ldots,R^{k+1}$ in an interaction
with $V'$ and define a strategy for $P^1,\ldots,P^k$ that has the same
probability of being accepted by $V$ as follows. $P^1,\ldots,P^k$ initialize
their private registers exactly as $R^1,\ldots,R^k$ would, except that for each
$i\in\{1,\ldots,k\}$ prover $P^i$ is also given the register
$\reg{W}_1^{k+1,i}$ sent by $R^{k+1}$ to $V'$ that would have been interpreted
as message register $\reg{Y}_1^i$ by $V'$ in the game. In the first turn of
their interaction with $V$ each prover sends $\reg{W}_1^{k+1,i}$. In the second
turn they behave exactly as $R^1,\ldots,R^{k}$ would have in their interaction
with $V'$. 
Once again, the probability of $P^1,\ldots,P^k$ being accepted by $V$ is
identical to the probability of $R^1,\ldots,R^{k+1}$ being accepted by $V'$.

Through the transformation just described, one concludes the following theorem.

\begin{theorem}\label{lem:qmip-oneround}
  For every polynomially bounded function $k$ and all functions
  $a,b:\natural\to[0,1]$ such that $a>b$, it holds that
\begin{equation}
  \QMIP_{a,b}^{*,\textrm{pub}}(k,3) 
  \subseteq \QMIP^{*,\textrm{pub}}_{a,b}(k+1,2).
\end{equation}
\end{theorem}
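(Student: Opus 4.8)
The plan is to flesh out the construction sketched immediately before the theorem, verifying that the acceptance probabilities are preserved \emph{exactly} in both directions, so that no parameters are lost. First I would fix a three-turn public-coin verifier $V = (V_1,V_2)$ interacting with $k$ provers. The key structural feature, which I will use throughout, is that public-coin-ness means the verifier's only message---its action $V_1$---consists of sampling a uniformly random string and broadcasting it to all $k$ provers; this step involves no quantum information and no dependence on the provers' earlier message. I then define the two-turn $(k{+}1)$-prover verifier $V'$ as follows: in the first turn $V'$ samples the same random string and broadcasts it to provers $1,\dots,k$, sending a trivial (empty) message to prover $k{+}1$; in the second turn $V'$ collects all answers, relabels prover $k{+}1$'s message as the collection of first-turn messages $(\reg{Y}_1^1,\dots,\reg{Y}_1^k)$ of the original game and the first $k$ provers' messages as the corresponding third-turn messages, applies $V_2$ to these registers together with its private workspace, and outputs the resulting bit. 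Every message $V'$ sends is either a public coin or empty, so $V'$ is public-coin and uses exactly two turns.

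Next I would establish the completeness translation: every strategy for $P^1,\dots,P^k$ against $V$ is matched, with identical acceptance probability, by a strategy for $R^1,\dots,R^{k+1}$ against $V'$. The idea is that $R^1,\dots,R^k$ internally run the first-turn actions of $P^1,\dots,P^k$ as part of preparing the provers' initial shared entangled state---this is legitimate since the provers may begin holding any joint entangled state---keeping the resulting private memories while handing the first-turn message registers over to $R^{k+1}$. When the interaction with $V'$ begins, $R^{k+1}$ forwards all of its registers as its single message, while each $R^i$ ($i\le k$) receives the public coin and, using the private memory it retained, proceeds exactly as $P^i$ would in the third turn of the original game. One then checks that the joint state of the registers on which $V'$ runs $V_2$ is, by construction, precisely the joint state on which $V$ runs $V_2$, so the two acceptance probabilities coincide; in particular $\omega^*(V')\ge\omega^*(V)$.

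For soundness I would run the reverse reduction: from any $R^1,\dots,R^{k+1}$ against $V'$ I build $P^1,\dots,P^k$ against $V$ with the same acceptance probability. Here $P^1,\dots,P^k$ absorb the fixed state preparation performed by $R^{k+1}$ into their shared entanglement; the chunk $\reg{W}_1^{k+1,i}$ of $R^{k+1}$'s message that $V'$ would have interpreted as the $i$-th first-turn register is handed to $P^i$ as part of this shared state, along with $R^i$'s own private registers. Prover $P^i$ then sends $\reg{W}_1^{k+1,i}$ to $V$ as its turn-one message, receives the public coin, and behaves exactly as $R^i$ does in its single message to $V'$. Again the state presented to $V_2$ is unchanged, so $\omega^*(V')\le\omega^*(V)$. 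Combining the two translations gives $\omega^*(V')=\omega^*(V)$, so the $V'$ derived from $V$ witnesses membership in $\QMIP^{*,\textrm{pub}}_{a,b}(k+1,2)$ of any promise problem witnessed by $V$ in $\QMIP^{*,\textrm{pub}}_{a,b}(k,3)$, with no change to $a$ and $b$.

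The technical heart of the argument---and the reason it works only in the public-coin setting---is exactly the observation that $V_1$ carries no information that could not be reproduced in a single turn: it is a classical sampling step whose output is visible to all provers, which is what licenses reshuffling which prover sends which register. I do not expect any analytic obstacle, since every bound here is an equality rather than an estimate. The main obstacle will be the careful register bookkeeping in the two simulations: verifying that each $R^i$ or $P^i$ genuinely has access to the registers it needs when it acts, that no prover is ever asked to operate on a register held by another prover, and that the operator $V_2$ is fed literally the same density operator in the simulated game as in the original one. Setting up notation---superscripts for provers, and a clean correspondence between the message/memory registers of the three-turn game and those of the two-turn game---so that these checks are transparent rather than error-prone is the part that will require the most care, but it is conceptually routine once the correspondence is laid out.
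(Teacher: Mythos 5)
Your proposal is correct and follows essentially the same route as the paper: the same two-turn verifier $V'$ (coins broadcast to the first $k$ provers, nothing to the $(k+1)$-st, whose message is reinterpreted as the original first-turn registers fed to $V_2$), the same completeness simulation in which $R^1,\ldots,R^k$ precompute the first-turn actions as part of the shared entanglement and hand the message registers to $R^{k+1}$, and the same converse in which the registers $\reg{W}_1^{k+1,i}$ are distributed to $P^1,\ldots,P^k$ as prior entanglement and sent in the first turn, yielding exact equality of values in both directions.
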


Combining Theorem~\ref{lem:qmip-complete}, Theorem~\ref{lem:qmip-parallel}, Theorem~\ref{lem:qmip-public} and Theorem~\ref{lem:qmip-oneround} proves Theorem~\ref{thm:qmipstar-parallel}.

\subsection{Classical verifiers}
\label{sec:qmip-verifier}

In Section~\ref{sec:qmip-nexp} it is argued, albeit rather indirectly, that
quantum verifiers interacting with multiple unentangled provers are no more
powerful than their classical counterparts. In the presence of entangled
provers it may seem that the possibility for the verifier to exchange quantum
messages is essential, and indeed this is the case for some of the reductions
discussed in the preceding section. 
Nevertheless, it is still the case that any quantum multi-prover interactive
proof system with entangled provers can be transformed into one in which the
verifier is classical, provided the number of provers is allowed to increase by
two and the number of rounds to a polynomial.
This fact was stated as Theorem~\ref{thm:ruv} earlier in this section.

The reduction from quantum to classical verifiers that underlies the theorem
just mentioned is highly non-trivial. 
Its completeness requires honest provers to share polynomially many qubits of
entanglement, and its soundness rests on the property of 
\emph{entanglement rigidity}. 
Informally speaking, this property states that certain correlations generated
by the provers, as witnessed by a high success probability in certain
interactive games (such as the CHSH game, Example~\ref{ex:chsh}), are
\emph{rigid} in the sense that they can only be obtained by performing
measurements on a specific entangled state, up to local isometries that
could be performed by the provers.
(In the case of the CHSH game, this state is an EPR pair.)
Rigidity can be leveraged by the verifier to exert a tight control over the
provers' actions: by verifying that they are able to successfully play the CHSH
game, it is possible to assert that, up to local isometries acting on their
private registers, the provers share an EPR pair on which they apply specific
measurements.

Using additional ideas, it is possible to devise a proof system whereby a
classical verifier $V'$ is able to ``orchestrate'' $k+2$ provers
$R^1,\ldots,R^{k+2}$, using only classical messages, so as to reproduce any
polynomial-time interaction between a quantum verifier $V$ and $k$ provers.
In this orchestration, one of the additional provers, say $R^{k+1}$, plays the
role of $V$, and the other, $R^{k+2}$, is used to control the actions of
$R^{k+1}$ via a form of distributed process tomography. 
The original proof system may call for quantum messages to be exchanged between
$V$ and the provers.
In the new proof system, such messages are simulated via teleportation between
$R^{k+1}$ and the first $k$ provers, where $V'$ uses classical messages to
relegate the required correction bits between the provers. 
The EPR pairs used for teleportation are tested by executing a sufficiently
large numbers of CHSH games in sequence and verifying that the provers achieve
a success rate close to the optimal $\omega^*(\CHSH)$. 
This large number of CHSH games leads to a polynomial blow-up in the number of
rounds of interaction of $V'$, even if $V$ is single-round.

\section{\texorpdfstring{Containment of \class{NEXP} in 
    $\class{QMIP}^*$}{Containment of NEXP in QMIP*}}
\label{sec:nexp-in-mipstar}

As was previously discussed, the introduction of entanglement between provers
can sometimes give them a significant advantage in a multi-prover interactive
game.
As a result, it is not immediately clear that the complexity class
$\QMIP^*$ is larger than the single-prover class $\QIP$, as the soundness
property of the multi-prover interactive proof system constructions that
establish $\NEXP \subseteq \MIP$ could be compromised by entanglement between
the provers.
Indeed, as was mentioned in Section~\ref{sec:xor}, a collapse of this sort does
occur for the restricted case of XOR proof systems:
$\oplus\MIP^* \subseteq \PSPACE$, while $\oplus\MIP = \NEXP$.
The following theorem shows that this does not happen in the more general
setting (assuming $\PSPACE\not=\NEXP$), and more precisely that quantum
interactive proof systems with entangled provers are powerful enough to decide
all problems in nondeterministic exponential time.

\begin{theorem} \label{thm:nexp-mipstar}
  Every language in $\NEXP$ has a three-prover one-round interactive proof
  system in which completeness $a=1$ can be achieved by unentangled provers,
  and soundness $b=1/2$ holds against entangled provers. 
  In particualr, it holds that
  \begin{equation}
    \label{eq:nexp-in-mipstar}
    \NEXP\subseteq \QMIP^*.
  \end{equation}
\end{theorem}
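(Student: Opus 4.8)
\textbf{Proof proposal for Theorem~\ref{thm:nexp-mipstar}.}

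The plan is to start from a classical multi-prover interactive proof system witnessing $\NEXP = \MIP$ and transform it so that its soundness survives the introduction of entanglement between the provers. As discussed in Section~\ref{sec:oracularization}, the naive two-prover clause-versus-variable (oracularization) proof system is \emph{not} sound against entangled provers, because entanglement can be used to make locally inconsistent answers look consistent (the Magic Square game being the prototypical counterexample). So the first step is to fix a convenient $\NEXP$-complete problem to reduce from. A natural choice is a succinctly-represented constraint satisfaction problem: by the classical $\MIP = \NEXP$ machinery, deciding satisfiability of an exponentially-large $3$-SAT formula given by a polynomial-size circuit is $\NEXP$-complete, and a satisfying assignment can be encoded as a low-degree multilinear polynomial (a multilinear extension of the assignment) that the provers are asked to hold. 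The verifier performs a low-degree test together with a consistency/satisfiability test on this encoded assignment.

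Second, I would design a \emph{three-prover}, one-round protocol in which the third prover is used to enforce non-adaptivity against entangled strategies --- a three-prover variant of oracularization. The key mechanism is to exploit the \emph{monogamy of entanglement}: roughly, if prover $1$'s answers are highly correlated with prover $2$'s answers on a shared random challenge, then prover $1$ cannot also be highly correlated with prover $3$ on a correlated challenge, unless prover $1$'s answers are essentially determined by a fixed (classical) function of its own question alone. Concretely, the verifier sends correlated questions to the three provers: one prover receives a ``point'' query, another receives a ``line'' or ``plane'' query consistent with that point, and the third prover is used as a tester that receives an independent copy of one of these queries; the verifier cross-checks all three answer sets. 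The soundness analysis then proceeds by a case analysis: either the provers pass the low-degree test with small error, in which case (by a quantum-sound low-degree test, or by the rigidity of the test under entanglement) there is a well-defined low-degree polynomial approximately describing each prover's answers; or they fail the test, in which case they are caught with constant probability. Once a genuine low-degree polynomial is extracted, the monogamy argument pins it down to be (close to) a single classical assignment shared by the provers, and then the satisfiability check behaves exactly as in the classical analysis, forcing $\omega^\ast(V) \le 1/2$ on no-instances. Completeness with unentangled provers is immediate: honest provers simply hold the multilinear extension of a satisfying assignment as a classical (hence unentangled) string and answer all queries consistently, so $\omega(V) = 1$.

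Third, I would verify the resource bounds: the verifier is polynomial-time (it only needs to sample polynomially many field elements, evaluate the succinct circuit at polynomially many points, and check low-degree conditions on the received answers), there are three provers, and the interaction is one round --- matching the statement. The inclusion $\NEXP \subseteq \QMIP^\ast$ then follows by amplifying the $(1,1/2)$ gap if desired (Proposition~\ref{prop:amplify}), though the stated gap already suffices. I would finish by noting that this also gives $\NEXP \subseteq \MIP^\ast$ after combining with Theorem~\ref{thm:ruv} to make the verifier classical, contrasting sharply with the XOR case $\oplus\MIP^\ast \subseteq \PSPACE$ from Section~\ref{sec:xor}.

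\textbf{Main obstacle.} The hard part is establishing soundness against entangled provers for the low-degree test and the consistency check. Unlike the classical setting, where one argues combinatorially about assignments, here the provers' answers are outcomes of measurements on a shared entangled state, and one must show that passing the tests with high probability \emph{forces} the existence of a consistent low-degree polynomial. This requires either a quantum-sound analysis of the low-degree test (controlling how the provers' measurement operators must approximately commute and agree on overlapping queries) or an entanglement-rigidity argument, together with the monogamy-of-entanglement bound that prevents a single prover from simultaneously ``colluding'' with two others. Getting quantitatively useful error bounds out of these arguments --- so that the final soundness is a constant bounded away from completeness --- is the technical core of the proof and the step I expect to be most delicate.
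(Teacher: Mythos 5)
Your plan is essentially the paper's proof: it also starts from the Babai--Fortnow--Lund arithmetization of a succinct \class{NEXP}-complete constraint problem with the assignment encoded as a multilinear extension, uses a three-prover one-round game in which the extra prover and a monogamy-of-entanglement argument (via the consistency test) replace classical oracularization, and rests on an entanglement-sound multilinearity test as the technical core. The one step you flag as delicate --- making the multilinearity test (the paper uses the axis-parallel BLR-style test with a ``self-improvement'' induction, rather than a point-versus-line test) sound against entangled provers --- is exactly what the paper also only sketches, deferring the full analysis to Ito--Vidick.
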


Additional properties of verifiers that establish the containment 
\eqref{eq:nexp-in-mipstar} are also known.
For instance, the verifier can be taken to be classical, to send messages to 
two out of three provers chosen at random, and to receive a number of bits 
from each prover that scales as $O(\log(1/b))$, where $b$ is the desired
soundness parameter. 
If one is willing to relax the condition of perfect completeness, the inclusion
\begin{equation}
  \NEXP\subseteq\oplus\MIP^*_{1-\eps,1/2+\delta}(3,2)
\end{equation}
(three-prover one-round XOR games) is also known to hold for any choice of
constants $\delta,\eps>0$.

In this section we sketch some of the ingredients that go into the proof of
Theorem~\ref{thm:nexp-mipstar}. 
The starting point is the proof system introduced by Babai, Fortnow, and
Lund~\cite{BabaiFL91} in their proof of $\NEXP\subseteq\MIP$.
This proof system has two main components, both of which need to be made
``entanglement-resistant,'' meaning that their soundness guarantee can be
extended to hold against entangled provers.

The first component is the technique of oracularization. 
This is used in combination with the technique of arithmetization (introduced
for the proof of $\IP=\PSPACE$) to devise a basic two-prover proof system for 
a certain $\NEXP$-complete problem. 
As discussed earlier, oracularization fails in general with entangled provers. 
In Section~\ref{sec:three-prover} two workarounds are described that establish 
a weaker form of oracularization with entangled provers, first by using three,
and then two, provers.

The second component is an interactive game called the 
\emph{multilinearity test}. 
This test is used as a means to enforce that the provers' answers are
determined according to a multilinear function of the message received from the
verifier, which is interpreted as a point $x\in\field^m$ for some 
large finite field $\field$. 
The statement and analysis of a multilinearity test with entangled provers
requires care, and the main ideas are discussed in the simpler context of the
linearity test in Section~\ref{sec:linearity-test}. 
In Section~\ref{sec:nexp-in-qmipstar} the two components are combined into a
brief sketch of the proof of Theorem~\ref{thm:nexp-mipstar}.

\subsection{Games with three provers and monogamy of entanglement}
\label{sec:three-prover}

Consider the following (apparently trivial) modification of the
oracularization technique. 
Given a verifier $V$ specifying a two-prover one-round interactive game,
define a three-prover one-round verifier $V'$ as follows. 
At the start of the game, $V'$ selects a permutation of the three provers
uniformly at random and assigns them labels \emph{Alice}, \emph{Bob}, and
\emph{Charlie}.
$V'$ then plays the two-prover game specified by $V$ with the provers that were
designated as Alice and Bob, ignoring the prover designated as Charlie.
Each prover is assigned its name when it is sent its first message.

If the provers employ classical deterministic (or even randomized) strategies,
the presence of Charlie makes no difference whatsoever; and so it holds that
$\omega(V)=\omega(V')$. 
This equality no longer holds with entangled provers. 
The reason is related to a property of entanglement called 
\emph{entanglement monogamy}. 
Informally speaking, monogamy states that there exist strong correlations that
can be realized between two parties sharing entanglement that cannot be 
extended to three or more parties.
For instance, three qubits cannot be in a state in which each pair of qubits
forms an EPR pair.
Thus, while a random string shared between two parties can just as easily be
shared among three, a bipartite entangled state cannot in general be extended
to a tripartite state reproducing the bipartite correlations among any subset
of two out of three of the parties.

\begin{example}[Three-prover CHSH game $\CHSH_3$]\label{ex:3prover-chsh}
  The transformation described above can be applied to the CHSH game,
  which was presented in Example~\ref{ex:chsh}. 
  In the new, three-prover variant of this game, the verifier selects two
  provers at random to play the roles of Alice and Bob, and sends them
  questions as in the CHSH game.
  The third prover is ignored. 
  It is evident that the classical value of this game coincides with that of the
  two-prover variant: $\omega(\CHSH_3)=\omega(\CHSH) = 3/4$. 
  What is perhaps more surprising is that the entangled value is no larger:
  $\omega^*(\CHSH_3)=\omega(\CHSH_3)=3/4$.\footnote{
    Indeed, it holds that the so-called \emph{no-signaling} value of this game
    is 3/4, which can be proved through the use of linear programming.
    As the no-signaling value upper-bounds the entangled value, it follows that
    $\omega^*(\CHSH_3)\leq 3/4$.}
  This fact is representative of the monogamy of quantum correlations.
\end{example}

Figure~\ref{figure:three-clause-versus-variables} describes
a three-prover variant of the clause-versus-variable game introduced in
Section~\ref{sec:xor}, demonstrating further the monogamy phenomenon.
\begin{figure}
  \noindent\hrulefill
  \begin{trivlist}
  \item
    The input is a collection of constrains 
    $\varphi = (C_1,\ldots,C_m)$ on variables $x_1,\ldots,x_n$,
    where each constraint $C_j$ is an arbitrary constraint involving at most
    $\ell$ of the variables, each ranging over a finite alphabet $\Gamma$.
  \end{trivlist}
  \begin{mylist}{5mm}
  \item[1.]
    Select a random permutation of the three provers, and name the first 
    \emph{Alice}, the second \emph{Bob}, and the third \emph{Charlie}.
  \item[2.]
    Run the $2$-prover clause-versus-variable verifier $V_\varphi$ with
    Alice and Bob, ignoring Charlie.
  \end{mylist}
  \noindent\hrulefill
  \caption{$3$-prover clause-versus-variable verifier $T_\varphi$.}
  \label{figure:three-clause-versus-variables}
\end{figure}
The following analogue of~\eqref{eq:clause-var-classical} can be established
for the verifier $T_\varphi$ described in
Figure~\ref{figure:three-clause-versus-variables}:
there exists a constant $c>1$ such that, for all $\varphi$,
\begin{equation}\label{eq:3prover-quant}
 \omega(\varphi) \,\leq\,\omega^*(T_\varphi) \,\leq\, 
 1- \biggl(\frac{1-\omega(\varphi)}{n}\biggr)^c.
\end{equation}

The remainder of this section is devoted to a proof of the implication
$[\omega^*(T_\varphi)=1]\Rightarrow[\varphi\;\text{is satisfiable}]$.
(The converse implication is immediate.) 
Taking the contrapositive, this statement already implies that if
$\omega(\varphi)<1$ it must also be that $\omega^*(\varphi)<1$. 
The quantitative bound provided by the second inequality
in~\eqref{eq:3prover-quant} can be derived using the same proof outline, but
requires substantially more technical work to keep track of the losses incurred
in all inequalities.

Applying the bound~\eqref{eq:3prover-quant} to an exponential-sized family of
constraints determining membership in an $\NEXP$-complete language yields the
following complexity-theoretic consequence:
\begin{equation}
  \NEXP\subseteq\MIP^*_{1,1-2^{-\mathrm{poly}}}(3,2).
\end{equation}
Although the inclusion is non-trivial, the exponentially small gap between the
completeness and soundness parameters is too small to be amplified by any
efficient method.
This small gap is a consequence of the dependence on the number of variables
$n$ in the right-hand side of~\eqref{eq:3prover-quant}, which is exponential in
the input size for an $\NEXP$-complete language.
For the case of the unentangled value $\omega(T_\varphi)$, as seen
from~\eqref{eq:clause-var-classical}, there is no such dependence. 
For the entangled value it is not known if some dependence on $n$ is
necessary.

The analysis of the entangled value of $T_\varphi$ rests on a stand-alone 
\emph{consistency test}. 
Let $X$ and $\Gamma$ be finite sets and let $\pi$ be a distribution on $X$.
The test only requires two provers, but its analysis extends to the case where
it is played in the presence of additional (passive) provers.
 
\begin{center}
  \underline{Consistency test $\cons(X,\Gamma,\pi)$}
\end{center}
\noindent Given finite sets $X$ and $\Gamma$, and a distribution $\pi$
on $X$, perform the following steps:
\begin{mylist}{8mm}
\item[1.] 
  Choose $x\in X$ according to $\pi$, and send $x$ to two provers.
\item[2.]
  Receive answers $a,b\in\Gamma$ respectively. Accept if and only if $a=b$.
  \vspace{2mm}
\end{mylist}

A strategy for the provers in $\cons(X,\Gamma,\pi)$ can be described succinctly
by specifying an initial shared entangled state $\ket{\psi}\in\V\otimes\W$
and measurements $\{P^x_a\,:\,a\in\Gamma\}$ and $\{Q^x_b\,:\,b\in\Gamma\}$, 
for every $x\in X$, corresponding respectively to the first and second provers'
measurements upon receiving message $x$ from the verifier.
The properties of the test are summarized in the following lemma. 

\begin{lemma}
  \label{lem:constest} 
  Suppose a strategy for the provers, specified by measurements
  $\{P^x_a\,:\,a\in\Gamma\}$ and $\{Q^x_b\,:\,b\in\Gamma\}$ and a shared
  entangled state $\ket{\psi}\in\V\otimes\W$, succeeds with probability $1$ in
  the game~$\cons(X,\Gamma,\pi)$.
  For every $x\in X$ such that $\pi(x)>0$, and for all $a\in\Gamma$, it holds
  that
  \begin{equation}
    \label{eq:constest-0}
    \begin{multlined}
      \Bigl((P^x_a)^2 \otimes \I\Bigr)\ket{\psi}
      = \bigl(P^x_a\otimes\I\bigr)\ket{\psi}\\
      = \bigl(\I\otimes Q^x_a\bigr)\ket{\psi}
      = \Bigl(\I\otimes (Q^x_a)^2\Bigr)\ket{\psi}.
    \end{multlined}
  \end{equation}
\end{lemma}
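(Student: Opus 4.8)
The plan is to exploit the fact that success probability exactly $1$ means the post-measurement outcomes of the two provers agree with certainty, and to turn this into operator identities on the shared state $\ket{\psi}$. First I would write the success probability explicitly: for a fixed $x$ with $\pi(x)>0$, the provers accept on question $x$ precisely when the outcomes match, so the conditional acceptance probability given $x$ is $\sum_{a\in\Gamma}\bra{\psi}\bigl(P^x_a\otimes Q^x_a\bigr)\ket{\psi}$. Since the overall success probability is $1$ and every $\pi(x)>0$ contributes, this conditional probability must equal $1$ for each such $x$. On the other hand, using $\sum_a P^x_a = \I$ and $\sum_b Q^x_b = \I$ together with positivity, one has $\sum_{a}\bra{\psi}(P^x_a\otimes Q^x_a)\ket{\psi}\leq \sum_{a,b}\bra{\psi}(P^x_a\otimes Q^x_b)\ket{\psi} = 1$ with equality forcing all the ``off-diagonal'' terms $\bra{\psi}(P^x_a\otimes Q^x_b)\ket{\psi}$ with $a\neq b$ to vanish. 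Because $P^x_a\otimes Q^x_b$ is positive semidefinite, a vanishing expectation value on $\ket{\psi}$ means $(\sqrt{P^x_a}\otimes\sqrt{Q^x_b})\ket{\psi}=0$, hence also $(P^x_a\otimes Q^x_b)\ket{\psi}=0$ for all $a\neq b$.

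The next step is to leverage these annihilation relations. Summing $(P^x_a\otimes Q^x_b)\ket{\psi}=0$ over all $b\neq a$ and using $\sum_{b\neq a} Q^x_b = \I - Q^x_a$ gives $(P^x_a\otimes\I)\ket{\psi} = (P^x_a\otimes Q^x_a)\ket{\psi}$; symmetrically $(\I\otimes Q^x_a)\ket{\psi} = (P^x_a\otimes Q^x_a)\ket{\psi}$. Therefore $(P^x_a\otimes\I)\ket{\psi}=(\I\otimes Q^x_a)\ket{\psi}$, which is the middle equality in~\eqref{eq:constest-0}. To obtain the outer equalities involving the squares, I would apply $P^x_a\otimes\I$ to the identity $(P^x_a\otimes\I)\ket{\psi}=(\I\otimes Q^x_a)\ket{\psi}$: the left side becomes $((P^x_a)^2\otimes\I)\ket{\psi}$, while the right side becomes $(P^x_a\otimes Q^x_a)\ket{\psi}$, which we have already identified with $(P^x_a\otimes\I)\ket{\psi}$. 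This yields $((P^x_a)^2\otimes\I)\ket{\psi}=(P^x_a\otimes\I)\ket{\psi}$, and the analogous computation applying $\I\otimes Q^x_a$ gives $(\I\otimes (Q^x_a)^2)\ket{\psi}=(\I\otimes Q^x_a)\ket{\psi}$. Chaining these four equalities together produces~\eqref{eq:constest-0}.

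The only mildly delicate point is the passage from ``$\bra{\psi}(P^x_a\otimes Q^x_b)\ket{\psi}=0$'' to ``$(P^x_a\otimes Q^x_b)\ket{\psi}=0$'', which I would justify by noting that $P^x_a\otimes Q^x_b$ is positive semidefinite (a tensor product of positive semidefinite operators), so it has a positive semidefinite square root $M$ with $\bra{\psi}M^2\ket{\psi}=\norm{M\ket{\psi}}^2=0$, hence $M\ket{\psi}=0$ and thus $M^2\ket{\psi}=0$. I expect the main obstacle to be purely bookkeeping: making sure that the measurement operators here are allowed to be general positive semidefinite operators (POVM elements) rather than projections, so that $(P^x_a)^2\neq P^x_a$ in general and the square-operator identities in the statement are genuinely nontrivial — the argument above is written to accommodate exactly this, using only $\sum_a P^x_a=\I$ and positivity, never idempotence. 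One should also remark that the lemma as stated involves only two provers; when the test is embedded in a game with additional passive provers, $\ket{\psi}$ is simply reinterpreted as the global state and $\V$, $\W$ absorb the passive provers' registers on the appropriate side, so no change to the argument is needed.
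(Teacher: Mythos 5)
Your proposal is correct, and every step checks out: success probability $1$ on each $x$ with $\pi(x)>0$ forces the cross terms $\bra{\psi}(P^x_a\otimes Q^x_b)\ket{\psi}$ with $a\neq b$ to vanish, positivity upgrades this to $(P^x_a\otimes Q^x_b)\ket{\psi}=0$, and summing these annihilation identities over $b\neq a$ (respectively $a'\neq a$) gives $(P^x_a\otimes\I)\ket{\psi}=(P^x_a\otimes Q^x_a)\ket{\psi}=(\I\otimes Q^x_a)\ket{\psi}$, after which applying $P^x_a\otimes\I$ and $\I\otimes Q^x_a$ to this identity yields the square identities. This is, however, a genuinely different route from the paper's. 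The paper packages the same information into two block vectors $v_x=\sum_a\ket{a}\otimes(P^x_a\otimes\I)\ket{\psi}$ and $w_x=\sum_a\ket{a}\otimes(\I\otimes Q^x_a)\ket{\psi}$, shows $\norm{v_x},\norm{w_x}\leq 1$ and $\ip{w_x}{v_x}=1$, and invokes the equality condition in Cauchy--Schwarz to get $v_x=w_x$ (giving the middle equality coefficient-by-coefficient); the outer equalities then come from the saturation of the bound $\sum_a\bra{\psi}((P^x_a)^2\otimes\I)\ket{\psi}\leq 1$ combined with the general equivalence, for $0\leq R\leq S$, between $\bra{\psi}(R\otimes\I)\ket{\psi}=\bra{\psi}(S\otimes\I)\ket{\psi}$ and $(R\otimes\I)\ket{\psi}=(S\otimes\I)\ket{\psi}$. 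Your term-by-term argument avoids both the Cauchy--Schwarz equality condition and this operator-ordering fact, using only $\sum_a P^x_a=\I$, $\sum_b Q^x_b=\I$, and the elementary square-root trick, which makes the derivation of the square identities especially transparent; the paper's block-vector formulation, on the other hand, is the natural starting point when one wants a robust, quantitative version of the lemma (success $1-\varepsilon$ rather than $1$), where the saturated inequalities become approximate statements about $\norm{v_x-w_x}$. Your closing remarks about POVM elements versus projections and about absorbing passive provers' registers into $\V$ and $\W$ are both consistent with how the paper uses the lemma.
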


\begin{proof}
  For each $x\in X$ satisfying $\pi(x) > 0$, define vectors
  \begin{equation}
    \begin{aligned}
      v_x & = \sum_{a\in\Gamma}\, \ket{a} \otimes
      \bigl(P^x_a\otimes\I\bigr)\ket{\psi},\\
      w_x & = \sum_{a\in\Gamma}\, \ket{a} \otimes
      \bigl(\I \otimes Q^x_a\bigr)\ket{\psi},
    \end{aligned}
  \end{equation}
  and observe that
  \begin{equation}
    \ip{w_x}{v_x} = \sum_{a\in \Gamma} 
    \bra{\psi} P^x_a \otimes Q^x_a \ket{\psi},
  \end{equation}
  which is a nonnegative real number in the interval $[0,1]$ for every choice
  of $x\in X$.
  One has that
  \begin{equation}
    \label{eq:v_x-norm-bound}
    \begin{multlined}
      \norm{v_x}^2 
      = \sum_{a\in\Gamma} \bigbra{\psi} (P^x_a)^2 \otimes \I \bigket{\psi}
      \leq \sum_{a\in\Gamma} \bigbra{\psi} P^x_a \otimes \I \bigket{\psi} = 1,
    \end{multlined}
  \end{equation}
  where the inequality holds by virtue of the fact that
  $0\leq P^x_a \leq \I$
  for each $x\in X$ and $a\in \Gamma$, and the second equality follows from the
  assumption that $\{P^x_a\}$ is a measurement.
  Along similar lines, one finds that $\norm{w_x}\leq 1$.

  Now, under the assumption that the strategy succeeds with probability 1, it
  must hold that
  \begin{equation}
    1 = \sum_{x\in X} \pi(x) \sum_{a\in\Gamma}
    \bra{\psi} P^x_a \otimes Q^x_a \ket{\psi}
    = \sum_{x\in X} \pi(x) \ip{w_x}{v_x},
  \end{equation}
  and therefore $\ip{w_x}{v_x} = 1$ for every $x\in X$ satisfying $\pi(x) > 0$.
  By the equality condition of the Cauchy--Schwarz inequality, one finds that
  $w_x = v_x$ for every $x\in X$ satisfying $\pi(x) > 0$, and moreover these
  vectors must all be unit vectors.
  Consequently
  \begin{equation}
    \bigl(P^x_a\otimes\I\bigr)\ket{\psi} = 
    \bigl(\I \otimes Q^x_a\bigr)\ket{\psi}
  \end{equation}
  for each $a\in \Gamma$, again for each $x\in X$ satisfying $\pi(x) > 0$.

  Finally, observing the equivalence of the statements
  \begin{trivlist}
  \item (i) $\bra{\psi} R \otimes \I \ket{\psi} = 
    \bra{\psi} S \otimes \I \ket{\psi}$ and
  \item (ii)
    $(R \otimes \I) \ket{\psi} = (S \otimes \I) \ket{\psi}$
  \end{trivlist}
  for all choices of positive semidefinite operators $0 \leq R \leq S$,
  along with the fact that the inequality in \eqref{eq:v_x-norm-bound} must be
  an equality, one may conclude that
  \begin{equation}
    \Bigl((P^x_a)^2 \otimes \I\Bigr)\ket{\psi}
    = \bigl(P^x_a\otimes\I\bigr)\ket{\psi}
  \end{equation}
  for every $a\in \Gamma$ and $x\in X$ satisfying $\pi(x) > 0$.
  The equality
  \begin{equation}
    \bigl(\I\otimes Q^x_a\bigr)\ket{\psi}
    = \Bigl(\I\otimes (Q^x_a)^2\Bigr)\ket{\psi}
  \end{equation}
  is proved through a similar methodology.
\end{proof}

One useful consequence of this lemma is that if the reduced density operator of
$\ket{\psi}$ on the first prover's subspace has full support, then for any $x$
such that $\pi(x)>0$ the measurement $\{P_a^x\}$ is a projective measurement.
A similar conclusion holds whenever the reduced density operator of
$\ket{\psi}$ on the second prover's subspace has full support.

Suppose now that $\omega^*(T_\varphi)=1$, so there is a strategy for the
provers in the clause-versus-variable game that succeeds with probability $1$.
The strategy is specified by a set of \emph{Alice-measurements} and
\emph{Bob-measurements},
\begin{equation}
  \bigl\{P^j_{b_1,\ldots,b_{\ell}}\,:\,b_1,\cdots b_{\ell}\in\Gamma\bigr\}
  \quad\text{and}\quad \bigl\{Q^i_a\,:\,a\in\Gamma\bigr\},
\end{equation}
respectively.
The Alice-measurements result in a variable setting $b_1,\ldots,b_{\ell}$ for
the $\ell$ variables appearing in each constraint $C_j$, while the
Bob-measurements result in a variable setting $a$ for each variable $x_i$.
Because the game treats all three provers symmetrically, it is possible to
argue that there is no loss of generality in taking $\ket{\psi}$ to be 
invariant under all permutations of its three registers, and one may also
assume that it has full support on each provers' register. 
Similarly, one may assume that each prover always performs the same measurement
when given the same name (i.e., Alice or Bob) by the verifier.
Finally, one may assume that all of the measurements are projective
measurements.

The goal is to show that $\varphi$ is satisfiable. 
Define a distribution on assignments to the $n$ variables as follows:
\begin{equation}\label{eq:multi-prob-def}
  p(a_1,\ldots,a_n) \,=\, 
  \Bignorm{\Bigl(\I\otimes Q^n_{a_n}\cdots Q^1_{a_1}\otimes\I\Bigr)
    \ket{\psi}}^2.
\end{equation}
This is the distribution one would obtain by sequentially applying the
Bob-measurements, in the order $i = 1,\ldots,n$, to the second prover's
register of $\ket{\psi}$.
This, of course, is not what the second prover does---the distribution $p$ is
only being defined in this way for the sake of the analysis.
It must also be stressed that because the Bob-measurements do not necessarily
commute, it is not immediate that this distribution is consistent with any
prover's single Bob-measurement for a selected variable (except for the
measurement  associated with the variable $x_1$, which is the first measurement
applied in~\eqref{eq:multi-prob-def}).

It will be shown that, if the strategy succeeds with probability $1$, any
assignment in the support of $p$ must satisfy all of the constraints. 
As there must be at least one assignment in the support of $p$, this will
imply that $\varphi$ is satisfiable.
Toward this goal, for each $j \in \{1,\ldots,m\}$, define 
$q_j:\Gamma^{\ell}\rightarrow[0,1]$ to be the marginal probability distribution
on the possible assignments to the $\ell$ variables appearing in the constraint
$C_j$ that is obtained from the distribution $p$.
We will prove that
\begin{equation}
  \label{eq:marginal-goal}
  q_j(b_1,\ldots,b_{\ell}) = \Bignorm{\bigl(P^j_{b_1,\ldots,b_{\ell}}\otimes
    \I\otimes\I\bigr)\ket{\psi}}^2.
\end{equation}
Note that this will suffice to complete the proof.
In greater detail, because the provers are assumed to cause the verifier to
accept with certainty, Alice's answers always satisfy the clause she was asked
about, and therefore \eqref{eq:marginal-goal} implies that every assignment in
the support of $p$ satisfies $C_j$.
As this is so for all $j$, it must hold that every assignment in the support
of $p$ satisfies all of the constraints, as required.

It therefore remains to prove \eqref{eq:marginal-goal}.
For each $j\in\{1,\ldots,m\}$ and $i\in\{1,\ldots,\ell\}$, define
measurements
\begin{equation}
  \{R^{j,i}_a\,:\,a\in\Gamma\} \quad\text{and}\quad
  \{S^{j,i}_a\,:\,a\in\Gamma\}
\end{equation}
as follows:
\begin{equation}
  R^{j,i}_a = \sum_{\substack{b_1,\ldots,b_{\ell}\in\Gamma\\b_i = a}}
  P^j_{b_1,\ldots,b_{\ell}}
  \quad\text{and}\quad
  S^{j,i}_a = Q^{k_i}_a,
\end{equation}
where $k_1 < \cdots < k_{\ell}$ are the indices of the variables appearing
in the constraint $C_j$.
In words, the first measurement is equivalent to performing the
measurement $\{P^j_{b_1,\ldots,b_{\ell}}\,:\,b_1,\ldots,b_{\ell}\in\Gamma\}$
and outputting just the assignment $a = b_i$ rather than the entire assignment
$(b_1,\ldots,b_{\ell})$, while the second measurement is equivalent to the
Bob-measurement for the $i$-th variable appearing in $C_j$.

Observe that the provers' success in the consistency check performed by the clause-versus-variable
verifier implies that the measurements $\{R^{j,i}_a\}$ and $\{S^{j,i}_a\}$
necessarily constitute a perfect strategy for the game $\cons(X,\Gamma,\pi)$,
where $X = \{1,\ldots,m\}\times\{1,\ldots,\ell\}$ and $\pi$ is the distribution
obtained by selecting a constraint and a variable appearing in that
constraint, both uniformly at random.
It therefore holds, for all choices of $j\in\{1,\ldots,m\}$,
$i\in\{1,\ldots,\ell\}$, and $a\in\Gamma$, that
\begin{equation}
  \label{eq:R-and-S-flip}
  \bigl(R^{j,i}_a \otimes \I \otimes \I\bigr)\ket{\psi}
  = \bigl(\I \otimes S^{j,i}_a \otimes \I\bigr)\ket{\psi},
\end{equation}
which is remarkable because it implies that the probability for the first
prover to assign the value $a$ to a given variable is constant over all choices
of the constraints in which that variable appears.
(Recall that $S^{j,i}_a$ only depends on the label of the $i$-th variable
appearing in $C_j$, but not on the clause $C_j$ itself.)
By using the permutation invariance of both $\ket{\psi}$ and the measurements
performed by the three provers, the 5 similar equations to
\eqref{eq:R-and-S-flip} obtained by permuting the 3 systems also hold.
It therefore follows that
\begin{equation}
  \label{eq:Q-flip}
  \bigl(Q^k_a \otimes \I \otimes \I\bigr)\ket{\psi}
  = \bigl(\I \otimes Q^k_a \otimes \I\bigr)\ket{\psi}
  = \bigl(\I \otimes \I \otimes Q^k_a\bigr)\ket{\psi}
\end{equation}
for every $k\in\{1,\ldots,n\}$ and $a\in\Gamma$.
For instance, if $k$ is the index of the $i$-th variable appearing in the
constraint $C_j$, then one has
\begin{equation}
  \begin{multlined}
    \bigl(Q^k_a \otimes \I \otimes \I\bigr)\ket{\psi}
    = \bigl(S^{j,i}_a \otimes \I \otimes \I\bigr)\ket{\psi}
    = \bigl(\I \otimes \I\otimes R^{j,i}_a\bigr)\ket{\psi}\\
    = \bigl(\I \otimes S^{j,i}_a\otimes\I\bigr)\ket{\psi}
    = \bigl(\I \otimes Q^k_a \otimes \I\bigr)\ket{\psi}.
  \end{multlined}
\end{equation}
Note that \eqref{eq:Q-flip} does not follow directly from the permutation
invariance of $\ket{\psi}$ and the provers' measurements, and it makes
essential use of all three systems.

Now, suppose that $C_j$ is a constraint acting on the subset of variables
indexed by $k_1< \cdots < k_{\ell}$, and let
$i_1< \cdots < i_{n - \ell}$ be the indices of the remaining $n - \ell$
variables.
Let us also take $(a_1,\ldots,a_n)$ to be an assignment to the variables
$(x_1,\ldots,x_n)$ that is obtained from the assignment
$(b_1,\ldots,b_{\ell})$ to the variables indexed by $(k_1,\ldots,k_{\ell})$
and an assignment $(c_1,\ldots,c_{n-\ell})$ to the variables indexed by
$(i_1,\ldots,i_{n-\ell})$.
By repeatedly applying \eqref{eq:Q-flip} one finds that
\begin{equation}
  \begin{multlined}
    \bigl(Q^n_{a_n}\cdots Q^1_{a_1} \otimes \I\otimes\I\bigr)\ket{\psi}\\
    = 
    \bigl(\I \otimes Q^{k_{\ell}}_{b_{\ell}}\cdots Q^{k_1}_{b_1} \otimes
    Q^{i_{n-\ell}}_{c_{n-\ell}}\cdots Q^{i_1}_{c_1} \bigr)\ket{\psi},
  \end{multlined}
\end{equation}
and therefore, by summing over all choices of $c_1,\ldots,c_{n-\ell}$,
one finds that
\begin{equation}
  q_j(b_1,\ldots,b_{\ell}) =
  \Bignorm{\bigl(\I \otimes Q^{k_{\ell}}_{b_{\ell}}\cdots Q^{k_1}_{b_1} \otimes
    \I \bigr)\ket{\psi}}^2.
\end{equation}
Finally, by applying \eqref{eq:R-and-S-flip} repeatedly, it follows that
\begin{equation}
  \begin{multlined}
    \bigl(\I \otimes Q^{k_{\ell}}_{b_{\ell}}\cdots Q^{k_1}_{b_1} \otimes
    \I \bigr)\ket{\psi}
    = \bigl(R^{j,1}_{b_1} \otimes Q^{k_{\ell}}_{b_{\ell}}\cdots Q^{k_2}_{b_2}
    \otimes \I \bigr)\ket{\psi}\\
    = \cdots = 
    \bigl(R^{j,1}_{b_1}\cdots R^{j,\ell}_{b_\ell} \otimes \I \otimes \I
    \bigr)\ket{\psi} = 
    \bigl(P^{j}_{b_1,\ldots,b_{\ell}}\otimes \I \otimes \I
    \bigr)\ket{\psi},
  \end{multlined}
\end{equation}
where the last equality has made use of the fact that
$\{P^j_{b_1,\ldots,b_{\ell}}\}$ is a projective measurement.
The equation \eqref{eq:marginal-goal} has therefore been proved, as required.

This concludes the analysis of the three-prover clause-versus-variable verifier
$T_\varphi$.
One of the provers, playing the role of Charlie, plays a completely passive
role. 
It is not asked any questions, but its presence is necessary for the
proof to go through, as Example~\ref{ex:ms-1} of the Magic Square game
demonstrates. 
At a more technical level, all three registers of $\ket{\psi}$ played an
essential role in the analysis above.

It is possible to define a different variant of the clause-versus-variable verifier that involves an interaction with only two provers and remains sound against entangled provers. In the three-prover game the intuition behind the role of the third prover is that it ``confuses'' the other provers into not knowing with which prover they should coordinate their answers. The idea for the two-prover variant is to induce the same type of confusion by sending an additional question, chosen uniformly at random, to the second prover in the clause-versus-variable game. The prover thus receives two variables: one taken from the first prover's constraint, and the other chosen uniformly at random. The fact that the prover is not told which variable it will be tested on makes its task harder.

\begin{center}
  \underline{$2$-prover confuse-SAT verifier $F_\varphi$}
\end{center}
\noindent
Given $\varphi=(C_1,\ldots,C_m)$, where each $C_i$ is a constraint acting on
$\ell$ out of $n$ variables $x_1,\ldots,x_n$, the verifier proceeds as
follows:
\begin{mylist}{8mm}
\item[1.]
  Select an index $j\in [m]$ uniformly at random. 
  Let $\{x_{i_1},\ldots,x_{i_\ell}\}$ be the variables on which constraint
  $C_j$ acts. 
  Select $t\in\{1,\ldots,\ell\}$ and $i_{t'}\in \{1,\ldots,n\}$ uniformly at
  random. 
  Send $C_j$ to the first prover, and the unordered pair $\{i_t,i_{t'}\}$ to
  the second prover.
\item[2.] 
  The first prover replies with an assignment
  $(a_{1},\ldots,a_{\ell})\in\Gamma^k$. 
  The second prover replies with an assignment $(b,b')\in\Gamma^2$.
\item[3.]
  Accept if and only if the first prover's answers satisfy clause $C_j$
  and the provers' answers are consistent on the variable they were both
  asked: $a_{t}=b$.
  \vspace{2mm}
\end{mylist}

The same result as shown for the three-prover clause-versus-variable verifier $T_\varphi$ holds: $\omega^*(F_\varphi)=1$ if and only if $\varphi$ is satisfiable, and a similar quantitative bound as the one in~\eqref{eq:3prover-quant} holds as well. The proof follows the same outline. The main additional ingredient is the analysis of the following ``confuse'' test, that complements the consistency test $\cons(X,\Gamma,\pi)$ introduced earlier.

\begin{center}
  \underline{Confuse test $\conf(X,\Gamma,\nu)$}
\end{center}
\noindent
Given finite sets $X$, $\Gamma$ and a distribution $\nu$ on $X\times X$, the
verifier proceeds as follows:
\begin{mylist}{8mm}
\item[1.]
  Select $(x,y)\in X\times X$ according to $\nu$, and send the unordered pair
  $\{x,y\}$ to the first prover and either $x$ (with probability $1/2$) or $y$
  (with probability $1/2$) to the second.
\item[2.] 
  Receive answers $(a,a')\in\Gamma^2$ and $b\in \Gamma$ respectively. 
  Accept if and only if the provers' answers are consistent.
  \vspace{2mm}
\end{mylist}

This game has the following soundness property: any strategy given by
$\{P^{x,y}_{a,a'}\}$, $\{Q^x_b\}$, and $\ket{\psi}$ that has success probability $1$ must be such that, for any $x,y\in X$ such that $\nu(x,y)>0$ and all $a,b\in\Gamma$,
\begin{equation}
  \label{eq:conftest-0}
  \bigl( P^{x,y}_{a,b} \otimes \I\bigr) \ket{\psi} 
  = \bigl(\I \otimes Q^x_a Q^y_b\bigr) \ket{\psi} 
  = \bigl(\I \otimes Q^y_b Q^x_a\bigr) \ket{\psi}.  
\end{equation}
Furthermore, if $\ket{\psi}$ has full support on the second prover's space,
then the measurement operators $Q^x_a$ and $Q^y_b$ must commute provided
that $\nu(x,y)>0$. 
These properties together with those that follow from the test
$\cons(X,\Gamma,\pi)$ suffice to prove soundness of the confuse-SAT verifier.

\subsection{Linearity testing with entangled provers}
\label{sec:linearity-test}

Originally introduced in the context of efficient program checking, the
\emph{linearity test} of Blum, Luby and Rubinfeld~\cite{BlumLR93} quickly found
one of its most important applications in the study of classical multi-prover
interactive proof systems. 
Based on this test alone, it is already possible to prove a weak form of the
PCP theorem, establishing that all languages in \class{NP} have proofs of
exponential length that can be verified with a constant gap between
completeness and soundness by querying only a constant number of bits from the
proof. 
This section is devoted to the formulation and analysis of the test when
executed with entangled provers.

The linearity test can be formulated as a one-round interactive game played
between a classical verifier and three provers. The verifier's messages to the
provers are elements of $\field_2^n$, for some integer $n$. The provers'
messages are elements of $\field_2$. The test is designed to certify that the
provers' answers are consistent with a linear function $f: \field_2^n \to
\field_2$, i.e., one that can be written as $f(x) = u\cdot x$ for some $u\in
\field_2^n$.

\begin{center}
  \underline{Linearity test}
\end{center}
\noindent
The verifier performs either of the following with probability $1/2$ each:
\begin{mylist}{\parindent}
\item[1.]
  \emph{(Consistency)} 
  Select $x \in \field_2^n$ uniformly at random and send $x$ to each prover. 
  Accept if and only if all provers provide the same answer $a\in\field_2$.
\item[2.]
  \emph{(Linearity)} 
  Select $x,y \in \field_2^n$ uniformly at random, and set $z = x+y$. 
  Send $x$ to the first prover, $y$ to the second, and $z$ to the third. 
  Accept if and only the provers' answers $a,b,c\in \field_2$ satisfy $a+b=c$.
\end{mylist}
 
The linearity test has perfect completeness: if the provers answer according to
the same linear function, they are accepted by the verifier with probability
$1$. Its soundness property against classical deterministic provers can be
stated as follows.

\begin{theorem}[BLR linearity test]\label{thm:BLR} Suppose that three classical
   deterministic provers succeed in the linearity test with probability
  $1-\eps$, and let $f_1,f_2,f_3:\field_2^n\to\field_2$ be the functions
  describing their respective strategies. 
  There exists a vector $u\in \field_2^n$ such that, for each $i\in\{1,2,3\}$,
  $f_i(x) = u\cdot x$ for all but a fraction at most $8\eps$ of 
  $x\in\field_2^n$. 
\end{theorem}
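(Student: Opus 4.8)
The plan is to reduce the three‑function statement to the classical self‑correction argument for a single function. Since the verifier performs the consistency subtest and the linearity subtest with probability $1/2$ each and the overall acceptance probability is $1-\eps$, each subtest is passed with probability at least $1-2\eps$. From the consistency subtest, $\Pr_x[f_1(x)=f_2(x)=f_3(x)]\geq 1-2\eps$, so in particular $f_i$ and $f_j$ disagree on at most a $2\eps$ fraction of inputs for every pair $i,j$. From the linearity subtest, $\Pr_{x,y}[f_1(x)+f_2(y)=f_3(x+y)]\geq 1-2\eps$, with the sum and $x+y$ taken in $\field_2$ and $\field_2^n$ respectively. Fixing $f:=f_1$ and using the two consistency bounds to replace $f_2(y)$ by $f(y)$ and $f_3(x+y)$ by $f(x+y)$ — legitimate because $x+y$ is uniform when $x$ and $y$ are — a union bound yields $\Pr_{x,y}[f(x)+f(y)=f(x+y)]\geq 1-\delta$ for $\delta=O(\eps)$.

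With such a $\delta$ in hand I would invoke the standard self‑correction construction: set $g(x):=\mathrm{maj}_y\,(f(x+y)-f(y))$, the plurality value over a uniformly random $y\in\field_2^n$. Three short lemmas complete the argument. First, for every fixed $x$ one has $\Pr_y[g(x)=f(x+y)-f(y)]\geq 1-2\delta$; this follows because $\Pr_{y,y'}[f(x+y)-f(y)=f(x+y')-f(y')]\geq 1-2\delta$, by applying the approximate‑linearity relation to the two pairs $(x+y,\;y'-y)$ and $(y,\;y'-y)$ (both valid instances since $x+y$ and $y$ are uniform), and the plurality value inherits this bound via $\sum_v\Pr_y[\,\cdot=v\,]^2\le\max_v\Pr_y[\,\cdot=v\,]$. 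Second, $g$ is a genuine homomorphism, $g(x)+g(x')=g(x+x')$ for all $x,x'$: apply the first lemma three times with a common random $y$ (using the uniformity of $x+y$ in one of them) and observe the three good events intersect with positive probability once $\delta<1/6$; hence $g(x)=u\cdot x$ for a unique $u\in\field_2^n$. Third, $f$ agrees with $g$ on all but a $\delta/(1-2\delta)$ fraction of inputs: if $f(x)\neq g(x)$ then, since $g(x)$ is the plurality of $f(x+y)-f(y)$ over $y$, the relation $f(x)+f(y)=f(x+y)$ fails for at least a $1-2\delta$ fraction of $y$, and averaging over $x$ gives the bound. Combining with the pairwise consistency bounds, $f_1,f_2,f_3$ all agree with $x\mapsto u\cdot x$ on all but an $O(\eps)$ fraction of inputs.

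The only real work is keeping the union bounds tight enough to land on the constant $8$ rather than merely $O(\eps)$ (and the statement is only meaningful when $\eps$ is below a small absolute constant): the reduction‑to‑one‑function step and the three self‑correction lemmas each contribute a small multiple of $\eps$, and one must be slightly careful — e.g.\ using that two distinct $\field_2$‑linear functions are at distance exactly $1/2$, so that ``$f_i$ close to some near‑linear target $\pm\chi_\gamma$'' together with the consistency bounds forces all three targets to coincide, with the linearity relation then ruling out the $-\chi$ alternative — to avoid wasteful slack. I expect this bookkeeping to be the main fussy point rather than a conceptual obstacle. As a cleaner alternative that sidesteps the self‑correction lemmas, one can argue by Fourier analysis over $\field_2^n$: writing $F_i=(-1)^{f_i}$ and $d_{ij}=\Pr_x[f_i(x)\neq f_j(x)]$, a direct computation gives the acceptance probability $\tfrac34-\tfrac14\sum_{i<j}d_{ij}+\tfrac14\sum_\alpha\hat F_1(\alpha)\hat F_2(\alpha)\hat F_3(\alpha)$, whence $\sum_\alpha\hat F_1(\alpha)\hat F_2(\alpha)\hat F_3(\alpha)\geq 1-4\eps$; Hölder's inequality with Parseval then forces $\max_\alpha|\hat F_i(\alpha)|\geq 1-4\eps$ for each $i$, i.e.\ each $f_i$ is $2\eps$‑close to some $\pm\chi_\alpha$, and the consistency and linearity conditions again pin these down to a single genuinely linear $\chi_u$ — giving in fact the sharper bound $2\eps$.
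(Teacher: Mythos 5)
The paper itself contains no proof of this theorem: it is quoted as the classical result of Blum, Luby, and Rubinfeld and cited to \cite{BlumLR93}, so your argument can only be compared against the standard proof, and on those terms it is essentially correct. Your reduction to a single function is the right first move: consistency gives $d(f_i,f_j)\leq 2\eps$ for every pair, linearity gives $\Pr_{x,y}[f_1(x)+f_2(y)=f_3(x+y)]\geq 1-2\eps$, and two substitutions (valid since $y$ and $x+y$ are uniform) give $\Pr_{x,y}[f(x)+f(y)=f(x+y)]\geq 1-\delta$ with $\delta=6\eps$ for $f=f_1$. Your three self-correction lemmas are the standard ones and are correctly stated (including the slightly delicate choice of pairs in the first lemma and the requirement $\delta<1/6$ in the second). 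The one genuine shortfall, which you flag yourself, is the constant: as written this route yields disagreement at most $\delta/(1-2\delta)=6\eps/(1-12\eps)$ for $f_1$ and hence $6\eps/(1-12\eps)+2\eps>8\eps$ for $f_2,f_3$, and it needs $\eps<1/36$. The cleanest way to land exactly on $8\eps$ is to keep your reduction but replace majority decoding by the single-function Fourier bound: with $F=(-1)^{f}$ one has $\Pr_{x,y}[f(x)+f(y)=f(x+y)]=\tfrac12+\tfrac12\sum_\alpha \hat{F}(\alpha)^3\leq\tfrac12+\tfrac12\max_\alpha\hat{F}(\alpha)$, so $f_1$ is within $\delta=6\eps$ of some $\chi_u$ (the sign comes out positive automatically, so the nearest function is genuinely linear), and adding the $2\eps$ consistency term gives exactly $8\eps$ for $f_2$ and $f_3$ without any smallness assumption doing real work.

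Your second, three-function Fourier route is also sound: the acceptance formula $\tfrac34-\tfrac14\sum_{i<j}d_{ij}+\tfrac14\sum_\alpha\hat{F}_1(\alpha)\hat{F}_2(\alpha)\hat{F}_3(\alpha)$ is correct, the H\"older/Parseval step is correct, and the pinning of the three near-characters to a common genuinely linear $\chi_u$ (distinct affine functions are at distance at least $1/2$, and the linearity relation kills the complemented alternative) works. It gives the stronger conclusion $2\eps$, but only once $6\eps<1/2$, i.e.\ $\eps<1/12$; if you want the theorem literally for all $\eps$ you should add the one-line observation that for, say, $\eps\geq 1/16$ the conclusion is trivial, since every Boolean function agrees with some linear function on roughly half of $\field_2^n$. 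With either of these small repairs your proposal is a complete proof.
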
 

Theorem~\ref{thm:BLR} makes the assumption that the three provers are deterministic. In general the provers may use private or shared randomness. The result is easily extended by ``fixing the randomness''; each private or shared random string that may be used by the provers corresponds to a deterministic strategy to which the theorem can be applied. 

In the case of entangled provers it is not possible to ``fix the quantumness''
present in the provers' strategy, which in general is specified by
families of binary-valued measurements $\{P^x_a\}$, $\{Q^y_b\}$, $\{R^z_c\}$ on
the three parts of a tripartite state $\ket{\psi}\in\X\otimes\Y\otimes\Z$. 
What does it mean for such a strategy to be linear? 
Without ``fixing the randomness,'' it is impossible to claim that the strategy is close to a single linear function $f$. A more reasonable statement would be that any strategy for the provers with a high probability of success in the test is \emph{almost indistinguishable} from a ``linear'' strategy of the following form:
\begin{mylist}{\parindent}
\item[1.] Each prover measures its share of the entangled state using the same
  measurement $\{S_u\,:\,u\in\field_2^n\}$, respectively obtaining outcomes
    $u_1,u_2,u_3\in\field_2^n$ such that $u_1=u_2=u_3$ with high probability.
\item[2.] Upon receiving the verifier's message $x$, $y$, or $z$, the first,
  second, or third prover answers with $u_1\cdot x$, $u_2\cdot x$, or 
  $u_3\cdot x$, respectively.
\end{mylist}

The strength of this statement resides in the existence of the measurement
$\{S_u\}$, its independence from the prover's question, and the claim that it
faithfully reproduces the original  strategy.

Whether the statement is meaningful or not rests on the precise quantification
of the claim that the original and oblivious strategies are ``almost
indistinguishable.''
In the case of classical provers (i.e., Theorem~\ref{thm:BLR}), this is the
statement that the oblivious strategy differs from the original one in a
fraction at most $8\eps$ of questions $x$.
The case of entangled provers is more subtle, as it requires the introduction
of a measure of distance between strategies that is
\begin{mylist}{8mm}
\item[(i)] strong enough that ``nearby'' strategies have a similar success
  probability, not only in the test itself but also in any proof system that
  would invoke the test as a sub-game; and
\item[(ii)] weak enough that it is possible to place bounds on the distance
  solely from the assumption that the provers have a high success in the
  linearity test.\footnote{This implies for instance that the operator norm
    between the provers' measurements (and even more so the diamond norm
    between the associated quantum channels) would not be appropriate, as
    success in the test does not put constraints directly on the provers'
    measurements themselves, but only on their probability of obtaining certain
    outcomes when applied on the specific entangled state $\ket{\psi}$.}
\end{mylist}
Such a distance measure can be defined as follows. 
Consider two strategies for the first prover, specified by measurements
$\big\{P^x_a\big\}$ and $\big\{\tilde{P}^x_a\big\}$.
Fixing a message $x$ from the verifier, the distance between the 
post-measurement states resulting from these two measurements can be expressed
as
\begin{equation}
  \label{eq:work_dist-1}
  \begin{multlined}
    \sum_a\, \biggnorm{\biggl(\sqrt{P^x_a} \otimes \I \biggr) \ket{\psi} - 
      \biggl(\sqrt{\tilde{P}^x_a}  \otimes \I\biggr) \ket{\psi}}^2\\
    = \sum_a \Tr\Biggl(\biggl(\sqrt{P^x_a} - \sqrt{\tilde{P}^x_a}\biggr)^2
    \rho\Biggr),
  \end{multlined}
\end{equation}
where $\rho$ is the reduced density operator of $\ket{\psi}$ on the first
prover's register.
Provided the quantity in~\eqref{eq:work_dist-1} is small, the provers' shared
state is almost the same after the first prover has measured its subsystem
using either of the two measurements and obtained an answer $a$. 
Thus, the joint distributions on outcomes obtained when the first prover
measures using either measurement, and the other provers perform any
measurement whatsoever, are close in statistical distance.

\begin{theorem}[Entangled-prover linearity test]\label{thm:quantumBLR} 
  Suppose three entangled provers succeed in the linearity test with
  probability at least $1-\eps$ using a strategy specified by measurements
  $\{P^x_a\}$, $\{Q^y_b\}$, $\{R^z_c\}$ on an entangled state $\ket{\psi}$. 
  There exists a measurement $\big\{S_u\,:\,u\in \field_2^n\big\}$, which
  is independent of $x$, such that, for
  \begin{equation}
    \tilde{P}^x_a = \sum_{\substack{u\in \field_2^n\\u\cdot x=a}} S_u,
  \end{equation}
  it holds that
  \begin{equation}\label{eq:work_entclose}
    \frac{1}{2^n}\sum_{x\in\field_2^n} \sum_{a\in\field_2} 
    \Tr\Biggl(\biggl(\sqrt{P^x_a} - \sqrt{\tilde{P}^x_a} \biggr)^2 
    \rho \Biggr) \, = \, O\bigl(\sqrt{\eps}\bigr),
\end{equation}
  where $\rho$ is the reduced density operator of $\ket{\psi}$ on the first
  prover's register.
  A similar statement holds for $\{Q^y_b\}$ and $\{R^z_c\}$.
\end{theorem}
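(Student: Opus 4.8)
The plan is to follow the standard template for analyzing linearity tests with entangled provers: extract a consistency-based symmetrization, use the consistency part of the test to relate the three provers' measurements, and use the linearity part to build the global measurement $\{S_u\}$ via a Fourier-analytic argument adapted to the non-commutative setting. The key technical device will be the distance measure introduced just before the theorem statement, namely the ``state-dependent'' semi-norm $\sum_a \Tr\bigl((\sqrt{P^x_a}-\sqrt{\tilde P^x_a})^2\rho\bigr)$, which I will manipulate throughout rather than operator norms of the measurements themselves.

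First I would use the consistency part of the test. Since the provers succeed with probability $1-\eps$, they succeed in the consistency sub-test with probability $1-O(\eps)$, and so by (a robust, approximate version of) Lemma~\ref{lem:constest} applied on average over $x$, the post-measurement states $\bigl(\sqrt{P^x_a}\otimes\I\otimes\I\bigr)\ket{\psi}$, $\bigl(\I\otimes\sqrt{Q^x_a}\otimes\I\bigr)\ket{\psi}$, and $\bigl(\I\otimes\I\otimes\sqrt{R^x_a}\bigr)\ket{\psi}$ are pairwise $O(\sqrt{\eps})$-close on average over $x\in\field_2^n$. This lets me reduce to analyzing a single family $\{P^x_a\}$ and treat all three provers as effectively using the same measurement, up to the distance budget. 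Next I would use the linearity part: defining the observables $A_x = P^x_0 - P^x_1$ (and similarly $B_y$, $C_z$), success in the linearity sub-test says $\frac{1}{2^{2n}}\sum_{x,y}\bra{\psi}A_x\otimes B_y\otimes C_{x+y}\ket{\psi}\geq 1-O(\eps)$. Using the consistency-based closeness, I replace $B_y$ and $C_{x+y}$ by $A_y$ and $A_{x+y}$ acting on the first prover's register (paying $O(\sqrt{\eps})$), obtaining an approximate ``character-like'' identity $\frac{1}{2^{2n}}\sum_{x,y}\bra{\psi}A_x A_y A_{x+y}\otimes\I\otimes\I\ket{\psi}$ close to $1$ — with care taken about ordering of the non-commuting $A$'s, which is where the state-dependent distance earns its keep.

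The heart of the argument is then to define the global measurement. I would set, for each $u\in\field_2^n$, the operator $S_u = \frac{1}{2^n}\sum_{x\in\field_2^n}(-1)^{u\cdot x}A_x$ — this is the natural non-commutative analogue of a Fourier coefficient of the ``function'' $x\mapsto A_x$. One checks that $\sum_u S_u = \I$ (this is immediate) but $\{S_u\}$ need not be a POVM a priori; the positivity $S_u\geq 0$ must be recovered approximately, in the state-dependent sense, from the near-linearity identity above. Concretely, the quantity $\sum_u\bra{\psi}S_u^2\otimes\I\otimes\I\ket{\psi}$ (a Fourier/Parseval-type expression) should be shown to be close to $1$ using the linearity bound, and combined with the fact that the $S_u$ sum to the identity, this forces the $\{S_u\}$ to be close to a genuine projective-type POVM after a mild correction (e.g.\ passing to $\sqrt{S_u^2}$ or renormalizing), all within the $O(\sqrt{\eps})$ budget. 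Finally I define $\tilde P^x_a = \sum_{u:u\cdot x=a}S_u$ as in the statement and verify $\frac{1}{2^n}\sum_x\sum_a\Tr\bigl((\sqrt{P^x_a}-\sqrt{\tilde P^x_a})^2\rho\bigr) = O(\sqrt{\eps})$ by expanding the square, using $\Tr(P^x_a\rho) \approx \Tr(\tilde P^x_a\rho)$ and the Cauchy--Schwarz inequality to control the cross term $\Tr(\sqrt{P^x_a}\sqrt{\tilde P^x_a}\rho)$, reducing everything to the Fourier-side estimates already established.

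The main obstacle I anticipate is the non-commutativity of the observables $A_x$ for different $x$: in the classical BLR argument one freely uses $f(x)f(y) = f(x+y)$ pointwise, whereas here $A_x A_y \ne A_y A_x$ in general, and the linearity test only constrains the \emph{expectation} $\bra{\psi}A_x A_y A_{x+y}\otimes\I\otimes\I\ket{\psi}$ on the specific state $\ket{\psi}$, not the operators themselves. Bridging this gap — showing that near-success in the test forces the $A_x$ to approximately commute \emph{in the state-dependent sense}, enough to run the Fourier argument and conclude that $S_u S_{u'}\ket{\psi}$ behaves like an orthogonal family — is the technically delicate core, and is exactly the point where one must be careful to use a distance measure satisfying both requirements (i) and (ii) flagged in the excerpt. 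I expect the proof of commutation-on-the-state to itself go through a Cauchy--Schwarz plus a counting argument over pairs $(x,y)$, leveraging that for most pairs the three provers' answers satisfy the linear relation, so that swapping the order of two measurements changes the state by a controlled amount on average. Once that is in hand, the remaining steps are routine manipulations of the state-dependent semi-norm together with elementary Fourier analysis over $\field_2^n$.
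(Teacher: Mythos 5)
Your overall outline (use the consistency part of the test to identify the three provers' measurements up to an $O(\sqrt{\eps})$ state-dependent error, then work with the observables $A_x = P^x_0 - P^x_1$ and the near-linearity identity $2^{-2n}\sum_{x,y}\bra{\psi}A_x A_y A_{x+y}\otimes\I\otimes\I\ket{\psi}\approx 1$, controlling non-commutativity only ``on the state'' via Cauchy--Schwarz) is consistent with the argument the paper has in mind, which it only sketches by reference to the clause-versus-variable analysis. The genuine gap is in your construction of the measurement $\{S_u\}$. You define $S_u = 2^{-n}\sum_x (-1)^{u\cdot x}A_x$, i.e.\ the matrix-valued Fourier coefficient itself. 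This family is not a measurement, and not just ``a priori'': the individual operators are Hermitian but generally indefinite, and your claim that $\sum_u S_u = \I$ is immediate is false --- summing over $u$ kills every term except $x=0$, so $\sum_u S_u = A_0 = P^0_0 - P^0_1$, which need not be anywhere near $\I$ and is certainly not guaranteed to be so by the hypotheses. Since the theorem asserts the existence of an honest measurement $\{S_u\}$ (from which $\tilde P^x_a$ is built), recovering positivity only ``approximately in the state-dependent sense'' cannot discharge this; and your proposed corrections do not work either: $\sqrt{S_u^2} = \lvert S_u\rvert$ restores positivity but destroys normalization (the absolute values do not sum to $\I$), and ``renormalizing'' an operator-valued family is not a well-defined mild fix.

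The missing idea, which is the one piece of the proof the paper does spell out, is to \emph{square} the Fourier coefficient: set $\widehat{S}_u = 2^{-n}\sum_x(-1)^{u\cdot x}(P^x_0 - P^x_1)$ and $S_u = \widehat{S}_u^{\,2}$. Positivity is then automatic, and Parseval gives $\sum_u S_u = 2^{-n}\sum_x A_x^2 = \I$ once the measurements $\{P^x_0,P^x_1\}$ are taken to be projective (which one may assume without loss of generality, e.g.\ by dilation), so $\{S_u\}$ is an exact measurement before any analysis of the test begins. This is also the correct quantum analogue of the classical argument: the classical oblivious prover samples $u$ with probability $\lvert\widehat{g}(u)\rvert^2$, i.e.\ with the \emph{squared} Fourier weights, whereas your definition corresponds to using the un-squared weights, which already classically fail to form a probability distribution. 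With the squared definition in hand, the remaining steps you describe --- expanding the state-dependent distance, using the consistency and linearity estimates and Cauchy--Schwarz to bound the cross terms --- are indeed the intended route, so the flaw is localized but it sits exactly at the construction the theorem is about.
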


The measurement $\{S_u\}$ whose existence is claimed in the previous theorem
has a simple definition. 
Given $\{P^x_a\}$,
 let its matrix-valued Fourier coefficient be defined as 
\begin{equation}
  \widehat{S}_u = 2^{-n}\sum_x (-1)^{x\cdot u} \big(P^x_0-P^x_1\big),
\end{equation}
and for every $u$ let
\begin{equation}\label{eq:work_budef}
  S_u \,=\, \bigl( \widehat{S}_u \bigr)^2.
\end{equation}
It follows from Parseval's formula that $\{S_u\}$ is a well-defined
measurement. 
For intuition on this definition it is useful to consider the special case of
classical provers who may use shared randomness, for which $\{S_u\}$
corresponds to the following definition of an oblivious strategy. 
Let $r$ be a value for the shared randomness. 
In the original strategy, $r$ points to a function $f_r$ according to which the
prover would determine its answer to the verifier's question. 
In the new strategy, the oblivious prover uses $r$ to sample a linear function
$g:x\mapsto u\cdot x$, where $u$ is chosen according to the distribution
suggested by the Fourier spectrum of $f_r$.\footnote{Letting $g_r =
  (-1)^{f_r}$, this is the distribution induced by the $|\widehat{g_r}(u)| ^2$.
  Parseval's identity shows that this is indeed a distribution.} 
Upon receiving question $x$ the prover answers with $g(x) = u\cdot x$. 

The fact that this is a  good strategy follows from the classical proof of
Theorem~\ref{thm:BLR}, which establishes that the Fourier coefficients of
$g=(-1)^f$ are sharply concentrated.
In the case of a randomized strategy, most functions $f_r$ will have a large
Fourier coefficient, and the strategy described above will with high
probability provide answers that are consistent and chosen according to a
linear function.
(Which linear function this is depends on the random string $r$ and may change
with each interaction with the verifier.)

With the definition of $S_u$ in hand, the proof of Theorem~\ref{thm:quantumBLR}
is not too difficult---it uses similar arguments to those employed in the
analysis of the three-prover clause-versus-variable game given in the previous
section, and we omit the details.

\subsection{$\NEXP\subseteq \QMIP^*$}
\label{sec:nexp-in-qmipstar}

The proof of Theorem~\ref{thm:nexp-mipstar} follows the same broad outline as
Babai, Fortnow, and Lund's proof that $\NEXP\subseteq\MIP$, with significant
modifications to the analysis needed to ensure that soundness is preserved
against entangled strategies.
(Completeness is straightforward, and does not require honest provers to use
any entanglement.)
In the following discussion we give a high-level overview of the soundness
analysis that emphasizes the most important modifications required and ties
them to the results from the preceding sections;
the reader is advised not to interpret the discussion too literally and is
referred to~\cite{BabaiFL91} for a technical exposition.

The starting point is an encoding of an $\NEXP$-complete problem, such as
SUCCINCT-$3$-COLORABILITY, as an instance of the following problem: given a
multilinear function $f$ in $\ell+3$ variables, do there exist functions
$A_1,A_2,A_3:\{0,1\}^\ell\to\{0,1\}$ such that $f(z,A_1(z),A_2(z),A_3(z))=0$
for every $z\in\{0,1\}^\ell$? Here the function $f$ encodes the structure of
the problem (such as the graph to be colored), the variable $z$ is used to
index all the required constraints (such as edges of the graph), and $A_1,A_2$
and $A_3$ provides an assignment (the coloring). 
The input size $n$ is the size of an arithmetic circuit specifying $f$, and in
particular the total number of variables $z$ can be exponential in $n$.

Using the technique of arithmetization~\cite{LundFKN92} it is possible to
devise a three-prover proof system for this problem with these attributes:
\begin{mylist}{\parindent}
\item[1.] 
  The verifier selects random elements $z,b_1,b_2\in \field^{\ell}$, where
  $\field$ is a field of size exponential in $n$ known to all parties.
\item[2.] Based on $z$, the verifier has a multiple-round interaction with the
  first prover.
\item[3.] The verifier asks two other provers for values $\tilde{A}_i(b_1)$ and
  $\tilde{A}_i(b_2)$, for all $i\in\{1,2,3\}$, respectively, where
  $\tilde{A}_i$ is the unique multilinear extension of $A_i$ to
  $\field^\ell$.\footnote{A multilinear function is one that is linear in each
  of its variables; $\tilde{A}_i$ extends $A_i$ in the sense that
  $\tilde{A}_i(z)=A_i(z)$ for every $z\in\{0,1\}^\ell\subseteq\field^\ell$.}
\item[4.] The verifier decides to accept or reject based on the results of its
  interaction with all three provers.
\end{mylist}
The most important step in the analysis of this proof system consists of
devising a test that can be used to guarantee that the functions
$\field^\ell\to\field$ used by the second and third prover in step 3 agree with
a multilinear function on a large fraction of inputs. 
With the promise that this is the case, the remainder of the proof system can
be proven sound in a very similar way whether the provers use entanglement or
not, as it mostly relies on analyzing the single-prover interaction performed
in step 2.

Thus, the analysis will be completed once it is shown how to test that a prover
answers queries $z\in\field^\ell$ with a function
$\tilde{A}:\field^\ell\to\field$ that is linear in each  variable. The test for
this is a natural extension of the linearity test described in the previous
section, and can be formulated as a one-round three-prover interactive game as
follows. 
The verifier selects a coordinate $i\in\{1,\ldots,\ell\}$, an $x\in
\field^\ell$ uniformly at random, and $y_i,z_i\in \field\backslash\{x_i\}$. Let
$y$ and $z$ be equal to $x$ with the $i$-th coordinate replaced by $y_i$ and
$z_i$, respectively.
The verifier asks the three provers for the value of $\tilde{A}$ at $x,y$ and
$z$ respectively. 
Upon receiving answers $a,b,c$ from each of them  the verifier checks the
identities
\begin{equation}
  \frac{c-a}{z-x}\,=\,\frac{b-a}{y-x}\,=\, \frac{c-b}{z-y}.
\end{equation}
The analysis proceeds by induction on $\ell$. 
The case $\ell=1$ is provided by the linearity test described in the previous
section. 
Unfortunately, as one proceeds through the induction, the error (as measured by
the fraction of points on which $\tilde{A}$ differs from a multilinear
function) increases rapidly: 
even in the most optimistic case, it will be multiplied by a factor $2$ in each
step, eventually yielding an unmanageable exponential blow-up.

To handle this difficulty, Babai, Fortnow, and Lund introduce a
``self-improvement lemma,'' which establishes the following: any function
$B:\field^\ell\to\field$ that is such that
\begin{mylist}{8mm}
\item[(i)] $B$ is very close to linear along ``lines'' (meaning that the
  functions $B_i:x_i\mapsto B(x_1,\ldots,x_n)$, for $i=1,\ldots,\ell$, are
  close to linear for most choices of $x_1,\ldots,x_{i-1},x_{i+1},\ldots,x_n$),
  and
\item[(ii)] $B$ is globally ``somewhat close'' to a multilinear function,
\end{mylist}
must in fact be very close to multilinear. 
That is, if the error in (i)~is some small $\eps>0$, and the error in (ii) is a
possibly much larger $\eta<1/4$, then in fact $\eta$ is automatically much
smaller than expected, of order $\ell\eps$. 
The proof of this claim is based on expansion properties of the hypercube, from
which the factor of $\ell$ loss in the error originates.

The case of entangled strategies is substantially more difficult. 
In particular, the conclusion of the lemma can only be reached by a suitable
modification of the provers' strategy: an additional step of active correction is required whereby the distance to a multilinear function (as measured by a statement of a form similar to the bound given in Theorem~\ref{thm:quantumBLR}) is reduced by leveraging the entangled-prover analogue of assumption (i) above. The ``improved'' measurement can be defined as the optimum of a semidefinite program and shown to satisfy similar (though quantitatively weaker) error bounds as those promised by the self-improvement lemma in the classical case, thus enabling the induction to carry through. 

Once the soundness of the multilinearity test against entangled provers has
been established, the remainder of the analysis of Babai, Fortnow, and Lund's
proof system for $\NEXP\subseteq \MIP$ goes through with minor modifications,
leading to a proof of the inclusion $\NEXP\subseteq \MIP^*$.
The main difference between the two end results is that, while the first is
known to hold with two provers, the proof of the second containment seems to 
require three provers: although it is possible to formulate the multilinearity
test as a two-prover game, only the three-prover variant is known to be sound
against entangled provers. 
It is an open question to determine whether $\NEXP\subseteq \QMIP^*(2,\poly)$.

\section{Further topics}
\label{sec:qmip-further}

This section is devoted to two topics on which relatively little is
known---both are currently active areas of research.
The first topic is error reduction through parallel repetition, and the second
is the problem of placing computational upper bounds on the classes $\MIP^*$
and $\QMIP^*$.

\subsection{Parallel repetition}
\label{sec:qmip-par-rep}

Given a multi-prover interactive proof system with completeness and soundness
parameters $a>b$, Proposition~\ref{prop:amplify} states that the difference
$a-b$ can be amplified efficiently, either by repeating the protocol
sequentially (which increases the number of turns), or with different sets of
provers. 
Both procedures work for all types of proof systems considered in this
chapter, with or without entanglement between the provers.

It is natural to ask if the same effect can be achieved without any increase in
the number of provers or turns.
The most standard approach to this question is to consider repeating the
protocol in parallel with the same set of provers. 
This type of repetition was already considered in the single-prover setting
(q.v. Section~\ref{sec:qip-sdp}), and it is defined analogously with multiple
provers.
Given a one-round two-prover\footnote{
  The transformation can be described more generally, but almost all known
  results apply to the one-round two-prover setting only.} 
verifier $V=(V_1,V_2)$ and a number of repetitions $\ell$, another one-round 
two-prover verifier ${V}^{(\ell)}$ is defined as follows: ${V}^{(\ell)}$ 
executes $\ell$ independent copies of the first transformation $V_1$ of $V$, 
generating $\ell$ independent pairs of registers 
$(\rX_1^{1,i},\rX_1^{2,i})_{i=1,\ldots,\ell}$.
${V}^{(\ell)}$ then sends $(\rX_1^{1,1},\cdots,\rX_1^{1,\ell})$ to the first 
prover and $ (\rX_1^{2,1},\cdots,\rX_1^{2,\ell})$ to the second. 
The provers are expected to return $\ell$ message registers each.
${V}^{(\ell)}$ measures each pair of answer registers independently according
to $V_2$, and accepts if and only if all measurements produce the outcome $1$.

Even though the repeated verifier $V^{(\ell)}$ has a simple product form, 
due to the fact that the provers receive all of their questions simultaneously
they may in general apply an arbitrary quantum channel, introducing 
correlations between their answers that make this type of repetition harder to
analyze. 
The following example demonstrates that, in contrast to the single-prover case,
parallel repetition cannot be expected to perfectly amplify either the
unentangled or entangled values of a multi-prover game.

\begin{example}[Fortnow--Feige--Lov\'asz game]
  \label{ex:par-rep-counter-ex}
  Consider the following one-round two-prover verifier $V$. 
  The verifier selects a pair of messages $(s, t)$ uniformly from 
  $\{(0, 0),(0, 1),(1, 0)\}$ and sends $s$ to the first prover and $t$ to the
  second. Each prover replies with a bit $a,b\in\{0,1\}$ respectively. 
  The verifier outputs $1$ if and only if $s\wedge a \neq t \wedge b$. 

  It may be verified that $\omega(V) = \omega(V^{(2)}) = 2/3$, as well as
  $\omega^*(V) = 2/3$.
  (Indeed, even the no-signaling value of $V$ is equal to 2/3.)
  Combined with the previous equalities, this implies $\omega^*(V^{(2)})=2/3$ 
  as well, so that neither the classical or entangled value of the game
  described by $V$ multiply under parallel repetition.
\end{example}

The only scenario for which there is a satisfactory understanding of the effect of parallel repetition is that of classical two-prover one-round games with unentangled provers, to which the following result applies.  

\begin{theorem}\label{thm:raz}
There exists a constant $C$ such that the following holds. Let $V$ be a one-round two-prover classical verifier, $r$ an upper bound on the number of bits of each prover's message register, and $\ell$ an integer. Then the $\ell$-th parallel repetition ${V}^{(\ell)}$ of $V$ satisfies
\begin{equation}\label{eq:mip-parrep}
\omega(V)^{(\ell)} \leq \omega({V}^{(\ell)}) \leq \Big(1- \big(1-\omega(V)\big)^3\Big)^{C\ell/r}.
\end{equation}
\end{theorem}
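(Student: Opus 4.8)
The lower bound $\omega(V)^{\ell}\le\omega(V^{(\ell)})$ is immediate, since provers playing $V^{(\ell)}$ may simply run an optimal single-copy strategy independently in each of the $\ell$ coordinates; the content is the upper bound, which is the classical parallel repetition theorem---originally due to Raz, with the cubic dependence on $1-\omega(V)$ stated above obtained by a later sharpening of Holenstein (Rao subsequently improved the exponent to $2$ for the special case of projection games). Writing $\varepsilon = 1-\omega(V)$, the goal is $\omega(V^{(\ell)})\le(1-\varepsilon^{3})^{C\ell/r}$, and the plan is to prove the contrapositive in information-theoretic form. Fix an optimal strategy for $V^{(\ell)}$; after conditioning on an optimal value of any shared randomness we may assume the provers' answers are deterministic functions $A=f(X)$ and $B=g(Y)$ of their question tuples $X=(X_{1},\dots,X_{\ell})$, $Y=(Y_{1},\dots,Y_{\ell})$, where the pairs $(X_{i},Y_{i})$ are drawn independently from the question distribution $\mu$ of $V$. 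Let $W_{i}$ be the event that coordinate $i$ is won, put $W=W_{1}\wedge\cdots\wedge W_{\ell}$ and $\delta=\Pr[W]=\omega(V^{(\ell)})$. From this data I must manufacture a single-copy strategy for $V$ whose value is so close to $1$ that, if $\delta$ were larger than $(1-\varepsilon^{3})^{C\ell/r}$, it would exceed $\omega(V)=1-\varepsilon$.

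The key observation is that, \emph{conditioned on the winning event $W$, a random coordinate behaves almost like a fresh copy of $V$}. Since the conditional law $\mu_{XY\mid W}$ is just $\mu^{\otimes\ell}$ reweighted on the event $W$, one has $D(\mu_{XY\mid W}\,\|\,\mu^{\otimes\ell})=\log(1/\delta)$; the chain rule and convexity of relative entropy then give $\sum_{i}D(\mu_{X_{i}Y_{i}\mid W}\,\|\,\mu)\le\log(1/\delta)$, so for a uniformly random coordinate $i$, Pinsker's inequality yields $\mathbb{E}_{i}\,\tfrac12\|\mu_{X_{i}Y_{i}\mid W}-\mu\|_{1}=O(\sqrt{\log(1/\delta)/\ell})$. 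Moreover, conditioned on $W$ every coordinate is won with probability exactly $1$, so the only obstruction to reading off a single-copy strategy is that the two provers must, on a common input $(x,y)$ for coordinate $i$, reproduce the $W$-conditioned joint distribution of the \emph{other} coordinates without communicating: prover $1$ can sample $(X_{\ne i},Y_{\ne i})$ from $\mu_{X_{\ne i}Y_{\ne i}\mid W,\,X_{i}=x}$ using only $x$, prover $2$ from $\mu_{X_{\ne i}Y_{\ne i}\mid W,\,Y_{i}=y}$ using only $y$, and they need these two samples to coincide. Bounding the expected statistical distance between these two distributions is again an exercise in subadditivity of mutual information across the $\ell$ free coordinates; here the answer length $r$ enters, because the dependency-breaking variables one conditions on incorporate the provers' answers in the auxiliary coordinates, each contributing at most $r$ bits of entropy to the relevant informational budget, which is what divides $\ell$ by $r$ in the final bound.

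Granting these estimates, the single-copy strategy is: fix the optimal value of all conditioning data; on input $(x,y)$, the two provers use shared randomness to run Holenstein's correlated sampling lemma (two parties holding distributions at statistical distance $\eta$ can, with shared randomness, output identical samples except with probability $O(\eta)$) to produce one common sample of the coordinates $\ne i$, plug $x,y$ into coordinate $i$, evaluate $f$ and $g$, and return the $i$-th answers; shared randomness is then fixed to its optimal value at no loss for a classical verifier. Conditioned on the correlated sample being consistent, the full tuple lies in the support of $\mu_{XY\mid W}$ with coordinate $i$ equal to $(x,y)$, hence coordinate $i$ is won; and since the question law $\mu_{X_{i}Y_{i}\mid W}$ is close to $\mu$, this transfers to a strategy for the genuine game $V$ with value $1-O(\sqrt{r\log(1/\delta)/\ell})$ in the crude analysis. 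Comparing with $\omega(V)=1-\varepsilon$ forces $\log(1/\delta)=\Omega(\varepsilon^{2}\ell/r)$; obtaining the cube $\varepsilon^{3}$ claimed in the theorem is precisely the extra mileage of Holenstein's refinement, which uses the dependency-breaking variable and the correlated sampling lemma in place of Raz's original sampling argument and must account for the distortion introduced when that variable is actively re-sampled.

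\emph{The main obstacle} is exactly the estimate in the second paragraph: showing that conditioning on the high-probability event $W$ perturbs a random coordinate only slightly, and doing so in a sense strong enough that a strategy good in the conditioned world is provably good in the real world, while keeping every loss polynomial in $\varepsilon$ and inversely polynomial in $r$. This is the technical heart of both Raz's theorem and Holenstein's sharpening, and it is tied to the classical, no-communication structure of the provers: it does not extend to entangled provers---Example~\ref{ex:par-rep-counter-ex} already shows that the entangled value need not multiply, and no analogue of the bound \eqref{eq:mip-parrep} for $\omega^{*}$, for $\MIP^{*}$, or for quantum verifiers, is presently known in general (this is the open problem flagged in Section~\ref{sec:qmip-par-rep}).
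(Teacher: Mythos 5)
The paper does not actually prove Theorem~\ref{thm:raz}; it imports it from the classical literature, citing \cite{Raz98,Holenstein09}, so there is no in-paper argument to compare yours against. Your outline is the standard information-theoretic route of Holenstein (building on Raz), and you correctly identify the main ingredients: the trivial lower bound (the left-hand side of \eqref{eq:mip-parrep} is to be read as $\omega(V)^{\ell}$), the contrapositive framing, the identity $D(\mu_{XY\mid W}\,\|\,\mu^{\otimes\ell})=\log(1/\delta)$ for deterministic strategies, the chain rule plus Pinsker showing that a random coordinate's question distribution is nearly unperturbed by conditioning on $W$, correlated sampling, and the fact that the answer length $r$ is what divides $\ell$ in the exponent.

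As a proof, however, the reduction to a single-copy strategy has a genuine gap as you have written it. You let each prover privately sample the \emph{entire} pair $(X_{\neq i},Y_{\neq i})$ from $\mu_{X_{\neq i}Y_{\neq i}\mid W,X_i=x}$ (respectively $\mu_{X_{\neq i}Y_{\neq i}\mid W,Y_i=y}$) and then reconcile the two samples by correlated sampling. But the entropy bound you derived only controls the distance between $\mu_{X_iY_i\mid W}$ and $\mu$; it says nothing about the distance between those two ``global'' conditional distributions of the remaining coordinates given a single prover's own question, and in general they are far apart---their closeness is essentially the statement one is trying to prove, not a consequence of $\log(1/\delta)$ being small. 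The actual argument replaces the full transcript by a dependency-breaking variable $T=(T_j)_{j\neq i}$ with each $T_j$ equal to $X_j$ or $Y_j$ chosen at random (augmented, in the iterative version, by the provers' answers in the already-conditioned coordinates, each costing $O(r)$ bits---this is precisely where $\ell/r$ enters), proves that $\mu_{T\mid W,X_i=x,Y_i=y}$ is on average close to both $\mu_{T\mid W,X_i=x}$ and $\mu_{T\mid W,Y_i=y}$, correlated-samples $T$, and then lets each prover privately fill in only its own missing questions, invoking a further lemma that conditioned on $(W,T)$ the two completions are approximately independent so that the reconstructed transcript has approximately the correct joint law. In addition, conditioning on all of $W$ at once is too coarse for the quantitative bound: the proof conditions iteratively on winning a growing set of coordinates, at each stage exhibiting a fresh coordinate whose conditional success probability is at most $1-\Omega(\varepsilon)$, and it is this bookkeeping that produces the $\varepsilon^{3}$ and the constant $C$ in \eqref{eq:mip-parrep}. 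So your proposal is a faithful roadmap of \cite{Holenstein09}, but the dependency-breaking construction and its analysis---the technical core---are asserted rather than supplied, and the one concrete sampling step you do spell out would fail without them.
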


The theorem, proved in \cite{Raz98,Holenstein09}, only applies to games with
two provers. 
The situation for more than two provers is poorly understood and only very weak
results are known.

For the case of entangled provers only partial results, that apply to specific
classes of games with a classical verifier, are known; we briefly describe some
of these results below. There is no known parallel repetition theorem that
applies to fully quantum verifiers.\footnote{See the chapter notes for results
  that apply to restricted quantum verifiers that either send quantum messages
  but receive classical answers, or vice versa.}

Two classes of games for which good results are known are \emph{free games} and
\emph{projection games}. 
A free game is one in which the verifier's messages to the provers are chosen
according to a product distribution. 
An example of a free game is the CHSH game (Example~\ref{ex:chsh}).
Projection games are characterized by the form of the verifier's acceptance
criterion: for any pair of messages from the verifier, and any possible message
from the first prover, there should always be at most one message from the
second prover that will result in acceptance. 
An example of a projection game is the Magic Square game 
(Example~\ref{ex:ms-1}).
The CHSH game is also a projection game, but the Magic Square game is not a
free game, and the game described in Example~\ref{ex:par-rep-counter-ex} is
neither.
For both classes of games it is known that the entangled value decreases
exponentially fast with the number of repetitions, in a manner analogous
to~\eqref{eq:mip-parrep} except the exponent $3$ is replaced by $3/2$ for free
games and some other universal constant for projection games; for the latter
there is no dependence on the answer length $r$.

Free games and projection games encompass many interesting games, but not
all. For a general game it is possible to consider a modified form of
repetition. The simplest way to describe this transformation is as a
transformation of the game itself.

Given a one-round two-prover verifier $V$, consider a verifier $V_\perp$ that
performs the following:
\begin{mylist}{\parindent}
\item[1.] Generate a pair of messages $(x^1,x^2)$ as $V$ would. 
\item[2.] Independently and with probability $1/4$ each, replace the message $x^1$ with a ``dummy'' message $\perp$, and the message $x^2$ with $\perp$. 
\item[3.] Send the new messages to the provers. 
\item[4.] If either message was replaced by a $\perp$, accept any answer from the provers. If neither message was modified, accept the provers' answers if and only if the original verifier $V$ would have accepted them.
\end{mylist} 
It is clear that this transformation can only increase the value of the game;
in fact it is not hard to verify that
\begin{equation}
  \omega(V_\perp) = \frac{1}{4} + \frac{3}{4}\,\omega(V)
  \quad\text{and} \quad
  \omega^*(V_\perp) = \frac{1}{4} + \frac{3}{4}\,\omega^*(V),
\end{equation}
a simple affine scaling. 
In spite of its almost na\"ive simplicity, it turns out that this
transformation allows to prove strong parallel repetition results: it is known
that if $\omega^*(V)=1-\eps$ then $\omega^*(V_\perp^{(\ell)}) \leq
(1-\eps^c)^{\Omega(\ell)}$, for some constant $c>1$. 
(The same holds for the classical value.) 
Intuitively, the role of the ``dummy'' question $\perp$ is to limit the
provers' ability to exploit correlations present in their $\ell$ pairs of
questions in order to succeed with substantially higher probability than a
strategy which treats all repetitions independently; at a high level the
transformation makes the game closer to a free game.  

\subsection{Upper bounds on $\QMIP^*$}
\label{sec:entanglement}

As discussed in Section~\ref{sec:qmip-nexp}, the inclusion
$\QMIP\subseteq\NEXP$ follows from the observation that optimal
\emph{unentangled} strategies for the provers can always be implemented using
private registers that are of dimension at most exponential in the number of
qubits exchanged between each prover and the verifier. 
Example~\ref{ex:dltw-1} demonstrates that this fact no longer holds for
entangled strategies, which may in general benefit from arbitrarily large
amounts of prior shared entanglement.

The example suggests the introduction of a hierarchy of values 
\begin{equation}\label{eq:omega-d}
\omega(V)=\omega^{(1)}(V)\leq\cdots \leq \omega^{(d)}(V)\leq\cdots\leq\omega^*(V),
\end{equation}
where $\omega^{(d)}(V)$ is the maximum success probability of provers whose initial shared entangled state has local dimension at most $d$. Example~\ref{ex:dltw-1} provides a $V$ for which the chain of inequalities~\eqref{eq:omega-d} does not eventually collapse into a series of equalities for large enough $d$. For small values of $d$ explicit examples show that the first few inequalities can be strict; for instance the CHSH game (Example~\ref{ex:chsh}) is such that $\omega(\textrm{CHSH})=\omega^{(1)}(\textrm{CHSH}) < \omega^{(2)}(\textrm{CHSH})=\omega^{*}(\textrm{CHSH})$. 
We refer to the chapter notes for pointers to further examples. 

Such examples raise the question of whether the entangled value is even computable. It is possible to devise a procedure for approaching $\omega^*$ from below, with the implication that $\QMIP^*\subseteq \class{RE}$, the class of problems that are recursively enumerable. 
Because $\omega^*$ is given by the supremum over all $d$ of $\omega^{(d)}$, 
for every verifier $V$ and $\eps>0$ there is a dimension $d$ such that
$\omega^{(d)}(V) \geq \omega^*(V)-\eps$. 
Applying a union bound over the suitably discretized space of possible
verifiers of a given size we may deduce the existence of an integer
$d=d(\eps,s)$ depending only on $\eps$ and $s\in \natural$ such that, for all
verifiers of size at most $s$, there is an entangled strategy using private
registers of dimension at most $d$ whose success probability is at least
$\omega^*(V)-\eps$. 
No estimates are known on the dependence of $d(\eps,s)$ on either parameter. 

To prove that every problem in $\QMIP^*$ is decidable, it would suffice to 
devise a counterpart to the above procedure that approaches $\omega^*(V)$ from 
above. 
This problem, however, is not settled, even for the simplest case of classical
one-round two-prover games. 
For the remainder of this section we focus on that setting and introduce a
procedure that may, under plausible but difficult mathematical conjectures,
provide the required sequence of approximations.

The procedure is based on a hierarchy of semidefinite programs whose
optimum is always at least the entangled value. For the case of XOR games the
first level of the hierarchy coincides with the semidefinite program introduced
in Section~\ref{sec:xor}. 
Higher levels introduce variables not only for the inner products between
vectors $(P_a^x \otimes \I) \ket{\psi}$ and $(\I \otimes Q_b^y)\ket{\psi}$
associated with each provers' possible measurement operators, but also for
composite terms involving vectors such as $(P_a^x\otimes Q_b^y) \ket{\psi}$ (at
the second level), $(P_a^x P_{a'}^{x'} \otimes Q_b^y) \ket{\psi}$ (at the third
level), and so on.
All natural constraints that should hold for projective strategies, such as 
\begin{equation}
  \bigbra{\psi} \bigl(P_a^x  \otimes \I \bigr)
  \bigl(P_a^x  P_{a'}^{x'} \otimes Q_b^y\bigr) \bigket{\psi}
  = \bigbra{\psi} \bigl(\I\otimes Q_b^y\bigr) 
  \bigl(P_a^x P_{a'}^{x'}\otimes \I\bigr) \bigket{\psi},
\end{equation}
are enforced as constraints in the semidefinite program. 
This hierarchy, introduced in~\cite{NPA08NJP}, can be shown to converge
in the limit of infinitely many levels to a value called the
\emph{field-theoretic} value $\omega^{\textup{\scriptsize FV}}$ of the game. 
The field-theoretic value has an alternative definition as the supremum over
all \emph{commuting strategies} for the provers of the probability that the
verifier outputs $1$. 
Commuting strategies are a relaxation of entangled strategies in which the
provers share a single quantum system, not necessarily finite-dimensional and
initialized in to an arbitrary pure state, on which they perform their
measurements in sequence. 
The only restriction, which is required for the model to be well-defined, is
that the measurement operators associated with distinct provers must commute
pairwise.

If it is assumed that the provers' joint register is finite-dimensional,
commuting-prover strategies are equivalent to entangled strategies. 
This follows from a well-known (but nontrivial) result from the theory of
$C^*$-algebras showing that for finite-dimensional algebras, pairwise
commutation implies the existence of a decomposition of the algebra as a direct
sum of tensor products.
If the associated Hilbert space is allowed to be infinite dimensional, as seems
necessary for $\omega^{\textup{\scriptsize FV}}$ to coincide with the limiting
value of the hierarchy introduced above, then the equality 
$\omega^*(V) \stackrel{?}{=}\omega^{\textup{\scriptsize FV}}(V)$ forms the
content of \emph{Tsirelson's problem}.
Tsirelson's problem is equivalent to a range of long-standing conjectures in
the theory of $C^*$-algebras, including Connes' embedding conjecture and
Kirchberg's QWEP conjecture. An affirmative resolution of any of these
equivalent conjectures would imply that all problems in $\MIP^*(2,2)$ are
decidable. The converse is not necessarily true; the decidability of all
problems in $\MIP^*$ could hold even if the conjectures fail. 
In particular, deciding problems in $\MIP^*$ only requires one to compute a
constant-factor approximation, rather than an arbitrarily close one, of the
entangled value.
No such procedure is known, making the question of (un)decidability of
languages in $\MIP^*$ and $\QMIP^*$ one of the most intriguing open problems in
the area of quantum multi-prover interactive proofs.

\section{Chapter notes}

The class $\MIP$ was first considered in~\cite{BenOrGKW88}, where its
introduction was motivated by the study of zero-knowledge proof systems. 
Many results about $\MIP$ were discovered soon after, in the late 1980s and
early 1990s (see \cite{FortnowRS88,BabaiFL91,FeiLov92STOC}, for instance).
Modern accounts tend to emphasize a point of view on the class that results
from its connection to probabilisticaly checkable proofs and the PCP
theorem~\cite{AroraLMSS98,AroraS98}, with its many applications to the hardness
of approximation problems \cite{FeiGolLovSafSze96JACM}. 
We refer the interested reader to~\cite{donnellPCP} for a brief history of the
developments that led to the PCP theorem.

The origins of quantum multi-prover interactive games can be traced back to the
study of Bell inequalities in the foundations of quantum mechanics. 
This study was pioneered by Bell~\cite{Bell64}, with the simplest non-trivial
inequality, the so-called CHSH inequality, being explicitly written as such by
Clauser, Horne, Shimony, and Holt~\cite{Clauser:69a}. 
This inequality is equivalent to the bound $\omega(\CHSH)\leq 3/4$ discussed in
Example~\ref{ex:chsh}, but formulated in a different language. 
The perspective of Bell inequalities as interactive games came through the
work of Mermin~\cite{Mermin90}, Peres~\cite{Peres90}, Cabello~\cite{Cabello01},
Cleve, and others, and the connection to the soundness of interactive proof
systems was observed in \cite{CleveHTW04}.
The Magic Square game (Example~\ref{ex:ms-1}) is attributed to
Mermin~\cite{Mermin90} and Peres~\cite{Peres90}; our formulation can be found
in~\cite{Aravind02} and \cite{CleveHTW04}.

Multi-prover games with quantum messages and the associated complexity classes
$\class{QMIP}$ and $\class{QMIP}^*$ were introduced in~\cite{KobMat03JCSS},
where it was shown that $\QMIP=\NEXP$.
This equality was also extended to the case of provers who may use prior shared
entanglement on a number of qubits bounded by a polynomial in the input
length.

The class $\class{MIP}^*$ was first defined in~\cite{CleveHTW04}, whose focus
is on the special case of XOR proof systems, corresponding to the class
$\oplus\MIP^*(2,2)$. 
The inclusion $\NEXP\subseteq \oplus \MIP_{a,b}(2,2)$ for constants $0<b<a<1$
is due to Hast{\aa}d~\cite{Hastad01}. 
The inclusion $\oplus\MIP^*(2,2)\subseteq\EXP$ is implicit
in~\cite{CleveHTW04}. 
In~\cite{Wehner06} it was shown that the inclusion can be improved to
$\oplus\MIP_{a,b}^*(2)\subseteq\QIP_{a,b}(2)$ for any $a,b$ separated by an
inverse polynomial gap, which therefore implies 
$\oplus\MIP_{a,b}^*(2)\subseteq\PSPACE$.

In~\cite{CleveSUU08xor} it was shown that XOR games obey a perfect parallel
repetition theorem. 
Slofstra~\cite{Slofstra11xor} investigated the question of entanglement in XOR
games, and described a game for which the dimension of entanglement required
for an optimal strategy matches the upper bound obtained from Tsirelson's
construction.

An extension of XOR games that allows for quantum messages from the verifier to
the prover (``quantum XOR games'') was considered in~\cite{RegevV13}.
(See also~\cite{cooney2011rank} for a closely related model.)
It was proved in \cite{RegevV13} that Example~\ref{ex:dltw-1} of the coherent
state exchange game can be cast as a quantum XOR game with quantum messages for
which the optimal success probability can only be reached in the limit as the
dimension of the provers' shared entangled state goes to infinity.

The results on perfect completeness, parallelization and public-coin systems
presented in Section~\ref{sec:qmip-structure} appear in~\cite{KempeKMV09}. 
An alternative proof of the perfect completeness property appears
in~\cite{KobayashiLN13}. 
Ito~\cite{Ito14parallelization} showed that games with classical verifiers
could be parallelized to $4$ turns while keeping the verifier classical,
provided one is granted the promise that there exists a classical strategy for
the provers that achieves the completeness parameter and soundness holds
against entangled-prover strategies. 
(The corresponding proof systems are called \emph{entanglement-resistant}
$\MIP$ systems.)

An interesting variant on the multi-prover quantum interactive proof system
model, in which the verifier is quantum and the provers are permitted to
communicate classically (but not to share prior entanglement) was considered 
in \cite{Ben-OrHP08}.
The resulting complexity class was proved to contain $\NEXP$.

Theorem~\ref{thm:ruv} on simulating a quantum verifier by a classical one was
proved in~\cite{ReichardtUV12}.
(See also~\cite{ReichardtUV13} for a high-level exposition of the results.)
This result is based on the rigidity of sequential repetitions of the CHSH
game. Rigidity of a single repetition of the CHSH game is proved
in~\cite{McKagueYS12rigidity}.

Theorem~\ref{thm:nexp-mipstar} stating the inclusion $\NEXP\subseteq\QMIP^*$
was proved in~\cite{IV12}, where an analysis of Babai, Fortnow, and Lund's
multilinearity test with entangled provers was given, generalizing the
linearity test presented in Section~\ref{sec:linearity-test}. 
Letting $\QMA_{\textup{\scriptsize EXP}}$ denote the exponential-length proof
variant of $\QMA$ (defined in \cite{GottesmanI2009}), one has that 
$\QMA_{\textup{\scriptsize EXP}} \subseteq \QMIP^*_{a,b}(5,2)$ for some choice
of $a$ and $b$ such that $a-b>2^{-p(n)}$ for a polynomially bounded function
$p$ \cite{FitzsimonsV15}, suggesting that a stronger inclusion may be
achievable.
This result was improved to the inclusion 
$\QMA_{\textup{\scriptsize EXP}} \subseteq \MIP^*_{a,b}(4,1)$ 
by~\cite{Ji2015}, again for some choice of $a$ and $b$ separated by an inverse
exponential.
The exponential gap between completeness and soundness in both results is too
small to be amplified via standard techniques, and whether or not the inclusion
of $\QMA_{\textup{\scriptsize EXP}}$ in $\QMIP^*$ holds is an interesting open
problem.

The phenomenon of entanglement monogamy is pervasive in quantum information
theory. 
This terminology is generally attributed to Bennett; one of the first times it
appears in print is in~\cite{Terhal04}. 
The three-prover CHSH game described in Example~\ref{ex:3prover-chsh} was
introduced and analyzed in~\cite{toner2009monogamy}. 
A more general phenomenon is known to hold;
for any two-prover one-round game $G$ with classical messages of length at most
$t$, the associated $(1+2^t)$-prover game $G'$ (with the original game being
played with two randomly chosen provers) has value $\omega^*(G')=\omega(G)$. 
The $3$-prover clause-versus-variable verifier $T_\varphi$ appeared
in~\cite{KKMTV11}, and the two-prover variant is due to~\cite{ItoKM09}.

The procedure of parallel repetition for classical multi-prover interactive
proof systems was first suggested in~\cite{FortnowRS88}. 
We refer the interested reader to e.g.~\cite{DinurS14,BravermanG14} for recent
developments.
The fact that the no-signaling value of the Fortnow--Feige--Lov\'asz game is
2/3 was proved in \cite{Holenstein09}, and the game itself was considered by
Fortnow~\cite{fortnow1989complexity} and Feige and Lovasz~\cite{FeiLov92STOC},
who were the first to observe that it provides a counter-example to the perfect
exponentiation of the classical value of a multi-prover game under parallel
repetition.
Parallel repetition for two-prover one-round free multi-prover games with
classical messages and entangled provers was proved 
in~\cite{ChaillouxS13parallel,JainPY13parallel}. 
These results were extended to any number of provers and games with quantum
messages from the provers in~\cite{WuCY14parallel}. 
The case of projection games was considered in~\cite{DSV13full}. 
The transformation described at the end of Section~\ref{sec:qmip-par-rep} is
introduced in~\cite{bavarian2015anchoring}. 
The transformation is inspired by earlier work of Feige and 
Kilian~\cite{Feige2000} who considered a slightly more complicated
transformation in the setting of two-prover classical games. 
The transformation of Feige and Kilian, and its analysis, was extended to the
case of games with entangled provers in~\cite{KV11parallel}.

Brunner et al.~\cite{BrunnerPAGMS08} described a verifier $V$ such that
$\omega^{(2)}(V) < \omega^{(3)}(V)$.
Moreover there is a single one-round two-prover verifier $V$ that demonstrates
the same inequalities, but $V$ chooses its messages to the provers from a
continuous set of possibilities (the provers provide binary answers). 
Vertesi and Pal~\cite{VertesiP09} construct a two-message two-prover classical
verifier $V=V_d$ such that 
$\omega^{(d)}(V)>\omega^{(\lceil \log d\rceil -1)}(V)$.

Tsirelson's problem was shown to be
equivalent~\cite{JungeNPPSW11tsirelson,Fritz12tsirelson} to a range of
conjectures in the theory of $C^*$-algebras, including Connes' embedding 
conjecture~\cite{Connes76} and Kirchberg's QWEP 
conjecture~\cite{Kirchberg93qwep}.
(See the papers by Ozawa~\cite{Ozawa04qwep,Ozawa13connes} for surveys on the
equivalence between the latter two conjectures). 
Tsirelson's formulation of his problem is available as~\cite{TsirelsonProblem};
see also~\cite{ScholzW08tsirelson} for a discussion.

\begin{acknowledgements}
  \addcontentsline{toc}{chapter}{Acknowledgements}
  We thank Mark Wilde and an anonymous reviewer for helpful comments,
  corrections, and suggestions.
  Thomas Vidick acknowledges support from the IQIM, an NSF Physics Frontiers Center (NFS Grant PHY-1125565) with support of the Gordon and Betty Moore Foundation (GBMF-12500028).
  John Watrous acknowledges support from Canada's NSERC and the
  Canadian Institute for Advanced Research.
\end{acknowledgements}

\backmatter

\end{document}